\ifpdf \usepackage[pdftex,%
\newtheorem{theorem}{Theorem}[section]
\newtheorem*{theorem*}{Theorem}
\newtheorem{lemma}[theorem]{Lemma}
\newtheorem{corollary}[theorem]{Corollary}
\newtheorem{claim}[theorem]{Claim}
\newtheorem{definition}[theorem]{Definition}
\newcommand{\partdiff}[2]{\frac{\partial {#1}}{\partial {#2}}}
\newcommand{\mixdiff}[3]{\frac{\partial^2 {#1}}{{\partial {#2}}{\partial {#3}}}}
\newcommand{\set}[1]{\left \{ #1 \right \}}                     
\newcommand{\setst}[2]{\left\{\; #1 \,:\, #2 \;\right\}}        
\newcommand{\union}{\cup}
\newcommand{\intersect}{\cap}
\newcommand{\card}[1]{\lvert #1 \rvert}
\newcommand{\smallsum}[2]{{\textstyle \sum_{#1}^{#2}}}
\newcommand{\Algorithm}[1]{Algorithm~\ref{alg:#1}}
\newcommand{\AlgorithmName}[1]{\label{alg:#1}}
\newcommand{\Appendix}[1]{Appendix~\ref{app:#1}}
\newcommand{\AppendixName}[1]{\label{app:#1}}
\newcommand{\Claim}[1]{Claim~\ref{clm:#1}}
\newcommand{\ClaimName}[1]{\label{clm:#1}}
\newcommand{\Corollary}[1]{Corollary~\ref{cor:#1}}
\newcommand{\CorollaryName}[1]{\label{cor:#1}}
\newcommand{\Definition}[1]{Definition~\ref{def:#1}}
\newcommand{\DefinitionName}[1]{\label{def:#1}}
\newcommand{\Equation}[1]{\eqref{eq:#1}}
\newcommand{\EquationName}[1]{\label{eq:#1}}
\newcommand{\Lemma}[1]{Lemma~\ref{lem:#1}}
\newcommand{\LemmaName}[1]{\label{lem:#1}}
\newcommand{\Section}[1]{Section~\ref{sec:#1}}
\newcommand{\SectionName}[1]{\label{sec:#1}}
\newcommand{\Theorem}[1]{Theorem~\ref{thm:#1}}
\newcommand{\TheoremName}[1]{\label{thm:#1}}
\newcommand{\proofbelow}{5pt}
\newcommand{\afterproof}{\hfill $\square$ \par \vspace{\proofbelow}}
\newenvironment{proofof}[1]{\noindent\textit{Proof} \,(of #1).\,}{\afterproof}
\def\E{{\bf E}}
\def\b1{{\bf 1}}
\def\RR{{\mathbb R}}
\def\cE{{\cal E}}
\def\cF{{\cal F}}
\def\cI{{\cal I}}
\def\cJ{{\cal J}}
\def\cS{{\cal S}}
\def\cU{{\cal U}}
\def\Ind{{\sf Ind}}
\def\Prop{{\sf Prop}}
\def\Stab{{\sf Stab}}
\def\poly{{\mbox{poly}}}
\newcommand{\newterm}[1]{\textit{#1}}
\renewcommand{\th}{\ifmmode{^{\textrm{th}}}\else{\textsuperscript{th}\ }\fi}
\newcommand{\sumstack}[1]{\sum_{\substack{#1}}}
\newcommand{\NDG}{lopsidependency graph\xspace}
\newcommand{\NAG}{lopsided association graph\xspace}
\newcommand{\NAP}{lopsided association\xspace}
\newcommand{\ro}{resampling oracle\xspace}
\newcommand{\ros}{resampling oracles\xspace}
\newcommand{\Ro}{Resampling oracle\xspace}
\newcommand{\Ros}{Resampling oracles\xspace}
\newcommand{\MSR}{MaximalSet\-Resample\xspace}
\newcommand{\RONE}{{\rm(R1)}}
\newcommand{\RTWO}{{\rm(R2)}}
\title{An Algorithmic Proof of the Lov\'asz Local Lemma \\ via Resampling Oracles}
\date{}
\author{
\alignauthor
Nicholas J.~A.~Harvey\\
       \affaddr{University of British Columbia}\\
       \affaddr{Vancouver, Canada}\\
       \email{nickhar@cs.ubc.ca}
\alignauthor
Jan Vondr\'{a}k\\
\affaddr{IBM Almaden Research Center}\\
       \affaddr{San Jose, CA, USA}\\
       \email{jvondrak@us.ibm.com}
       \alignauthor
}
\date{\today}
\begin{document}

\pagestyle{empty}

\maketitle

\begin{abstract}
The Lov\'asz Local Lemma is a seminal result in probabilistic combinatorics.
It gives a sufficient condition on a probability space and a collection of events
for the existence of an outcome that simultaneously avoids all of those events.
Finding such an outcome by an efficient algorithm has been an active research topic for
decades.
Breakthrough work of Moser and Tardos (2009) presented an efficient algorithm for a general setting
primarily characterized by a product structure on the probability space.

In this work we present an efficient algorithm for a much more general setting.
Our main assumption is that there exist certain functions,
called \textit{resampling oracles},
that can be invoked to address the undesired occurrence of the events.
We show that, in \emph{all} scenarios to which the original Lov\'asz Local Lemma applies,
there exist resampling oracles, although they are not necessarily efficient.
Nevertheless, for essentially all known applications of the Lov\'asz Local Lemma
and its generalizations,
we have designed efficient resampling oracles.
As applications of these techniques, we present new results for packings of Latin transversals,
rainbow matchings and rainbow spanning trees.
\end{abstract}

\newpage \pagestyle{plain}\setcounter{page}{1}

\clearpage
\tableofcontents
\clearpage

\section{Introduction}

The Lov\'asz Local Lemma (LLL) is a powerful tool with numerous uses
in combinatorics and theoretical computer science.
If a given probability space and collection of events satisfy a certain condition,
then the LLL asserts the existence of an outcome that simultaneously avoids those events.
The classical formulation of the LLL \cite{ErdosLovasz,Spencer77} is as follows.

Let $\Omega$ be a probability space with probability measure $\mu$.
Let $E_1,\ldots,E_n$ be certain ``undesired'' events in that space.
Let $G$ be an undirected graph with vertex set $[n]=\set{1,\ldots,n}$.
The edges of $G$ are denoted $E(G)$.
Let $\Gamma(i) = \setst{ j \neq i }{ \set{i,j} \in E(G) }$ be the neighbors of vertex $i$.
Also, let $\Gamma^+(i) = \Gamma(i) \union \set{i}$ and
let $\Gamma^+(I) = \bigcup_{i \in I} \Gamma^+(i)$ for $I \subseteq [n]$.

\begin{theorem}[General Lov\'asz Local Lemma~\cite{ErdosLovasz,Spencer77}]
\TheoremName{LLL}
Suppose that the events satisfy the following 
condition that controls their dependences
\begin{equation}
\tag{Dep}
\label{eq:Dep}
\Pr_\mu[E_i \mid \cap_{j \in J} \overline{E_j}] ~=~ \Pr_\mu[E_i]
    \qquad\forall i \in [n] ,\, J \subseteq [n] \setminus \Gamma^+(i)
\end{equation}
and the following criterion that controls their probabilities
\begin{equation}
\tag{GLL}
\EquationName{GLL}
\exists x_1,\ldots,x_n \in (0,1) 
\qquad\text{such that}\qquad
\Pr_\mu[E_i] ~\leq~ x_i \prod_{j \in \Gamma(i)} (1-x_j)
~\quad\forall i \in [n].
\end{equation}
Then $\Pr_\mu[\bigcap_{i=1}^{n} \overline{E_i}]
> 0$.
\end{theorem}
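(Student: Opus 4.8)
The plan is to run the classical inductive argument. By induction on $\card{S}$ I would prove, for every $S \subseteq [n]$, the two statements: (i)~$\Pr_\mu[\bigcap_{j \in S}\overline{E_j}] > 0$, and (ii)~$\Pr_\mu[E_i \mid \bigcap_{j \in S}\overline{E_j}] \le x_i$ for every $i \in [n]\setminus S$. Carrying (i) alongside (ii) is exactly what keeps every conditional probability below defined over a non-null event. The base case $S = \emptyset$ is immediate: $\Pr_\mu[\Omega] = 1 > 0$, and \Equation{GLL} gives $\Pr_\mu[E_i] \le x_i\prod_{j \in \Gamma(i)}(1-x_j) \le x_i$ since each $1 - x_j$ lies in $(0,1)$.

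For the inductive step, fix $S$ with $\card{S} \ge 1$ and assume (i) and (ii) for all strictly smaller sets. First I would deduce (i) for $S$: choose $i_0 \in S$, put $S' = S \setminus \set{i_0}$, and factor $\Pr_\mu[\bigcap_{j\in S}\overline{E_j}] = \Pr_\mu[\overline{E_{i_0}} \mid \bigcap_{j \in S'}\overline{E_j}]\cdot\Pr_\mu[\bigcap_{j\in S'}\overline{E_j}]$; the second factor is positive by the inductive (i), and the first is at least $1 - x_{i_0} > 0$ by the inductive (ii) for the pair $(i_0, S')$. Consequently $\Pr_\mu[\bigcap_{j \in T}\overline{E_j}] \ge \Pr_\mu[\bigcap_{j \in S}\overline{E_j}] > 0$ for every $T \subseteq S$, so every conditioning appearing in the rest of the step is legitimate.

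Now (ii) for $S$: fix $i \notin S$ and split $S = S_1 \cup S_2$ with $S_1 = S \cap \Gamma(i)$ and $S_2 = S \setminus \Gamma^+(i)$ (a partition of $S$, since $i \notin S$). Write
\[
\Pr_\mu\Bigl[\,E_i \Bigm| \bigcap_{j \in S}\overline{E_j}\,\Bigr]
= \frac{\Pr_\mu\bigl[\,E_i \cap \bigcap_{j \in S_1}\overline{E_j} \bigm| \bigcap_{j \in S_2}\overline{E_j}\,\bigr]}
       {\Pr_\mu\bigl[\,\bigcap_{j \in S_1}\overline{E_j} \bigm| \bigcap_{j \in S_2}\overline{E_j}\,\bigr]}.
\]
The numerator is at most $\Pr_\mu[\,E_i \mid \bigcap_{j \in S_2}\overline{E_j}\,] = \Pr_\mu[E_i] \le x_i\prod_{j \in \Gamma(i)}(1-x_j)$, where the middle equality is hypothesis \Equation{Dep} with $J = S_2$ (applicable because $S_2 \cap \Gamma^+(i) = \emptyset$). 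For the denominator, enumerate $S_1 = \set{j_1,\ldots,j_r}$ and telescope the conjunction:
\[
\Pr_\mu\Bigl[\,\bigcap_{l=1}^{r}\overline{E_{j_l}} \Bigm| \bigcap_{j \in S_2}\overline{E_j}\,\Bigr]
= \prod_{l=1}^{r}\Bigl(1 - \Pr_\mu\bigl[\,E_{j_l} \bigm| \textstyle\bigcap_{m<l}\overline{E_{j_m}} \cap \bigcap_{j \in S_2}\overline{E_j}\,\bigr]\Bigr)
\ge \prod_{l=1}^{r}(1 - x_{j_l}) \ge \prod_{j \in \Gamma(i)}(1-x_j),
\]
where each conditioning event contains fewer than $\card{S}$ of the $\overline{E_j}$'s, so the inductive (ii) bounds $\Pr_\mu[E_{j_l}\mid\cdots]$ by $x_{j_l}$, and the last inequality holds since $S_1 \subseteq \Gamma(i)$ and every extra factor lies in $(0,1)$. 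Dividing gives $\Pr_\mu[E_i \mid \bigcap_{j\in S}\overline{E_j}] \le x_i$, completing the induction. The theorem is then the case $S = [n]$ of (i); in fact, unwinding (i) along any order shows the stronger bound $\Pr_\mu[\bigcap_{i=1}^{n}\overline{E_i}] \ge \prod_{i=1}^{n}(1-x_i)$.

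I expect no genuine obstacle here: the only delicate points are the positivity bookkeeping — which is precisely why statement (i) is carried through the induction together with (ii), so that no conditional probability is ever taken over a null event — and the correct invocation of \Equation{Dep}, which is legitimate exactly because $S_2$ avoids $\Gamma^+(i)$.
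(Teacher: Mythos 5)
Your proof is correct: it is the classical inductive argument of Erd\H{o}s--Lov\'asz and Spencer, carried out carefully (the positivity statement (i) maintained alongside the conditional bound (ii) keeps every conditioning event non-null, the split $S = S_1 \cup S_2$ with the \eqref{eq:Dep} hypothesis applied to $J = S_2$ is exactly right, and the telescoping of the denominator uses the inductive hypothesis on strictly smaller sets legitimately). This is, however, a genuinely different route from the paper's. The paper never runs this induction; it treats \Theorem{LLL} as the known existential statement and recovers it through its algorithmic machinery: \eqref{eq:Dep} implies the lopsided-association condition \eqref{eq:Neg}, which by \Lemma{resample-existence} (an LP-duality/transportation argument) guarantees that resampling oracles exist, and then \MSR terminates under \eqref{eq:GLL} by \Theorem{Lovasz-no-slack}, whose analysis goes through stable set sequences, the independence polynomials $\qdown_S$, the implication \eqref{eq:GLL} $\Rightarrow$ Shearer (\Lemma{GLLimpliesShearer}, \Corollary{Shearer->Lovasz}), and the quantification of slack in Shearer's region. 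Your approach buys brevity and self-containment, and it yields the explicit lower bound $\prod_{i}(1-x_i)$ with nothing more than elementary conditioning; it also adapts verbatim to the lopsidependency condition \eqref{eq:LLLL} by weakening the equality in the numerator step to an inequality. The paper's approach buys constructiveness (an efficient algorithm whenever the three subroutines are available), a proof that extends beyond \eqref{eq:GLL} to the cluster expansion criterion \eqref{eq:CLL} and to Shearer's criterion, and bounds on the number of resamplings --- at the cost of substantially heavier machinery, which is why the paper quotes the classical theorem rather than reproving it by your (standard) induction.
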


An equivalent statement of \eqref{eq:Dep} is that the event $E_i$ 
must be independent of the joint distribution on the events
$\setst{ E_j }{ j \not\in \Gamma^+(i) }$.
When \eqref{eq:Dep} holds, $G$ is called a \emph{dependency graph}.
The literature contains several dependency conditions generalizing \eqref{eq:Dep} and
criteria generalizing \eqref{eq:GLL} under which the conclusion of the
theorem remains true. We will discuss several such generalizations below.

The LLL can also be formulated \cite{AlonSpencer} in terms of a directed dependency graph 
instead of an undirected graph, but nearly 
all applications of which we are aware involve an undirected graph.
Accordingly, our work focuses primarily on the undirected case,
but we will mention below which of our results extend to the directed case.

\paragraph{Algorithms.}
Algorithms to efficiently find an outcome
in $\bigcap_{i=1}^{n} \overline{E_i}$ have been the subject of research
for several decades.
In 2008, a nearly optimal result was obtained by
Moser \cite{Moser} for a canonical application of the LLL, the bounded-degree $k$-SAT problem.
Shortly thereafter, Moser and Tardos \cite{MoserTardos} extended that result to a 
general scenario called the ``variable model'' in which $\Omega$ consists of independent variables,
each $E_i$ depends on a subset of the variables,
and events $E_i$ and $E_j$ are adjacent in $G$ if there is a variable on which they both depend.
Clearly the resulting graph is a dependency graph.
The Moser-Tardos algorithm is extremely simple:
after drawing an initial sample of the variables, it repeatedly checks if any
undesired event occurs, then \textit{resamples} any such event.
Resampling an event means that the variables on which it depends receive fresh samples according to
$\mu$.  Moser and Tardos prove that, if the \eqref{eq:GLL} condition is satisfied,
this algorithm will produced the desired outcome after at most $\sum_{i=1}^n \frac{x_i}{1-x_i}$ 
resampling operations, in expectation.

Numerous extensions of the Moser-Tardos algorithm have been proposed.
These extensions can handle more general criteria \cite{Kolipaka,Pegden,AG,KSX},
derandomization \cite{CGH}, exponentially many events \cite{HSS},
distributed scenarios \cite{Pettie}, etc.
However, these results are restricted to the Moser-Tardos variable model and
hence cannot be viewed as algorithmic proofs of the LLL in full generality.
There are many known scenarios for the LLL and its generalizations that fall outside the scope
of the variable model \cite{LuMohrSzekely,Mohr-thesis}.
\Section{implementation} discusses several such scenarios,
including random permutations, matchings and spanning trees.

Recently two efficient algorithms have been developed that go beyond the variable model.
Harris and Srinivasan \cite{HarrisS14} extend the Moser-Tardos algorithm to a scenario 
involving random permutations that originates in work of Erd\H{o}s and Spencer \cite{ErdosSpencer}.
Achlioptas and Iliopoulos \cite{Achlioptas} developed a novel algorithmic ``flaw correction'' framework 
which allows one to model various applications of the LLL in a flexible manner.
They show how this captures several applications of the LLL outside the variable model, and even
some results that might be beyond typical formulations of the LLL.
In contrast to the other results mentioned here, their framework does not involve an underlying
measure $\mu$ and is not directly tied to the probabilistic setting of the LLL.
This has some benefits, but also some restrictions that seem to prevent it from recovering the LLL
in full generality,
In particular, their publication \cite{Achlioptas} does not claim a formal connection
with \Theorem{LLL}.
\Section{related-work} contains further discussion of the related work.

\subsection{Our contributions}

The primary motivating question for this work is whether there is an ``algorithmic proof" of the Lov\'asz Local Lemma in general probability spaces. We answer this question in the following sense:
We propose an algorithmic framework for the general Lov\'asz Local Lemma,
based on a new notion of \newterm{resampling oracles}. In this framework, 
we present an algorithm that finds a point in $\bigcap_{i=1}^{n} \overline{E_i}$ (avoiding all undesired events) efficiently, if given access to three types of subroutines outlined below (the most crucial one being resampling oracles). Whether these subroutines can be implemented efficiently is an instance-dependent issue, and we discuss this further below. However, we show that the existence of such subroutines is guaranteed by the assumptions of the Lov\'asz Local Lemma. In particular, our algorithm provides a new proof of \Theorem{LLL} (with no further assumptions), and several generalizations thereof, as described below.
Algorithmically, we reduce the problem of finding a point in $\bigcap_{i=1}^{n} \overline{E_i}$ to
the problem of implementing the three subroutines that we discuss next.

\subsubsection{Algorithmic assumptions}
\SectionName{algass}

In order to discuss algorithms for the LLL in full generality, one must assume some form of access to the probability space at hand.
It is natural to assume that one can efficiently sample from $\mu$, and efficiently check whether
a given event $E_i$ occurs. However, even under these assumptions,
finding the desired output can be computationally hard.
(We show an example demonstrating this in \Section{hardness}.)
Therefore, our framework assumes the existence of one more subroutine that can
be used by our algorithm. This leads us to the notion of \ros.

Let us introduce some notation.
An atomic event $\omega$ in the probability space $\Omega$ will be called a \newterm{state}.
We write $\omega \sim \mu$ to denote that a random state $\omega$ is distributed according to $\mu$,
and $\omega \sim \mu |_{E_i}$ to denote that the distribution is $\mu$ conditioned on $E_i$.
The \ros are defined with respect to a graph $G$ on $[n]$ with neighborhood structure $\Gamma$
(not necessarily satisfying the \eqref{eq:Dep} condition).

The three subroutines required by our algorithm are as follows.
\begin{itemize}
\item\textit{Sampling from $\mu$:} There is a subroutine that provides an independent random state $\omega \sim \mu$.

\item\textit{Checking events:}
For each $i \in [n]$, there is a subroutine that determines whether $\omega \in E_i$.

\item\textit{\Ros:} For each $i \in [n]$, there is a randomized subroutine
$r_i:\Omega \rightarrow \Omega$ with the following properties.
\begin{description}
\item[\rm \RONE] If $E_i$ is an event and $\omega \sim \mu|_{E_i}$, then $r_i(\omega) \sim \mu$.
(The oracle $r_i$ removes conditioning on $E_i$.)
\item[\rm \RTWO] For any $j \notin \Gamma^+(i)$, if $\omega \not\in E_j$
then also $r_i(\omega) \not\in E_j$.
(Resampling an event cannot cause new non-neighbor events to occur.)
\end{description}
When these conditions hold, we say that $r_i$ is a \ro for events $E_1,\ldots,E_n$
and graph $G$. 
\end{itemize}

If efficiency concerns are ignored, the first two subroutines trivially exist.
We show that (possibly inefficient) resampling oracles exist if and only if a certain relaxation of \eqref{eq:Dep} holds (see \Section{lopsided-intro}).

\medskip
\noindent
{\bf Main Result.}
Our main result is that we can find a point in $\bigcap_{i=1}^n \overline{E_i}$ efficiently, whenever the three subroutines above have efficient implementations. 

\begin{theorem*}[Informal] 
Consider any probability space, any events $E_1,\ldots,E_n$,
and any undirected graph $G$ on vertex set $[n]$.
If \eqref{eq:GLL} is satisfied and if the three subroutines described above are available,
then our algorithm finds a state in $\bigcap_{i=1}^n \overline{E_i}$ efficiently in terms of
the number of calls to these subroutines.
\end{theorem*}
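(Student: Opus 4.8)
The plan is to analyze the natural ``fix-it'' procedure \MSR: sample $\omega \sim \mu$; then, while some event occurs at the current state $\omega$, compute a \emph{maximal} independent set $I$ in $G$ among the currently-occurring events (greedily, in order of index, say), replace $\omega$ by the result of applying the \ros $r_i$ for $i \in I$ in turn, and repeat; output $\omega$ as soon as no event occurs. One direction is immediate: if the loop halts, its output lies in $\bigcap_{i=1}^n \overline{E_i}$, since that is exactly when the guard fails. Moreover each iteration costs $\poly(n)$ event-checks (to recompute the occurring events and the greedy maximal independent set) together with $|I|$ calls to \ros, while the sampler is invoked only once, at the outset. So it suffices to bound the expected number of iterations, which will simultaneously bound the expected total number of subroutine calls --- the sense of ``efficiency'' at stake.

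The combinatorial core is a structural invariant: the sequence $I_1, I_2, \ldots$ of sets resampled in successive iterations is always a \emph{stable sequence}, meaning each $I_t$ is a nonempty independent set and $I_{t+1} \subseteq \Gamma^+(I_t)$. Indeed, fix an iteration, let $\mathrm{Occ}$ be the events occurring just before it, and recall $I_t \subseteq \mathrm{Occ}$ is a maximal independent set, so that $\mathrm{Occ} \subseteq \Gamma^+(I_t)$. For any $j \notin \Gamma^+(I_t)$ we then have $j \notin \mathrm{Occ}$, i.e.\ $E_j$ does not occur, and applying $r_i$ for any $i \in I_t$ keeps $E_j$ non-occurring by \RTWO{} (as $j \notin \Gamma^+(i)$); iterating over $i \in I_t$, the event $E_j$ still does not occur after the iteration. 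Hence every event occurring at the start of iteration $t+1$ lies in $\Gamma^+(I_t)$, so in particular $I_{t+1} \subseteq \Gamma^+(I_t)$. (Care is needed here because \RONE{} only constrains $r_i$ on inputs lying in $E_i$; the specification of \MSR must therefore invoke $r_i$ only while $E_i$ currently holds, e.g.\ by re-examining occurrence while processing the members of $I_t$ in index order. This bookkeeping does not affect the argument just given, which uses only \RTWO{}.)

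The probabilistic core bounds the likelihood of any \emph{particular} stable sequence: for every stable sequence $(I_1, \ldots, I_s)$,
\[
\Pr\big[\,\text{the algorithm's first $s$ resampled sets are } I_1, \ldots, I_s\,\big] ~\le~ \prod_{t=1}^{s}\ \prod_{i \in I_t} \Pr_\mu[E_i].
\]
This is proved by induction on $s$, the key point being \RONE{}: once \MSR has executed the resamplings dictated by a valid prefix, the current state is again distributed according to $\mu$, since each $r_i$ maps $\mu|_{E_i}$ to $\mu$ and these maps compose (tracking the within-iteration order as above). Hence, conditioned on the first $t-1$ resampled sets being $I_1, \ldots, I_{t-1}$, the probability that the $t$-th resampled set is exactly $I_t$ is at most the $\mu$-probability that every event of $I_t$ occurs; peeling off the members of $I_t$ one at a time and reapplying \RONE{} after each call turns this into $\prod_{i\in I_t}\Pr_\mu[E_i]$, rather than merely $\Pr_\mu[\bigcap_{i\in I_t}E_i]$. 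Note that neither \eqref{eq:Dep} nor any independence among non-adjacent events is used --- the product arises purely from the sequential structure of resampling.

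Finally, one sums these bounds and invokes \eqref{eq:GLL}. Assign vertex $i$ the weight $x_i$. A generating-function estimate --- parallel to the witness-tree counting of Moser and Tardos and to the stable-set-sequence analysis of Kolipaka and Szegedy, and using $\Pr_\mu[E_i] \le x_i \prod_{j\in\Gamma(i)}(1-x_j)$ at each vertex --- shows that $\sum_{(I_1,\ldots,I_s)} \prod_{t=1}^s \prod_{i\in I_t}\Pr_\mu[E_i]$, summed over all stable sequences, is finite, of order $\sum_i \frac{x_i}{1-x_i}$. Combining with the two lemmas, the expected number of iterations of \MSR --- hence the expected number of resampling operations, and of all subroutine calls --- is $O\big(\poly(n)\cdot\sum_i \frac{x_i}{1-x_i}\big)$; in particular \MSR halts with probability $1$ and returns a point of $\bigcap_{i=1}^n \overline{E_i}$, which also re-proves \Theorem{LLL}. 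I expect the main obstacle to be this counting step: identifying the branching/tree objects that the weighting argument naturally controls, setting up an injection from stable-sequence prefixes into them, and checking that the \eqref{eq:GLL} slack at each vertex propagates through so that the total stays finite even when the weights $x_i$ are not bounded away from $1$. A secondary subtlety, already flagged, is specifying the within-iteration resampling so that \RONE{}'s precondition holds at every call.
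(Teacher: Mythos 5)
Your first two steps are exactly the paper's: the forward-looking structural invariant is the stable set sequence lemma, and your ``peeling off members of $I_t$ and reapplying \RONE{}'' argument is the paper's $\cI$-checking coupling (\Lemma{prod-bound}), including the observation that only \RONE{} and \RTWO{} are needed and that \MSR only invokes $r_i$ when $E_i$ currently holds. Up to there the proposal is sound and matches \Section{stable-set-sequences}.

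The gap is the final counting step, and it is not a technicality you could patch by a cleverer injection: the claimed estimate is false. Under \eqref{eq:GLL} alone, the sum of $p_\cI$ over \emph{all} proper stable set sequences is not of order $\sum_i \frac{x_i}{1-x_i}$; the paper's \Lemma{crude-bound} gives only $\prod_i \frac{1}{1-x_i}$, and by \Corollary{Shearer-bound} the sum is exactly $1/q_\emptyset$, which is exponential even in trivial instances (take mutually independent events with $p_i=x_i=1/2$ and the empty graph: the sum is $2^n$). Consequently, bounding the expected number of iterations by the total sum can never yield efficiency --- it reproduces the $1/q_\emptyset$ bound one already gets by repeatedly sampling fresh states from $\mu$, as the paper notes after \Corollary{Shearer-bound}. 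A per-root bound of order $\frac{x_i}{1-x_i}$, as in Moser--Tardos, would amount to a witness tree lemma, and \Appendix{witness-trees} exhibits resampling oracles for which such a statement fails badly (a path-shaped witness tree of length $\Theta(n/\log n)$ appears with constant probability), so no generating-function refinement of your count can rescue this route in the abstract oracle setting.

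What the paper does instead is a tail argument driven by \emph{slack}: if each $p_i$ carries a factor $(1-\epsilon)$ of slack, every sequence of total size $s$ picks up $(1-\epsilon)^{s}$, and combining with the crude (exponential) sum gives that more than $\frac{1}{\epsilon}\big(t+\sum_j \ln\frac{1}{1-x_j}\big)$ resamplings happen with probability at most $e^{-t}$ (\Theorem{GLL-with-slack}). To handle \eqref{eq:GLL} with no slack --- which is what the stated theorem requires --- the paper proves that \eqref{eq:GLL} implies Shearer's criterion (\Corollary{Shearer->Lovasz}), that the Shearer region always contains quantifiable slack $\epsilon = q_\emptyset/(2\sum_i q_{\{i\}})$ (\Lemma{Shearer-automatic-slack}), and that $q_{\{i\}}/q_\emptyset \le \frac{x_i}{1-x_i}$ (\Corollary{qAndx}); assembling these gives \Theorem{Lovasz-no-slack} and hence \Theorem{LLL-tight-result}, with the roughly quadratic bound $O\big(\sum_i\frac{x_i}{1-x_i}\sum_j\ln\frac{1}{1-x_j}\big)$ rather than $O(\poly(n)\sum_i\frac{x_i}{1-x_i})$ by a linear count. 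This slack-manufacturing mechanism is the missing idea in your proposal.
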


We make a more precise statement in the following section.
We note that this theorem does not assume that \eqref{eq:Dep} holds, 
and the existence of \ros is actually a strictly weaker condition.
Thus, our algorithm provides a new proof of \Theorem{LLL} (the existential LLL) under its original assumptions.

\subsection{Our algorithm: \MSR}
\SectionName{MSR}

A striking aspect of the work of Moser and Tardos \cite{MoserTardos} is the simplicity and
flexibility of their algorithm --- in each iteration, \emph{any} event $E_i$ that occurs
can be resampled.
We propose a different algorithm that is somewhat less flexible, but whose
analysis seems to be simpler in our scenario.
Roughly speaking, our algorithm proceeds in iterations where in each iteration we resample
events that form an independent set in $G$. The independent set is generated by a greedy algorithm
that adds a vertex $i$ and resamples $E_i$, if $i$ is not adjacent to the previously selected vertices
and $E_i$ occurs in the current state. This is repeated until no events occur.
Pseudocode for this procedure is shown in \Algorithm{resample}.
Nearly identical algorithms have been proposed before,
particularly parallel algorithms \cite{MoserTardos,Kolipaka},
although our interest lies not in the parallel aspects but rather in making the LLL
(and its stronger variants) algorithmic in our general setting.

\begin{algorithm}
\caption{\MSR uses \ros to output a state $\omega \in \bigcap_{i=1}^n \overline{E_i}$.
It requires the three subroutines described in \Section{algass}:
    sampling $\omega \sim \mu$,
    checking if an event $E_i$ occurs,
    and the resampling oracles $r_i$.
}
\AlgorithmName{resample}
\begin{algorithmic}[1]
\STATE Initialize $\omega$ with a random state sampled from $\mu$;
\STATE $t := 0$;
\REPEAT
\STATE $t := t+1$;
\STATE $J_t := \emptyset$
\WHILE {there is $i \notin \Gamma^+(J_t)$ such that $\omega \in E_i$}
\STATE Let $i$ be the minimum index satisfying that condition;
\STATE $J_t := J_t \cup \{i\}$;
\STATE $\omega := r_i(\omega)$;     \quad$\rhd$ \textit{Resample $E_i$}
\ENDWHILE
\UNTIL {$J_t = \emptyset$};
\RETURN $\omega$.
\end{algorithmic}
\end{algorithm}

Our algorithmic proof of the LLL amounts to showing that \MSR terminates,
at which point $\omega \in \bigcap_{i=1}^n \overline{E_i}$ clearly holds.
Our bound on the running time of \MSR is shown by the following theorem, which is proven in \Section{analysis}. We note that our bound is at most quadratic in the quantity $\sum_{i=1}^{n} \frac{x_i}{1-x_i}$ which was the bound proved by Moser and Tardos \cite{MoserTardos}.

\begin{theorem}
\TheoremName{LLL-tight-result}
Suppose that the events $E_1,\ldots,E_n$ satisfy \eqref{eq:GLL}
and that the three subroutines described above in \Section{algass} are available.
Then the expected number of calls to the \ros before \MSR terminates is 
$O\big(\sum_{i=1}^{n} \frac{x_i}{1-x_i} \sum_{j=1}^{n} \log \frac{1}{1-x_j} \big)$.
\end{theorem}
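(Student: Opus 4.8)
The plan is to decompose the argument into a structural step (what execution logs can look like), a distributional step (how likely each log is), and a counting step (summing over all logs using the \eqref{eq:GLL} criterion).

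First, the structure. Property \RTWO says that resampling $E_i$ can only turn on events in $\Gamma^+(i)$. It follows that within a single run of the \textbf{while} loop the resampled indices form a \emph{strictly increasing} sequence: right after $r_a$ is applied, any index below $a$ either was not occurring and still is not (it is a non-neighbor of $a$, so \RTWO applies) or has just entered $\Gamma(J_t)$ and is disqualified. Hence the ordered independent set $J_t$ resampled in phase $t$ is determined by its underlying set $S_t$, and each phase makes at most $n$ resamplings. Applying \RTWO once more across a phase boundary shows that every event occurring at the end of phase $t$ lies in $\Gamma^+(S_t)$, and then, inductively on the resamplings of phase $t+1$, that $S_{t+1}\subseteq\Gamma^+(S_t)$. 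So the log of an execution is exactly a sequence $(S_1,S_2,\ldots)$ of nonempty independent sets of $G$ with $S_{t+1}\subseteq\Gamma^+(S_t)$ --- a \emph{stable set sequence} in the sense of \cite{Kolipaka}. This bookkeeping is routine, but the linkage $S_{t+1}\subseteq\Gamma^+(S_t)$ is precisely what will make the sum in the counting step converge.

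Second, the distribution. Write $P_\sigma=\prod_{t}\prod_{i\in S_t}\Pr_\mu[E_i]$ for a finite sequence $\sigma=(S_1,\ldots,S_m)$. I would show, by induction on the total number of resamplings in $\sigma$, that the unnormalized sub-measure $\mu_\sigma(\cdot):=\Pr[\text{the execution so far has log }\sigma\text{ and the current state lies in }\cdot\,]$ is dominated pointwise by $P_\sigma\,\mu$; taking $\sigma$ to be a complete stable set sequence then gives $\Pr[\text{the first }m\text{ phases are }S_1,\ldots,S_m]\le P_\sigma$. The base case $\mu_{()}=\mu$ is immediate. For the inductive step, note that \MSR is about to resample $E_a$ only when the current state lies in $E_a$, so restricting $\mu_\sigma$ to the sub-event ``$E_a$ is resampled next'' yields a measure $\nu$ with $\nu\le P_\sigma\,\Pr_\mu[E_a]\,\mu|_{E_a}$ (using the inductive hypothesis and $\mu(\cdot\cap E_a)=\Pr_\mu[E_a]\,\mu|_{E_a}(\cdot)$), and then pushing $\nu$ forward through $r_a$ and invoking \RONE, which gives $r_a(\mu|_{E_a})=\mu$, restores the bound with one more factor $\Pr_\mu[E_a]$. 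This step uses only \RONE and domination of measures --- not \RTWO and not \eqref{eq:GLL} --- so once the counting sum below is shown finite we will have re-derived \Theorem{LLL}.

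Finally, the counting. Combining the two steps,
\[
\E[\text{number of resamplings}]=\sum_{m\ge1}\E[|S_m|]\ \le\ \sum_{m\ge1}\ \ \sum_{\substack{(S_1,\ldots,S_m)\text{ a stable set}\\ \text{sequence of nonempty sets}}}|S_m|\prod_{t=1}^{m}\prod_{i\in S_t}\Pr_\mu[E_i],
\]
and the entire remaining content is to bound the right-hand side by $O\big(\sum_i\frac{x_i}{1-x_i}\sum_j\log\frac1{1-x_j}\big)$; this is the main obstacle. I would follow the Moser--Tardos and Kolipaka--Szegedy weighting scheme: setting $\Psi_m(R)=\sum_{(S_1,\ldots,S_m):\,S_m=R}\prod_t\prod_{i\in S_t}\Pr_\mu[E_i]$, use $\Pr_\mu[E_i]\le x_i\prod_{j\in\Gamma(i)}(1-x_j)$ together with the independence of $R$ to absorb the probabilities into $\prod_{i\in R}\frac{x_i}{1-x_i}$ times $(1-x_j)$-factors over $\Gamma(R)$, and then control the recursion that expresses $\Psi_m(R)$ via the $\Psi_{m-1}(R')$ over independent sets $R'$ with $R\subseteq\Gamma^+(R')$. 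The covering constraint $R\subseteq\Gamma^+(R')$ is exactly what prevents the blow-up that an unconstrained product of independence polynomials would exhibit. The delicate point --- getting the factor $\sum_j\log\frac1{1-x_j}$ rather than a cruder $n$ or $\sum_j\frac{x_j}{1-x_j}$ --- comes from carefully tracking the accumulated $(1-x_j)$-factors from phase to phase, which limits how many phases can bear on any single index and leaves a telescoping/geometric series to sum. Since $\log\frac1{1-x_j}\le\frac{x_j}{1-x_j}$ for $x_j\in(0,1)$, the bound is in particular at most quadratic in $\sum_i\frac{x_i}{1-x_i}$, recovering the Moser--Tardos regime up to a square.
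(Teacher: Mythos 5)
Your structural step (the log of \MSR is a proper stable set sequence) and your distributional step (domination of the execution sub-measure by $P_\sigma\,\mu$, which is just another packaging of the paper's coupling with the ``$\cI$-checking'' process, \Lemma{prod-bound} and \Lemma{iteration-bound}) are both correct and coincide with the paper's. The gap is in the counting step, and it is not merely a missing computation: the quantity you propose to bound is provably exponentially large exactly in the regime the theorem addresses. Since every proper sequence has $|S_m|\ge 1$, your right-hand side is at least $\sum_{\cI\in\Prop,\,\cI\neq()}p_\cI$, and by \Corollary{Shearer-bound} this equals $1/q_\emptyset-1$ whenever $p$ lies in Shearer's region, which it does under \eqref{eq:GLL}. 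Take $G$ to be a perfect matching on $[n]$ with $p_i=1/4$ and $x_i=1/2$, so that \eqref{eq:GLL} holds with equality; then $q_\emptyset=2^{-n/2}$, so your union-bound sum is at least $2^{n/2}-1$, while the target bound is $O(n^2)$. No recursion on $\Psi_m(R)$ or bookkeeping of $(1-x_j)$-factors can rescue this: the best the plain union bound gives is \Lemma{crude-bound}'s $\prod_i\frac{1}{1-x_i}$, and the paper explicitly notes (after \Corollary{Shearer-bound}) that the resulting ``$1/q_\emptyset$'' guarantee is no better than repeated independent sampling.

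What is missing is the mechanism that turns the union bound into a geometric tail, namely a factor $(1+\epsilon)^{-s}$ against the sum over sequences of total size $s$. With an $\epsilon$-slack assumption your outline essentially reproduces \Theorem{GLL-with-slack}, but \Theorem{LLL-tight-result} assumes no slack, and this is where the paper's real work lies: (i) \eqref{eq:GLL} implies Shearer's criterion via the elementary inequality $\breve{q}_S/\breve{q}_{S\setminus\{a\}}\ge 1-x_a$ (\Lemma{GLLimpliesShearer}), which also yields $q_{\{i\}}/q_\emptyset\le x_i/(1-x_i)$ (\Corollary{qAndx}); (ii) Shearer's criterion is never tight, and the available slack is quantified: taking $\epsilon=q_\emptyset/(2\sum_i q_{\{i\}})$, the point $(1+\epsilon)p$ stays in the Shearer region with $q_\emptyset$ decreasing by at most a factor of $2$ (\Lemma{Shearer-automatic-slack}, proved via convexity and supermodularity of the $\breve{q}$ polynomials along the ray); (iii) with that slack, $\Pr[\text{at least }s\text{ resamplings}]\le(1+\epsilon)^{-s}/q'_\emptyset$ (\Theorem{Shearer-slack}), and $\ln(1/q'_\emptyset)\le\sum_j\ln(1+q'_{\{j\}}/q'_\emptyset)$ (\Claim{sumqJ}), with each ratio again bounded by $x_j/(1-x_j)$; assembling these gives \Theorem{Lovasz-no-slack} and hence the stated bound. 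So the heart of the proof is precisely the part your plan defers to ``careful tracking,'' and the specific sum you reduced the problem to cannot deliver it.
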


\subsection{Generalizing the dependency condition}
\SectionName{lopsided-intro}

Erd\H{o}s and Spencer \cite{ErdosSpencer} showed that \Theorem{LLL}
still holds when \eqref{eq:Dep} is generalized to\footnote{
    More precisely, \eqref{eq:LLLL} should be restricted to $J$ for which
    $\Pr_\mu[\cap_{j \in J} \overline{E_j}] > 0$.
    However that restriction is ultimately unnecessary because, in the context of the LLL,
    the theorem of Erd\H{o}s and Spencer implies that $\Pr_\mu[\cap_{j \in [n]} \overline{E_j}] > 0$.
}
\begin{equation}
\tag{Lop}
\label{eq:LLLL}
\Pr_\mu[E_i \mid \cap_{j \in J} \overline{E_j}] ~\leq~ \Pr_\mu[E_i]
    \qquad\forall i \in [n] ,\, J \subseteq [n] \setminus \Gamma^+(i).
\end{equation}
They playfully called this the ``lopsidependency'' condition,
and called $G$ a ``\NDG''.
This more general condition enables several interesting uses of the LLL 
in combinatorics and theoretical computer science,
e.g., existence of Latin transversals \cite{ErdosSpencer} and
optimal thresholds for satisfiability \cite{Gebauer}.

Recall that \Theorem{LLL-tight-result} did not assume \eqref{eq:Dep} and instead
assumed the existence of \ros.
It is natural to wonder how the latter assumption relates to lopsidependency.
We show that the existence of \ros is equivalent to a condition that we call
\newterm{lopsided association}, and whose strength lies strictly
between \eqref{eq:Dep} and \eqref{eq:LLLL}.
The lopsided association condition is
\begin{equation}
\tag{LopA}
\EquationName{Neg}
\Pr_\mu[E_i \cap F] ~\geq~ \Pr_\mu[E_i] \cdot \Pr_\mu[F]
    \qquad \forall i \in [n], \forall F \in {\cal F}_i
\end{equation}
where ${\cal F}_i$ contains all events $F$ whose indicator variable is a monotone non-decreasing function of the
indicator variables of $(E_j \,:\, j \notin \Gamma^+(i))$.
We call a graph satisfying \eqref{eq:Neg} a \newterm{\NAG}
for events $E_1,\ldots,E_n$.

\begin{theorem*}[Informal] 
\Ros exist for events $E_1,\ldots,E_n$ and a graph $G$ if and only if $G$ is a \NAG
for events $E_1,\ldots,E_n$.
\end{theorem*}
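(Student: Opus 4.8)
The plan is to prove both directions of the equivalence between the existence of resampling oracles and the \NAP property \eqref{eq:Neg}.

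For the \emph{easy direction} (oracles $\Rightarrow$ \NAP), suppose $r_i$ satisfies \RONE{} and \RTWO. Fix $i$ and $F \in {\cal F}_i$, so the indicator $\b1_F$ is a monotone non-decreasing function of the indicators $(\b1_{E_j} : j \notin \Gamma^+(i))$. The key observation is that \RTWO{} says exactly that applying $r_i$ can only \emph{turn off} non-neighbor events: for $j \notin \Gamma^+(i)$, $\omega \notin E_j \Rightarrow r_i(\omega) \notin E_j$, equivalently $\b1_{E_j}(r_i(\omega)) \le \b1_{E_j}(\omega)$ pointwise. By monotonicity of $F$, this gives $\b1_F(r_i(\omega)) \le \b1_F(\omega)$ pointwise, hence $\Pr[r_i(\omega) \in F] \le \Pr[\omega \in F]$ for \emph{any} input distribution on $\omega$. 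Now feed in $\omega \sim \mu|_{E_i}$: by \RONE, $r_i(\omega) \sim \mu$, so $\Pr_\mu[F] = \Pr[r_i(\omega) \in F] \le \Pr[\omega \in F] = \Pr_\mu[F \mid E_i]$, which upon multiplying by $\Pr_\mu[E_i]$ is precisely \eqref{eq:Neg}. (If $\Pr_\mu[E_i] = 0$ the inequality is trivial.)

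For the \emph{hard direction} (\NAP $\Rightarrow$ oracles), one must construct, for each $i$, a randomized map $r_i$ transporting $\mu|_{E_i}$ to $\mu$ while never switching on a non-neighbor event. I would build $r_i$ as a \emph{coupling}: given $\omega \sim \mu|_{E_i}$, we must produce $\omega' \sim \mu$. Condition on the pattern $\sigma \in \{0,1\}^{[n]\setminus\Gamma^+(i)}$ of which non-neighbor events hold in $\omega$; write $E_\sigma$ for the corresponding event $\{\omega'' : \b1_{E_j}(\omega'') = \sigma_j \ \forall j \notin \Gamma^+(i)\}$. Property \RTWO{} forces that whenever $\omega \in E_\sigma$, the output $r_i(\omega)$ must land in a pattern $\sigma' \le \sigma$ coordinatewise (only zeros can be created, ones can be destroyed only if they're neighbors — but neighbor coordinates are unconstrained, so the constraint is exactly $\sigma'_j \le \sigma_j$ for non-neighbors). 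So the task reduces to: find a coupling of $\mu|_{E_i}$ (restricted to each $E_\sigma$) with $\mu$ that is \emph{downward-monotone} in the partial order on patterns. By a standard max-flow / Strassen-type monotone-coupling argument, such a coupling exists if and only if for every \emph{up-set} $U$ of patterns (i.e. $\sigma \in U, \sigma \le \tau \Rightarrow \tau \in U$), the mass that $\mu|_{E_i}$ puts on $U$ is at least the mass $\mu$ puts on $U$: $\Pr_\mu[\cup_{\sigma \in U} E_\sigma \mid E_i] \ge \Pr_\mu[\cup_{\sigma \in U} E_\sigma]$. But $\cup_{\sigma \in U} E_\sigma$ is exactly an event $F$ whose indicator is a monotone non-decreasing function of $(\b1_{E_j} : j \notin \Gamma^+(i))$, i.e. $F \in {\cal F}_i$ — so this inequality is \emph{precisely} \eqref{eq:Neg}. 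Once a pattern-level monotone coupling is fixed, we lift it within patterns arbitrarily (conditioned on the target pattern $\sigma'$, output a fresh sample from $\mu|_{E_{\sigma'}}$, suitably mixed over the neighbor-coordinate behavior) so that the overall output distribution is exactly $\mu$; \RONE{} then holds by construction.

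\textbf{Main obstacle.} The crux is the hard direction, and specifically packaging the finite-dimensional monotone coupling (Strassen's theorem on the Boolean lattice of non-neighbor patterns) together with the measure-theoretic lifting \emph{within} each pattern, so that globally $r_i(\omega) \sim \mu$ exactly — not just pattern-marginally. The monotone-coupling existence is clean once one recognizes that the ${\cal F}_i$ events are exactly the up-sets, making \eqref{eq:Neg} the Strassen hypothesis verbatim; the fiddly part is handling general (non-discrete) $\Omega$, where one should work with the finite $\sigma$-algebra generated by $(\b1_{E_j})_{j}$ for the coupling skeleton and use a disintegration of $\mu$ to fill in the conditional samples. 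I would also remark that this is exactly why the \NAP{} condition sits strictly between \eqref{eq:Dep} and \eqref{eq:LLLL}: \eqref{eq:LLLL} only controls $F = \cap_{j \in J}\overline{E_j}$ (a single down-set generator), whereas \eqref{eq:Neg} demands the inequality for \emph{all} monotone $F$, which is what a valid coupling requires.
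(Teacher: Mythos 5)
Your proof is correct, and the forward direction (oracles $\Rightarrow$ \eqref{eq:Neg}) is exactly the coupling argument the paper gives in \Lemma{resample-existence}. For the converse your route is organized differently from the paper's. The paper works directly at the level of states: it formulates the existence of $r_i$ as a transportation problem \eqref{eq:transportation} between $\mu|_{E_i}$ and $\mu$ on a bipartite graph whose edges encode \RTWO, invokes LP duality to get a Hall-type feasibility condition \eqref{eq:transportationdual}, and then argues that the binding constraints come from ``non-increasing'' sets of states, at which point \eqref{eq:Neg} verifies feasibility. You instead quotient by the finite $\sigma$-algebra generated by the non-neighbor indicators, apply Strassen's monotone-coupling theorem on the Boolean lattice of patterns (whose up-sets are exactly the events in $\cF_i$, so the domination hypothesis is \eqref{eq:Neg} verbatim), and then lift the pattern-level coupling by sampling afresh from $\mu$ conditioned on the target pattern; this does reproduce $\mu$ exactly and respects \RTWO, so the lifting step you flag as the main obstacle goes through. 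The two arguments rest on the same duality (Strassen is itself a max-flow/Hall statement), but your factorization through the pattern lattice makes the reduction to monotone sets automatic and keeps the coupling skeleton finite, which is convenient if $\Omega$ is large or non-discrete, whereas the paper's single state-level LP avoids any disintegration or lifting discussion and is self-contained given a standard transportation feasibility theorem. One small point of care: for inputs of measure zero under $\mu|_{E_i}$ (where the conditional kernel of your coupling is undefined) you should fix a default such as $r_i(\omega)=\omega$ so that \RTWO\ holds for every state, not just almost every state; this is cosmetic and does not affect \RONE.
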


This equivalence follows essentially from LP duality: The existence of a \ro can be formulated as a
{\em transportation problem} for which the \NAP condition is exactly the necessary and sufficient
condition for a feasible transportation to exist. 
\Section{existence} proves this result in detail.

As remarked above, the dependency conditions are related by
\eqref{eq:Dep} $\Rightarrow$ \eqref{eq:Neg} $\Rightarrow$ \eqref{eq:LLLL}.
The first implication is obvious since \eqref{eq:Dep} implies that $E_i$ is independent of $F$ in \eqref{eq:Neg}.
To see the second implication, simply take
$F = \bigcup_{j \in J} E_j$ for any $J \subseteq [n] \setminus \Gamma^+(i)$
to obtain that 
$\Pr_\mu[E_i \mid \cup_{j \in J} E_j] \geq \Pr_\mu[E_i]$.
Although lopsided association is formally a stronger assumption than lopsidependency,
every use of the LLL with lopsidependency that we have studied actually satisfies
the stronger lopsided association condition.
We demonstrate this in \Section{implementation} by designing efficient \ros for those scenarios.
Consequently, \Theorem{LLL-tight-result} makes the LLL efficient in those scenarios.

As remarked above, \Section{hardness} describes a scenario in which
\eqref{eq:Dep} and \eqref{eq:GLL} are satisfied for a dependency graph $G$
but finding a state $\omega \in \bigcap_{i=1}^{n} \overline{E_i}$ is computationally hard,
assuming standard complexity theoretic beliefs.
In that scenario \ros must necessarily exist since \eqref{eq:Dep} is satisfied,
but they cannot be efficiently implemented due to the computational hardness.
Therefore the equivalence between \eqref{eq:Neg} and \ros
comes with no efficiency guarantees.
Nevertheless in all lopsidependency scenarios that we have encountered in applications of the LLL,
efficient implementations of the \ros arise naturally from existing work,
or can be devised with modest effort.
In particular this is the case for random permutations,
perfect matchings in complete graphs, and spanning trees in complete graphs,
as discussed in \Section{implementation}.

\subsection{Generalizing the LLL criterion}
\SectionName{generalizingLLL}

In the early papers on the LLL \cite{ErdosLovasz,Spencer77}, the \eqref{eq:GLL} criterion 
relating the dependency graph $G$ and the probabilities $\Pr_\mu[E_i]$ 
was shown to be a sufficient condition to ensure that
$\Pr_\mu[\bigcap_{i=1}^{n} \overline{E_i}] > 0$.
Shearer \cite{Shearer} discovered a more general criterion that ensures the same conclusion.
In fact, Shearer's criterion is the best possible: whenever his criterion is violated,
there exist a corresponding measure $\mu$ and events $E_1,\ldots,E_n$ 
for which $\Pr_\mu[\bigcap_{i=1}^{n} \overline{E_i}] = 0$.

\Section{analysis} formally defines Shearer's criterion and uses it in a fundamental
way to prove \Theorem{LLL-tight-result}.
Moreover, we give an algorithmic proof of the LLL under Shearer's criterion instead of
the \eqref{eq:GLL} criterion.
This algorithm is efficient in typical situations, although the efficiency depends on
Shearer's parameters.
The following simplified result
is stated formally and proven in \Section{shearer-automatic-slack}.

\begin{theorem*}[Informal]
Suppose that a graph $G$ and the probabilities $\Pr_\mu[E_1],\ldots,\Pr_\mu[E_n]$ satisfy
Shearer's criterion with $\epsilon$ slack,
and that the three subroutines described in \Section{algass} are available.
Then the expected number of calls to the \ros by \MSR is $O(\frac{n}{\epsilon} \log \frac{1}{\epsilon})$.
\end{theorem*}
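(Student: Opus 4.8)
The plan is to specialize the Shearer-parametrized analysis of \MSR developed in \Section{analysis} (the same analysis that proves \Theorem{LLL-tight-result}) to an $\epsilon$-slack Shearer instance. Write $p_i = \Pr_\mu[E_i]$ and $p = (p_1,\dots,p_n)$, and recall that Shearer's criterion for $(G,p)$ asks that the multilinear polynomials $q_I(p) = \sum_{J\supseteq I,\; J\text{ independent}} (-1)^{|J|-|I|}\prod_{j\in J} p_j$ be nonnegative for every independent set $I$ of $G$; ``$\epsilon$ slack'' means Shearer's criterion still holds at the inflated vector $(1+\epsilon)p$. The analysis of \Section{analysis} tracks \MSR through a potential built from the $q_I$, whose total mass over length-$t$ stable-set sequences contracts geometrically from one iteration to the next, and it bounds the expected number of \ro calls in terms of that contraction rate together with the per-event ``amplitudes'' $q_{\{i\}}(p)/q_\emptyset(p)$. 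So the proof has two parts: (i) extract this bound from \Section{analysis} in a form expressed through the $q_I(p)$; (ii) show that $\epsilon$ slack forces the contraction rate and the amplitudes into the right range.

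For (ii), the key structural fact is that, by the monotonicity part of Shearer's theorem, $\epsilon$ slack makes every $q_I$ nonnegative on the whole box $[0,(1+\epsilon)p]$, on which it is multilinear. From this I would extract: (A) a positive lower bound $q_\emptyset(p) \ge (1+\epsilon)^{-n}\,q_\emptyset((1+\epsilon)p) > 0$, obtained by interpolating one coordinate at a time --- this already lets Shearer's theorem certify $\Pr_\mu[\bigcap_i \overline{E_i}] \ge q_\emptyset(p) > 0$, so a valid state exists; (B) the amplitude bound $q_{\{i\}}(p)/q_\emptyset(p) = O(1/\epsilon)$ for every $i$, which one checks directly on small instances (an empty graph, a single edge, a path) and proves in general by the same coordinatewise interpolation; and (C) that one iteration of \MSR multiplies the stable-set-sequence potential by a factor at most $1-\Omega(\epsilon)$, again by comparing each $q_I(p)$ with $q_I((1+\epsilon)p)$.

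Combining, I would bound the expected number of \ro calls by $\sum_i \E\big[\#\{\text{resamplings of }E_i\}\big]$, and for each $i$ the estimate from \Section{analysis} gives $\E\big[\#\{\text{resamplings of }E_i\}\big] \le \sum_{t\ge 1}\Pr[\text{at least }t\text{ resamplings of }E_i] \le \sum_{t\ge 1}\min\{1,\; O(1/\epsilon)\cdot(1-\Omega(\epsilon))^t\}$, which is $O\!\big(\tfrac1\epsilon\log\tfrac1\epsilon\big)$: the first $O\!\big(\tfrac1\epsilon\log\tfrac1\epsilon\big)$ terms are each at most $1$, and the remaining geometric tail contributes only $O(1/\epsilon)$. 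Summing over $i$ gives $O\!\big(\tfrac n\epsilon\log\tfrac1\epsilon\big)$.

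The crux --- and the step I expect to be most delicate --- is the quantitative work in (ii)(B)--(C): showing that the per-iteration contraction is genuinely $1-\Omega(\epsilon)$ while each per-event amplitude stays $O(1/\epsilon)$ (equivalently $\log(\text{amplitude}) = O(\log\tfrac1\epsilon)$), rather than something like $\epsilon^{-\Theta(n)}$, which would cost an extra factor of $n$. The natural way to make this robust is (a) to interpolate at $(1+\delta)p$ for $\delta$ a fixed fraction of $\epsilon$, so that the ``gap'' driving the contraction is kept apart from the ``slack'' driving the amplitude bound; (b) to charge \ro calls per event rather than per iteration, since the number of iterations can legitimately exceed $\tfrac1\epsilon\log\tfrac1\epsilon$ --- it is already about $\tfrac1\epsilon\log n$ for $n$ independent events with each $p_i$ near $\tfrac1{1+\epsilon}$ --- and (c) to run the argument at $(1+\tfrac\epsilon2)p$ when Shearer's criterion at $(1+\epsilon)p$ holds only with equality.
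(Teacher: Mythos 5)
Your setup (stable-set sequences plus the coupling bound, geometric decay of $p_\cI$ from the slack, and the coordinatewise-interpolation bound $q_{\{i\}}/q_\emptyset = O(1/\epsilon)$, which is exactly \Claim{ShearerSlack}) matches the paper's machinery, and your step (A)/(B) is sound. The genuine gap is in the final combination: you charge resamplings \emph{per event} and assert that ``the estimate from \Section{analysis} gives'' $\Pr[\text{at least } t \text{ resamplings of } E_i] \le O(1/\epsilon)\,(1-\Omega(\epsilon))^t$. No such per-event estimate exists in the paper, and nothing in your sketch produces it. What the coupling argument (\Lemma{prod-bound}, \Lemma{iteration-bound}) yields is a bound on the \emph{total} number of resampled events: a union bound over all proper sequences of total size $s$, giving $\Pr[\text{total} \ge s] \le (1+\epsilon')^{-s}\sum_{\cI\in\Prop} p'_\cI \le (1+\epsilon')^{-s}/q'_\emptyset$. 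If you restrict to sequences in which $E_i$ appears at least $t$ times, the prefactor you get from this route is still $1/q'_\emptyset$, which can be of order $(1+\Theta(1/\epsilon))^n$, not $O(1/\epsilon)$. Turning that prefactor into $O(1/\epsilon)$ for a fixed root event is precisely a witness-tree-lemma-type statement (a per-event accounting as in Moser--Tardos and Kolipaka--Szegedy), which the paper explicitly cannot prove in the resampling-oracle framework and whose natural analogue it refutes in \Appendix{witness-trees}; this is also exactly why the paper's general bounds are quadratically worse than Kolipaka--Szegedy's. So the inequality you sum over $t$ is unsupported, and your identification of the ``crux'' as (ii)(B)--(C) misses where the real difficulty lies.

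The fix is to stay global, which is the paper's route: take $p'=(1+\epsilon/2)p$, note $p'$ retains $\epsilon/3$ slack so \Claim{ShearerSlack} gives $q'_{\{i\}}/q'_\emptyset \le 3/\epsilon$; then \Theorem{Shearer-slack} (slack $\epsilon/2$ used for the geometric decay) together with \Corollary{Shearer-no-q0} (log-submodularity of $q$, via $\ln(1/q'_\emptyset) \le \sum_j \ln(1+q'_{\{j\}}/q'_\emptyset) \le n\ln(1+3/\epsilon)$ --- equivalently, iterating your coordinatewise interpolation over all $n$ coordinates) shows that the probability of more than $\frac{4}{\epsilon}\big(n\ln(1+\tfrac{3}{\epsilon})+t\big)$ total resamplings is at most $e^{-t}$. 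The whole $n\log\frac1\epsilon$ thus enters once as an additive threshold in the tail bound for the total count, rather than as $n$ separate per-event contributions, and integrating the tail gives the claimed $O(\frac{n}{\epsilon}\log\frac1\epsilon)$ expected \ro calls. Your splitting of the slack in (c) is the right instinct and is exactly what the paper does; your per-iteration contraction in (C) should likewise be stated per resampled event (per unit of $\sigma(\cI)$), which is what the slack actually provides.
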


We also prove a more refined bound valid for any probabilities satisfying Shearer's criterion.
This bound is similar to the bound obtained by Kolipaka and Szegedy \cite{Kolipaka}; see \Section{shearer-automatic-slack}
for details.

Unfortunately Shearer's criterion is unwieldy and has not seen much use in applications of the LLL.
Recently several researchers have proposed criteria of intermediate strength between \eqref{eq:GLL}
and Shearer's criterion \cite{Bissacot,KSX}.
The first of these, called the \newterm{cluster expansion} criterion, was originally devised by
Bissacot et al.~\cite{Bissacot}, and is based on insights from statistical physics.
This criterion has given improved results in several
applications of the local lemma \cite{Bottcher,HarrisS14,Ndreca}.
Previous algorithmic work has also used the cluster expansion criterion 
in the variable model \cite{AG,Pegden} and for permutations \cite{HarrisS14}.

We give a new, elementary proof that the cluster expansion criterion implies Shearer's criterion.
In contrast, the previous proof is analytic and requires several ideas from
statistical physics \cite{Bissacot}.
As a consequence, we obtain the first purely combinatorial proof that the existential LLL holds
under the cluster expansion criterion.
Another consequence (\Theorem{cluster-no-slack})
is an algorithm for the LLL under the cluster expansion criterion,
obtained using our algorithmic results under Shearer's criterion.
This generalizes \Theorem{LLL-tight-result} by replacing \eqref{eq:GLL}
with the cluster expansion criterion, stated below as \eqref{eq:CLL}.
To state the result, we require additional notation:
let $\Ind$ denote the family of independent sets in the graph $G$.

\begin{theorem}
\TheoremName{cluster-no-slack}
Suppose that the events $E_1,\ldots,E_n$ satisfy the following criterion
\begin{equation}
\tag{CLL}
\EquationName{CLL}
\exists y_1,\ldots,y_n>0
\qquad\text{such that}\qquad
\Pr_\mu[E_i] ~\leq~ \frac{ y_i }{ \sum_{J \subseteq \Gamma^+(i), J \in \Ind} \prod_{j \in J} y_j }.
\end{equation}
and that the three subroutines described in \Section{algass} are available.
Then the expected number of calls to the \ros before \MSR terminates is
$O\big(\sum_{i=1}^{n} y_i \sum_{j=1}^{n} \ln (1+y_j) \big)$.
\end{theorem}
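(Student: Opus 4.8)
The plan is to obtain \Theorem{cluster-no-slack} as a corollary of the algorithmic results under Shearer's criterion developed in \Section{shearer-automatic-slack} (of which, according to the text, \Theorem{LLL-tight-result} is itself a special case). So the argument splits into two parts: a purely combinatorial part showing that the criterion \eqref{eq:CLL} implies Shearer's criterion, carrying enough quantitative control to survive into the running-time analysis; and a bookkeeping part that converts the Shearer-parameter running-time bound into the clean expression $O\big(\sum_i y_i \sum_j \ln(1+y_j)\big)$. As a sanity check on the translation, substituting $y_i = x_i/(1-x_i)$ turns \eqref{eq:CLL} into \eqref{eq:GLL} and the target bound into the bound of \Theorem{LLL-tight-result}, so the claimed statement is the expected common generalization.

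For the combinatorial part, write $q_i = \Pr_\mu[E_i]$ and, for $S \subseteq [n]$, let $Z_S = \sum_{I \in \Ind,\, I \subseteq S} \prod_{i \in I} (-q_i)$ be the independence polynomial of the induced subgraph $G[S]$ evaluated at $-q$; Shearer's criterion asks that $Z_S > 0$ for all $S$. I would prove this by induction on $|S|$ via the deletion--contraction identity $Z_S = Z_{S \setminus i} - q_i\, Z_{S \setminus \Gamma^+(i)}$, carrying the stronger inductive invariant that the ratios $Z_{S\setminus i}/Z_S$ and, crucially, $q_i\, Z_{S \setminus \Gamma^+(i)}/Z_S$ are bounded by products of the factors $(1+y_j)$ over the relevant part of $\Gamma^+(i)$. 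The point is that the denominator $\sum_{J \subseteq \Gamma^+(i),\, J \in \Ind}\prod_{j \in J} y_j$ of \eqref{eq:CLL} is exactly the generating function of the independent sets that a vertex of $\Gamma^+(i)$ could ``block'', so \eqref{eq:CLL} is precisely the inequality needed to close the induction. This replaces the statistical-physics argument of Bissacot et al.\ by an elementary one and simultaneously delivers $Z_S > 0$ and the ratio bounds used next.

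For the bookkeeping part, the Shearer-based running-time bound from \Section{shearer-automatic-slack} is, roughly, a sum over $i$ of $q_i \cdot (Z_{[n]\setminus\Gamma^+(i)}/Z_{[n]})$ times a logarithmic ``depth'' factor of the form $\sum_j \ln\frac{1}{1-p_j}$ for a Shearer-admissible marginal vector $p$. Substituting the ratio bound from the combinatorial part gives $q_i\, Z_{[n]\setminus\Gamma^+(i)}/Z_{[n]} = O(y_i)$, so the first sum is $O(\sum_i y_i)$; and since one may take $p_j \le y_j/(1+y_j)$, the depth factor becomes $O\big(\sum_j \ln(1+y_j)\big)$, yielding the stated product. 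One wrinkle is that \eqref{eq:CLL} provides no slack, while the algorithm needs positive slack for a finite expected running time; I would handle this by rescaling $y_i \mapsto (1+\delta)y_i$, which preserves \eqref{eq:CLL} with a multiplicative slack in Shearer's criterion and inflates the bound by only a constant factor (alternatively, invoke the automatic-slack machinery of \Section{shearer-automatic-slack} directly).

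I expect the main obstacle to be the combinatorial part: choosing an inductive invariant on ratios of independence polynomials that is at once strong enough to push through the deletion--contraction recursion and tight enough to reproduce exactly the factors $(1+y_j)$ rather than something lossy, and lining up the neighborhood generating function in \eqref{eq:CLL} with that induction. The remaining steps --- the Shearer-to-$y$ translation and the slack rescaling --- should be routine given the results already established earlier in the paper.
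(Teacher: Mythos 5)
You take the same route as the paper: an elementary deletion--contraction induction showing that \eqref{eq:CLL} implies Shearer's criterion together with a quantitative ratio bound, which is then fed into the algorithmic Shearer machinery with automatic slack (\Theorem{Shearer-no-slack}, built on \Lemma{Shearer-automatic-slack} and \Corollary{Shearer-no-q0}); this is exactly how the paper proves \Theorem{cluster-with-slack}, which subsumes the statement. The gap sits precisely in the step you yourself flag as the main obstacle, and the invariant you sketch there would not close. If the inductive invariant is that ratios such as $\breve{q}_{S\setminus\{a\}}/\breve{q}_S$ are bounded by products of factors $(1+y_j)$, then after one use of $\breve{q}_S=\breve{q}_{S\setminus\{a\}}-p_a\,\breve{q}_{S\setminus\Gamma^+(a)}$ and the inductive hypothesis applied to the elements of $\Gamma(a)\cap S$, closing the induction requires $p_a\leq y_a/\prod_{j\in\Gamma^+(a)}(1+y_j)$. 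That is exactly \eqref{eq:GLL} in the parametrization $y_j=x_j/(1-x_j)$, and it is strictly stronger than \eqref{eq:CLL} whenever $\Gamma(a)\neq\emptyset$, since \eqref{eq:CLL} only gives $p_a\leq y_a/Y_{\Gamma^+(a)}$ with $Y_{\Gamma^+(a)}=\sum_{I\subseteq\Gamma^+(a),\,I\in\Ind}\prod_{j\in I}y_j$, which sums over independent subsets only and is therefore smaller than the full product. So with that invariant you would reprove \Theorem{LLL-tight-result}, not \Theorem{cluster-no-slack}.

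The invariant that does work (\Lemma{CLLimpliesShearer}) compares the $\breve{q}$-ratios with ratios of the generating functions themselves: $\breve{q}_S/\breve{q}_{S\setminus\{a\}}\geq Y_{S^c}/Y_{(S\setminus\{a\})^c}$, where $S^c=[n]\setminus S$; the step uses the identity $Y_A=Y_{A\setminus\{a\}}+y_a Y_{A\setminus\Gamma^+(a)}$ and the log-subadditivity $Y_{A\cup B}\leq Y_A\cdot Y_B$ together with \eqref{eq:CLL}, and the factors $(1+y_j)$ only emerge at the end, when $S=[n]$ yields $q_{\{a\}}/q_\emptyset\leq y_a$ (\Corollary{qAndy}). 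Note also that this same ratio bound, not the inequality $p_j\leq y_j/(1+y_j)$ by itself, is what converts the logarithmic factor $\sum_j\ln(1+q_{\{j\}}/q_\emptyset)$ of \Theorem{Shearer-no-slack} into $\sum_j\ln(1+y_j)$. Finally, your primary fix for the absence of slack --- rescaling $y_i\mapsto(1+\delta)y_i$ --- does not preserve \eqref{eq:CLL}: increasing the $y$'s can decrease $y_i/Y_{\Gamma^+(i)}$ (take $\Gamma(i)$ a large independent set and $y$ of order $1/|\Gamma(i)|$), so \eqref{eq:CLL} for $(1+\delta)p$ need not follow. Only your parenthetical alternative, invoking the automatic slack inside the Shearer region from \Section{shearer-automatic-slack}, is sound, and that is the paper's choice.
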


\subsection{Techniques and related work}
\SectionName{related-work}

The breakthrough work of Moser and Tardos \cite{Moser,MoserTardos}
stimulated a string of results on algorithms for the LLL.
This section reviews the results that are most relevant to our work.
Several interesting techniques play a role in the analyses of these previous algorithms.
These can be roughly categorized as the
\emph{entropy method} \cite{Moser-thesis,Achlioptas},
\emph{witness trees} or \emph{witness sequences} \cite{MoserTardos,HarrisS14,Kolipaka}
and \emph{forward-looking combinatorial analysis} \cite{GiotisKPT}.

Moser \cite{Moser,Moser-thesis} developed the entropy method to
analyze a very simple algorithm for the ``symmetric'' LLL \cite{ErdosLovasz},
which incorporates the maximum degree of $G$ and a uniform bound on $\Pr_\mu[E_i]$.
The entropy method roughly shows that, if the algorithm runs for a long time, a transcript of the
algorithm's actions provides a compressed representation of the algorithm's random bits,
which is unlikely due to entropy considerations.

Following this, Moser and Tardos \cite{MoserTardos} showed that a similar algorithm
will produce a state in $\bigcap_{i=1}^n \overline{E_i}$,
assuming the independent variable model and the \eqref{eq:GLL} criterion.
This paper is primarily responsible for the development of witness trees, and proved the
``witness tree lemma'', which yields an extremely elegant analysis in the variable model.
The witness tree lemma has further implications. For example, it allows one to
analyze separately for each event its expected number of resamplings.
Moser and Tardos also extended the variable model to incorporate a limited form of lopsidependency,
and showed that their analysis still holds in that setting.

The main advantage of our result over the Moser-Tardos result is that we address the occurrence
of an event through the abstract notion of \ros rather than directly resampling
the variables of the variable model. Furthermore we give efficient implementations of 
\ros for essentially all known probability spaces to which the LLL has been applied.
A significant difference with our work is that we do not have an analogue of the witness tree lemma;
our approach provides a simpler analysis when the LLL criterion has slack but requires a more complicated analysis
to remove the slack assumption.
As a consequence, our bound on the number of \ro calls is larger than
the Moser-Tardos bound.
Our lack of a witness tree lemma is inherent.
\Appendix{witness-trees} shows that 
the witness tree lemma is false in the abstract scenario of \ros.

The Moser-Tardos algorithm is known to terminate under criteria more general than
\eqref{eq:GLL}, while still assuming the variable model.
Pegden \cite{Pegden} showed that the cluster expansion criterion suffices, whereas
Kolipaka and Szegedy \cite{Kolipaka} showed more generally that Shearer's criterion suffices.
We also extend our analysis to the cluster expansion criterion as well as Shearer's criterion,
in the more general context of \ros.
Our bounds on the number of resampling operations are somewhat weaker than those of
\cite{Pegden,Kolipaka}, but the increase is at most quadratic.

Kolipaka and Szegedy \cite{Kolipaka} present another algorithm, called GeneralizedResample,
whose analysis proves the LLL under Shearer's condition for arbitrary probability spaces.
GeneralizedResample is similar to \MSR in that they both work with abstract distributions and that
they repeatedly choose a maximal independent set $J$ of undesired events to resample.
However, the way that the bad events are resampled is different:
GeneralizedResample needs to sample from $\mu |_{\cap_{j \not\in \Gamma^+(J)} \overline{E_j}}$,
which is a complicated operation that seems difficult to implement efficiently.
Thus \MSR can be viewed as a variant of GeneralizedResample that can be made efficient in all known
scenarios.

Harris and Srinivasan \cite{HarrisS14} show that the Moser-Tardos algorithm
can be adapted to handle certain events in a probability space involving random permutations. 
Their method for resampling an event is based on the Fischer-Yates shuffle.
This scenario can also be handled by our framework;
their resampling method perfectly satisfies the criteria of a \ro.
The Harris-Srinivasan's result is stronger than ours in that they do prove an analog of the
witness tree lemma. Consequently their algorithm requires fewer resamplings than ours,
and they are able to derive parallel variants of their algorithm.
The work of Harris and Srinivasan is technically challenging, and generalizing it to a more abstract
setting seems daunting.

Achlioptas and Iliopoulos \cite{Achlioptas,AchlioptasI15} proposed a general framework for finding
``flawless objects'', based on actions for addressing flaws.
We call this the A-I framework.
They show that, under certain conditions,
a random walk over such actions rapidly converges to a flawless object.
This naturally relates to the LLL by viewing each event $E_i$ as a flaw.
At the same time, the A-I framework is not tied to the probabilistic formulation of the LLL,
and can derive results, such as the greedy algorithm for vertex coloring,
that seem to be outside the scope of typical LLL formulations, such as \Theorem{LLL}.
The A-I framework \cite{Achlioptas,AchlioptasI15} has other restrictions and does not claim
to recover any particular form of the LLL.
Nevertheless, the framework can accommodate applications of the LLL where lopsidependency
plays a role, such as rainbow matchings and rainbow Hamilton cycles.
In contrast, our framework embraces the probabilistic formulation
and can recover the original existential LLL (\Theorem{LLL}) in full generality,
even incorporating Shearer's generalization.
The A-I analysis \cite{Achlioptas} is inspired by Moser's entropy method. Technically, it entails
an encoding of random walks by ``witness forests" and combinatorial counting thereof
to estimate the length of the random walk. The terminology of witness forests is reminiscent
of the witness trees of Moser and Tardos, but conceptually they are different in that
the witness forests grow ``forward in time" rather than backward. 
This is conceptually similar to ``forward-looking combinatorial analysis", which we discuss next.

Giotis et al.\ \cite{GiotisKPT} show that a variant of Moser's algorithm
gives an algorithmic proof in the variable model of the symmetric LLL.
While this result is relatively limited when compared to the results above,
their analysis is a clear example of forward-looking combinatorial analysis.
Whereas Moser and Tardos use a \emph{backward-looking} argument to find witness trees
in the algorithm's ``log'', Giotis et al.\ analyze a \emph{forward-looking} structure:
the tree of resampled events and their dependencies, looking forward in time.
This viewpoint seems more natural and suitable for extensions.

Our approach can be roughly described as {\em forward-looking analysis} with a careful modification
of the Moser-Tardos algorithm, formulated in the framework of \ros.
Our main conceptual contribution is the simple definition of the \ros,
which allows the resamplings to be readily incorporated into the forward-looking analysis.
Our modification of the Moser-Tardos algorithm is designed to combine this analysis
with the technology of ``stable set sequences'' \cite{Kolipaka},
defined in \Section{stable-set-sequences},
which allows us to accommodate various LLL criteria, including Shearer's criterion.
This plays a fundamental role in the full proof of \Theorem{LLL-tight-result}.

Our second contribution is a technical idea concerning slack in the LLL criteria.
This idea is a perfectly valid statement regarding the existential LLL as well,
although we will exploit it algorithmically.
One drawback of the forward-looking analysis is that it naturally leads to an 
exponential bound on the number of resamplings, unless there is some slack in the 
LLL criterion; this same issue arises in \cite{Achlioptas,GiotisKPT}.
Our idea eliminates the need for slack in the \eqref{eq:GLL} and \eqref{eq:CLL} criteria.
We prove that, even if \eqref{eq:GLL} or \eqref{eq:CLL} are tight,
we can instead perform our analysis using Shearer's criterion,
which is never tight because it defines an open set.
For example, consider the familiar case of \Theorem{LLL}, and suppose that \eqref{eq:GLL} holds
with equality, i.e., $\Pr_\mu[E_i] = x_i \prod_{j \in \Gamma(i)} (1-x_j)$ for all $i$.
We show that the conclusion of the LLL remains true even if each event $E_i$ actually had the 
larger probability $\Pr_\mu[E_i] \cdot \big(1 + (2 \sum_i \frac{x_i}{1-x_i})^{-1}\big)$.
The proof of this fact crucially uses Shearer's criterion and it does not seem to follow from 
more elementary tools \cite{ErdosLovasz,Spencer77}.

\medskip
\noindent
{\bf Follow-up work.} Subsequently, Achlioptas and Iliopoulos generalized their framework further to
incorporate our notion of resampling oracles \cite{AchlioptasI16}.
This subsequent work can be viewed as a
unification of their framework and ours; it has the benefit of both capturing the framework of
resampling oracles and allowing some additional flexibility (in particular, the possibility of
regenerating the measure $\mu$ approximately rather than exactly). We remark that this work is still
incomparable with ours, primarily due to the facts that our analysis is performed in Shearer's more
general setting, and that our algorithm is efficient even when the LLL criteria are tight.

\medskip
\noindent
{\bf Organization.} 
The rest of the paper is organized as follows. In \Section{resample-existence}, we discuss the
connection between resampling oracles and the assumptions of the Lov\'asz Local Lemma. We also show
here that resampling oracles as well as the LLL itself can be computationally hard in general. In
\Section{implementation}, we show concrete examples of efficient implementations of resampling
oracles.
In \Section{applications} we discuss several applications of these resampling oracles.
Finally, in \Section{analysis} we present the full analysis of our algorithm.

\section{Resampling oracles: existence and efficiency}
\SectionName{resample-existence}

The algorithms in this paper make no reference to the lopsidependency condition
\eqref{eq:LLLL} and instead assume the existence of \ros.
In \Section{existence} we show that there is a close relationship between these two assumptions:
the existence of a \ro for each event is equivalent to the condition \Equation{Neg},
which is a strengthening of \eqref{eq:LLLL}.

We should emphasize that the {\em efficiency of an implementation} of a \ro is a separate issue.
There is no general guarantee that \ros can be implemented efficiently.
Indeed, as we show in \Section{hardness},
there are applications of the LLL such that the resampling oracles are hard to implement efficiently,
and finding a state avoiding all events is computationally hard,
under standard computational complexity assumptions.

Nevertheless, this is not an issue in common applications of the LLL: \ros exist and can be
implemented efficiently in all uses of the LLL of which we are aware,
even those involving lopsidependency.
\Section{implementation} has a detailed discussion of several scenarios.

\subsection{Existence of resampling oracles}
\SectionName{existence}

This section proves an equivalence lemma connecting \ros with the notion of \NAP.
First, let us define formally what we call a \ro.

\begin{definition}
\DefinitionName{resampling-oracle}
Let $E_1,\ldots,E_n$ be events on a space $\Omega$ with a probability measure $\mu$, and let $G = ([n], E)$ be a graph with neighbors of $i \in [n]$ denoted by $\Gamma(i)$. Let $r_i$ be a randomized procedure that takes a state $\omega \in \Omega$ and outputs a state $r_i(\omega) \in \Omega$.
We say that $r_i$ is a resampling oracle for $E_i$ with respect to $G$, if
\begin{description}
\item[\rm \RONE] For $\omega \sim \mu|_{E_i}$, we obtain $r_i(\omega) \sim \mu$.
(The oracle $r_i$ removes conditioning on $E_i$.)
\item[\rm \RTWO] For any $j \notin \Gamma^+(i) = \Gamma(i) \cup \{i\}$, if $\omega \not\in E_j$ then also $r_i(\omega) \not\in E_j$.
(Resampling an event cannot cause new non-neighbor events to occur.)
\end{description}
\end{definition}

Next, let us define the notion of a \NAG.
We denote by $E_i[\omega]$ the $\set{0,1}$-valued function
indicating whether $E_i$ occurs at a state $\omega \in \Omega$.

\newcommand{\myAnd}{\,\cap\,}

\begin{definition}
A graph $G$ with neighborhood function $\Gamma$ is a \NAG for events $E_1,\ldots,E_n$ if
\begin{equation}
\tag{LopA}
\EquationName{Neg}
\Pr_\mu[E_i \cap F] ~\geq~ \Pr_\mu[E_i] \cdot \Pr_\mu[F]
    \qquad \forall i \in [n], \forall F \in \cF_i
\end{equation}
where $\cF_i$ contains all events $F$ 
such that $F[\omega]$ is a monotone non-decreasing function of the functions
$(\, E_j[\omega] \,:\, j \notin \Gamma^+(i) \,)$.
\end{definition}

\begin{lemma}
\LemmaName{resample-existence}
Consider a fixed $i \in [n]$ and assume $\Pr_\mu[E_i] > 0$.
The following statements are equivalent.
\begin{description}
\item[(a)] There exists a \ro $r_i$ satisfying the conditions \RONE\ and \RTWO\ with respect to a neighborhood $\Gamma^+(i)$
\, (ignoring issues of computational efficiency).
\item[(b)] $\Pr_\mu[E_i \myAnd F] \geq \Pr_\mu[E_i] \cdot \Pr_\mu[F] $
for any event $F \in \cF_i$.
\end{description}
\end{lemma}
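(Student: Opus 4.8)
The plan is to prove the two implications separately; $(a)\Rightarrow(b)$ is a direct unwinding of the definitions, while $(b)\Rightarrow(a)$ is the substantive direction and follows from LP duality for a finite transportation problem. For $(a)\Rightarrow(b)$, suppose a resampling oracle $r_i$ exists and fix $F\in\cF_i$, so that $F[\omega]=g\big((E_j[\omega])_{j\notin\Gamma^+(i)}\big)$ for some monotone non-decreasing $g$. Draw $\omega\sim\mu|_{E_i}$ and set $\omega'=r_i(\omega)$. Property \RTWO\ says exactly that $E_j[\omega']\le E_j[\omega]$ for every $j\notin\Gamma^+(i)$, so monotonicity of $g$ gives $F[\omega']\le F[\omega]$ for every realization of $r_i$. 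Taking expectations and using \RONE\ (which says $\omega'\sim\mu$) yields $\Pr_\mu[F]=\E[F[\omega']]\le\E[F[\omega]]=\Pr_\mu[F\mid E_i]=\Pr_\mu[E_i\myAnd F]/\Pr_\mu[E_i]$, and multiplying by $\Pr_\mu[E_i]$ gives $(b)$.

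For $(b)\Rightarrow(a)$, I would first reduce $\Omega$ to the finite data that \RTWO\ cares about: for $S\subseteq[n]\setminus\Gamma^+(i)$ put $A_S=\{\,\omega : \{j\notin\Gamma^+(i):\omega\in E_j\}=S\,\}$, so the $A_S$ partition $\Omega$. Let $p_S=\Pr_\mu[A_S\mid E_i]$ and $q_S=\Pr_\mu[A_S]$; both are probability distributions over the at most $2^{\card{[n]\setminus\Gamma^+(i)}}$ subsets $S$. A resampling oracle corresponds to a coupling that transforms $(p_S)$ into $(q_S)$ while only moving mass out of $A_S$ into some $A_T$ with $T\subseteq S$, which is exactly what \RTWO\ allows (a fresh draw from $\mu|_{A_T}$ has occurring non-neighbor set $T$). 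Concretely I want numbers $x_{S,T}\ge0$, one for each pair $T\subseteq S$, with $\sum_{T\subseteq S}x_{S,T}=p_S$ and $\sum_{S\supseteq T}x_{S,T}=q_T$. Given such $x$, let $r_i(\omega)$ find the unique $S$ with $\omega\in A_S$, pick $T\subseteq S$ with probability $x_{S,T}/p_S$, and return an independent sample from $\mu|_{A_T}$; one checks immediately that the output law is $\sum_T q_T\,\mu|_{A_T}=\mu$ (giving \RONE) and that $T\subseteq S$ forces \RTWO, while the only conditional measures used have $q_T>0$ and so are well-defined.

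The crux is producing $x$. Its existence is the feasibility of a bipartite transportation problem (supplies $p_S$ at sources, demands $q_T$ at sinks, an edge $S\sim T$ whenever $T\subseteq S$), and by LP duality -- equivalently max-flow--min-cut on the associated network -- such an $x$ exists iff $\sum_{T\in\mathcal{T}}q_T\le\sum_{S\in\mathrm{up}(\mathcal{T})}p_S$ for every family $\mathcal{T}$ of subsets, where $\mathrm{up}(\mathcal{T})=\{\,S : S\supseteq T\text{ for some }T\in\mathcal{T}\,\}$. Hypothesis $(b)$ supplies exactly this: given $\mathcal{T}$, the event $F=\bigcup_{S\in\mathrm{up}(\mathcal{T})}A_S$ lies in $\cF_i$ because $\mathrm{up}(\mathcal{T})$ is upward closed, so $F[\omega]$ is a monotone non-decreasing function of $(E_j[\omega])_{j\notin\Gamma^+(i)}$; applying $(b)$ to this $F$ and dividing by $\Pr_\mu[E_i]>0$ gives $\sum_{S\in\mathrm{up}(\mathcal{T})}p_S=\Pr_\mu[F\mid E_i]\ge\Pr_\mu[F]=\sum_{S\in\mathrm{up}(\mathcal{T})}q_S\ge\sum_{T\in\mathcal{T}}q_T$, which is the required cut inequality. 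Hence $x$ exists, and the construction above yields the oracle.

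I expect the $(b)\Rightarrow(a)$ direction to be the main obstacle: finding the correct finite model (the partition $\{A_S\}$ by which non-neighbor events occur), stating the transportation feasibility criterion in the right orientation, and recognizing that it suffices to feed it the \emph{upward-closed} events $F\in\cF_i$. Once that is set up, verifying \RONE, \RTWO, and balancedness $\sum_S p_S=\sum_T q_T=1$ is mechanical, and the measure-theoretic issues for general $\Omega$ are harmless since we are explicitly permitted to ignore computational efficiency.
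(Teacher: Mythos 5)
Your proposal is correct and takes essentially the paper's route: the $(a)\Rightarrow(b)$ direction is the same coupling argument via \RONE\ and \RTWO, and $(b)\Rightarrow(a)$ is, as in the paper, the formulation of the oracle as a feasible transportation problem whose Hall-type cut condition (from LP duality / max-flow--min-cut) is exactly supplied by the inequalities in $(b)$ applied to monotone events. The only cosmetic differences are that you quotient the state space by the occurrence pattern of the non-neighbor events and check the cut condition on the demand side using upward-closed families, whereas the paper works with individual states, checks the condition on the supply side, and passes through complements of non-increasing sets.
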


\begin{corollary}
Resampling oracles $r_1,\ldots,r_n$ exist for events $E_1,\ldots,E_n$ with respect to a graph $G$
if and only if $G$ is a \NAG for $E_1,\ldots,E_n$. 
Both statements imply that the lopsidependency condition \eqref{eq:LLLL} holds.
\end{corollary}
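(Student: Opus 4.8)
\medskip
\noindent\textit{Proof proposal.}
The corollary follows at once from \Lemma{resample-existence}, so the work is in that lemma. Applying the lemma to each $i\in[n]$ (when $\Pr_\mu[E_i]=0$ the identity map is a \ro and \eqref{eq:Neg} holds vacuously, so we may assume $\Pr_\mu[E_i]>0$) shows that \ros $r_1,\dots,r_n$ exist for $E_1,\dots,E_n$ with respect to $G$ precisely when $\Pr_\mu[E_i\cap F]\ge\Pr_\mu[E_i]\Pr_\mu[F]$ for all $i$ and all $F\in\cF_i$, which is the definition of $G$ being a \NAG; and the implication \eqref{eq:Neg} $\Rightarrow$ \eqref{eq:LLLL} was already noted above (take $F=\bigcup_{j\in J}E_j$). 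So I now sketch the two directions of the lemma.

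\textbf{(a) $\Rightarrow$ (b).} This direction is short. Condition \RTWO\ says exactly that $r_i$ can only shrink the set of non-neighbor events that hold: if $\omega'=r_i(\omega)$ then $\{\,j\notin\Gamma^+(i):\omega'\in E_j\,\}\subseteq\{\,j\notin\Gamma^+(i):\omega\in E_j\,\}$. Since $F[\cdot]$ is a monotone non-decreasing function of $(E_j[\cdot]:j\notin\Gamma^+(i))$, this forces $F[r_i(\omega)]\le F[\omega]$, i.e.\ $r_i(\omega)\in F\Rightarrow\omega\in F$. Taking $\omega\sim\mu|_{E_i}$ and using \RONE\ (so $r_i(\omega)\sim\mu$) yields $\Pr_\mu[F]=\Pr_{\omega\sim\mu|_{E_i}}[r_i(\omega)\in F]\le\Pr_{\omega\sim\mu|_{E_i}}[\omega\in F]=\Pr_\mu[E_i\cap F]/\Pr_\mu[E_i]$, which rearranges to (b).

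\textbf{(b) $\Rightarrow$ (a).} The plan is to reduce to a finite transportation problem. Only the finite partition of $\Omega$ by the indicator vector $(E_1[\omega],\dots,E_n[\omega])$ matters, and \RTWO\ depends only on the projection $\tau(\omega):=\{\,j\notin\Gamma^+(i):\omega\in E_j\,\}$, a subset of $[n]\setminus\Gamma^+(i)$. So it suffices to construct a coupling $\rho$ of the law $\nu_1$ of $\tau(\omega)$ under $\omega\sim\mu|_{E_i}$ with the law $\nu_2$ of $\tau(\omega)$ under $\omega\sim\mu$ that is supported on pairs $(T,T')$ with $T'\subseteq T$; then $r_i$ is: compute $\tau(\omega)$, draw $T'\sim\rho(\cdot\mid\tau(\omega))$, and output a fresh sample from $\mu$ conditioned on $\tau(\cdot)=T'$. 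A direct computation shows the output has law exactly $\mu$ (this is \RONE), while $T'\subseteq\tau(\omega)$ is exactly \RTWO. Existence of such a $\rho$ is feasibility of the transportation problem with sources $T$ of supply $\nu_1(T)$, sinks $T'$ of demand $\nu_2(T')$, and an edge whenever $T'\subseteq T$ (both totals equal $1$). By LP duality (equivalently the feasibility criterion for transportation problems, a weighted form of Hall's theorem), a feasible $\rho$ exists iff every set $\cB$ of sinks has demand at most the supply of its neighborhood; since in this instance the neighborhood of $\cB$ is precisely its up-closure, this collapses to $\nu_2(\cU)\le\nu_1(\cU)$ for every up-closed $\cU\subseteq 2^{[n]\setminus\Gamma^+(i)}$, i.e.\ $\nu_1$ stochastically dominates $\nu_2$ on the Boolean lattice (Strassen's coupling criterion). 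Finally, an up-closed $\cU$ corresponds exactly to an event $F_\cU:=\{\omega:\tau(\omega)\in\cU\}\in\cF_i$, and conversely every $F\in\cF_i$ arises this way; under this correspondence $\nu_1(\cU)=\Pr_\mu[E_i\cap F_\cU]/\Pr_\mu[E_i]$ and $\nu_2(\cU)=\Pr_\mu[F_\cU]$, so ``$\nu_2\le\nu_1$ on all up-sets'' is literally condition (b).

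I expect the main obstacle to be the bookkeeping in the direction \textbf{(b) $\Rightarrow$ (a)}: getting the direction of the domination right (we need $T'\subseteq T$, so $\nu_1$ must dominate $\nu_2$), checking that in this transportation instance the neighborhood of a set of sinks is exactly its up-closure so the dual condition reduces to a statement about up-sets, and verifying that the lift from the coupling on projections back to $\Omega$ reproduces $\mu$ exactly. The reduction to the finite projected space is routine but should be stated to sidestep measure-theoretic worries, and one should separately dispatch the trivial case $\Pr_\mu[E_i]=0$.
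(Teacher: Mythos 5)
Your proposal is correct, and its overall strategy matches the paper's: the corollary is obtained by applying \Lemma{resample-existence} to each $i$, the direction (a)$\Rightarrow$(b) is the same coupling argument via \RONE\ and \RTWO, and (b)$\Rightarrow$(a) rests on the same transportation-feasibility/LP-duality (Hall-type) criterion. The one genuine difference is where the reduction to \eqref{eq:Neg} happens: the paper sets up the transportation problem directly on states (sources are the states in $E_i$, sinks are all states) and only after invoking duality argues that one may restrict to ``non-increasing'' subsets $A$ of sources, which then correspond to events; you quotient first by the trace map $\tau(\omega)=\setst{j\notin\Gamma^+(i)}{\omega\in E_j}$ and phrase feasibility as Strassen-type stochastic domination, on the Boolean lattice of traces, of the trace law under $\mu|_{E_i}$ over the trace law under $\mu$, with up-sets corresponding exactly to the events in $\cF_i$. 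This buys a slightly cleaner identification of the dual constraints with condition (b) and avoids the paper's step of augmenting $A$ to a non-increasing set, at the cost of producing a \ro of a different (but equally valid) form: yours draws a fresh state from $\mu$ conditioned on the chosen trace $T'$, whereas the paper's oracle transports mass along edges between individual states; your verification that this still satisfies \RONE\ and \RTWO\ is correct. Your handling of the degenerate case $\Pr_\mu[E_i]=0$ and of the implication \eqref{eq:Neg}$\Rightarrow$\eqref{eq:LLLL} (taking $F=\bigcup_{j\in J}E_j$ and complementing) is also consistent with the paper.
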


\begin{proofof}{\Lemma{resample-existence}}
(a) $\Rightarrow$ (b):
Consider the coupled states $(\omega,\omega')$
where $\omega \sim \mu|_{E_i}$ and $\omega' = r_i(\omega)$.
By \RONE, $\omega' \sim \mu$.
For any event $F \in \cF_i$, if $F$ does not occur at $\omega$ then it does not occur at $\omega'$
either, due to \RTWO.
This establishes that
$$
\Pr_\mu[F]
 ~=~ \E_{\omega' \sim \mu}[F[\omega']]
 ~\leq~ \E_{\omega \sim \mu|E_i}[F[\omega]]
 ~=~ \Pr_\mu[F \mid E_i],
$$
which implies $\Pr_\mu[F \myAnd E_i] \geq \Pr_\mu[F] \cdot \Pr_\mu[E_i]$.
In particular this implies \eqref{eq:LLLL}, by taking $F = \bigcup_{j \in J} {E_j}$.

(b) $\Rightarrow$ (a):
We begin by formulating the existence of a \ro as the following {\em transportation problem}.
Consider a bipartite graph $(U \cup W, E)$, where $U$ and $W$ are disjoint,
$U$ represents all the states $\omega \in \Omega$
satisfying $E_i$, and $W$ represents all the states $\omega \in \Omega$. Edges represent the
possible actions of the \ro: $(u,w) \in E$ if $u$ satisfies every event among
$(\, E_j \,:\, j \notin \Gamma^+(i) \,)$ that $w$ satisfies.
Each vertex has an associated  weight: For $w
\in W$, we define $p_w = \Pr_\mu[w]$, and for $u \in U$, $p_u = \Pr_\mu[u] / \Pr_\mu[E_i]$, i.e,
$p_u$ is the probability of $u$ conditioned on $E_i$.
We claim that the \ro $r_i$ exists if and only if there is an assignment $f_{uw}$ of values to the edges such that
\begin{equation}
\EquationName{transportation}
\begin{array}{lll}
&\sum_{w: (u,w) \in E} f_{uw} = p_u
    &\forall u \in U \\
&\sum_{u: (u,w) \in E} f_{uw} = p_w
    &\forall w \in W \\
&f_{uw} \geq 0 &\forall u \in U ,\, w \in W.
\end{array}
\end{equation}
Such an assignment is called a feasible transportation.
Given such a transportation, the \ro is defined naturally by following each edge from $u \in U$ with
probability $f_{uw} / p_u$, and the resulting distribution on $W$ is $p_w$. Conversely, for a \ro which,
for a given state $u \in U$, generates $w \in W$ with probability $q_{uw}$, we define $f_{uw} = p_u
q_{uw}$. This assignment satisfies \eqref{eq:transportation}.

Our goal at this point is show that (b) implies feasibility of \eqref{eq:transportation}.  
A condition that is equivalent to \eqref{eq:transportation}, but more convenient for our purposes,
can be determined from LP duality \cite[Theorem 21.11]{Schrijver}.
A feasible transportation exists if and only if
\begin{equation}
\EquationName{transportationdual}
\begin{array}{lll}
(\ref{eq:transportationdual}.1)& \sum_{u \in U} p_u ~=~ \sum_{w \in W} p_w \\
(\ref{eq:transportationdual}.2)& \sum_{u \in A} p_u ~\leq~ \sum_{w \in \Gamma(A)} p_w 
\qquad\forall A \subseteq U,
\end{array}
\end{equation}
where $\Gamma(A) = \setst{ w \in W }{ \exists u \in A \text{~s.t.~} (u,w) \in E }$.
This is an extension of Hall's condition for the existence of a perfect matching.

\newcommand{\simpledual}{(\ref{eq:transportationdual}$^*$)\xspace}

Our goal at this point is show that (b) implies feasibility of \eqref{eq:transportationdual}.
Let us now simplify \eqref{eq:transportationdual}.
Fix any $A \subseteq U$.
The neighborhood $\Gamma(A)$ consists of states satisfying at most those events among
$\setst{ E_j }{ j \notin \Gamma^+(i) }$ satisfied by some state in $A$.
Thus $\Gamma(A)$ corresponds to an event $F'$ such that $F'[\omega]$ is a 
\emph{non-increasing} function of $(\, E_j[\omega]: j \notin \Gamma^+(i) \,)$.
Next observe that, if the set of events among $\setst{ E_j }{ j \in \Gamma^+(i) }$ satisfied by
$u' \in U$ is a subset of those satisfied by $u \in U$, then $\Gamma(u') \subseteq \Gamma(u)$.
Suppose that, for each $u \in A$, we add to $A$ all such vertices $u'$.
Doing so can only increase the left-hand side of (\ref{eq:transportationdual}.2),
but does not increase the right-hand side as $\Gamma(A)$ remains unchanged
(since $\Gamma(u') \subseteq \Gamma(u)$).
Furthermore, the resulting set $A$ corresponds to the same event $F'$,
but restricted to the states in $U$.
Let us call such a set $A$ non-increasing.
Let \simpledual denote the simplification of \eqref{eq:transportationdual}
in which we restrict to non-increasing $A$.
We have argued that \eqref{eq:transportationdual} and \simpledual are equivalent.

Our goal at this point is show that (b) implies feasibility of \simpledual.
One may easily see that (b) is equivalent to
$$
\Pr_\mu[\overline{F} \myAnd E_i]
    ~\leq~
    \Pr_\mu[\overline{F}] \cdot \Pr_\mu[E_i]
    \qquad\forall F \in \cF_i.
$$
Assuming $\Pr[E_i] > 0$, we can rewrite this as
$\Pr_\mu[\overline{F} \mid E_i] \leq \Pr_\mu[\overline{F}] ~\forall F \in \cF_i$. 
Now consider using this inequality with $F = \overline{F'}$ for each $F'$ corresponding to 
some non-increasing set $A \subseteq U$.
We then have $\Pr_\mu[{F'} \mid E_i] = \sum_{u \in A} p_u$ and
$\Pr_\mu[{F'}] = \sum_{w \in \Gamma(A)} p_w$.
This verifies the feasibility of \simpledual.
\end{proofof}

\subsubsection{Example: monotone events on lattices}
\SectionName{resample-monotone}

This section presents an example of a setting where \Lemma{resample-existence} implies the existence
of a non-trivial \ro, even though the \NAG is empty.
This setting was previously known to have connections to the existential LLL \cite{LuMohrSzekely}.
The probability space here is $\Omega = \set{0,1}^M$,
viewed in the natural way as the Boolean lattice with operations $\wedge$ (meet) and $\vee$ (join),
and with the partial order denoted $\geq$.
Let $\mu : \set{0,1}^M \rightarrow [0,1]$ be a probability distribution,
i.e., $\sum_{x \in \set{0,1}^M} \mu(x) = 1$.
We assume that $\mu$ is \newterm{log-supermodular}, meaning that
$$
\mu(x \vee y) \mu(x \wedge y) ~\geq~ \mu(x) \mu(y)
\qquad \forall x,y \in \{0,1\}^M.
$$
As an example, any product distribution is log-supermodular.
Consider monotone increasing events $E_i$, i.e., such that $x' \geq x \in E_i  \Rightarrow x' \in E_i$. 
Note that any monotone increasing function of such events is again monotone increasing. 
It follows directly from the FKG inequality \cite{AlonSpencer} that condition (b) of
\Lemma{resample-existence} is satisfied for such events with an {\em empty} \NAG.
Therefore, a \ro exists in this setting.
However, the explicit description of its operation might be complicated and we do not know whether it can be implemented efficiently in general.

Alternatively, the existence of the \ro can be proved directly, using a theorem of Holley~\cite[Theorem 6]{Holley}. The \ro is described in Algorithm~\ref{alg:monotone}.
The reader can verify that this satisfies the assumptions \RONE\ and \RTWO, using Holley's Theorem.

\begin{algorithm}
\caption{\Ro for a monotone increasing event $E$. Let $\nu$ be the function guaranteed by
\Theorem{holley} when $\mu_1(x) = \frac{\mu(x) \b1_{x \in E}}{\sum_{e \in E} \mu(e)}$,
$\mu_2(y) = \mu(y)$, and $\b1_{x \in E}$ is the indicator function of $x \in E$.}
\label{alg:monotone}
\begin{algorithmic}[1]
\STATE {\bf Function} $r_E(x)$:
\STATE If $x \not\in E$, {\bf fail}.
\STATE Randomly select $y$ with probability $\frac{\nu(x,y)}{\sum_{y'} \nu(x,y')}$.
\RETURN $y$.
\end{algorithmic}
\end{algorithm}

\begin{theorem}[Holley's Theorem]
\TheoremName{holley}
Let $\mu_1$ and $\mu_2$ be probability measures on $\{0,1\}^{M}$ satisfying
$$
\mu_1(x \vee y) \mu_2(x \wedge y) ~\geq~ \mu_1(x) \mu_2(y)
\qquad\forall x, y \in \set{0,1}^M.
$$
Then there exists a probability distribution
$\nu : \set{0,1}^{M} \times \set{0,1}^{M} \rightarrow \RR$ satisfying
\begin{gather*}
\mu_1(x) = \smallsum{y}{} \, \nu(x,y) \\
\mu_2(y) = \smallsum{x}{} \, \nu(x,y) \\
\nu(x,y)=0 ~\:\text{unless}~\: x \geq y.
\end{gather*}
\end{theorem}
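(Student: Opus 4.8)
To prove \Theorem{holley} the plan is to proceed in two stages: first use linear-programming duality, exactly as in the proof of \Lemma{resample-existence}, to reduce the existence of the coupling $\nu$ to the \emph{stochastic domination} assertion ``$\mu_1(B) \geq \mu_2(B)$ for every up-closed $B \subseteq \set{0,1}^M$''; and then show that Holley's hypothesis $\mu_1(x\vee y)\mu_2(x\wedge y) \geq \mu_1(x)\mu_2(y)$ forces that domination.

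For the first stage we would set up the transportation problem \eqref{eq:transportation} with $U$ a copy of $\set{0,1}^M$ carrying weights $p_x = \mu_1(x)$, $W$ another copy carrying weights $p_y = \mu_2(y)$, and an edge $(x,y)$ precisely when $x \geq y$; a feasible transportation is then exactly a function $\nu$ with first marginal $\mu_1$, second marginal $\mu_2$, and support contained in $\set{x \geq y}$, which is what we want. By LP duality, as in \eqref{eq:transportationdual}, such a transportation exists iff $\sum_U p_x = \sum_W p_y$ (both equal $1$) and $\mu_1(A) \leq \mu_2(\Gamma(A))$ for all $A \subseteq \set{0,1}^M$, where $\Gamma(A) = \setst{y}{\exists x \in A,\ x \geq y}$ is the down-closure of $A$. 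Since replacing $A$ by $\Gamma(A)$ leaves the right-hand side unchanged and can only increase the left-hand side, it suffices to check down-sets $A$; taking complements turns this into ``$\mu_1(B) \geq \mu_2(B)$ for every up-set $B$'', i.e.\ $\mu_1 \succeq \mu_2$.

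For the second stage we would first assume $\mu_1, \mu_2 > 0$ and build the coupling directly via a \emph{coupled single-site dynamics}. Consider the Markov chain on pairs $(x,y) \in \set{0,1}^M \times \set{0,1}^M$ that, at each step, picks a coordinate $c$ and a threshold $\theta \in [0,1]$ uniformly at random and puts $x_c := \b1[\theta \leq \mu_1(x_c=1 \mid x_{-c})]$ and $y_c := \b1[\theta \leq \mu_2(y_c=1 \mid y_{-c})]$. The key point is a one-line consequence of Holley's hypothesis applied to the two states $(x_{-c},0)$ and $(y_{-c},1)$: when $x_{-c} \geq y_{-c}$ their join is $(x_{-c},1)$ and their meet is $(y_{-c},0)$, so the hypothesis reads $\mu_1(x_{-c},1)\mu_2(y_{-c},0) \geq \mu_1(x_{-c},0)\mu_2(y_{-c},1)$, which is equivalent (given positivity) to $\mu_1(x_c=1 \mid x_{-c}) \geq \mu_2(y_c=1 \mid y_{-c})$; hence the shared threshold $\theta$ makes every step preserve the relation $x \geq y$. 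Each coordinate evolves autonomously as ordinary Glauber dynamics for $\mu_1$, respectively $\mu_2$, which is irreducible and aperiodic by strict positivity, so any stationary distribution of the coupled chain (necessarily supported on $\set{x \geq y}$) projects to $\mu_1$ and $\mu_2$ and is therefore the desired coupling; in particular $\mu_1 \succeq \mu_2$. To drop the positivity assumption we would approximate $\mu_1,\mu_2$ by strictly positive measures still satisfying Holley's hypothesis: by the above each approximant pair has the domination property, and since ``$\mu_1(B) \geq \mu_2(B)$ for all up-sets $B$'' is a finite system of closed inequalities, it passes to the limit, giving $\mu_1 \succeq \mu_2$ in general. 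Combined with the first stage, this produces $\nu$.

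The main obstacle will be precisely the reduction to strictly positive measures: the additive perturbation $\mu_i \mapsto \mu_i + \epsilon$ need not satisfy Holley's hypothesis, while the geometric-mean perturbation that does preserve it creates no new support, so one must show that the set of pairs obeying the hypothesis is the closure of its strictly positive part. (An alternative that avoids positivity is to prove $\mu_1 \succeq \mu_2$ directly by induction on $M$, splitting off one coordinate; but the same difficulty reappears as the step showing that the split coordinate's $\mu_1$-marginal dominates its $\mu_2$-marginal, which does not follow from naively summing the hypothesis since $(x,y) \mapsto (x\vee y, x\wedge y)$ is many-to-one.) Everything else — the duality reduction, the conditional-probability identity, and the bookkeeping — is routine, and for this paper we simply invoke Holley's original argument \cite{Holley}.
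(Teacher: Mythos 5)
First, a point of comparison: the paper does not prove this statement at all --- \Theorem{holley} is quoted from Holley \cite{Holley} and used as a black box in \Section{resample-monotone} --- so your proposal has to stand on its own. Its first two stages are essentially sound, and indeed follow the classical route. The duality reduction is exactly the transportation argument of \Lemma{resample-existence}: with $U$ weighted by $\mu_1$, $W$ by $\mu_2$, and edges $\set{(x,y): x\geq y}$, the Hall-type dual condition collapses (after passing to down-closed $A$ and complementing) to $\mu_1(B)\geq \mu_2(B)$ for every up-set $B$. The coupled heat-bath dynamics is close to Holley's own proof, and your verification that the hypothesis applied to $(x_{-c},0)$ and $(y_{-c},1)$ gives $\mu_1(x_c=1\mid x_{-c})\geq \mu_2(y_c=1\mid y_{-c})$ when $x_{-c}\geq y_{-c}$ is correct. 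One repair is needed: it is not true that \emph{every} stationary distribution of the coupled chain must be supported on $\set{(x,y):x\geq y}$ (the hypothesis is not symmetric in $\mu_1,\mu_2$, so other recurrent classes are possible); what you should say is that this set is finite and closed under the dynamics, hence the restricted chain has \emph{some} stationary distribution, and its marginals are $\mu_1$ and $\mu_2$ by uniqueness of the stationary laws of the two Glauber chains. Note that in the strictly positive case this stationary law is already the desired $\nu$, so the duality stage is really only needed to transfer information through your proposed limiting argument.

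The genuine gap is the one you flag but do not close: the theorem as stated, and as the paper uses it, concerns measures that are not strictly positive. In Algorithm~\ref{alg:monotone} one takes $\mu_1$ to be $\mu$ conditioned on $E$, which vanishes off $E$ (and $\mu$ itself need not have full support), so the strictly positive case is precisely not enough. Your plan to approximate a general Holley pair by strictly positive Holley pairs is never carried out, and it is not routine --- as you observe, additive smoothing destroys the hypothesis and multiplicative smoothing adds no support --- and ending with ``invoke Holley's original argument'' abandons the proof rather than completing it (coupling proofs in the literature are themselves typically stated under strict positivity). The standard ways to get the full statement are either the Ahlswede--Daykin four functions theorem, which needs no positivity: taking $\alpha=\mu_2$, $\beta=\mu_1$, $\gamma=\mu_1$, $\delta=\mu_2$, whose hypothesis $\mu_2(x)\mu_1(y)\leq \mu_1(x\vee y)\mu_2(x\wedge y)$ is exactly Holley's condition, and applying it to an up-set $B$ and the whole cube gives $\mu_2(B)\leq \mu_1(B)$ directly, after which your duality step produces $\nu$; or a direct induction on $|M|$ in which the zero-denominator cases are handled by explicit case analysis rather than by perturbation. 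As written, your argument establishes \Theorem{holley} only for strictly positive $\mu_1,\mu_2$, which does not cover the application in \Section{resample-monotone}.
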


\subsection{Computational hardness of the LLL}
\SectionName{hardness}

This section considers whether the LLL can always be made algorithmic.
We show that, even in fairly simple scenarios where the LLL applies,
finding the desired output can be computationally hard,
a fact that surprisingly seems to have been overlooked.
We first observe that the question of algorithmic efficiency must be stated carefully otherwise
hardness is trivial.

\vspace{4pt}
\noindent
\textit{A trivial example.}
Given a Boolean formula $\phi$, let the probability space be $\Omega = \set{0,1}$,
and let $\mu$ be the uniform measure on $\Omega$.
There is a single event $E_1$ defined to be $E_1 = \set{1}$ if $\phi$ is satisfiable,
and $E_1 = \set{0}$ if $\phi$ is not satisfiable.
Since $\Pr[E_1]=1/2$, the \eqref{eq:GLL} criterion holds trivially with $x_1 = 1/2$.
The LLL gives the obvious conclusion that there is a state $\omega \notin E$.
Yet, finding this state requires deciding satisfiability of $\phi$, which is NP-complete.
\vspace{4pt}

The reason that this example is trivial is that even deciding whether the undesired
event has occurred is computationally hard.
A more meaningful discussion of LLL efficiency ought to rule out this trivial example
by considering only scenarios that satisfy some reasonable assumptions.
With that in mind, we will assume that
\begin{compactitem}
\item there is a probability space $\Omega$, whose states can be described by $m$ bits;
\item a graph $G$ satisfying \eqref{eq:Dep} for events $E_1,\ldots,E_n$ is explicitly provided;
\item $x_1,\ldots,x_n \in (0,1)$ satisfying the \eqref{eq:GLL} conditions are provided,
and $\sum_{i=1}^{n} \frac{x_i}{1-x_i}$ is at most $\poly(n)$;
\item there is a subroutine that provides an independent random state
      $\omega \sim \mu$ in $\poly(m)$ time;
\item for each $i \in [n]$, there is a subroutine which determines for any given $\omega \in \Omega$ whether $\omega \in E_i$, in $\poly(m)$ time.
\end{compactitem}
As far as we know, no prior work refutes the possibility that there is an algorithmic
form of the LLL, with running time $\poly(m,n)$, in this general scenario.

Our results imply that \ros do \emph{exist} in this general scenario, so
it is only the question of whether these \ros are \emph{efficient} that prevents
\Theorem{LLL-tight-result} from providing an efficient algorithm.
Nevertheless, we show that there is an instance of the LLL that satisfies the reasonable assumptions
stated above, but for which finding a state in $\bigcap_i \overline{E_i}$ requires solving
a problem that is computationally hard (under standard computational complexity assumptions).
As a consequence, we conclude that the \ros cannot always be implemented efficiently,
even under the reasonable assumptions of this general scenario.

\newcommand{\GF}{\mathrm{GF}}

We remark that NP-completeness is not the right notion of hardness here \cite{Papadimitriou}.
Problems in NP involve deciding whether a solution exists, whereas the LLL \emph{guarantees that a
solution exists}, and the goal is to explicitly find a solution.
Our result is instead based on hardness of the \emph{discrete logarithm} problem,
a standard belief in computational complexity theory.
In the following, $\GF(p^n)$ for a prime $p$ and integer $n$ denotes a finite field of order $p^n$,
and $\GF^*(p^n)$ its multiplicative group of nonzero elements.

\begin{theorem}
\TheoremName{LLL-hardness}
There are instances of events $E_1,\ldots,E_n$ on a probability space $\Omega = \{0,1\}^n$ under the uniform probability measure, such that
\begin{compactitem}
\item the events $E_i$ are mutually independent;
\item for each $i \in [n]$, the condition $\omega \in E_i$ can be checked in $\poly(n)$ time for
given $\omega \in \Omega$;
\item the \eqref{eq:GLL} conditions are satisfied with $x_i = 1/2$ for each $i \in [n]$;
\end{compactitem}
but finding a state in $\bigcap_{i=1}^{n} \overline{E_i}$ is as hard as solving the
discrete logarithm problem in $\GF^*(2^n)$.
\end{theorem}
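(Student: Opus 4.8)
The plan is to encode the discrete logarithm problem in $\GF^*(2^n)$ directly into a system of mutually independent events on $\{0,1\}^n$. Fix a generator $g$ of $\GF^*(2^n)$ and a target element $h = g^k$ whose discrete logarithm $k$ we wish to recover. Identifying a state $\omega \in \{0,1\}^n$ with an element of $\GF(2^n)$ in the usual way, and also with an integer in $\{0,1,\ldots,2^n-1\}$ via its bit representation, I would define event $E_i$ to be the set of states whose $i$-th bit disagrees with the $i$-th bit of the (unique) exponent $e(\omega) \in \{0,\ldots,2^n-2\}$ satisfying $g^{e(\omega)} = \omega$, with a harmless convention for the single state $\omega = 0$ (say, put $0$ into none of the events, or handle it by a minor adjustment so that a valid target remains avoidable). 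Then the unique state avoiding all of $E_1,\ldots,E_n$ is exactly the element $\omega^*$ whose exponent, written in binary, equals $k$ — i.e. $\omega^* = h$ — so finding a point in $\bigcap_i \overline{E_i}$ is precisely computing $\mathrm{dlog}_g(h)$.

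The three bulleted conditions then need to be checked. For \emph{mutual independence}: each $E_i$ is, up to the negligible exceptional state, a "bit function" of the exponent $e(\omega)$, and since $\omega \mapsto e(\omega)$ is a bijection from $\GF^*(2^n)$ onto $\{0,\ldots,2^n-2\}$ pushing the uniform measure forward to (essentially) the uniform measure on exponents, the events $E_i$ inherit the joint independence of the coordinate bits of a uniformly random exponent; I would make this precise and absorb the off-by-one/zero-element discrepancy either by a small modification of the probability space (e.g. work over $\{0,\ldots,2^n-2\}$ embedded in $\{0,1\}^n$, or duplicate a point) so that exact independence and $\Pr[E_i] = 1/2$ hold. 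Given independence and $\Pr[E_i]=1/2$, the \eqref{eq:GLL} criterion holds with $x_i = 1/2$ and the empty dependency graph, since $1/2 \le 1/2 \cdot \prod_{\emptyset}(1-x_j) = 1/2$. For \emph{polynomial-time checking of $\omega \in E_i$}: here lies the crucial point, and the main obstacle to the construction — we must decide membership in $E_i$ \emph{without} computing discrete logarithms. This is exactly the standard fact underlying the hardness of the "hardcore bit" / bit-security results for discrete log: computing $e(\omega)$ is conjectured hard, but testing properties of $\omega$ is easy. The honest way to make checking $E_i$ easy is to not define $E_i$ via an arbitrary bit of $e(\omega)$, but via a bit whose value can be read off $\omega$ itself efficiently; the cleanest device is to make $E_i$ depend on the $i$-th bit of $k$ only, not of $e(\omega)$ — equivalently, reformulate so that the unique avoiding state is a fixed, efficiently recognizable encoding of $k$ and membership tests reduce to arithmetic in $\GF(2^n)$ together with the single precomputed element $h$. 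Concretely I expect to define $E_i$ so that $\omega \in \overline{E_i}$ iff "raising a canonical witness extracted from $\omega$ to an appropriate power equals $h^{\pm 1}$", which is a polynomial-time field computation, while simultaneously arranging that the system of all these constraints pins down $\omega^*$ uniquely and that $\omega^* $ being avoiding forces one to have effectively computed $k$. Getting this encoding right — simultaneously (i) mutually independent, (ii) efficiently checkable, (iii) uniquely solvable, and (iv) solution $\equiv$ discrete log — is the delicate part and the real content of the theorem.

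Finally I would assemble the reduction: given $(g,h)$, construct the above instance in $\poly(n)$ time (all we need to precompute is $h$ and perhaps a few powers of $g$), invoke the hypothetical efficient LLL solver to obtain $\omega^* \in \bigcap_i \overline{E_i}$, and read off $k = \mathrm{dlog}_g(h)$ from $\omega^*$ in $\poly(n)$ time. Thus an efficient algorithm finding a state in $\bigcap_i \overline{E_i}$ under the stated reasonable assumptions would yield an efficient discrete logarithm algorithm in $\GF^*(2^n)$. Combining with \Lemma{resample-existence} — which guarantees that \ros exist here because the $E_i$ are mutually independent (hence \eqref{eq:Dep}, hence \eqref{eq:Neg}, holds) — we conclude that these \ros cannot be implemented efficiently unless discrete log is easy, which completes the proof.
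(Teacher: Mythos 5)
There is a genuine gap: you correctly identify the crux --- the events must be checkable in $\poly(n)$ time without computing discrete logarithms --- but you never resolve it, and you say so yourself (``Getting this encoding right \dots is the delicate part and the real content of the theorem''). Your placeholder definition does not work on its own terms: with $E_i = \{\omega : \omega_i \neq e(\omega)_i\}$ where $g^{e(\omega)}=\omega$, membership in $E_i$ requires evaluating $e(\omega)$, i.e.\ a discrete log, and the avoiding set is $\{\omega : \omega = e(\omega)\}$, not $\{h\}$ as you claim. Moreover, if the unique avoiding state really were $h$, the problem would be trivial, since $h$ is part of the input; the hardness must come from the avoiding state being the discrete logarithm itself.

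The missing construction is to define the events in the ``forward'' (exponentiation) direction rather than via the discrete log of the state: identify $\omega \in \{0,1\}^n$ with an exponent, set $f(0)=0$ and $f(\omega)=g^{\omega}$ for $\omega \neq 0$, and let $E_i$ occur iff $(f(\omega))_i = 1-h_i$. Then checking $\omega \in E_i$ is a single exponentiation by repeated squaring, hence $\poly(n)$ time; avoiding all events means $f(\omega)=h$, so the unique avoiding state is $f^{-1}(h)$, i.e.\ the discrete logarithm of $h$; and since $f$ is a bijection of $\{0,1\}^n$ (the exponents $1,\dots,2^n-1$ sweep $\GF^*(2^n)$ exactly once and $0 \mapsto 0$), a uniform $\omega$ gives a uniform $f(\omega)$, so each $\Pr[E_i]=1/2$ and the $E_i$ are mutually independent (the event $\bigcap_{j\in J}E_j \cap \bigcap_{j'\notin J}\overline{E_{j'}}$ is exactly $f(\omega)=h\oplus \b1_J$, of probability $2^{-n}$), which yields \eqref{eq:GLL} with $x_i=1/2$ and the empty dependency graph. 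Note also that this bijection trick makes your proposed modifications of the probability space (truncating to $\{0,\dots,2^n-2\}$ or duplicating a point) unnecessary --- indeed they would conflict with the theorem's requirement that $\Omega=\{0,1\}^n$ carry the exactly uniform measure.
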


\noindent
\textit{Remark.} Superficially, this result seems to contradict
the fact that the LLL can be made algorithmic in the variable model \cite{MoserTardos},
where events are defined on underlying independent random variables.
The key point is that the variable model also relies on a particular type of dependency graph
(defined by shared variables) which might be more conservative than the true dependencies between
the events.
\Theorem{LLL-hardness} shows that, even if the probability space consists of independent $\{0,1\}$ random variables, the LLL cannot in general be made algorithmic if the true dependencies are considered.

\begin{proof}
Consider an instance of the discrete logarithm problem in the multiplicative group $\GF^*(2^n)$. The
input is a generator $g$ of $\GF^*(2^n)$ and an element $h \in \GF^*(2^n)$. The goal is to find an integer $1 \leq k \leq 2^n-1$ such that $g^k = h$. We define an instance of $n$ events on $\Omega = \{0,1\}^n$ as follows.

We identify $\Omega = \{0,1\}^n$ with $[2^n]$ as well as $\GF(2^n)$ in a natural way. 
We define $f:[2^n] \rightarrow \GF(2^n)$ by $f(0) = 0$ and $f(x) = g^x$ for $x \neq 0$, where the
exponentiation is performed in $\GF(2^n)$. 
For each $i \in [n]$, we define an event $E_i$ that occurs for $\omega \in \{0,1\}^n$ iff $(f(\omega))_i = 1-h_i$. This is a condition that can be checked in time $\poly(n)$, by computing $f(\omega) = g^\omega$ where we interpret $\omega$ as $\sum_{i=0}^{n-1} \omega_i 2^i$ and compute $g^\omega$ by taking squares iteratively.

Observe that for $\omega$ distributed uniformly in $\Omega = \{0,1\}^n$, $f(\omega)$ is again
distributed uniformly in $\Omega$, since $f$ is a bijection ($0$ is mapped to $0$, and $f(\omega)$
for $\omega \neq 0$ generates each element of the multiplicative group $\GF^*(2^n)$ exactly once).
Therefore, the probability of $E_i$ is $1/2$, for each $i \in [n]$. Further, the events
$E_1,\ldots,E_n$ are mutually independent, since for any $J \subseteq [n]$, $\bigcap_{j \in J}
{E_j} \cap \bigcap_{j' \notin J} {\overline{E_{j'}}}$ occurs iff $f(\omega) = h \oplus \b1_J$, which
happens with probability $1/2^n$.
Here $\b1_J \in \set{0,1}^n$ is the indicator vector for the set $J$,
and $\oplus$ denotes addition in $\GF(2^n)$
(i.e., component-wise xor in $\set{0,1}^n$).
Hence the dependency graph is empty, and the LLL with
parameters $x_i = 1/2$ trivially implies that there exists a state $\omega$ avoiding all the events.
In this instance, we know explicitly that the state avoiding all the events is $f^{-1}(h)$. Therefore, if we had an efficient algorithm to find this point for any given $h \in \GF^*(2^n)$, we would also have an efficient algorithm for the discrete logarithm problem in $\GF(2^n)$. 
\end{proof}

\section{Implementation of resampling in specific settings}
\SectionName{implementation}

In this section, we present efficient implementations of \ros in four application settings:
independent random variables (which was the setting of \cite{MoserTardos}),
random permutations (handled by \cite{HarrisS14}),
perfect matchings in complete graphs (some of whose applications are made algorithmic
by \cite{Achlioptas}),
and spanning trees in complete graphs (which is a new scenario that we can handle).
To be more precise, resampling oracles also depend on the types of events and dependencies that we want to handle.\footnote{In \Section{hardness} we give an example of events on independent random variables for which resampling oracles exist but cannot be made efficient.} In the setting of independent random variables, we can handle arbitrary events with dependencies defined by overlapping relevant variables, just like \cite{MoserTardos}.
In the setting of permutations, we handle the appearance of patterns in permutations as in \cite{HarrisS14}.
In the settings of matchings and spanning trees, we consider the ``canonical events" defined by
\cite{LuMohrSzekely}, characterized by the appearance of a certain subset of edges. We also show in
\Section{product-resampling} how resampling oracles for a certain probability space can be extended
in a natural way to products of such probability spaces (for example, how to go from resampling oracles for one random permutation to a collection of independent random permutations).
These settings cover all the applications of the lopsided LLL that we are aware of.

\subsection{The variable model}
\SectionName{resample-independent}

This is the most common setting, considered originally by Moser and Tardos \cite{MoserTardos}. Here,
$\Omega$ has a product structure corresponding to independent random variables $\setst{ X_a }{ a \in
\cU }$. The probability measure $\mu$ here is a product measure. Each bad event $E_i$ depends on a particular subset of variables $A_i$, and two events are independent iff $A_i \cap A_j = \emptyset$. 

Here our algorithmic assumptions correspond exactly to the Moser-Tardos framework
\cite{MoserTardos}. Sampling from $\mu$ means generating a fresh set of random variables
independently. The \ro $r_i$ takes a state $\omega$ and replaces the random variables $\setst{ X_a
}{ a \in A_i }$ by fresh random samples. It is easy to see that the assumptions are satisfied: in
particular, a random state sampled from $\mu$ conditioned on $E_i$ has all variables outside of
$A_i$ independently random. Hence, resampling the variables of $A_i$ produces the distribution
$\mu$. Clearly, resampling $\setst{X_a }{ a \in A_i }$ does not affect any events 
whose variables do not intersect $A_i$.

We note that this \ro is also consistent with the notion of lopsidependency on product spaces considered by \cite{MoserTardos}: They call two events $E_i, E_j$ lopsidependent, if $A_i \cap A_j \neq \emptyset$ and it is possible to cause $E_j$ to occur by resampling $A_i$ in a state where $E_i$ holds but $E_j$ does not (the definition in \cite{MoserTardos} is worded differently but equivalent to this). This is exactly the condition that we require our \ro to satisfy.

\subsection{Permutations}
\SectionName{resample-permutations}

\newcommand{\vbl}{\operatorname{vbl}}

The probability space $\Omega$ here is the space of all permutations $\pi$ on a set $[n]$, with a uniform measure $\mu$.
The bad events are assumed to be ``simple" in the following sense: Each bad event $E_i$ is defined by a ``pattern"
$P(E_i) = \{ (x_1,y_1), \ldots, (x_{t(i)}, y_{t(i)})\}$. The event $E_i$
occurs if $\pi(x_j) = y_j$ for each $1 \leq j \leq t(i)$. Let $\vbl(E_i) = \setst{x}{ \exists y,
(x,y) \in P(E_i) }$ denote the variables of $\pi$ relevant to event $E_j$. Let us define a
relation $i \sim i'$ to hold iff there are pairs $(x,y) \in P(E_i), (x',y') \in P(E_{i'})$
such that $x=x'$ or $y=y'$; i.e., the two events entail the same value in either the range or domain.
This relation defines a \NDG. It is known that the lopsided LLL holds in this setting. 

\begin{algorithm}
\caption{\Ro for permutations}
\label{alg:permutation-shuffle}
\begin{algorithmic}[1]
\STATE {\bf Function} $r_i(\pi$):
\STATE $X := \vbl(E_i)$, i.e., the variables in $\pi$ affecting event $E_i$;
\STATE Fix an arbitrary order $X = (x_1,x_2,\ldots,x_t)$;
\FOR {$i=t$ down to $1$}
\STATE Swap $\pi(x_i)$ with $\pi(z)$ for $z$ uniformly random among $[n] \setminus \{x_1,\ldots,x_{i-1}\}$;
\ENDFOR
\RETURN $\pi$;
\end{algorithmic}
\end{algorithm}

Harris and Srinivasan \cite{HarrisS14} showed how, under the LLL criteria, a permutation avoiding all bad events can be found algorithmically. We implement the \ro based on their algorithm (see Algorithm~\ref{alg:permutation-shuffle}). 
To prove the correctness of this \ro within our framework, we need the following lemma.

\begin{lemma}
\label{lem:shuffle}
Suppose that a permutation $\pi$ has some arbitrary fixed assignment on the variables in $X$, $\pi|_{X} = \phi$, and it is uniformly random among all permutations satisfying $\pi|_{X} = \phi$. Then the output of Shuffle$(\pi,X)$ is a uniformly random permutation.
\end{lemma}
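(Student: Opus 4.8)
The plan is to proceed by induction on the iterations of the loop, tracking the distribution of $\pi$ as the constraints $\pi(x_j)=\phi(x_j)$ are released one at a time. For $1\le k\le t+1$ let $\Pi_k$ be the set of permutations $\sigma$ of $[n]$ satisfying $\sigma(x_j)=\phi(x_j)$ for all $j<k$; thus $\Pi_{t+1}$ is the support of the input distribution (all permutations agreeing with $\phi$ on $X$) and $\Pi_1$ is the set of all permutations of $[n]$. I would establish the invariant: immediately before the iteration with loop index $i$ the current $\pi$ is uniform on $\Pi_{i+1}$, equivalently immediately after that iteration it is uniform on $\Pi_i$. The base case $i=t$ is exactly the hypothesis, and the ``immediately after'' form with $i=1$ is the conclusion.

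For the inductive step I would fix $i$, assume $\pi$ is uniform on $\Pi_{i+1}$, and examine the single swap, which replaces $\pi$ by $\pi'=\pi\circ(x_i\;z)$ for $z$ uniform on $S_i:=[n]\setminus\{x_1,\dots,x_{i-1}\}$. First observe that $\pi'\in\Pi_i$ always: since $x_i\ne x_j$ and $z\ne x_j$ for every $j<i$, the transposition fixes each $x_j$ with $j<i$, so the defining constraints of $\Pi_i$ are preserved. The core claim is that $\pi'$ is \emph{uniform} on $\Pi_i$, which I would prove by a fiber count: fix $\tau\in\Pi_i$ and count the pairs $(\sigma,z)$ with $\sigma\in\Pi_{i+1}$, $z\in S_i$, and $\sigma\circ(x_i\;z)=\tau$. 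For a given $z$ the unique candidate is $\sigma=\tau\circ(x_i\;z)$; it automatically has $\sigma(x_j)=\tau(x_j)=\phi(x_j)$ for $j<i$, while the remaining requirement $\sigma(x_i)=\phi(x_i)$ reads $\tau(z)=\phi(x_i)$, i.e.\ $z=\tau^{-1}(\phi(x_i))$. Since $\tau\in\Pi_i$ restricts to a bijection $S_i\to V_i:=[n]\setminus\{\phi(x_1),\dots,\phi(x_{i-1})\}$ and $\phi(x_i)\in V_i$ (the values $\phi(x_j)$ being distinct), this prescribed $z$ indeed lies in $S_i$. Hence exactly one pair $(\sigma,z)$ maps to $\tau$, so $\Pr[\pi'=\tau]=\frac{1}{|S_i|\,|\Pi_{i+1}|}$, independent of $\tau$; since $\pi'$ is supported on $\Pi_i$ this forces $\Pr[\pi'=\tau]=1/|\Pi_i|$ for every $\tau\in\Pi_i$, as claimed.

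Concatenating the iterations then gives the result: the input is uniform on $\Pi_{t+1}$, and each pass of the loop carries ``uniform on $\Pi_{i+1}$'' to ``uniform on $\Pi_i$'', so the returned permutation is uniform on $\Pi_1$, i.e.\ a uniformly random permutation of $[n]$. The step I expect to need the most care is the fiber count: one must check that for each target $\tau$ there is exactly one admissible $z$, and this hinges on $z$ being drawn from precisely the complement $S_i$ of the domain points already fixed and on the distinctness of $\phi(x_1),\dots,\phi(x_i)$ --- any off-by-one in the range from which $z$ is sampled would destroy uniformity. The rest is routine bookkeeping.
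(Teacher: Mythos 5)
Your proof is correct and takes essentially the same route as the paper's: the same backward-release induction with the invariant that after the swap for $x_i$ the permutation is uniform conditioned only on the assignment of $\{x_1,\ldots,x_{i-1}\}$. Your explicit fiber count simply fills in the detail behind the paper's assertion that choosing the swap uniformly among the available positions makes every permutation consistent with the remaining constraints equally likely.
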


The procedure is known as the Fisher-Yates shuffle for generating uniformly random permutations (and was used in \cite{HarrisS14} as well). In contrast to the full shuffle, we assume that some part of the permutation has been shuffled already: $X$ is the remaining portion that still remains to be shuffled, and conditioned on its assignment the rest is uniformly random. This would be exactly the distribution achieved after performing the Fisher-Yates shuffle on the complement of $X$. Our procedure performs the rest of the Fisher-Yates shuffle, which produces a uniformly random permutation. For completeness we give a self-contained proof.

\begin{proof}
Let $X = \{x_1,\ldots,x_t\}$.
By induction, after performing the swap for $x_i$, the permutation is uniform among all permutations with a fixed assignment of $\{x_1,\ldots,x_{i-1}\}$ (consistent with $\phi$). This holds because, before the swap, the permutation was by induction uniform conditioned on the assignment of $\{x_1,\ldots,x_i\}$ being consistent with $\phi$, and we choose a uniformly random swap for $x_i$ among the available choices. This makes every permutation consistent with $\phi$ on $\{x_1,\ldots,x_{i-1}\}$ equally likely after this swap.
\end{proof}

This verifies the first condition for our \ro.
The second condition is that resampling of {\em occurring events} does not affect non-neighbor events. This is true because of the following lemma.

\begin{lemma}
\label{lem:dependency}
The \ro $r_i(\pi)$ applied to a permutation satisfying $E_i$ does not cause any new event outside of $\Gamma^+(I)$ to occur.
\end{lemma}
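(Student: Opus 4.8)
\textbf{Proof plan for Lemma~\ref{lem:dependency}.}
The plan is to track, state by state, which pairs $(x,y)$ the permutation realizes and to show that every pair realized after running $r_i(\pi)$ was either already realized in $\pi$ or involves a variable of $\vbl(E_i)$. First I would set up notation: write $\pi$ for the input permutation (which satisfies $E_i$, so $\pi(x_j) = y_j$ for every $(x_j,y_j) \in P(E_i)$), write $\pi'$ for the output $r_i(\pi)$, and let $X = \vbl(E_i) = \{x_1,\ldots,x_t\}$ with the fixed order used by Algorithm~\ref{alg:permutation-shuffle}. The key structural observation is that each swap in the loop touches only positions in $X$ and the positions $\pi(z)$, $\pi(x_i)$ it selects; by tracing the loop I would establish the following invariant: for every domain element $x \notin X$, either $\pi'(x) = \pi(x)$, or $\pi'(x) \in \{y_1,\ldots,y_t\}$ (i.e., $x$ got one of the range values that used to be pinned by $E_i$). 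Intuitively, the algorithm permutes only the multiset of values $\{\pi(x_1),\ldots,\pi(x_t)\} = \{y_1,\ldots,y_t\}$ among the positions that participate in swaps, so the only values that can move to a position outside $X$ are exactly the $y_j$'s, and the only positions whose image changes are those in $X$ together with positions that receive some $y_j$.

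Next I would use this invariant to rule out new events outside $\Gamma^+(i)$. Suppose, for contradiction, that $E_{i'}$ occurs at $\pi'$ but not at $\pi$, with $i' \notin \Gamma^+(i)$. Since $E_{i'}$ did not occur at $\pi$ but does at $\pi'$, there is some pair $(x',y') \in P(E_{i'})$ with $\pi(x') \neq y'$ and $\pi'(x') = y'$; that is, the image of $x'$ changed. By the invariant, the change of image at $x'$ forces either $x' \in X = \vbl(E_i)$, or $y' = \pi'(x') \in \{y_1,\ldots,y_t\}$, i.e., $y'$ is one of the range values appearing in $P(E_i)$. In the first case $x' \in \vbl(E_i)$ and $x' \in \vbl(E_{i'})$, so the relation $i \sim i'$ holds by the ``same domain value'' clause; in the second case $y'$ appears in both $P(E_i)$ and $P(E_{i'})$, so $i \sim i'$ holds by the ``same range value'' clause. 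Either way $i' \in \Gamma(i) \subseteq \Gamma^+(i)$, contradicting $i' \notin \Gamma^+(i)$. Hence no such $E_{i'}$ can be newly created, which is exactly \RTWO.

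The main obstacle is establishing the invariant cleanly, because the swaps are performed in a particular ``down from $t$ to $1$'' order and a single position can be touched by several swaps, so one has to argue that even after compositions of swaps the set of values that can land outside $X$ never grows beyond $\{y_1,\ldots,y_t\}$. I would handle this by induction on the loop iterations: before iteration $i$ the permutation agrees with $\pi$ except that the values originally at $x_i,\ldots,x_t$ (namely $y_i,\ldots,y_t$, since $\pi$ satisfies $E_i$) may have been shuffled around among $\{x_1,\ldots,x_t\}$ and positions holding values from $\{y_1,\ldots,y_t\}$; each swap preserves this because it only exchanges the contents of $x_i$ with the contents of some position $z$, and by the induction hypothesis the value currently at $x_i$ is one of the $y_j$'s. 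One technical point worth stating carefully is that $z$ ranges over $[n] \setminus \{x_1,\ldots,x_{i-1}\}$, so $z$ may lie outside $X$, but then the value currently at $z$ is either $\pi(z)$ (fine, it just moves into $X$) or already one of the $y_j$'s (fine), so the invariant is maintained. I would also remark that the statement as phrased mentions $\Gamma^+(I)$; here $I$ plays the role of the singleton $\{i\}$ being resampled, and the argument above is exactly the needed instance of \RTWO\ for the single event $E_i$.
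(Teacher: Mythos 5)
Your proof is correct and takes essentially the same route as the paper: the same two-case split (if the changed position lies in $\vbl(E_i)$ you get domain overlap, otherwise the position outside $X$ has received one of the values $y_1,\ldots,y_t$ pinned by $E_i$ and you get range overlap), so in either case the new event is a neighbor of $E_i$. Your inductive invariant merely makes explicit the paper's observation that a position outside $X$ only ever changes value via a swap with a not-yet-processed variable of $X$, whose content is still one of the $y_j$'s.
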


\begin{proof}
Suppose $E_j$ changed its status during a call to $r_i(\pi)$. This means that something changed
among its relevant variables $\vbl(E_j)$. This could happen in two ways:

 (1) either a variable $z \in \vbl(E_j)$ was swapped because $z \in X = \vbl(E_i)$; then clearly $j \in \Gamma^+(i)$.

 (2) or, a variable in $\vbl(E_j)$, although outside of $X$,  received a new value by a swap with
 some variable in $X = \vbl(E_i)$. Note that in the Shuffle procedure, every time a variable $z$ outside of $X$ changes its value, it is by a swap with a fresh variable of $X$, i.e. one that had not been processed before. Therefore, the value that $z$ receives is one that previously caused $E_i$ to occur. If it causes $E_j$ to occur, it means that $E_i$ and $E_j$ share a value in the range space and we have $j \in \Gamma^+(i)$ as well.
\end{proof}

\subsection{Perfect matchings}
\SectionName{resample-matchings}

Here, the probability space $\Omega$ is the set of all perfect matchings in $K_{2n}$, with the uniform measure. 
This is a setting considered by \cite{Achlioptas} and it is also related to the setting of
permutations. (Permutations on $[n]$ can be viewed as perfect matchings in $K_{n,n}$.) A state here
is a perfect matching in $K_{2n}$, which we denote by $M \in \Omega$. We consider bad events of the
following form: $E_A$ for a set of edges $A$ occurs if $A \subseteq M$. Obviously, $\Pr_\mu[E_A] >
0$ only if $A$ is a (partial) matching. Let us define $A \sim B$ iff $A \cup B$ is {\em not} a
matching. It was proved in \cite{LuMohrSzekely} that this defines a \NDG. 

Our goal is to implement a \ro in this setting. We describe such an operation in Algorithm~\ref{alg:matching-shuffle}.

\begin{algorithm}
\caption{\Ro for perfect matchings}
\label{alg:matching-shuffle}
\begin{algorithmic}[1]
\STATE {\bf Function} $r_{A}(M)$:
\STATE Check that $A \subseteq M$, otherwise {\bf return} $M$.
\STATE $A' := A$;
\STATE $M' := M$;
\WHILE {$A' \neq \emptyset$}
\STATE Pick $(u,v) \in A'$ arbitrarily;
\STATE Pick $(x,y) \in M' \setminus A'$ uniformly at random, with $(x,y)$ randomly ordered;
\STATE With probability $1 - \frac{1}{2|M' \setminus A'|+1}$,
\STATE \hspace{12pt} Add $(u,y), (v,x)$ to $M'$ and remove $(u,v),(x,y)$ from $M'$;
\STATE Remove $(u,v)$ from $A'$;
\ENDWHILE
\RETURN $M'$.
\end{algorithmic}
\end{algorithm}

\begin{lemma}
Let $A$ be a matching in $K_{2n}$ and let $M$ be distributed uniformly among perfect matchings in $K_{2n}$ such that $A \subseteq M$.
Then after calling the \ro, $r_A(M)$ is a uniformly random perfect matching.
\end{lemma}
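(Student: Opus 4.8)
The plan is to show that each iteration of the while-loop preserves the following invariant: at the start of the iteration, $M'$ is distributed uniformly among the perfect matchings of $K_{2n}$ that contain the current set $A'$. This holds at the very beginning since $A' = A$ and $M$ is uniform over perfect matchings containing $A$, and when the loop terminates with $A' = \emptyset$ it asserts exactly that $M'$ is a uniformly random perfect matching. So it suffices to analyze one iteration: assume $M'$ is uniform among perfect matchings containing $A' = B \cup \{(u,v)\}$, where $(u,v)$ is the edge picked on line~6 and $B := A' \setminus \{(u,v)\}$, and show that after the iteration (the swap, then deleting $(u,v)$ from $A'$) the new $M'$ is uniform among perfect matchings containing $B$.

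First I would reduce this single-step statement to the case $B = \emptyset$. The edges of $B$ are never touched by the iteration — the only edges modified are $(u,v)$ and the single edge $(x,y) \in M' \setminus A'$, and the replacement edges $(u,y),(v,x)$ again lie on the vertex set $V(K_{2n}) \setminus V(B)$. Writing $2m := 2n - 2\abs{B}$, restricting every matching to the complete graph $K_{2m}$ on $V(K_{2n}) \setminus V(B)$ is a bijection between perfect matchings of $K_{2n}$ containing $B$ and perfect matchings of $K_{2m}$; it carries uniform distribution to uniform distribution and commutes with the iteration. Since $\abs{M' \setminus A'} = m-1$, the swap on lines~8--9 is performed with probability $1 - \frac{1}{2m-1} = \frac{2m-2}{2m-1}$. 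It remains to prove: if $N$ is uniform over perfect matchings of $K_{2m}$ containing a fixed edge $(u,v)$, then one iteration turns $N$ into a uniformly random perfect matching of $K_{2m}$.

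For this I would count, using that $K_{2k}$ has $(2k-1)!! = (2k-1)(2k-3)\cdots 1$ perfect matchings; hence each perfect matching containing $(u,v)$ has probability $\frac{1}{(2m-3)!!}$, while each of the $(2m-1)!!$ perfect matchings should end up with probability $\frac{1}{(2m-1)!!} = \frac{1}{(2m-1)(2m-3)!!}$. The output splits into two disjoint cases. If no swap occurs (probability $\frac{1}{2m-1}$), the output equals the input $N$, so every perfect matching containing $(u,v)$ is output with probability $\frac{1}{(2m-3)!!}\cdot\frac{1}{2m-1} = \frac{1}{(2m-1)!!}$, and no other matching is produced this way. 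If a swap occurs, the output has $u$ and $v$ matched to vertices other than each other; fixing such a target matching $M''$ in which $u$ is matched to $a$ and $v$ to $b$ (with $a,b \notin\{u,v\}$), there is a unique input $N = (M'' \setminus \{(u,a),(v,b)\}) \cup \{(u,v),(a,b)\}$ and a unique ordered pick $(x,y) = (b,a)$ that yield $M''$. Multiplying $\Pr[N] = \frac{1}{(2m-3)!!}$ by $\frac{1}{m-1}$ (selecting the edge $\{a,b\}$ among the $m-1$ edges of $N \setminus \{(u,v)\}$), by $\frac12$ (its orientation), and by $\frac{2m-2}{2m-1}$ (the swap probability) again gives exactly $\frac{1}{(2m-1)!!}$. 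Every perfect matching of $K_{2m}$ lies in exactly one of the two cases, so the output is uniform, which establishes the single-step statement and hence, via the invariant, the lemma.

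I expect the main obstacle to be the bookkeeping of the swap case: correctly identifying the unique preimage pair $(N,(x,y))$ of a target matching $M''$ and verifying that the four probability factors collapse to $\frac{1}{(2m-1)!!}$; the reduction to $B = \emptyset$ and the no-swap case are comparatively routine. One should also sanity-check the degenerate case $m = 1$, where $M' = A'$, the swap probability is $0$, and the unique perfect matching is returned unchanged, so the invariant is trivially preserved.
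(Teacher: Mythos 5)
Your proposal is correct and follows essentially the same route as the paper: the identical loop invariant (at the start of each iteration $M'$ is uniform among perfect matchings containing $A'$), established by induction over the iterations of the while-loop. The only difference is in how the one-iteration step is verified --- you restrict to the vertices outside $V(B)$ and do an explicit double-factorial count with a unique-preimage argument, while the paper reads off the conditional structure of a uniform matching directly (probability $1/(2|M'\setminus A'|+1)$ of containing $(u,v)$, and otherwise $u,v$ re-matched to a uniformly random ordered pair with the remainder uniform); your bookkeeping, including the degenerate case $M'\setminus A'=\emptyset$, checks out.
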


\begin{proof}
We prove by induction that at any point, $M'$ is a uniformly random perfect matching conditioned on containing $A'$.
This is satisfied at the beginning: $M'=M, A'=A$ and $M$ is uniformly random conditioned on $A \subseteq M$.

Assume this is true at some point, we pick $(u,v) \in A'$ arbitrarily and $(x,y) \in M' \setminus A'$ uniformly at random.
Denote the vertices covered by $M' \setminus A'$ by $V(M' \setminus A')$. 
Observe that for a uniformly random perfect matching on $V(M' \setminus A') \cup \{u,v\}$, the edge $(u,v)$ should appear with probability $1 / (2|M' \setminus A'|+1)$ since $u$ has $2|M' \setminus A'|+1$ choices to be matched with and $v$ is 1 of them. Consequently, we keep the edge $(u,v)$ with probability $1 / (2|M' \setminus A'|+1)$ and conditioned on this $M' \setminus A'$ is uniformly random by the inductive hypothesis. Conditioned on $(u,v)$ not being part of the matching, we re-match $(u,v)$ with another random edge $(x,y) \in M' \setminus A'$ where $(x,y)$ is randomly ordered. In this case, $u$ and $v$ get matched to a uniformly random pair of vertices $x,y \in V(M' \setminus A')$, as they should be. The rest of the matching $M' \setminus A' \setminus \{(x,y)\}$ is uniformly random on $V(M' \setminus A' \setminus \{x,y\})$ by the inductive hypothesis. 

Therefore, after each step $M' \setminus A'$ is uniformly random conditioned on containing $A'$. At the end, $A'=\emptyset$ and $M'$ is uniformly random. 
\end{proof}

\begin{lemma}
The \ro $r_A(M)$ applied to a perfect matching satisfying event $E_A$ does not cause any new event $E_B$ such that $B \notin \Gamma^+(A)$.
\end{lemma}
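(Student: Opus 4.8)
The plan is to control exactly which edges can be \emph{created} by $r_A$, and to observe that the hypothesis $B \notin \Gamma^+(A)$ forces any ``missing'' edge of $B$ to lie entirely outside the region that $r_A$ can touch.

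First I would record the structural invariant of Algorithm~\ref{alg:matching-shuffle}: throughout the execution we have $A' \subseteq A$, because the loop only ever deletes edges from $A'$. Let $V(A)$ denote the set of $2|A|$ vertices covered by $A$. Then whenever an edge $(u,v)$ is selected from $A'$ we have $u,v \in V(A)$, so the two edges $(u,y),(v,x)$ that may be inserted into $M'$ at that step are each incident to $V(A)$ — and this remains true even when a previously-inserted edge is later chosen as the $(x,y)$ edge and rematched, since the replacement edges are again incident to $V(A)$ via $u,v$. Since edges are otherwise only \emph{deleted} from $M'$, I conclude: any edge whose endpoints both lie outside $V(A)$, and which is absent from $M$, is absent from $M'$ at every moment of the execution.

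Next I would use the hypothesis. By the definition of the \NDG in this setting, $B \notin \Gamma^+(A)$ means $B \neq A$ and $A \cup B$ is a matching. Assume $E_A$ occurs at $M$ (so $A \subseteq M$, and the algorithm does not return early) and $E_B$ does not occur at $M$, so there is an edge $e \in B \setminus M$. I claim $e$ has both endpoints outside $V(A)$: if instead $e$ shared a vertex with some edge $a \in A$, then since $a,e \in A \cup B$ and $A \cup B$ is a matching we would need $a = e$, hence $e \in A \subseteq M$, contradicting $e \notin M$. Combining this with the first paragraph, $e$ is never inserted into $M'$ and is absent from the initial $M$, so $e \notin r_A(M)$; therefore $B \not\subseteq r_A(M)$, i.e.\ $E_B$ does not occur at $r_A(M)$, which is exactly property \RTWO.

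I do not expect a real obstacle here: the entire argument is the ``new edges touch $V(A)$'' observation plus the elementary fact that an edge of $B$ meeting $V(A)$ would have to already be an edge of $A$ (hence of $M$). The only place warranting a little care is the bookkeeping that the invariant $A' \subseteq A$ is preserved and, correspondingly, that no iteration of the loop — including those that rematch an edge created in an earlier iteration — can ever produce an edge disjoint from $V(A)$.
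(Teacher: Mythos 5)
Your proposal is correct and follows essentially the same argument as the paper: the key observation in both is that every edge inserted by $r_A$ is incident to a vertex of $V(A)$, so an event $E_B$ with $A \cup B$ a matching cannot be newly caused. Your write-up merely phrases this in direct (rather than contrapositive) form and spells out the invariant $A' \subseteq A$ explicitly, which the paper leaves implicit.
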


\begin{proof}
Observe that all the new edges that the \ro adds to $M$ are incident to some vertex matched by $A$. So if an event $E_B$ was not satisfied before the operation and it is satisfied afterwards, it must be the case that $B$ contains some edge not present in $A$ but sharing a vertex with $A$. Hence, $A \cup B$ is not a matching and $A \sim B$.
\end{proof}

\subsection{Spanning trees}
\SectionName{resample-trees}

Here, the probability space $\Omega$ is the set of all spanning trees in $K_n$.
Let us consider events $E_A$ for a set of edges $A$, where $E_A$ occurs for $T \in \Omega$ iff $A
\subseteq T$. Define $A \sim B$ for distinct $A,B$ unless $A$ and $B$ are vertex-disjoint.
Lu et al.~\cite[Lemma 7]{LuMohrSzekely} show that this in fact defines a \emph{dependency}
graph for spanning trees.
It is worth emphasizing that in this scenario the \eqref{eq:Dep} condition holds
(the more general condition \eqref{eq:LLLL} is not needed),
but the scenario does not fall within the scope of the Moser-Tardos variable model.
It does fall within the scope of our framework, but one must design a non-trivial \ro.

To implement a \ro in this setting, we will use as a subroutine an algorithm to generate a uniformly
random spanning tree in a given graph $G$. This can be done efficiently by several methods,
for example by a random walk \cite{Broder}.

\begin{algorithm}
\caption{\Ro for spanning trees}
\label{alg:cycle-shuffle}
\begin{algorithmic}[1]
\STATE {\bf Function} $r_{A}(T)$:
\STATE Check that $A \subseteq T$, otherwise {\bf fail}.
\STATE Let $W = V(A)$, the vertices covered by  $A$.
\STATE Let $T_1 = {V \setminus W \choose 2} \cap T$, the edges of $T$ disjoint from $W$.
\STATE Let $F_1 = {V \setminus W \choose 2} \setminus T$, the edges disjoint from $W$ not present in $T$.
\STATE Let $G_2 = (K_n \setminus F_1) / T_1$ be a multigraph obtained by deleting $F_1$ and contracting $T_1$.
\STATE Generate a uniformly random spanning tree $T_2$ in $G_2$.
\RETURN $T_1 \cup T_2$.
\end{algorithmic}
\end{algorithm}

\begin{lemma}
\label{lem:tree-decondition}
If $A$ is a fixed forest and $T$ is a uniformly random spanning tree in $K_n$ conditioned on $A \subseteq T$, then $r_A(T)$ produces a uniformly random spanning tree in $K_n$.
\end{lemma}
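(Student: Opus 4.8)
The plan is to condition on the part of the tree that the oracle keeps, reduce the statement to a marginal-distribution claim, and settle that claim using the transfer-current description of uniform spanning trees.

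First I would record the easy structural facts. Write $W=V(A)$, and for a forest $S$ on $V\setminus W$ set $F_1(S)=\binom{V\setminus W}{2}\setminus S$ and $G_2(S)=(K_n\setminus F_1(S))/S$. Since $K_n\setminus F_1(S)$ contains every edge meeting $W$ and $S$ is acyclic, $G_2(S)$ is connected and has spanning trees; moreover, lifting any spanning tree $T_2$ of $G_2(S)$ back through the contraction, $S\cup T_2$ is a spanning tree of $K_n$ whose set of edges inside $V\setminus W$ is exactly $S$ (and disjoint from $S$ as an edge set, since $T_2$'s edges all meet $W$). In particular $r_A(T)$ always returns a spanning tree of $K_n$ whose internal forest on $V\setminus W$ is the forest $T_1=T[V\setminus W]$ read off from the input, and the bad-event check never fires when $A\subseteq T$.

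Next I would introduce $\mathbf T$, uniform over spanning trees of $K_n$ containing $A$ (the law $\mu|_{E_A}$), and $\mathbf T^\ast$, uniform over all spanning trees of $K_n$ (the law $\mu$), together with $\mathbf S=\mathbf T[V\setminus W]$ and $\mathbf S^\ast=\mathbf T^\ast[V\setminus W]$. The point is that the two conditional laws coincide. Given $\mathbf S=S$, the algorithm uses the input only through $S$ (it sets $T_1=S$, $F_1=F_1(S)$) and then samples $T_2$ uniformly in $G_2(S)$, so $r_A(\mathbf T)$ is $S$ together with an independent uniform spanning tree of $G_2(S)$; and given $\mathbf S^\ast=S$, likewise $\mathbf T^\ast$ is $S$ together with an independent uniform spanning tree of $G_2(S)$, because the spanning trees of $K_n$ with internal forest exactly $S$ on $V\setminus W$ are precisely the spanning trees of $K_n\setminus F_1(S)$ containing $S$, and contracting the forest $S$ is a bijection onto spanning trees of $G_2(S)$. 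Hence it suffices to prove $\mathbf S\stackrel{d}{=}\mathbf S^\ast$, i.e.\ that under a uniform spanning tree $T$ of $K_n$ the event $\{A\subseteq T\}$ is independent of the induced forest $T[V\setminus W]$; equivalently, writing $\Pr_G$ for probability under a uniform spanning tree of $G$, that $\Pr_{G_2(S)}[A\subseteq T]=\Pr_{K_n}[A\subseteq T]$ for every forest $S$ on $V\setminus W$.

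I expect this last independence to be the only real obstacle, and I would prove it via the transfer-current theorem of Burton and Pemantle: for any graph $G$, $\Pr_G[A\subseteq T]=\det\big(Y^G_{e,f}\big)_{e,f\in A}$, where $Y^G_{e,f}$ is the current across $f$ when unit current is driven between the endpoints of $e$ in the unit-resistance network $G$. It therefore suffices to show $Y^{G_2(S)}_{e,f}=Y^{K_n}_{e,f}$ for all $e,f\in A\subseteq\binom{W}{2}$, where every current involved is supplied only at vertices of $W$. In $K_n$, invariance under permutations of $V\setminus W$ that fix $W$ together with uniqueness of the harmonic potential forces the potential to be constant on $V\setminus W$; in $G_2(S)$, each contracted $S$-component is adjacent only to $W$ and with all of its parallel-edge multiplicities equal, so harmonicity makes its potential equal to the average of the $W$-potentials, the same value for every component. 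In either network one may then delete the (zero-current) edges among these equipotential vertices and identify them to a single vertex without changing any potential on $W$ or any current across an edge meeting $W$; the result is in both cases, up to irrelevant loops, $K_W$ with one extra vertex joined to every $w\in W$ with multiplicity $n-|W|$. Hence the transfer currents, and so the determinants, agree, which gives the independence and thus $r_A(\mathbf T)\stackrel{d}{=}\mathbf T^\ast$. (Alternatively, one could establish $\Pr_{G_2(S)}[A\subseteq T]=\Pr_{K_n}[A\subseteq T]$ by a Matrix--Tree determinant computation, or a Cayley-type count, applied to $G_2(S)$ and $G_2(S)/A$, but the electrical-network argument seems the cleanest.)
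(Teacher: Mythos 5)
Your proof is correct, and while it shares the paper's overall skeleton, its key step is genuinely different. Both arguments reduce the lemma to the same two facts: (i) conditioned on the induced forest $S=T\cap\binom{V\setminus W}{2}$, a uniform spanning tree of $K_n$ is $S$ together with a uniform spanning tree of $G_2(S)$ via the deletion/contraction bijection (this is exactly the paper's last step, and also how your oracle behaves given its input's internal forest); and (ii) the law of the internal forest on $V\setminus W$ is unaffected by conditioning on $A\subseteq T$. The difference is in how (ii) is established. The paper invokes the explicit count \eqref{eq:countsptree} of spanning trees of $K_n$ containing a fixed forest (Lu--Mohr--Sz\'ekely \cite[Lemma 6]{LuMohrSzekely}), deduces that containment of vertex-disjoint forests gives independent events, and finishes by inclusion--exclusion; you instead recast (ii) as $\Pr_{G_2(S)}[A\subseteq T]=\Pr_{K_n}[A\subseteq T]$ and prove it with the Burton--Pemantle transfer-current theorem plus an equipotential/gluing argument (symmetry over $V\setminus W$ in $K_n$, equal parallel multiplicities in $G_2(S)$, so both networks reduce to $K_W$ plus one vertex of multiplicity $n-|W|$). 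Your electrical argument checks out, including for the multigraph $G_2(S)$, and your Bayes-style passage from the constant conditional probability to equality of the forest marginals replaces the paper's inclusion--exclusion step. What each approach buys: the paper's route is elementary and reuses the counting formula that is needed anyway in the applications (e.g.\ to compute $\Pr[E^i_{ef}]$), whereas yours is more modular and avoids the exact enumeration at the cost of heavier machinery (transfer currents); note that both ultimately exploit the special symmetry of $K_n$, so neither is more general in that respect.
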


\begin{proof}
First, observe that since $T_2$ is a spanning tree of $G_2 = (K_n  \setminus F_1) / T_1$, it is also a spanning tree of $K_n / T_1$ where $T_1$ is a forest, and therefore $T_1 \cup T_2$ is a spanning tree of $K_n$. We need to prove that it is a uniformly random spanning tree.

First, we appeal to a known result \cite[Lemma 6]{LuMohrSzekely} stating that given a forest $F$ in
$K_n$ with components of sizes (number of vertices) $f_1, f_2, \ldots, f_m$, the number of spanning trees containing $F$ is exactly 
\begin{equation}
\EquationName{countsptree}
 n^{n-2} \prod_{i=1}^{m} \frac{f_i}{n^{f_i-1}}.
\end{equation}
Equivalently (since $n^{n-2}$ is the total number of spanning trees), for a uniformly random spanning tree $T$, $\Pr[F \subseteq T] = \prod_{i=1}^{m} f_i/n^{f_i-1}$. This has the surprising consequence that for vertex-disjoint forests $F_1, F_2$, we have $\Pr[F_1 \cup F_2 \subseteq T] = \Pr[F_1 \subseteq T] \cdot \Pr[F_2 \subseteq T]$, i.e., the containment of $F_1$ and $F_2$ are independent events. (In a general graph, the appearances of different edges in a random spanning tree are negatively correlated, but here we are in a complete graph.) 

Let $W = V(A)$ and let $B$ be any forest on $V \setminus W$, i.e., vertex-disjoint from $A$. By the above, the appearance of $B$ in a uniformly random spanning tree is independent of the appearance of $A$. Hence, if $T$ is uniformly random, we have $\Pr[B \subseteq T \mid A \subseteq T] = \Pr[B \subseteq T]$. This implies that the distribution of $T \cap {V \setminus W \choose 2}$ is exactly the same for a uniformly random spanning tree $T$ as it is for one conditioned on $A \subseteq T$ (formally, by applying the inclusion-exclusion formula). Therefore, the forest $T_1 = T \cap {V \setminus W \choose 2}$ is distributed as it should be in a random spanning tree restricted to $V \setminus W$.

The final step is that we extend $T_1$ to a spanning tree $T_1 \cup T_2$, where $T_2$ is a uniform spanning tree in $G_2 = (K_n \setminus F_1) / T_1$. Note that $G_2$ is a multigraph, i.e.,~it is important that we preserve the multiplicity of edges after contraction. The spanning trees $T_2$ in $G_2 = (K_n \setminus F_1) / T_1$ are in a one-to-one correspondence with spanning trees in $K_n$ conditioned on $T \cap {V \setminus W \choose 2} = T_1$. This is because each such tree $T_2$ extends $T_1$ to a different spanning tree of $K_n$, and each spanning tree where $T \cap {V \setminus W \choose 2} = T_1$ can be obtained in this way. Therefore, for a fixed $T_1$, $T_1 \cup T_2$ is a uniformly random spanning tree conditioned on $T \cap {V \setminus W \choose 2} = T_1$. Finally, since the distribution of $T_1$ is equal to that of a uniformly random spanning tree restricted to $V \setminus W$, $T_1 \cup T_2$ is a uniformly random spanning tree.
\end{proof}

\begin{lemma}
The \ro $r_A(T)$ applied to a spanning tree satisfying $E_A$ does not cause any new event $E_B$ such that $B \notin \Gamma^+(A)$.
\end{lemma}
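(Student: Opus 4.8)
The plan is to prove the stronger fact that $r_A(T)$ agrees with $T$ on every edge whose two endpoints both lie outside $W = V(A)$. The lemma follows at once, since $B \notin \Gamma^+(A)$ means precisely that $B \neq A$ and $V(B) \cap V(A) = \emptyset$, i.e. $B \subseteq {V \setminus W \choose 2}$; so if $E_B$ fails to occur at $T$ but were caused by $r_A(T)$, we would have $B \subseteq r_A(T) \cap {V \setminus W \choose 2}$ while $B \not\subseteq T \cap {V \setminus W \choose 2}$, a contradiction.

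First I would record that, since the input satisfies $E_A$, we have $A \subseteq T$ and the oracle returns $T_1 \cup T_2$ with $T_1 = T \cap {V \setminus W \choose 2}$ and $F_1 = {V \setminus W \choose 2} \setminus T$. The set of all edges inside $V \setminus W$ is thus partitioned as $T_1 \sqcup F_1$: the edges of $T_1$ get contracted when forming $G_2 = (K_n \setminus F_1)/T_1$, and the edges of $F_1$ get deleted. Hence $G_2$ contains no (non-loop) edge whose two endpoints correspond to distinct vertices of $V \setminus W$. I would then observe that, translating the random spanning tree $T_2$ of $G_2$ back into an edge set of $K_n$, we have $T_2 \subseteq E(K_n) \setminus F_1 \setminus T_1$, which contains no edge of ${V \setminus W \choose 2}$ at all (those edges are exactly $T_1 \cup F_1$), and that a spanning tree uses no loops. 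Therefore $T_2 \cap {V \setminus W \choose 2} = \emptyset$, and consequently $r_A(T) \cap {V \setminus W \choose 2} = T_1 = T \cap {V \setminus W \choose 2}$.

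To conclude, suppose $B \notin \Gamma^+(A)$, so $B$ is vertex-disjoint from $A$ and hence $B \subseteq {V \setminus W \choose 2}$, and suppose $E_B$ does not occur at $T$, i.e. $B \not\subseteq T$. Then $B \not\subseteq T \cap {V \setminus W \choose 2} = r_A(T) \cap {V \setminus W \choose 2}$, so $B \not\subseteq r_A(T)$, i.e. $E_B$ does not occur at $r_A(T)$.

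The argument is essentially bookkeeping about which edges the deletion/contraction operation can retain, so there is no substantial obstacle; the one point requiring care is the claim that the new edges contributed by $T_2$ are all incident to $W$, which rests on the fact that every edge strictly inside $V \setminus W$ is either in $T_1$ (contracted away) or in $F_1$ (deleted), together with the observation that spanning trees never use loops.
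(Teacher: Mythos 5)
Your proposal is correct and follows essentially the same route as the paper: the paper's proof is the one-line observation that the oracle only modifies edges incident to $W = V(A)$, which is exactly the fact you verify (in more detail, via the $T_1 \sqcup F_1$ partition of ${V \setminus W \choose 2}$ and the deletion/contraction bookkeeping) before applying it in contrapositive form to a vertex-disjoint $B$. The extra care you take in checking that $T_2$ contains no edge of ${V \setminus W \choose 2}$ is a fine elaboration of the paper's implicit justification, not a different argument.
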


\begin{proof}
Note that the only edges that we modify are those incident to $W = V(A)$. Therefore, any new event $E_B$ that the operation of $r_A$ could cause must be such that $B$ contains an edge incident to $W$ and not contained in $A$. Such an edge shares exactly one vertex with some edge in $A$ and hence $B \sim A$. 
\end{proof}

\subsection{Composition of resampling oracles for product spaces}
\SectionName{product-resampling}

Suppose we have a product probability space $\Omega = \Omega_1 \times \Omega_2 \times \ldots \times \Omega_N$, where on each $\Omega_i$ we have resampling oracles $r_{ij}$ for events $E_{ij}, j \in \cE_i$, with respect to a graph $G_i$. Our goal is to show that there is a natural way to combine these resampling oracles in order to handle events on $\Omega$ that are obtained by taking intersections of the events $E_{ij}$. The following theorem formalizes this notion.

\begin{theorem}
\TheoremName{thm:product-resampling}
Let $\Omega_1,\ldots,\Omega_N$ be probability spaces, where for each $\Omega_i$ we have resampling oracles $r_{ij}$ for events $E_{ij}, j \in \cE_i$ with respect to a graph $G_i$. Let $\Omega =
\Omega_1 \times \Omega_2 \times \ldots \Omega_N$ be a product space with the respective product
probability measure. For any set $J$ of pairs $(i,j), j \in \cE_i$ where each $i \in [N]$ appears at
most once, define an event $E_J$ on $\Omega$ to occur in a state $\omega =
(\omega_1,\ldots,\omega_N)$ iff $E_{ij}$ occurs in $\omega_i$ for each $(i,j) \in J$. Define a graph
$G$ on these events by $J \sim J'$ iff there exist pairs $(i,j) \in J, (i,j') \in J'$ such that $j
\sim j'$ in $G_i$. Then there exist resampling oracles $r_J$ for the events $E_J$ with respect to
$G$, which are obtained by calling in succession each of the oracles $r_{ij}$ for $(i,j) \in J$.
\end{theorem}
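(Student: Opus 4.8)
The plan is to define $r_J$ as the composition of the given oracles: fix an arbitrary ordering of the pairs in $J$ and apply the corresponding $r_{ij}$ in succession, each one acting only on its own coordinate $\Omega_i$ of the product. Since each index $i \in [N]$ occurs at most once in $J$, these maps operate on pairwise-distinct coordinates and therefore commute, so the ordering is immaterial and we may reason one coordinate at a time. It then remains to check that $r_J$ satisfies \RONE\ and \RTWO\ with respect to the graph $G$ described in the statement.

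For \RONE, the starting point is that conditioning a product measure on a conjunction of coordinate events factorizes: if $\omega \sim \mu|_{E_J}$, then the coordinates $\omega_1,\ldots,\omega_N$ are mutually independent, with $\omega_i \sim \mu_i|_{E_{ij}}$ whenever $(i,j) \in J$ and $\omega_i \sim \mu_i$ otherwise. I would then process the pairs of $J$ one at a time and prove by induction that, after treating some sub-collection, the coordinates are still mutually independent, a treated coordinate $i$ (and any coordinate not appearing in $J$) is distributed as $\mu_i$, and an untreated coordinate $i$ with $(i,j)\in J$ is still distributed as $\mu_i|_{E_{ij}}$. The inductive step is immediate: $r_{ij}$ reads only coordinate $i$ together with its own internal randomness, so it preserves independence, and by \RONE\ for $r_{ij}$ it turns $\mu_i|_{E_{ij}}$ into $\mu_i$. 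Once every pair has been processed, all coordinates are independent and coordinate $i$ is distributed as $\mu_i$, i.e.\ $r_J(\omega) \sim \mu$.

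For \RTWO, suppose $J' \notin \Gamma^+(J)$ and consider a state $\omega \in E_J$ (the only regime in which \MSR invokes a resampling oracle, cf.\ \Algorithm{resample}) with $\omega \notin E_{J'}$. Choose a witnessing pair $(i,j') \in J'$ with $\omega_i \notin E_{ij'}$; it suffices to show $r_J(\omega)_i \notin E_{ij'}$, since this already forces $r_J(\omega) \notin E_{J'}$. If $i$ appears in no pair of $J$, then $r_J$ does not touch coordinate $i$ and we are done. Otherwise let $(i,j)$ be the unique pair of $J$ with first entry $i$, so that $r_J$ affects coordinate $i$ precisely by applying $r_{ij}$. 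From $\omega \in E_J$ we get $\omega_i \in E_{ij}$, hence $j \neq j'$; and from $J' \notin \Gamma^+(J)$ the pairs $(i,j) \in J$ and $(i,j') \in J'$ cannot certify $J \sim J'$, so $j'$ is not a neighbor of $j$ in $G_i$. Thus $j' \notin \Gamma^+(j)$ in the graph $G_i$, and \RTWO\ for $r_{ij}$ applied to $\omega_i \in E_{ij}$ yields $r_{ij}(\omega_i) \notin E_{ij'}$, which is exactly $r_J(\omega)_i \notin E_{ij'}$.

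The factorization of the conditional product measure and the coordinate-wise induction for \RONE\ are routine. The step needing the most care is \RTWO: one must pick the right witnessing coordinate and exclude the degenerate case in which that coordinate is shared between $J$ and $J'$. This is precisely where it matters that $r_J$ is only ever applied to states in $E_J$, because then $\omega_i \in E_{ij}$ forces $j \neq j'$ and makes the single-space \RTWO\ applicable; without this restriction the composed map could in principle re-create $E_{J'}$ through a coordinate on which $J$ and $J'$ impose the same event.
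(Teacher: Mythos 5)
Your proof is correct and follows essentially the same route as the paper's: define $r_J$ by applying each $r_{ij}$ to its own coordinate, use the factorization of the conditioned product measure together with component-wise \RONE\ to get \RONE\ for $r_J$, and reduce \RTWO\ to the single-space \RTWO\ on the one coordinate witnessing $\omega \notin E_{J'}$. The only differences are cosmetic: the paper pads $J$ with trivial identity oracles rather than arguing that untouched coordinates are unchanged, and your explicit exclusion of the shared-coordinate case $j=j'$ (using that $r_J$ is invoked only on $\omega \in E_J$) spells out a step the paper passes over silently.
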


\begin{proof}
For notational simplicity, let us assume that on each $\Omega_i$ we have a trivial event $E_{i0} = \Omega_i$ and the respective resampling oracle $r_{i0}$ is the identity on $\Omega_i$. Then we can assume that each collection of events $J$ is in the form $J = \{ (1,j_1), (2,j_2), \ldots, (N,j_N) \}$, where we set $j_\ell = 0$ for components where there is no event to resample. We define
 $$r_J(\omega_1,\ldots,\omega_N) = (r_{1 j_1}(\omega_1), r_{2 j_2}(\omega_2), \ldots, r_{N j_N}(\omega_N)). $$
We claim that these are resampling oracles with respect to $G$ as defined in the theorem.

Let us denote by $\mu_i$ the probability distribution on $\Omega_i$ and by $\mu$ the product distribution on $\Omega$.
For the first condition, suppose that $\omega \sim \mu|_{E_J}$. By the product structure of
$\Omega$, this is the same as having $\omega = (\omega_1, \ldots, \omega_N)$ where the components
are independent and $\omega_\ell \sim \mu_\ell|_{E_{\ell j_\ell}}$ for each $(\ell,j_\ell) \in J$,
and $\omega_\ell \sim \mu_\ell$ for components such that $j_\ell=0$. By the properties of the
resampling oracles $r_{\ell j_\ell}$, we have $r_{\ell j_\ell}(\omega_\ell) \sim \mu_\ell$. Since the resampling oracles are applied with independent randomness for each component, we have
$$
r_J(\omega)
 ~=~ (r_{1 j_1}(\omega_1), r_{2 j_2}(\omega_2), \ldots, r_{N j_N}(\omega_N))
 ~\sim~ \mu_1 \times \mu_2 \times \ldots \times \mu_N
 ~=~ \mu.
$$

For the second condition, note that if $\omega \notin E_{J'}$ and $r_J(\omega) \in E_{J'}$, it must be the case that there is $(\ell,j_\ell) \in J$ and $(\ell,j'_\ell) \in J'$ such that $\omega_\ell \notin E_{\ell j'_\ell}$ and $r_{\ell j_\ell}(\omega) \in E_{\ell j'_\ell}$. However, this is possible only if $j_\ell \sim j'_\ell$ in the graph $G_{\ell}$. By the definition of $G$, this means that $J \sim J'$ as well.
\end{proof}

As a result, we can extend our resampling oracles to spaces like $N$-tuples of independent random
permutations, independent random spanning trees, etc. Such extensions are used in our applications.

\section{Applications}
\SectionName{applications}

Let us present a few applications of our framework. Our application to rainbow spanning trees is new, even in the existential sense. Our applications to Latin transversals and rainbow matchings are also new to the best of our knowledge, although they could also have been obtained using the framework of \cite{HarrisS14} and \cite{Achlioptas}. 

\subsection{Rainbow spanning trees}

Given an edge-coloring of $K_n$, a spanning tree is called rainbow if each of its edges has a distinct color.
The existence of a single rainbow spanning tree is completely resolved by the matroid intersection theorem: It can be decided efficiently whether a rainbow spanning tree exists for a given edge coloring, and it can be found efficiently if it exists. However, the existence of multiple edge-disjoint rainbow spanning trees is more challenging.
An attractive conjecture of Brualdi and Hollingsworth \cite{Brualdi} states that if $n$ is even and $K_n$ is properly edge-colored by $n-1$ colors, then the edges can be decomposed into $n/2$ rainbow spanning trees, each tree using each color exactly once. Until recently, it was only known that every such edge-coloring contains $2$ edge-disjoint rainbow spanning trees \cite{Akbari}. In a recent development, it was proved that if every color is used at most $n/2$ times (which is true for any proper coloring) then there exist $\Omega(n / \log n)$ edge-disjoint rainbow spanning trees \cite{Carraher}. In fact this result seems to be algorithmically efficient, although this was not claimed by the authors. We prove that using our  framework, we can find $\Omega(n)$ rainbow spanning trees under a slight strengthening of the coloring assumption.

\begin{theorem}
\label{thm:rainbow-trees}
Given an edge-coloring of $K_n$ such that each color appears on at most $\frac{1}{32} (\frac78)^7 n$ edges, at least
$\frac{1}{32} (\frac78)^7 n$ edge-disjoint rainbow spanning trees exist and can be found in $O(n^4)$ \ro calls with high probability.
\end{theorem}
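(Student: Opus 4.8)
The plan is to realize the desired collection of rainbow spanning trees as a point in a product of uniform-spanning-tree spaces, define the obvious bad events, obtain resampling oracles for them from \Theorem{thm:product-resampling}, and then verify the cluster-expansion criterion \eqref{eq:CLL} so that \Theorem{cluster-no-slack} applies. Write $c := \tfrac{1}{32}(\tfrac78)^7$ and $k := \lceil c n \rceil$, and let $\Omega = \Omega_1 \times \cdots \times \Omega_k$ where each $\Omega_\ell$ is the set of spanning trees of $K_n$ under the uniform measure; a state is a tuple $(T_1,\ldots,T_k)$, and a state avoiding all bad events (defined next) is exactly $k$ edge-disjoint rainbow spanning trees. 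I introduce two families of bad events: a \emph{collision} event $D^e_{\ell,\ell'}$ (for $\ell<\ell'$ and an edge $e$ of $K_n$) that occurs iff $e\in T_\ell$ and $e\in T_{\ell'}$, and a \emph{rainbow-violating} event $R^{e,f}_\ell$ (for a tree index $\ell$ and two distinct equally-colored edges $e,f$) that occurs iff $\{e,f\}\subseteq T_\ell$. In the language of \Theorem{thm:product-resampling}, these are the product events $E_J$ with $J=\{(\ell,\{e\}),(\ell',\{e\})\}$ and $J=\{(\ell,\{e,f\})\}$, assembled from the canonical spanning-tree events $E_A$ for which \Algorithm{cycle-shuffle} is a \ro (\Lemma{tree-decondition} and the lemma after it), with respect to the graph ``$A\sim B$ iff $A\cup B$ is not vertex-disjoint''. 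Hence \Theorem{thm:product-resampling} immediately supplies \ros for all the bad events, with respect to the graph $G$ in which two bad events are adjacent iff they share a tree index and their edge sets at that index are not vertex-disjoint.

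Next I would compute probabilities: by the counting identity \Equation{countsptree}, a uniform spanning tree of $K_n$ contains a fixed edge with probability $2/n$ and a fixed pair of edges with probability $4/n^2$ (disjoint) or $3/n^2$ (adjacent), so by independence across the product every collision event has probability exactly $4/n^2$ and every rainbow-violating event has probability at most $p:=4/n^2$. To verify \eqref{eq:CLL} I take the uniform weights $y_i := \tfrac{1}{7cn^2}$ for all bad events. The combinatorial core is the claim that, for each bad event, $\Gamma^+$ of that event is covered by at most $8$ cliques of $G$, each of size at most $cn^2$: every neighbor must use one of the $\le 2$ tree indices the event uses and must involve an edge (or pair of edges) touching one of the $\le 4$ vertices spanned by the event's edge set, so grouping neighbors by (which shared index)$\,\times\,$(which spanned vertex is touched)$\,\times\,$(collision vs.\ rainbow) gives a clique cover, and a short count bounds each clique by $(n-1)(k-1)<cn^2$ or $(n-1)(q-1)<cn^2$, where $q\le cn$ bounds every color class and $k-1<cn$; the event itself lies in one of these cliques. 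Consequently
\[
\sum_{J\subseteq\Gamma^+(i),\,J\in\Ind}\ \prod_{j\in J}y_j\ \le\ \Big(1+\tfrac{1}{7cn^2}\cdot cn^2\Big)^{8}\ =\ \Big(\tfrac87\Big)^{8},
\]
so \eqref{eq:CLL} follows from $p\le y_i(\tfrac78)^8$, i.e.\ from $\tfrac{4}{n^2}\le\tfrac{(7/8)^8}{7cn^2}$, i.e.\ from $c\le\tfrac{(7/8)^8}{28}=\tfrac{1}{32}(\tfrac78)^7$ --- which is exactly the hypothesis on the coloring. (The value $m=8$ is the one for which the cluster-expansion threshold $\tfrac{1}{4m}(1-\tfrac1m)^{m-1}$ equals $\tfrac{1}{32}(\tfrac78)^7$, which is why the constant takes this form.)

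With \eqref{eq:CLL} established, \Theorem{cluster-no-slack} shows that \MSR terminates at a state avoiding all bad events --- which in particular re-proves existence of the $k\ge\tfrac{1}{32}(\tfrac78)^7 n$ edge-disjoint rainbow spanning trees --- after an expected $O\!\big(\sum_i y_i\sum_j\ln(1+y_j)\big)$ \ro calls. There are $O(n^4)$ bad events (at most $\binom k2\binom n2=O(n^4)$ collisions, and at most $k\sum_{\text{colors}}\binom{s}{2}\le k\cdot\tfrac12(cn)\binom n2=O(n^4)$ rainbow-violating ones), each with $y_i=\Theta(1/n^2)$, so $\sum_i y_i=O(n^2)$ and $\sum_j\ln(1+y_j)=O(\sum_j y_j)=O(n^2)$, giving $O(n^4)$ \ro calls in expectation and, via the analysis of \Section{analysis}, with high probability; each \ro call runs at most two single-tree resampling oracles, each a single uniform-spanning-tree generation, so the total running time is polynomial.

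The main obstacle is the clique-cover accounting: one must confirm that $8$ (rather than $9$) cliques suffice and that each has size strictly below $cn^2$, carefully tracking the $n-1$ versus $n$ discrepancies and the ceiling in $k=\lceil cn\rceil$, because the stated constant $\tfrac{1}{32}(\tfrac78)^7$ is precisely the threshold at which this argument is tight. Everything else --- the product construction, the probabilities via \Equation{countsptree}, and the final accounting of the number of events --- is routine given \Theorem{thm:product-resampling}, \Algorithm{cycle-shuffle}, and \Theorem{cluster-no-slack}.
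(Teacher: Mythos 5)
Your proposal is correct and follows essentially the same route as the paper: the same product space of $\lceil cn\rceil$ independent uniform spanning trees, the same two families of bad events (pairwise edge collisions and same-colored pairs within one tree), resampling oracles obtained from \Algorithm{cycle-shuffle} via \Theorem{thm:product-resampling}, the same probability bound $p=4/n^2$ from \Equation{countsptree}, the same uniform weight (your $y_i=\frac{1}{7cn^2}$ equals the paper's $y=(\frac87)^8\cdot\frac{4}{n^2}$), the same $8$-clique cover of each neighborhood giving $(1+\frac17)^8=(\frac87)^8$, and the same final count of $O(n^4)$ events with $y=\Theta(1/n^2)$. The only cosmetic difference is that the paper invokes \Theorem{cluster-with-slack} directly for the high-probability tail, whereas you cite \Theorem{cluster-no-slack} and defer to \Section{analysis} for the tail; the needed bound is exactly the first statement of \Theorem{cluster-with-slack}.
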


This result relies on \Theorem{cluster-no-slack}, our algorithmic version of the LLL under
the cluster expansion criterion. To obtain the result with high probability, we appeal to a more refined bound that we state in \Theorem{cluster-with-slack}. We note that if there is constant multiplicative slack in the assumption on color appearances, the number of resamplings improves to $O(n^2)$, using the result in \Theorem{cluster-with-slack} with constant $\epsilon$ slack.

To prove the existential statement, we simply sample $\frac{1}{32} (\frac78)^7 n$ independently
random spanning trees and hope that they will be (a) pairwise edge-disjoint, and (b) rainbow. This
unlikely proposition happens to be true with positive probability, thanks to the LLL and the independence properties of random spanning trees that we mentioned in Section~\ref{sec:resample-trees}. Given this setup, our framework implies that we can also find the rainbow trees efficiently.

\begin{proof}
We apply our algorithm in the setting of $t$ independent and uniformly random spanning trees $T_1,\ldots,T_t \subset K_{n}$, with the following two types of bad events:
\begin{itemize}
\item $E^i_{ef}$: For each $i \in [t]$ and two edges $e \neq f$ in $K_n$ of the same color, $E^i_{ef}$ occurs if $\{e,f\} \subset T_i$;
\item $E^{ij}_e$: For each $i \neq j \in [t]$ and an edge $e$ in $K_n$, $E^{ij}_e$ occurs if $e \in T_i \cap T_j$.
\end{itemize}
Clearly, if no bad event occurs then the $t$ trees are rainbow and pairwise edge-disjoint.

By \eqref{eq:countsptree} the probability of a bad event of the first type is $\Pr[E^i_{ef}] = 3/n^2$ if $|e \cup f| = 3$ and $\Pr[E^i_{ef}] =4/n^2$ if $|e \cup f| = 4$. The probability of a bad event of the second type is $\Pr[E^{ij}_e] = (2/n)^2 = 4/n^2$, since each of the two trees contains $e$ independently with probability $2/n$. Hence, the probability of each bad event is upper-bounded by $p = 4/n^2$.

In Section~\ref{sec:resample-trees} we constructed a \ro $r_A$ for a single spanning tree. By \Theorem{thm:product-resampling}, this \ro extends in a natural way to the setting of $t$ independent random spanning trees.
In particular, for an event $E^{i}_{ef}$, we define $r^i_{ef}$ as an application of the \ro $r_{\{e,f\}}$ to the tree $T_i$. For an event $E^{ij}_e$, we define $r^{ij}_e$ as an application of the \ro $r_{\{e\}}$ independently to the trees $T_i$ and $T_j$. It is easy to check using \Theorem{thm:product-resampling} that for independent uniformly random spanning trees conditioned on either type of event, the respective \ro generates independent uniformly random spanning trees.

Let us define the following dependency graph; we are somewhat conservative for the sake of simplicity. The graph contains the following kinds of edges:
\begin{itemize}
\item $E^{i}_{ef} \sim E^{i}_{e'f'}$ whenever $e \cup f$ intersects $e' \cup f'$;
\item $E^{i}_{ef},E^{j}_{ef} \sim E^{ij}_{e'}$ whenever $e'$ intersects $e \cup f$;
\item $E^{ij}_{e} \sim E^{ij'}_{e'}, E^{i'j}_{e'}$ whenever $e'$ intersects $e$.
\end{itemize}

We claim that the \ro for any bad event can cause new bad events only in its neighborhood. This
follows from the fact that the \ro affect only the trees relevant to the event (in the superscript), and the only edges modified are those incident to those relevant to the event (in the subscript).

Let us now verify the cluster expansion criterion, introduced as \eqref{eq:CLL}
in \Section{generalizingLLL}, so that we may apply \Theorem{cluster-with-slack}.
Let us assume that each color appears on at most $q$ edges, and we generate $t$ random spanning trees. We claim that the neighborhood of each bad event can be partitioned into $4$ cliques of size $(n-1)(t-1)$ and $4$ cliques of size $(n-1)(q-1)$.

First, let us consider an event of type $E^{i}_{ef}$. The neighborhood of $E^{i}_{ef}$ consists of: (1) events $E^{i}_{e'f'}$ where $e'$ or $f'$ shares a vertex with $e \cup f$; these events form $4$ cliques, one for each vertex of $e \cup f$, and the size of each clique is at most $(n-1)(q-1)$, since the number of incident edges to a vertex is $n-1$, and the number of other edges of the same color is at most $q-1$. (2) events $E^{ij}_{e'}$ where $e'$ intersects $e \cup f$; these events form $4$ cliques, one for each vertex of $e \cup f$, and each clique has size at most $(n-1)(t-1)$, since its events can be identified with the $(n-1)$ edges incident to a fixed vertex and the remaining $t-1$ trees.

Second, let us consider an event of type $E^{ij}_{e}$. The neighborhood of $E^{ij}_{e}$ consists of: (1) events $E^{i}_{e'f'}$ and $E^{j}_{e'f'}$ where $e$ intersects $e' \cup f'$; these events form $4$ cliques, one for each vertex of $e$ and either $i$ or $j$ in the superscript, and the size of each clique is at most $(n-1)(q-1)$ by an argument as above. (2) events $E^{i'j}_{e'}, E^{ij'}_{e'}$ where $e'$ intersects $e$; these events form $4$ cliques, one for each vertex of $e$ and either $i'j$ or $ij'$ in the superscript. The size of each clique is at most $(n-1)(t-1)$, since the events can be identified with the $(n-1)$ edges incident to a vertex and the remaining $t-1$ trees.

Considering the symmetry of the dependency graph, we set the variables for all events equal to
$y^{i}_{ef} = y^{ij}_e = y$. The cluster expansion criteria will be satisfied if we set the parameters so that
$$ p \leq  \frac{y}{(1+(n-1)(t-1)y)^4 (1+(n-1)(q-1)y)^4} \leq \frac{y}{\sum_{I \subseteq \Gamma^+(E), I \in \Ind} y^I},$$
where $E$ denotes either $E^{i}_{ef}$ or $E^{ij}_e$.
The second inequality holds due to the structure of the neighborhood of each event that we described above.
We set $y = \beta p = 4\beta / n^2$ and assume $t \leq \gamma n, q \leq \gamma n$. The reader can verify that with the settings $\beta = (\frac87)^8$ and $\gamma = \frac{1}{32} (\frac78)^7$, we get $\frac{\beta}{(1+4\gamma \beta)^8} = 1$. Therefore,
$$ p \leq \frac{\beta p}{(1+4\gamma \beta)^8} \leq \frac{y}{(1+(n-1)(t-1)y)^4 (1+(n-1)(q-1)y)^4} $$
which verifies the assumption of \Theorem{cluster-with-slack}.
\Theorem{cluster-with-slack} implies that MaximalSetResample terminates after $O((\sum y^{i}_{ef} + \sum y^{ij}_e)^2)$ \ro calls
with high probability. The total number of events here is $O(t q n^2) = O(n^4)$ and for each event the respective variable is $y = O(1/n^2)$. Therefore, the expected number of \ro calls is $O(n^4)$. 
\end{proof}

\subsection{Rainbow matchings}

Given an edge-coloring of $K_{2n}$, a perfect matching is called rainbow if each of its edges has a distinct color. This can be viewed as a non-bipartite version of the problem of Latin transversals. It is known that given any {\em proper} $(2n-1)$-edge-coloring of $K_{2n}$ (where each color forms a perfect matching), there exists a rainbow perfect matching \cite{Woolbright}. However, finding rainbow matchings algorithmically is more difficult. Achlioptas and Iliopoulos \cite{Achlioptas} showed how to find a rainbow matching in $K_{2n}$ efficiently when each color appears on at most $\gamma n$ edges, $\gamma < \frac{1}{2e} \simeq 0.184$. Our result is that we can do this for $\gamma = \frac{27}{128} \simeq 0.211$. The improvement comes from the application of the ``cluster expansion" form of the local lemma, which is still efficient in our framework. (We note that an updated version of the Achlioptas-Iliopoulos framework \cite{AchlioptasI15} also contains this result.)

\begin{theorem}
\label{thm:rainbow-matching}
Given an edge-coloring of $K_{2n}$ where each color appears on at most $\frac{27}{128} n$ edges, a rainbow
perfect matching exists and can be found in $O(n^2)$ \ro calls with high probability.
\end{theorem}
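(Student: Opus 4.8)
The plan is to run \MSR on a \emph{single} uniformly random perfect matching $M$ of $K_{2n}$; unlike the rainbow spanning trees application, no product of probability spaces is needed here. For each unordered pair of distinct vertex-disjoint edges $e\ne f$ of $K_{2n}$ receiving the same color, introduce the bad event $E_{ef}$: that $\{e,f\}\subseteq M$. A state avoiding every $E_{ef}$ is precisely a rainbow perfect matching. Since $M$ is uniform, $\Pr_\mu[e\in M]=\frac{1}{2n-1}$, and conditioning on $e\in M$ leaves a uniform perfect matching on the remaining $2n-2$ vertices, so $\Pr_\mu[E_{ef}]=\frac{1}{(2n-1)(2n-3)}=:p$. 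For $E_{ef}=E_{\{e,f\}}$ we use the resampling oracle $r_{\{e,f\}}$ of \Algorithm{matching-shuffle}; \RONE\ for it is proved in \Section{resample-matchings}. We take as the graph $G$ the relation $A\sim B$ iff $A$ and $B$ are not vertex-disjoint (so $E_{ef}\sim E_{e'f'}$ iff some vertex of $e\cup f$ also lies on $e'$ or $f'$); this graph contains the lopsidependency graph of \Section{resample-matchings}, hence \RTWO\ continues to hold --- the argument there shows that every edge $r_{\{e,f\}}$ introduces is incident to a vertex of $e\cup f$, which already forces non-vertex-disjointness.

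Next I would describe $\Gamma^+(E_{ef})$ as a union of four cliques so that \eqref{eq:CLL} becomes transparent. Since $e,f$ are disjoint, $e\cup f$ has exactly four vertices; for each such vertex $v$ let $C_v$ be the set of bad events $E_{e'f'}$ whose edge set covers $v$ (that is, $v\in e'$ or $v\in f'$). Any two events in $C_v$ have $v$ in both their vertex sets, hence are adjacent, so $C_v$ is a clique; moreover $E_{ef}$ itself lies in each $C_v$, and every neighbor of $E_{ef}$ lies in at least one $C_v$, so $\Gamma^+(E_{ef})\subseteq C_{v_1}\cup C_{v_2}\cup C_{v_3}\cup C_{v_4}$. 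Writing $q$ for the maximum number of edges of a single color, $|C_v|\le(2n-1)(q-1)$, since the edge covering $v$ has $2n-1$ choices and the second, same-colored edge has at most $q-1$. As an independent set meets each clique in at most one vertex, putting $y_{ef}=y$ for all events gives
\[
\sum_{I\subseteq\Gamma^+(E_{ef}),\ I\in\Ind}\ \prod_{j\in I}y_j\ \le\ \bigl(1+(2n-1)(q-1)y\bigr)^4 .
\]

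The third step is the elementary optimization that fixes the constant $\frac{27}{128}$. Set $y=\beta p$ with $\beta=(4/3)^4=\tfrac{256}{81}$. Then $(2n-1)(q-1)\beta p=\dfrac{(q-1)\beta}{2n-3}\le\dfrac{54n-256}{162n-243}<\dfrac13$ for all $n$, using $q\le\frac{27}{128}n$; hence $\bigl(1+(2n-1)(q-1)y\bigr)^4<(4/3)^4=\beta$, and so
\[
\frac{y}{\sum_{I\subseteq\Gamma^+(E_{ef}),\ I\in\Ind}\prod_{j\in I}y_j}\ \ge\ \frac{\beta p}{\bigl(1+(2n-1)(q-1)y\bigr)^4}\ >\ p\ =\ \Pr_\mu[E_{ef}],
\]
i.e.\ \eqref{eq:CLL} holds --- indeed with a multiplicative slack of order $1/n$, coming from the gap $\tfrac13-\frac{54n-256}{162n-243}=\frac{175}{162n-243}$. (Forcing the clique term $(2n-1)(q-1)\beta p$ to the value $1/3$, which is optimal, is exactly what maximizes the admissible per-color multiplicity and yields $\frac{27}{128}n$.) Now \Theorem{cluster-with-slack} applies: \MSR terminates --- so a rainbow perfect matching exists --- and with high probability it does so within $O\bigl((\sum_{ef}y_{ef})^2\bigr)$ resampling-oracle calls. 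Counting events, $\sum_c\binom{q_c}{2}\le\frac{q-1}{2}\binom{2n}{2}=O(qn^2)=O(n^3)$, and each carries $y_{ef}=\beta p=O(1/n^2)$, so $\sum_{ef}y_{ef}=O(n)$ and the bound is $O(n^2)$ calls, as claimed. (If the color bound is strengthened to have constant multiplicative slack, the refined part of \Theorem{cluster-with-slack} gives $O(\sum_{ef}y_{ef})=O(n)$ calls instead.)

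I expect the only genuinely delicate point to be the second step together with the arithmetic of the third: the whole result rests on covering the neighborhood of each bad event by a bounded number (four) of cliques of size $O(nq)$ and then choosing $\beta$ optimally. Both are essentially bookkeeping --- a coarser clique cover would merely degrade the constant $\frac{27}{128}$, not invalidate the argument --- and everything else (the resampling oracle and its two defining properties, and the passage from ``the algorithm halts'' to ``a rainbow matching is found efficiently'') is inherited verbatim from \Section{resample-matchings} and \Theorem{cluster-with-slack}.
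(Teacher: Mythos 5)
Your proposal is correct and follows essentially the same route as the paper's proof: the same single-matching setup with events $E_{ef}$, the same conservative (non-vertex-disjoint) dependency graph with the resampling oracle of \Section{resample-matchings}, the same cover of each neighborhood by four cliques of size at most $(2n-1)(q-1)$, the same choice $y=(4/3)^4 p$ verifying \eqref{eq:CLL}, and the same application of \Theorem{cluster-with-slack} with the $O(n^3)$ event count giving $O(n^2)$ resampling-oracle calls. Your arithmetic is just a slightly more explicit version of the paper's verification, so there is nothing to correct.
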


In fact, we can find many disjoint rainbow matchings --- up to a linear number, if we replace $\frac{27}{128}$ above by a smaller constant. 

\begin{theorem}
\label{thm:rainbow-matchings}
Given an edge-coloring of $K_{2n}$ where each color appears on at most $\frac{7^7}{8^8} n$ edges, at least $\frac{7^7}{8^8} n$ edge-disjoint rainbow perfect matchings exist and can be found in $O(n^4)$ \ro calls whp.
\end{theorem}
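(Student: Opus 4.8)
The plan is to combine the rainbow-matching construction (the analogue of Theorem~\ref{thm:rainbow-matching} for a single matching) with the product-space machinery of \Theorem{thm:product-resampling}, exactly mirroring the proof of Theorem~\ref{thm:rainbow-trees}. First I would set up the probability space as $t$ independent uniformly random perfect matchings $M_1,\ldots,M_t$ in $K_{2n}$, where $t = \frac{7^7}{8^8} n$, and introduce two families of bad events: for each $i \in [t]$ and each pair of distinct edges $e,f$ of the same color with $e \cup f$ a matching, the event $E^i_{ef}$ that $\{e,f\} \subseteq M_i$; and for each $i \neq j$ and each edge $e$, the event $E^{ij}_e$ that $e \in M_i \cap M_j$. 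If no bad event occurs then each $M_i$ is rainbow and the $M_i$ are pairwise edge-disjoint, which is the desired conclusion.

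Next I would bound the probabilities. For a single uniformly random perfect matching in $K_{2n}$, a fixed edge $e$ lies in $M_i$ with probability $\frac{1}{2n-1}$, and a fixed pair of disjoint edges $e,f$ both lie in $M_i$ with probability $\frac{1}{(2n-1)(2n-3)}$; hence both $\Pr[E^i_{ef}]$ and $\Pr[E^{ij}_e]$ are at most $p := \frac{1}{(2n-1)(2n-3)} \le \frac{1}{(2n-3)^2}$ (one should double-check the exact constant; it is $\Theta(1/n^2)$, which is all that matters). Then I would define the resampling oracles via \Theorem{thm:product-resampling}: for $E^i_{ef}$ apply the single-matching oracle $r_{\{e,f\}}$ of \Section{resample-matchings} to $M_i$, and for $E^{ij}_e$ apply $r_{\{e\}}$ independently to $M_i$ and $M_j$. \Theorem{thm:product-resampling} guarantees these satisfy \RONE\ and \RTWO\ with respect to the product dependency graph, where the single-matching relation is $A \sim B$ iff $A \cup B$ is not a matching.

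I would then write down a (deliberately conservative) dependency graph: $E^i_{ef} \sim E^i_{e'f'}$ whenever $e \cup f$ meets $e' \cup f'$; $E^i_{ef}, E^j_{ef} \sim E^{ij}_{e'}$ whenever $e'$ meets $e \cup f$; and $E^{ij}_e \sim E^{ij'}_{e'}, E^{i'j}_{e'}$ whenever $e'$ meets $e$. Since every edge modified by a resampling oracle is incident to a vertex of the event's subscript, the oracle for any bad event can create new bad events only inside this neighborhood. The key step — and the main obstacle — is verifying the cluster expansion criterion \eqref{eq:CLL} with enough room that \Theorem{cluster-with-slack} applies and yields a high-probability bound. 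As in the spanning-tree proof, I would argue that the neighborhood of every bad event partitions into a constant number (at most $8$) of cliques, each of size $O(n)$: roughly $4$ cliques of size at most $(2n-1)(q-1)$ coming from same-color pairs (here $q$ is the max number of edges of a color, assumed $\le \frac{7^7}{8^8} n$) and $4$ cliques of size at most $(2n-1)(t-1)$ coming from the cross-tree events. By symmetry I set all cluster-expansion weights equal to $y = \beta p$ and need
$$ p \le \frac{y}{(1 + (2n-1)(t-1) y)^4 (1 + (2n-1)(q-1) y)^4}, $$
which reduces, after substituting $t, q \le \gamma \cdot 2n$ and $y = \beta p$ with $p \approx \frac{1}{(2n)^2}$, to an inequality of the form $\beta \le (1 + c\gamma\beta)^{-8}$ for an absolute constant $c$; the choice $\gamma = \frac{7^7}{8^8}$, $\beta = (\tfrac87)^8$ makes this tight (this is exactly the optimization already carried out in the rainbow-trees proof, and the main calculation to get right is the precise value of $c$ arising from the $(2n-1)$ versus $n$ discrepancy and the exact probability $\frac{1}{(2n-1)(2n-3)}$). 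Finally, the total number of events is $O(t q n^2) = O(n^4)$, each with weight $y = O(1/n^2)$, so $\sum y = O(n^2)$ and \Theorem{cluster-with-slack} gives termination in $O((\sum y)^2) = O(n^4)$ resampling-oracle calls with high probability, completing the proof.
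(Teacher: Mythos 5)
Your proof is essentially correct, but it takes a genuinely different route from the paper. The paper does not analyze random perfect matchings of $K_{2n}$ at all for this theorem: it deduces Theorem~\ref{thm:rainbow-matchings} as a corollary of Theorem~\ref{thm:Latin-transversals}, by fixing a bipartition $(u_1,\ldots,u_n,v_1,\ldots,v_n)$ of the vertices, forming the $n\times n$ matrix $A_{ij}$ equal to the color of $(u_i,v_j)$, and observing that disjoint Latin transversals of $A$ (obtained there from $t$ independent random permutations, the permutation \ro of \Section{resample-permutations}, and the same two families of bad events) are edge-disjoint rainbow perfect matchings of $K_{2n}$. Your approach instead works directly in the product space of $t$ independent uniform perfect matchings of $K_{2n}$ with the matching \ro of \Section{resample-matchings} composed via \Theorem{thm:product-resampling}, mirroring the rainbow-spanning-tree proof; the counting you sketch does go through, since $p=\frac{1}{(2n-1)(2n-3)}\approx\frac{1}{4n^2}$ while the eight cliques have size about $2n\cdot\gamma n$, so with $y=\beta p$, $\beta=(\frac87)^8$, $\gamma=\frac{7^7}{8^8}$ the relevant product is about $\beta\gamma/2=\frac1{14}$ and $(1+\frac1{14})^8<\beta$. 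Two small corrections: the reduced inequality should read $\beta\geq(1+c\gamma\beta)^8$ (you wrote $\beta\leq(1+c\gamma\beta)^{-8}$), and with the stated constant it is satisfied with room to spare rather than tightly, precisely because of the factor-of-two gain from working in $K_{2n}$. That is the trade-off between the two arguments: the paper's reduction is shorter and reuses the Latin-transversal computation verbatim, but only produces matchings lying in the chosen bipartition; your direct argument requires redoing the cluster-expansion count, but uses genuinely non-bipartite matchings and would in fact tolerate a somewhat larger bound on the color multiplicity (up to roughly twice $\frac{7^7}{8^8}n$) before \eqref{eq:CLL} becomes tight. The event count, $\sum y=O(n^2)$, and the $O(n^4)$ whp bound via \Theorem{cluster-with-slack} are handled exactly as in the paper.
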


We postpone the proof to Section~\ref{sec:Latin}, since it follows from our result for Latin transversals.

\begin{proof}[Proof of Theorem~\ref{thm:rainbow-matching}]
We apply our algorithm in the setting of uniformly random perfect matchings $M \subset K_{2n}$, with
the following bad events (identical to the setup in \cite{Achlioptas}): For every pair of edges
$e,f$ of the same color, $E_{ef}$ occurs if $\{e,f\} \subset M$. If no bad event $E_{ef}$ occurs
then $M$ is a rainbow matching. We also define the following dependency graph: $E_{ef} \sim
E_{e'f'}$ unless $e,f,e',f'$ are four disjoint edges. Note that this is more conservative than the
dependency graph we considered in Section~\ref{sec:resample-matchings}, where two events are only
connected if they do not form a matching together. The more conservative definition will simplify
our analysis. In any case, our \ro is consistent with this \NDG in the sense
that resampling $E_{ef}$ can only cause new events $E_{e'f'}$ such that $E_{ef} \sim E_{e'f'}$. We
show that this setup satisfies the criteria of the cluster expansion lemma.

Let $q = \frac{27}{128} n$,  $p = \frac{1}{(2n-1)(2n-3)}$ and $y = (\frac43)^4 p$.
Consider the neighborhood of a bad event $\Gamma(E_{ef})$. It contains all events $E_{e'f'}$ such that there is some intersection among the edges $e,f,e',f'$. Such events can be partitioned into $4$ cliques: for each vertex $v \in e \cup f$, let ${\cal Q}_v$ denote all the events $E_{e'f'}$ such that $v \in e'$ and $f'$ has the same color as $e'$. The number of edges $e'$ incident to $v$ is $2n-1$, and for each of them, the number of other edges of the same color is by assumption at most $q-1$. Therefore, the size of ${\cal Q}_v$ is at most $(q-1)(2n-1)$.

In the following, we use the short-hand notation $y^I = \prod_{i \in I} y_i$. 
Consider the assumptions of the cluster expansion lemma: for each event $E_{ef}$, we should have
 $$ \Pr[E_{ef}] \leq \frac{y_{ef}}{\sum_{I \subseteq \Gamma^+(E_{ef}), I \in \Ind} y^I}.$$
We have $\Pr[E_{ef}] = p = \frac{1}{(2n-1)(2n-3)}$. By symmetry, we set all the variables $y_{ef}$ to the same value, $y_{ef} = y = (\frac43)^4 p$.
Note that an independent subset of $\Gamma^+(E_{ef})$ can contain at most 1 event from each clique ${\cal Q}_v$. (The event $E_{ef}$ itself is also contained in these cliques.) Therefore,
$$ \sum_{I \subseteq \Gamma^+(E_{ef}), I \in \Ind} y^I \leq \prod_{v \in e \cup f} (1 + \sum_{E_{e'f'} \in {\cal Q}_v} y_{e'f'}) \leq \left(1 + (q-1)(2n-1) y \right)^4.$$
The reader can verify that
$\sum_{I \subseteq \Gamma^+(E_{ef}), I \in \Ind} y^I 
\leq (1 + (q-1)(2n-1) y)^4 \leq (1 + \frac{27}{64} n^2 (\frac43)^4 / (2n)^2)^4 = (\frac43)^4$.
Therefore,
$$ \frac{y}{\sum_{I \subseteq \Gamma^+(E_{ef}), I \in \Ind} y^I} \geq p$$
which is the assumption of \Theorem{cluster-with-slack}.
By \Theorem{cluster-with-slack}, MaximalSetResample with the \ro for matchings and the dependency graph defined above will find a rainbow perfect matching in time $O(\sum_{E_{ef}} y_{ef} \sum_{E_{ef}} \log(1 + y_{ef})) = O((\sum_{E_{ef}} y_{ef})^2) $ with high probability. The number of bad events $E_{ef}$ is $O(n^3)$, because each color class has $O(n)$ edges so the number of edge pairs of equal color is $O(n^3)$. We have $y_{ef} = O(1/n^2)$, and hence the total number of resamplings is $O(n^2)$ with high probability.
\end{proof}

\subsection{Latin transversals}
\label{sec:Latin}

A Latin transversal in an $n \times n$ matrix $A$ is a permutation $\pi \in S_n$ such that the
entries $A_{i,\pi(i)}$ (``colors") are distinct for $i = 1,2,\ldots,n$. In other words, it is a set
of distinct entries, exactly one in each row and one in each column. It is easy to see that this is
equivalent to a bipartite version of the rainbow matching problem: $A_{ij}$ is the color of the edge
$(i,j)$ and we are looking for a perfect bipartite matching where no color appears twice. It is a
classical application of the Lov\'asz Local Lemma that if no color appears more than $\frac{1}{4e}
n$ times in $A$ then there exists a Latin transversal \cite{ErdosSpencer}. An improvement of this
result is that if no color appears more than $\frac{27}{256} n$ times in $A$ then a Latin
transversal exists \cite{Bissacot}; this paper introduced the ``cluster expansion" strengthening of
the local lemma. (Note that $\frac{27}{256} = \frac{3^3}{4^4}$.) These results were made algorithmically efficient by the work of Harris and Srinivasan \cite{HarrisS14}.

Beyond finding one Latin transversal, one can ask whether there exist multiple disjoint Latin transversals. A remarkable existential result was proved by Alon, Spencer and Tetali \cite{AlonSpencerTetali}: If $n = 2^k$ and each color appears in $A$ at most $\epsilon n$ times ($\epsilon = 10^{-10^{10}}$ in their proof), then $A$ can be partitioned into $n$ disjoint Latin transversals. Here, we show how to find a linear number of Latin transversals algorithmically.

\begin{theorem}
\label{thm:Latin-transversals}
For any $n \times n$ matrix $A$ where each color appears at most $\frac{7^7}{8^8} n$ times, there exist at least $\frac{7^7}{8^8} n$ disjoint Latin transversals, and they can be found in $O(n^4)$ \ro calls w.h.p.
\end{theorem}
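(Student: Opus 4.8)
The plan is to instantiate \Theorem{cluster-with-slack} on a collection of $t=\tfrac{7^7}{8^8}n$ independent uniformly random permutations $\pi_1,\dots,\pi_t\in S_n$, viewing each $\pi_i$ as a candidate Latin transversal $\{(a,\pi_i(a))\}_{a\in[n]}$ of $A$. Two families of bad events are used. For each $i\in[t]$ and each unordered pair of cells $(a,b),(a',b')$ with $a\neq a'$, $b\neq b'$ and $A_{ab}=A_{a'b'}$, let $E^i_{(a,b),(a',b')}$ be the ``simple'' permutation event $\pi_i(a)=b \wedge \pi_i(a')=b'$ (a repeated color inside transversal $i$); for each $i\neq j$ and each cell $(a,b)$, let $E^{ij}_{(a,b)}$ be the event $\pi_i(a)=b \wedge \pi_j(a)=b$ (transversals $i$ and $j$ meeting in cell $(a,b)$). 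If no bad event occurs, every $\pi_i$ is a Latin transversal and the $\pi_i$ are pairwise disjoint, which is the desired outcome. All of these are patterns in the sense of \Section{resample-permutations}, so the shuffle-based resampling oracle of \Algorithm{permutation-shuffle}, its correctness \Lemma{shuffle}, and the product-composition theorem \Theorem{thm:product-resampling} together provide resampling oracles on $(S_n)^t$ for each bad event. I take the dependency graph in which two events are adjacent iff they involve a common permutation index and their patterns in that permutation share a row or a column; by \Lemma{dependency} (applied coordinate-wise through \Theorem{thm:product-resampling}) this graph satisfies \RTWO.

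Next I verify the cluster-expansion criterion \eqref{eq:CLL}. The probabilities are $\Pr[E^i_{(a,b),(a',b')}]=\tfrac{1}{n(n-1)}$ and $\Pr[E^{ij}_{(a,b)}]=\tfrac1{n^2}$, so every bad event has probability at most $p=\tfrac1{n(n-1)}$. As in the proof of \Theorem{rainbow-trees}, I partition the neighborhood of each bad event into eight cliques: for each of the (at most four) row/column coordinates $c$ touched by the event, the Type-1 events whose pattern in the relevant permutation meets coordinate $c$ form a clique of size at most $n(q-1)$ (pick the cell on the line $c$, then a same-color partner among at most $q-1$ others), and the Type-2 events meeting coordinate $c$ form a clique of size at most $n(t-1)$ (pick the cell on line $c$, then one of the other $t-1$ permutations), where $q\le\tfrac{7^7}{8^8}n$ is the largest color class. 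Putting all variables equal to $y=\beta p$ and writing $\gamma=\tfrac{7^7}{8^8}$ (so $q,t\le\gamma n$),
\[
\sum_{I\subseteq \Gamma^+(E),\,I\in\Ind}\ \prod_{j\in I} y_j \;\le\; \bigl(1+n(q-1)y\bigr)^4\bigl(1+n(t-1)y\bigr)^4 \;\le\; \bigl(1+\gamma\beta\bigr)^8 ,
\]
so \eqref{eq:CLL} holds provided $p\le \beta p/(1+\gamma\beta)^8$, i.e.\ $\beta\ge(1+\gamma\beta)^8$. With $\beta=(8/7)^8$ one gets $\gamma\beta=\tfrac17$ and $(1+\gamma\beta)^8=(8/7)^8=\beta$, so the inequality holds with equality; this is exactly where the threshold $\tfrac{7^7}{8^8}$ comes from, and \Theorem{cluster-with-slack} still applies at equality because it is proved via Shearer's criterion, which supplies the missing slack automatically.

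Finally I read off both conclusions. Since \eqref{eq:CLL} holds, the (existential) local lemma gives $\Pr_\mu[\bigcap \overline{E}\,]>0$, so a uniformly random tuple $(\pi_1,\dots,\pi_t)$ is a packing of $t=\tfrac{7^7}{8^8}n$ disjoint Latin transversals with positive probability. For the algorithmic bound, there are $O(tqn^2)=O(n^4)$ Type-1 events and $O(t^2n^2)=O(n^4)$ Type-2 events, each with $y=O(1/n^2)$, hence $\sum_i y_i=O(n^2)$ and $\sum_j\ln(1+y_j)=O(n^2)$; \Theorem{cluster-with-slack} then guarantees that \MSR halts after $O\bigl((\sum_i y_i)^2\bigr)=O(n^4)$ resampling-oracle calls with high probability, and on termination the current state is the required packing. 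The companion non-bipartite statement \Theorem{rainbow-matchings} follows by the identical argument with the perfect-matching oracle of \Section{resample-matchings} replacing the permutation oracle.

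The step I expect to be the main obstacle is the clique decomposition together with the exact constant-chasing: one must check that \emph{every} neighbor of a bad event falls into one of the eight cliques, bound each clique size by the precise multiple of $nq$ (resp.\ $nt$) rather than just $O(\cdot)$, and confirm that the resulting inequality $\beta\ge(1+\gamma\beta)^8$ is tight exactly at $\gamma=\tfrac{7^7}{8^8}$. Everything else is a routine instantiation of \Theorem{cluster-with-slack} and of the permutation resampling oracle already validated in \Section{resample-permutations}.
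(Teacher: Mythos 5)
Your proposal is correct and follows essentially the same route as the paper's own proof: the same two families of bad events on a product of $t$ independent random permutations, the same lopsidependency graph and eight-clique neighborhood decomposition with sizes $n(q-1)$ and $n(t-1)$, the same choice $y=\beta p$ with $\beta=(8/7)^8$, $\gamma=7^7/8^8$, and the same appeal to \Theorem{cluster-with-slack} to get $O(n^4)$ resampling-oracle calls w.h.p. The only (immaterial) deviation is at the end: the paper deduces \Theorem{rainbow-matchings} as a corollary of the Latin-transversal result via a bipartite encoding, rather than rerunning the argument with the matching oracle.
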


We note that again, if there is constant multiplicative slack in the assumption on color appearances, the number of resamplings improves to $O(n^2)$. 
This also implies Theorem~\ref{thm:rainbow-matchings} as a special case: For an edge-coloring of $K_{2n}$ where no color appears more than $\frac{7^7}{8^8} n$ times, let us label the vertices arbitrarily $(u_1,\ldots,u_n,v_1,\ldots,v_n)$ construct a matrix $A$ where $A_{ij}$ is the color of the edge $(u_i,v_j)$. If no color appears more than $\frac{7^7}{8^8} n$ times, by Theorem~\ref{thm:Latin-transversals} we can find $\frac{7^7}{8^8} n$ Latin transversals; these correspond to rainbow matchings in $K_{2n}$.

Our approach to proving Theorem~\ref{thm:Latin-transversals} is similar to the proof of Theorem~\ref{thm:rainbow-trees}: sample $\frac{7^7}{8^8} n$ independently random permutations and hope that they will be (a) disjoint, and (b) Latin. For reasons similar to Theorem~\ref{thm:rainbow-trees}, the local lemma works out and our framework makes this algorithmic.

\begin{proof}
Let $t = \frac{7^7}{8^8} n$ and let $\pi_1,\ldots,\pi_t$ be independently random permutations on $[n]$. 
We consider the following two types of bad events:
\begin{itemize}
\item $E^{i}_{ef}$: For each $i \in [t]$ and $e=(u,v), f = (x,y) \in [n] \times [n]$ such that $u \neq v, x \neq y, A_{uv} = A_{xy}$, the event $E^{i}_{ef}$ occurs if $\pi_i(u) = v$ and $\pi_i(x) = y$;
\item $E^{ij}_e$: For each $i \neq j \in [t]$ and $e=(u,v) \in [n] \times [n]$, the event $E^{ij}_e$ occurs if $\pi_i(u) = \pi_j(u) = v$.
\end{itemize}
Clearly, if none of these events occurs then the permutations $\pi_1,\ldots,\pi_t$ correspond to pairwise disjoint Latin transversals. 
The probability of a bad event of the first type is $\Pr[E^{i}_{ef}] = \frac{1}{n(n-1)}$ and the probability for the second type is $\Pr[E^{ij}_e] = \frac{1}{n^2}$. Thus the probability of each bad event is at most $p = \frac{1}{n(n-1)}$.

It will be convenient to think of the pairs $e = (x,y) \in [n] \times [n]$ as edges in a bipartite
complete graph.  As we proved in Section~\ref{sec:resample-permutations}, the \ro for permutations
is consistent with the following \NDG graph. 
\begin{itemize}
\item $E^{i}_{ef} \sim E^{i}_{e'f'}$ whenever there is some intersection between the edges $e,f$ and $e',f'$;
\item $E^{i}_{ef},E^{j}_{ef} \sim E^{ij}_{e'}$ whenever there is some intersection between  $e'$ and $e,f$;
\item $E^{ij}_{e} \sim E^{ij'}_{e'}, E^{i'j}_{e'}$ whenever $e'$ intersects $e$.
\end{itemize}
By Lemma~\ref{lem:dependency}, the \ro for a given event never causes a new event except in its neighborhood.

Let us now verify the cluster expansion criteria. The counting here is quite similar to the proof of Theorem~\ref{thm:rainbow-trees}, so we skim over some details. The neighborhood of each event $E^{i}_{ef}$ consist of $8$ cliques: $4$ cliques of events of type $E^{i}_{e'f'}$ and $4$ cliques of events of type $E^{ij}_e$, corresponding in each case to the 4 vertices of $e \cup f$. In the first case, each clique has at most $n(q-1)$ events, determined by selecting an incident edge and another edge of the same color. In the second case, each clique has at most $n(t-1)$ events, determined by selecting an incident edge and another permutation.

The neighborhood of each event $E^{ij}_{e}$ also consists of $8$ cliques: $4$ cliques of events $E^{i}_{e'f'}$ or $E^{j}_{e'f'}$, corresponding to the choice of either $i$ or $j$ in the superscript, and one of the two vertices of $e$. The size of each clique is at most $n(q-1)$, determined by choosing an incident edge and another edge of the same color. Then, we have $4$ cliques of events $E^{ij'}_{e'}$ or $E^{i'j}_{e'}$, determined by switching either $i'$ or $j'$ in the superscript, and choosing one of the vertices of $e$. The size of each clique is at most $n(t-1)$, determined by choosing an incident edge and a new permutation in the superscript.

As a consequence, the cluster expansion criterion here is almost exactly the same as in the case of Theorem~\ref{thm:rainbow-trees}:
$$ p \leq \frac{y}{(1 + n(t-1)y)^4 (1 + n(q-1)y)^4}.$$
We have $p = \frac{1}{n(n-1)}$ here and we set $y = \beta p$. For $t,q \leq \gamma n$, it's enough
to satisfy $\frac{\beta}{(1+\beta \gamma)^8} \geq 1$, which is achieved by $\beta = (\frac87)^8$ and
$\gamma = \frac{7^7}{8^8}$. Therefore, \Theorem{cluster-with-slack} implies that
MaximalSetResample will terminate within $O((\sum y^i_{ef} + \sum y^{ij}_e)^2) = O(n^4)$ \ro calls with high probability.
\end{proof}

\section{Analysis of the algorithm}
\SectionName{analysis}

Here we provide the analysis of our algorithm and the proofs of our main theorems.
In \Section{stable-set-sequences}, we begin with the basic notions necessary for our analysis and a coupling argument which forms the basis of all our algorithmic results.
In \Section{LLLslack}, we prove a weaker form of \Theorem{LLL-tight-result}
under the assumption that the \eqref{eq:GLL} criterion holds with some slack.
In \Section{shearer}, we introduce the independence polynomial of a graph and summarize its fundamental properties that are important for our analysis. 
In \Section{shearer-slack}, we prove that our algorithm is efficient if Shearer's criterion is
satisfied with an $\epsilon$ slack. In \Section{shearer-automatic-slack}, we show that in some sense
this assumption is not necessary, because every point satisfying Shearer's criterion has some slack
available, and we quantify how large this slack is. Finally, we return to the weaker (but more
practical) variants of the local lemma: the \eqref{eq:GLL} and \eqref{eq:CLL} criteria. We present
new combinatorial connections between these criteria and Shearer's criterion, which in turn imply
our main results on the efficiency of our algorithm under the \eqref{eq:GLL}  and \eqref{eq:CLL}
criteria (in Sections \ref{sec:LLLimpliesShearer} and \ref{sec:cluster}, respectively).

\subsection{Stable set sequences and the coupling argument}
\SectionName{stable-set-sequences}

An important notion in our analysis is that of {\em stable set sequences}.
We note that this concept originated in the work of Kolipaka and Szegedy \cite{Kolipaka}
which builds on Shearer's work \cite{Shearer}.
There are some similarities but also differences in how this concept is applied here:
most notably, our stable set sequences grow forward in time, while the stable set sequences in \cite{Kolipaka} grow backward in time (which is similar to the Moser-Tardos analysis \cite{MoserTardos}).

\begin{definition}
One execution of the outer repeat loop in \MSR is called an \emph{iteration}.
For a sequence of non-empty sets $\cI = (I_1,\ldots,I_t)$, we say that the algorithm \emph{follows}
$\cI$ if $I_s$ is the set resampled in iteration $s$ for $1 \leq s < t$, and $I_t$ is a set of the
first $m$ events resampled in iteration $t$ for some $m \geq 1$ (a prefix of the maximal independent set constructed in iteration $t$).
\end{definition}

Recall that $\Ind=\Ind(G)$ denotes the independent sets (including the empty set)
in the graph under consideration.

\begin{definition}
$\cI = (I_1,I_2,\ldots,I_t)$ is called a \emph{stable set sequence} if $I_1,\ldots,I_t \in \Ind(G)$ and
$I_{s+1} \subseteq \Gamma^+(I_s)$ for each $1 \leq s < t$. We call the sequence $\cI$ \emph{proper} if each independent set $I_s$ is nonempty.
\end{definition}

Note that if $I_s = \emptyset$ for some $s$, then $I_t = \emptyset$ for all $t > s$. Therefore, the
nonempty sets always form a prefix of the stable set sequence. Formally, we consider an empty sequence also a stable set sequence, of length $0$.

\begin{lemma}
If \MSR follows a sequence $\cJ = (J_1,\ldots,J_t)$, then $\cJ$ is a stable set sequence.
\end{lemma}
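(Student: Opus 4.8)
The plan is to show that a sequence $\cJ = (J_1,\ldots,J_t)$ followed by \MSR satisfies the two defining requirements of a stable set sequence: each $J_s$ is an independent set in $G$, and $J_{s+1} \subseteq \Gamma^+(J_s)$ for $1 \leq s < t$. The first property is essentially built into the algorithm: in each iteration the inner \textbf{while} loop only adds an index $i$ to $J_t$ when $i \notin \Gamma^+(J_t)$, i.e., $i$ is not adjacent (and not equal) to any previously chosen vertex in the current iteration. So at every point $J_t$ remains an independent set, and this holds for the final $J_t$ of each completed iteration as well as for the prefix $J_t$ in the last (possibly incomplete) iteration. I would state this carefully using the definition of ``follows'': for $s < t$ the set $J_s$ is a complete maximal independent set from iteration $s$, and $J_t$ is a prefix of the independent set constructed in iteration $t$; in all cases the while-loop invariant gives independence.

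The crux is the containment $J_{s+1} \subseteq \Gamma^+(J_s)$. The key observation is property \RTWO\ of the resampling oracles: resampling an event $E_i$ cannot cause a \emph{new} event $E_j$ to occur unless $j \in \Gamma^+(i)$. Consider the start of iteration $s+1$. By construction of iteration $s$, once the while loop of iteration $s$ terminates, there is no event $E_i$ with $i \notin \Gamma^+(J_s)$ currently occurring (that is exactly the negation of the while-loop condition at termination). Now iteration $s+1$ performs a sequence of resamplings. I would argue by induction on the steps within iteration $s+1$ that after each resampling step, every event currently occurring has its index in $\Gamma^+(J_s)$: initially true as just noted, and each resampling $r_i$ applied within iteration $s+1$ has $i \in \Gamma^+(J_s)$ (by the inductive hypothesis, since $i$ was chosen because $E_i$ occurs), so by \RTWO\ any newly-occurring event $E_j$ has $j \in \Gamma^+(i) \subseteq \Gamma^+(\Gamma^+(J_s))$. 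Here I need to be slightly careful: a single application of \RTWO\ gives $j \in \Gamma^+(i)$ where $i \in \Gamma^+(J_s)$, which is a priori only $\Gamma^+(\Gamma^+(J_s))$, not $\Gamma^+(J_s)$. The resolution is that the indices added to $J_{s+1}$ are precisely the indices $i$ that get resampled in iteration $s+1$, and each such $i$ satisfies $E_i$ at the moment it is selected; so I really want to show every resampled index lies in $\Gamma^+(J_s)$. This needs the stronger inductive claim that every event occurring at any point during iteration $s+1$ has index in $\Gamma^+(J_s)$, using that the events occurring at the start already lie in $\Gamma^+(J_s)$ and that each resampling only affects events neighboring the resampled one — but one must track that the resampled index itself was already in $\Gamma^+(J_s)$.

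Let me restate the induction more cleanly: I claim that throughout iteration $s+1$, every occurring event has index in $\Gamma^+(J_s)$. Base case: at the end of iteration $s$ the while loop terminated, so no event with index outside $\Gamma^+(J_s)$ occurs; this state carries into the start of iteration $s+1$. Inductive step: suppose the claim holds before some resampling in iteration $s+1$. The resampled index $i$ satisfies $E_i$, hence $i \in \Gamma^+(J_s)$ by hypothesis. By \RTWO, applying $r_i$ cannot cause any event $E_j$ with $j \notin \Gamma^+(i)$ to newly occur; combined with the hypothesis, after this resampling every occurring event has index either already in $\Gamma^+(J_s)$ (if it was occurring before) or in $\Gamma^+(i) \subseteq \Gamma^+(\Gamma^+(J_s))$. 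This last inclusion is too weak, so the honest fix is: the events we \emph{add} to $J_{s+1}$ are exactly the resampled indices, each of which was occurring \emph{at the moment of selection}, and by the invariant those are in $\Gamma^+(J_s)$. So I should phrase the invariant as ``every event that occurs and gets selected for resampling in iteration $s+1$ has index in $\Gamma^+(J_s)$'' and prove it jointly with ``after each resampling, newly-occurring events have index in $\Gamma(i)$ for the just-resampled $i$'', and then note $\Gamma(i) \subseteq \Gamma^+(J_s)$ because $i \in \Gamma^+(J_s)$ means... no. The real point, which I expect to be the main obstacle to state crisply, is that \RTWO\ guarantees newly-occurring events are neighbors of the resampled event, and since the resampled events are all in $\Gamma^+(J_s)$, all events that can occur during iteration $s+1$ — and in particular all events selected into $J_{s+1}$ — lie in $\Gamma^+(J_s) \cup \Gamma(\Gamma^+(J_s))$. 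But the algorithm only resamples events, and $J_{s+1}$ collects the resampled ones; so I need $J_{s+1} \subseteq \Gamma^+(J_s)$, meaning I must rule out events two steps away. I believe the correct reading is that \RTWO, applied from the state at the end of iteration $s$ where all occurring events are already in $\Gamma^+(J_s)$, inductively keeps all occurring events within $\Gamma^+(J_s)$ because each resampled index is itself in $\Gamma^+(J_s)$ and we only need closure, which $\Gamma^+$ of a fixed set does not automatically provide — so the paper must be using that \RTWO's ``$j \notin \Gamma^+(i)$'' with $i \in J_s$ directly, i.e. the resampled events in iteration $s+1$ are neighbors-or-equal of $J_s$-events specifically. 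I would resolve this by carefully unwinding the definition in the actual proof; the plan is to set up the within-iteration induction so that ``occurring $\Rightarrow$ index in $\Gamma^+(J_s)$'' is the loop invariant, and the one subtle point — ensuring \RTWO\ is applied with the resampled index already known to be in $\Gamma^+(J_s)$ — is the step to get exactly right.
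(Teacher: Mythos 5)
Your handling of the independence of each $J_s$ is fine, but the containment $J_{s+1} \subseteq \Gamma^+(J_s)$ is where your argument has a genuine gap, and you were right to be suspicious: the loop invariant you ultimately settle on --- ``every event occurring at any point during iteration $s+1$ has index in $\Gamma^+(J_s)$'' --- is simply false. Resampling an index $i \in \Gamma^+(J_s)\setminus J_s$ during iteration $s+1$ may, consistently with \RTWO, cause an event $E_j$ with $j \in \Gamma(i)\setminus\Gamma^+(J_s)$ to occur, and nothing forbids this; as you yourself observed, $\Gamma^+(\Gamma^+(J_s))$ need not be contained in $\Gamma^+(J_s)$. So no amount of care in how \RTWO\ is invoked will make that invariant close.

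The missing idea, which is what the paper's proof uses, is to exploit the greedy selection rule of the inner while loop instead of trying to control \emph{all} occurring events. Suppose $j$ is added to $J_{s+1}$. At that moment $j \notin \Gamma^+(J_{s+1}^{\mathrm{cur}})$, where $J_{s+1}^{\mathrm{cur}}$ consists of exactly the indices already resampled in iteration $s+1$; in particular $j \notin \Gamma^+(i)$ for every such previously resampled $i$. Applying \RTWO\ to each of those resamplings in turn shows that none of them could have caused $E_j$ to newly occur, so $E_j$ must already have been occurring when the while loop of iteration $s$ terminated. But that termination condition says precisely that no occurring event has index outside $\Gamma^+(J_s)$ (this is the ``maximality'' of $J_s$ among occurring events), hence $j \in \Gamma^+(J_s)$. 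In other words, events newly created during iteration $s+1$ may well lie outside $\Gamma^+(J_s)$, but they lie in $\Gamma^+$ of an already-selected index and are therefore never added to $J_{s+1}$ within that same iteration; only events surviving from the end of iteration $s$ can enter $J_{s+1}$, and maximality pins those down. The same reasoning applies verbatim to the prefix resampled in the final iteration.
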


\begin{proof}
By construction, the set $J_s$ chosen in each iteration is independent in $G$. For each $i \in J_s$,
we execute the \ro $r_i$. Recall that $r_i$ executed on a satisfied event $E_i$ can only cause new events in the neighborhood $\Gamma^+(i)$ (and this neighborhood is not explored again until the following iteration).
Since $J_s$ is a maximal independent set of satisfied events, all the events satisfied in the following iteration are neighbors of some event in $J_s$, i.e., $J_{s+1} \subseteq \Gamma^+(J_s)$. In the last iteration, this also holds for a subset of the resampled events.
\end{proof}

We use the following notation:
For $i \in [n]$, $p_i = \Pr_\mu[E_i]$. For $S \subseteq [n]$, $p^S = \prod_{i \in S} p_i$. 
For a stable set sequence $\cI=(I_1,\ldots,I_t)$, $p_{\cI} = \prod_{s=1}^{t} p^{I_s}$.
We relate stable set sequences to executions of the algorithm by the following coupling argument.
Although the use of stable set sequences is inspired by \cite{Kolipaka}, their coupling argument is
different due to its backward-looking nature (similar to \cite{MoserTardos}),
and their restriction to the variable model.

\begin{lemma}
\label{lem:prod-bound}
For any proper stable set sequence $\cI = (I_1, I_2, \ldots, I_t)$, the probability that the
\MSR algorithm follows $\cI$ is at most $p_{\cI}$.
\end{lemma}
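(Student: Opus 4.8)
The plan is to couple the execution of \MSR with a \emph{fixed} sequence of \ro calls applied to an initial sample from $\mu$, and to use property \RONE\ to maintain the invariant that, conditioned on the algorithm having followed $\cI$ so far, the current state is still distributed according to $\mu$. Given such an invariant, the probability of following $\cI$ telescopes into $\prod_{s=1}^{t}\prod_{i\in I_s}p_i=p_{\cI}$.

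First I would pin down the order in which events are resampled. Within a single iteration the algorithm always adds the minimum-index event that currently occurs and is not yet blocked; I claim that, as a consequence of \RTWO, the events resampled during one iteration appear in strictly increasing order of index. Indeed, by \RTWO\ an event can be turned \emph{on} only by resampling one of its neighbors, so any event that occurs but is unblocked at some point of the iteration must already have occurred when the iteration began; since the greedy rule always picks the smallest such index, the chosen indices increase. Hence, on the event that \MSR follows $\cI=(I_1,\ldots,I_t)$, the resampled sequence is exactly $K_1,\ldots,K_N$ with $N=\sum_{s=1}^{t}|I_s|$, where $(K_1,\ldots,K_N)$ lists $I_1$ in increasing order, then $I_2$ in increasing order, and so on. Now couple the randomness of \MSR so that its initial state is $\omega^{(0)}\sim\mu$ and its $l$-th \ro call uses an independent seed $\xi_l$, and define the chain $\omega^{(l)}:=r_{K_l}(\omega^{(l-1)})$ using $\xi_l$. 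On the event that \MSR follows $\cI$, the algorithm's state just before its $l$-th resampling equals $\omega^{(l-1)}$ (by induction on the coupled execution), and that resampling is of $E_{K_l}$, which the algorithm performs only when $\omega^{(l-1)}\in E_{K_l}$. Therefore
\[
\Pr[\text{\MSR follows }\cI] ~\le~ \Pr\!\left[\,\omega^{(l-1)}\in E_{K_l}\ \text{for all }l\in[N]\,\right].
\]

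Next I would evaluate the right-hand side. Writing $H_l:=\bigcap_{j=1}^{l}\{\omega^{(j-1)}\in E_{K_j}\}$ (with $H_0$ the whole space), I would prove by induction on $l$ that, conditioned on $H_l$, the state $\omega^{(l)}$ is distributed as $\mu$. The base case is immediate since $\omega^{(0)}\sim\mu$. For the inductive step: conditioned on $H_l$ we have $\omega^{(l)}\sim\mu$, so conditioning further on $\omega^{(l)}\in E_{K_{l+1}}$ yields $\omega^{(l)}\sim\mu|_{E_{K_{l+1}}}$; since $H_{l+1}$ depends only on $\omega^{(0)},\xi_1,\ldots,\xi_l$, the seed $\xi_{l+1}$ is still independent of $\omega^{(l)}$ under this conditioning, so \RONE\ gives $\omega^{(l+1)}=r_{K_{l+1}}(\omega^{(l)})\sim\mu$. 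Applying the invariant together with the chain rule,
\[
\Pr[H_N]~=~\prod_{l=1}^{N}\Pr[\omega^{(l-1)}\in E_{K_l}\mid H_{l-1}]~=~\prod_{l=1}^{N}\Pr_\mu[E_{K_l}]~=~\prod_{s=1}^{t}\prod_{i\in I_s}p_i~=~p_{\cI},
\]
where the middle equality uses that $\omega^{(l-1)}\sim\mu$ conditioned on $H_{l-1}$. (If some $p_i=0$, both sides vanish and there is nothing to prove.) Combined with the inequality above, this gives $\Pr[\text{\MSR follows }\cI]\le p_{\cI}$.

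The step I expect to be the main obstacle is the coupling bookkeeping rather than the probabilistic core: one must argue carefully that, on the event that \MSR follows $\cI$, the resampled sequence really is the fixed sequence $K_1,\ldots,K_N$ (this is where the increasing-order observation and \RTWO\ enter), and that the fresh \ro seeds remain independent of the conditioning events $H_l$ so that \RONE\ applies at every step. Once these points are set up, the telescoping computation is routine.
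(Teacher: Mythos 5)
Your proof is correct and is essentially the paper's own argument: your chain $\omega^{(l)}$ with the success events $H_l$ is exactly the paper's ``$\cI$-checking'' process, with the same invariant (conditioned on success so far, the current state is distributed according to $\mu$), the same telescoping of conditional probabilities, and the same coupling showing that the event that \MSR follows $\cI$ is contained in the success event. The only difference is that you make explicit two points the paper leaves implicit --- that within an iteration the resampled indices increase (so the algorithm's resampling order matches the fixed ascending order used by the checking process) and that the fresh \ro seeds stay independent of the conditioning events --- which is welcome extra care rather than a different approach.
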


\begin{proof}
Given $\cI=(I_1,I_2,\ldots,I_t)$, let us consider the following ``$\cI$-checking" random
process. We start with a random state $\omega \sim \mu$. In iteration $s$, we process the events of
$I_s$ in the ascending order of their indices. For each $i \in I_s$, we check whether $\omega$
satisfies $E_i$; if not, we terminate. Otherwise, we apply the \ro $r_i$ and replace $\omega$ by
$r_i(\omega)$. We continue for $s=1,2,\ldots,t$. We say that the $\cI$-checking process succeeds if every event is satisfied when checked and the process runs until the end.

By induction, the state $\omega$ after each \ro call is distributed according to $\mu$:  Assuming this
was true in the previous step and conditioned on $E_i$ satisfied, we have $\omega \sim \mu|_{E_i}$.
By assumption, the \ro $r_i$ removes this conditioning and produces again a random state $r_i(\omega) \sim \mu$. 
Therefore, whenever we check event $E_i$, it is satisfied with probability $\Pr_{\mu}[E_i]$
(conditioned on the past). By a telescoping product of conditional probabilities, the probability
that the $\cI$-checking process succeeds is exactly
$\prod_{s=1}^{t} \prod_{i \in I_s} \Pr_\mu[E_i] = \prod_{s=1}^{t} p^{I_s} = p_\cI$.

To conclude, we argue that the probability that \MSR follows the sequence
$\cI$ is at most the probability that the $\cI$-checking process
succeeds. To see this, suppose that we couple \MSR and the $\cI$-checking
process, so they use the same source of randomness. In each iteration, if \MSR
includes $i$ in $J_t$, it means that $E_i$ is satisfied. Both procedures apply the \ro $r_I(\omega)$
and by coupling the distribution in the next iteration is the same. Therefore, the event that
\MSR follows the sequence $\cI$ is contained in the event that the $\cI$-checking
process succeeds, which happens with probability $p_\cI$.
\end{proof}

We emphasize that we do \emph{not} claim that the distribution of the current state $\omega \in
\Omega$ is $\mu$
after each \ro call performed by the \MSR algorithm. This would mean that the algorithm is
not making any progress in its search for a state avoiding all events.
It is only the $\cI$-checking process that has this property.

\begin{definition}
Let $\Stab$ denote the set of all stable set sequences and $\Prop$ the set of proper stable set sequences. Let us denote by $\Stab_\ell$ the set of stable set sequences $(I_1,\ldots,I_\ell)$ of length $\ell$, and by $\Stab_\ell(J)$ the subset of $\Stab_\ell$ such that the first set in the sequence is $J$. 
Similarly, denote by $\Prop_\ell$ the set of proper stable set sequences of length $\ell$, and by $\Prop(J)$ the subset of $\Prop$ such that the first set in the sequence is $J$.
For $\cI = (I_1,\ldots,I_t) \in \Prop$, let us call $\sigma(\cI) = \sum_{s=1}^{t} |I_s|$ the total size of the sequence.
\end{definition}

\begin{lemma}
\label{lem:iteration-bound}
The probability that \MSR runs for at least $\ell$ iterations is at most $\sum_{\cI \in \Prop_\ell} p_\cI$. The probability that \MSR resamples at least $s$ events is at most $\sum_{\cI \in \Prop: \sigma(\cI)=s} p_\cI$.
\end{lemma}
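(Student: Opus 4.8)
The plan is to combine the coupling argument of Lemma~\ref{lem:prod-bound} with a union bound over the relevant families of stable set sequences. First I would observe that if \MSR runs for at least $\ell$ iterations, then the random sequence $\cJ = (J_1,\ldots,J_{\ell})$ of resampled sets in the first $\ell$ iterations is a proper stable set sequence of length exactly $\ell$: by the earlier lemma the sequence the algorithm follows is always a stable set sequence, and if the algorithm has not yet terminated then each $J_s$ with $s \le \ell$ is nonempty (the outer loop only stops when an iteration produces $J_t = \emptyset$), so $\cJ \in \Prop_\ell$. Hence the event ``\MSR runs for at least $\ell$ iterations'' is contained in the union, over $\cI \in \Prop_\ell$, of the events ``\MSR follows $\cI$''. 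By Lemma~\ref{lem:prod-bound} each such event has probability at most $p_\cI$, and a union bound gives the first claim, $\Pr[\text{at least } \ell \text{ iterations}] \le \sum_{\cI \in \Prop_\ell} p_\cI$.

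For the second statement I would argue analogously but track the total number of resampled events rather than the number of iterations. Suppose \MSR resamples at least $s$ events in total. Consider the point in the execution at which the $s$-th resampling occurs, say during iteration $t$ after $m \ge 1$ events have been resampled in that iteration; let $\cI = (I_1,\ldots,I_{t-1}, I_t)$ where $I_1,\ldots,I_{t-1}$ are the full sets resampled in the first $t-1$ iterations and $I_t$ is the prefix consisting of the first $m$ events resampled in iteration $t$. By the definition of ``follows,'' \MSR follows this $\cI$, and $\cI$ is a proper stable set sequence with $\sigma(\cI) = \sum_{r=1}^{t} |I_r| = s$ exactly (the first $s$ resamplings are accounted for, one event per term). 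So the event in question is contained in $\bigcup_{\cI \in \Prop,\, \sigma(\cI) = s} \{\,\MSR \text{ follows } \cI\,\}$, and again Lemma~\ref{lem:prod-bound} together with a union bound yields $\Pr[\text{at least } s \text{ resamplings}] \le \sum_{\cI \in \Prop:\, \sigma(\cI)=s} p_\cI$.

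The one point that needs a little care — and is the main thing to get right rather than a genuine obstacle — is verifying that the events over which we take the union really do cover the relevant failure event, i.e.\ that the prefix $\cJ$ (or $\cI$) extracted from a long execution is \emph{exactly} of length $\ell$ (resp.\ of total size exactly $s$), not merely at least that, and that it is \emph{proper}. Properness holds because a nonempty prefix of the algorithm's trajectory consists of nonempty sets by the stopping rule of the outer loop; exactness holds because we truncate the trajectory at precisely the right moment, which is legitimate since Lemma~\ref{lem:prod-bound} applies to every proper stable set sequence, including prefixes of longer ones, and the notion of the algorithm ``following'' a sequence already builds in this prefix semantics (the last set $I_t$ may be a proper prefix of the maximal independent set constructed in iteration $t$). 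Once these bookkeeping points are checked, both bounds are immediate from Lemma~\ref{lem:prod-bound} and countable subadditivity of probability.
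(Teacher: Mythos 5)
Your proposal is correct and follows essentially the same route as the paper: containment of the failure event in the union of ``follows $\cI$'' events over the appropriate family of proper stable set sequences, then \Lemma{prod-bound} plus a union bound. The extra bookkeeping you spell out (properness from the stopping rule, truncation at exactly $\ell$ iterations or exactly $s$ resamplings via the prefix semantics of ``follows'') is precisely what the paper's terser proof leaves implicit.
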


\begin{proof}
If the algorithm runs for at least $\ell$ iterations, it means that it follows some proper sequence $\cI = (I_1,I_2,\ldots,I_\ell)$. 
By Lemma~\ref{lem:prod-bound}, the probability that the algorithm follows a particular stable set sequence $\cI$ is at most $p_\cI$. By the union bound, the probability that the algorithm runs for at least $\ell$ iterations is at most $\sum_{\cI=(I_1,\ldots,I_\ell) \in \Prop} p_\cI$.

Similarly, if the algorithm resamples at least $s$ events, it means that it follows some proper sequence $\cI$ of total size $\sigma(\cI) = s$. By the union bound, the probability of resampling at least $s$ events is upper-bounded by $\sum_{\cI \in \Prop: \sigma(\cI)=s} p_\cI$.
\end{proof}

We note that these bounds could be larger than $1$ and thus vacuous. The events that ``the algorithm follows $\cI = (I_1,\ldots,I_\ell)$" are disjoint for different sequences of fixed total size $\sigma(\cI)$, while they could overlap for a fixed length $\ell$ (because we can take $I_\ell$ to be different prefixes of the sequence of events resampled in iteration $t$). In any case, the upper bound of $p_\cI$ on each of the events could be quite loose.

\subsection{A simple analysis: the General Lov\'asz Lemma criterion, with slack}
\SectionName{LLLslack}

In this section we will analyze the algorithm under the assumption that the
\eqref{eq:GLL} criterion holds with some ``slack''.
This idea of exploiting slack has appeared in previous work, e.g., \cite{MoserTardos,CGH,HSS,Kolipaka}.
This analysis proves only a weaker form of \Theorem{LLL-tight-result}.
The full proof, which removes the assumption of slack, appears in \Section{LLLimpliesShearer}.

To begin, let us prove the following (crude) bound on the expected number of iterations.
We note that this bound is typically exponentially large.

\begin{lemma}
\label{lem:crude-bound}
Provided that the $p_i$ satisfy the \eqref{eq:GLL} criterion, $p_i \leq x_i \prod_{j \in \Gamma(i)} (1-x_j)$, we have
$$ \sum_{\cI \in \Prop} p_\cI ~\leq~ \prod_{i=1}^{n} \frac{1}{1-x_i}.$$
\end{lemma}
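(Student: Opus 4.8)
The plan is to expand $\sum_{\cI \in \Prop} p_\cI$ according to the \emph{first} set of the stable set sequence and set up a recursion. For a nonempty independent set $J$, let $\Phi(J) = \sum_{\cI \in \Prop(J)} p_\cI$, and set $\Phi(\emptyset) = 1$ to account for the empty sequence. Every proper stable set sequence is either empty or lies in exactly one $\Prop(J)$ with $J \neq \emptyset$, so $\sum_{\cI \in \Prop} p_\cI = \sum_{J \in \Ind} \Phi(J)$ (a finite sum, as $|\Ind| \leq 2^n$). A proper sequence with first set $J \neq \emptyset$ is either $(J)$ by itself, or $(J)$ prepended to a proper sequence whose first set $K$ is a nonempty independent subset of $\Gamma^+(J)$; since $p_{(J, I_2, \ldots, I_t)} = p^J \cdot p_{(I_2, \ldots, I_t)}$, this yields the recursion
\[
\Phi(J) ~=~ p^J \Bigl( 1 + \sum_{\emptyset \neq K \in \Ind,\, K \subseteq \Gamma^+(J)} \Phi(K) \Bigr) ~=~ p^J \sum_{K \in \Ind,\, K \subseteq \Gamma^+(J)} \Phi(K) \qquad (J \neq \emptyset),
\]
where the second equality absorbs the ``$1$'' as the $K = \emptyset$ term.

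The heart of the proof is the claim that $\Phi(J) \leq \prod_{i \in J} \frac{x_i}{1-x_i}$ for every $J \in \Ind$. I would prove this by truncating: let $\Phi_\ell(J)$ be the same sum restricted to sequences of length at most $\ell$, so $\Phi_\ell(\emptyset) = 1$, $\Phi_0(J) = 0$ for $J \neq \emptyset$, $\Phi_{\ell+1}$ satisfies the recursion above with $\Phi$ replaced by $\Phi_\ell$ on the right, and $\Phi(J) = \sup_\ell \Phi_\ell(J)$. Inducting on $\ell$, it suffices to show that if $\Phi_\ell(K) \leq \prod_{k \in K} \frac{x_k}{1-x_k}$ for all $K$, then $p^J \sum_{K \in \Ind,\, K \subseteq \Gamma^+(J)} \prod_{k \in K} \frac{x_k}{1-x_k} \leq \prod_{i \in J} \frac{x_i}{1-x_i}$. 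Dropping the independence restriction (all terms are positive) and factoring, $\sum_{K \in \Ind,\, K \subseteq \Gamma^+(J)} \prod_{k \in K} \frac{x_k}{1-x_k} \leq \prod_{k \in \Gamma^+(J)} \bigl( 1 + \frac{x_k}{1-x_k} \bigr) = \prod_{k \in \Gamma^+(J)} \frac{1}{1-x_k}$, so after cancelling $\prod_{i \in J} \frac{1}{1-x_i}$ (recall $J \subseteq \Gamma^+(J)$) it remains to verify
\[
p^J ~\leq~ \Bigl( \prod_{i \in J} x_i \Bigr) \prod_{k \in \Gamma^+(J) \setminus J} (1 - x_k).
\]
This is where independence of $J$ enters: since no two vertices of $J$ are adjacent, $\Gamma^+(J) \setminus J = \bigcup_{i \in J} \Gamma(i)$, and every $k$ in this set satisfies $|\Gamma(k) \cap J| \geq 1$. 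Multiplying \eqref{eq:GLL}, $p_i \leq x_i \prod_{j \in \Gamma(i)} (1-x_j)$, over $i \in J$ gives $p^J \leq \bigl(\prod_{i \in J} x_i\bigr) \prod_{k} (1-x_k)^{|\Gamma(k) \cap J|} \leq \bigl(\prod_{i \in J} x_i\bigr) \prod_{k \in \Gamma^+(J) \setminus J} (1 - x_k)$, using $0 < 1 - x_k < 1$ together with $|\Gamma(k)\cap J| \geq 1$.

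Finally, summing the claim over $\Ind \subseteq 2^{[n]}$:
\[
\sum_{\cI \in \Prop} p_\cI ~=~ \sum_{J \in \Ind} \Phi(J) ~\leq~ \sum_{J \in \Ind} \prod_{i \in J} \frac{x_i}{1-x_i} ~\leq~ \sum_{J \subseteq [n]} \prod_{i \in J} \frac{x_i}{1-x_i} ~=~ \prod_{i=1}^{n} \Bigl( 1 + \frac{x_i}{1-x_i} \Bigr) ~=~ \prod_{i=1}^{n} \frac{1}{1-x_i}.
\]
I expect the main obstacle to be the combinatorial bookkeeping in the displayed inequality $p^J \leq (\prod_{i\in J} x_i) \prod_{k \in \Gamma^+(J)\setminus J}(1-x_k)$ — specifically, using independence of $J$ to identify $\Gamma^+(J) \setminus J$ with $\bigcup_{i \in J}\Gamma(i)$ and to see that each multiplicity $|\Gamma(k) \cap J|$ is at least one, so the surplus $(1-x_k)$ factors in the product form of \eqref{eq:GLL} only strengthen the bound. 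The remaining steps (the truncation induction, dropping the independence constraint in the sum, and the final factorization) are routine.
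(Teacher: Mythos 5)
Your proof is correct and is essentially the same argument as the paper's: the paper proves by induction on $\ell$ that $\sum_{\cI \in \Stab_\ell(J)} p_\cI \leq \prod_{j \in J}\frac{x_j}{1-x_j}$ (its padded fixed-length sequences $\Stab_\ell(J)$ are exactly your truncations $\Phi_\ell(J)$), using the same expansion $\sum_{K \subseteq \Gamma^+(J)}\prod_k \frac{x_k}{1-x_k} = \prod_{k \in \Gamma^+(J)}\frac{1}{1-x_k}$, the same key consequence of \eqref{eq:GLL} that $p^J \leq \bigl(\prod_{i\in J}x_i\bigr)\prod_{k\in\Gamma^+(J)\setminus J}(1-x_k)$, and the same final summation over all $J \subseteq [n]$ followed by letting $\ell \to \infty$.
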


\begin{proof}
It will be convenient to work with sequences of fixed length, where we pad by empty sets if necessary. Note that by definition this does not change the value of $p_\cI$: e.g., $p_{(I_1,I_2)} = p_{(I_1,I_2,\emptyset,\ldots,\emptyset)}$. Recall that $\Stab_\ell(J)$ denotes the set of all stable set sequences of length $\ell$ where the first set is $J$. We show the following statement by induction on $\ell$: For any $J \in \Ind$ and any $\ell \geq 1$,
\begin{equation}
\label{eq:stab-induction}
\sum_{\cI  \in \Stab_\ell(J)} p_\cI ~\leq~ \prod_{j \in J} \frac{x_j}{1-x_j}.
\end{equation}

This is true for $\ell=1$, since $p_{(J)} = p^J \leq \prod_{j \in J} x_j$ by the LLL assumption.
Let us consider the expression for $\ell+1$. We have
$$ \sum_{\cI' \in \Stab_{\ell+1}(J)} p_{\cI'}
 ~=~ p^{J} \sum_{J' \subseteq \Gamma^+(J)} \sum_{\cI \in \Stab_\ell(J')} p_\cI
 ~\leq~ p^{J} \sum_{J' \subseteq \Gamma^+(J)} \prod_{i \in J'} \frac{x_i}{1-x_i} $$
by the inductive hypothesis. This can be simplified using the following identity:
\begin{equation}
\EquationName{trivialidentity}
\prod_{i \in \Gamma^+(J)} (1+\alpha_i)
 = \sum_{I_1 \subseteq \Gamma^+(J)} \prod_{i \in I_1} \alpha_i.
\end{equation}
We use this with $\alpha_i = \frac{x_i}{1-x_i}$. Therefore,
$$ \sum_{\cI' \in \Stab_{\ell+1}(J)} p_{\cI'} 
 ~\leq~ p^{J} \prod_{i \in \Gamma^+(J)} \left(1 + \frac{x_i}{1-x_i} \right)
  ~=~ p^{J} \prod_{i \in \Gamma^+(J)} \frac{1}{1-x_i}.$$
Now we use the LLL assumption: 
$$p^{J} ~=~ \prod_{i \in J} p_i \leq \prod_{i \in J} \left( x_i \prod_{j \in \Gamma(i)} (1-x_j) \right)
 ~\leq~ \prod_{i \in J} x_i \prod_{j \in \Gamma^+(J) \setminus J} (1-x_j) $$
because each element of $\Gamma^+(J) \setminus J$ appears in $\Gamma(i)$ for at least one $i \in J$.
We conclude that
$$ \sum_{\cI' \in \Stab_\ell(J)} p_{\cI'} 
 ~\leq~ \prod_{i \in J} x_i \prod_{j \in \Gamma^+(J) \setminus J} (1-x_j) \cdot \prod_{i' \in \Gamma^+(J)} \frac{1}{1-x_{i'}}
 ~=~ \prod_{i \in J} \frac{x_j}{1-x_j}. $$
This proves (\ref{eq:stab-induction}).

Adding up over all sets $J \subseteq [n]$, we again use \Equation{trivialidentity} to obtain
$$ \sum_{\cI \in \Stab_\ell} p_\cI
 ~\leq~ \sum_{J \subseteq [n]} \prod_{j \in J} \frac{x_j}{1-x_j}
 ~=~ \prod_{i=1}^{n} \left( 1 + \frac{x_i}{1-x_i} \right)
 ~=~ \prod_{i=1}^{n} \frac{1}{1-x_i}.$$
As we argued above, this can be written equivalently as
$$ \sum_{k=1}^{\ell} \sum_{\cI \in \Prop} p_\cI ~\leq~ \prod_{i=1}^{n} \frac{1}{1-x_i}.$$
Since this is true for every $\ell$, and the left-hand-side is non-increasing in $\ell$,
the sequence as $\ell \rightarrow \infty$ has a limit and the bound still holds in the limit.
\end{proof}

The following is our first concrete result: our algorithm is efficient if \eqref{eq:GLL} is satisfied with a slack.

\begin{theorem}
\TheoremName{GLL-with-slack}
If \eqref{eq:GLL} is satisfied with a slack of $\epsilon$, i.e.
$$ \Pr_\mu[E_i] ~\leq~ (1-\epsilon) x_i \prod_{j \in \Gamma(i)} (1-x_j) $$
then with probability $1-e^{-t}$ \MSR resamples at most $\frac{1}{\epsilon} (t + \sum_{i=1}^{n} \ln \frac{1}{1-x_i})$ events.
\end{theorem}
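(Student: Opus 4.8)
The plan is to combine the crude counting bound of \Lemma{crude-bound} with the slack hypothesis via a one-line rescaling, and then turn the resulting geometric tail into the claimed high-probability statement using \Lemma{iteration-bound}.

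First I would absorb the slack into the probabilities. Define $p'_i = p_i/(1-\epsilon)$; the slack assumption says precisely that $p'_i \le x_i \prod_{j \in \Gamma(i)}(1-x_j)$, so the $p'_i$ satisfy the \eqref{eq:GLL} criterion with the same $x_i$. For a proper stable set sequence $\cI = (I_1,\ldots,I_t)$ with total size $\sigma(\cI) = \sum_s |I_s|$, the two weight products are related by $p_\cI = (1-\epsilon)^{\sigma(\cI)} p'_\cI$, since each of the $\sigma(\cI)$ factors $p_i$ appearing in $p_\cI$ is $(1-\epsilon)$ times the corresponding factor $p'_i$. Hence, for any fixed $s$,
$$
\sum_{\cI \in \Prop:\, \sigma(\cI) = s} p_\cI
~=~ (1-\epsilon)^s \sum_{\cI \in \Prop:\, \sigma(\cI)=s} p'_\cI
~\leq~ (1-\epsilon)^s \sum_{\cI \in \Prop} p'_\cI
~\leq~ (1-\epsilon)^s \prod_{i=1}^{n} \frac{1}{1-x_i},
$$
where the last inequality is \Lemma{crude-bound} applied to the $p'_i$ (which is legitimate since they satisfy \eqref{eq:GLL}).

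Next, by the second part of \Lemma{iteration-bound}, the probability that \MSR resamples at least $s$ events is at most $\sum_{\cI \in \Prop:\, \sigma(\cI)=s} p_\cI$, hence at most $(1-\epsilon)^s \prod_{i=1}^n (1-x_i)^{-1}$. Using $1-\epsilon \le e^{-\epsilon}$, this is at most $\exp\!\big(-\epsilon s + \sum_{i=1}^n \ln\frac{1}{1-x_i}\big)$, which is $\le e^{-t}$ as soon as $s \ge \frac{1}{\epsilon}\big(t + \sum_{i=1}^n \ln\frac{1}{1-x_i}\big)$. Taking $s$ to be the least integer satisfying this inequality, we conclude that with probability at least $1-e^{-t}$ the algorithm resamples at most $s-1 \le \frac{1}{\epsilon}\big(t + \sum_{i=1}^n \ln\frac{1}{1-x_i}\big)$ events, as claimed.

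I do not expect a real obstacle in this argument: the only points requiring a little care are that the bound on $\sum_{\cI \in \Prop} p'_\cI$ in \Lemma{crude-bound} is obtained as a limit over sequence lengths (so restricting to proper sequences of a fixed total size only strengthens it), and the harmless integer rounding in the choice of $s$. The one genuinely useful idea is the substitution $p'_i = p_i/(1-\epsilon)$, which converts multiplicative slack into the geometric factor $(1-\epsilon)^{\sigma(\cI)}$ that makes the tail summable.
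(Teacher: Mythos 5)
Your proposal is correct and follows essentially the same route as the paper: the paper likewise writes $p_\cI \le (1-\epsilon)^{\sigma(\cI)} p'_\cI$ (it takes $p'_i = x_i\prod_{j\in\Gamma(i)}(1-x_j)$ rather than $p_i/(1-\epsilon)$, an immaterial difference), applies \Lemma{crude-bound} to the $p'_i$, and finishes with \Lemma{iteration-bound} and $1-\epsilon\le e^{-\epsilon}$ exactly as you do. No gaps; your remarks on the fixed-total-size restriction and integer rounding are the only points of care, and you handle them correctly.
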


\begin{proof}
By Lemma~\ref{lem:iteration-bound}, the probability that \MSR resamples more than $s$ events is at most $\sum_{\cI \in \Prop:\sigma(\cI)=\lceil s \rceil} p_\cI$ where $p_\cI$ is the product of $p_i = \Pr_\mu[E_i]$ over all events in the sequence $\cI$. By the slack assumption, we have $p_i \leq (1-\epsilon) p'_i$ and $p_\cI \leq (1-\epsilon)^{\sigma(\cI)} p'_\cI$, where $p'_i = x_i \prod_{j \in \Gamma(i)} (1-x_j)$. Using Lemma~\ref{lem:crude-bound}, we obtain
$$ \sumstack{\cI \in \Prop \\ \sigma(\cI)=\lceil s \rceil} p_\cI
 ~\leq~ (1-\epsilon)^s \sum_{\cI \in \Prop} p'_\cI
 ~\leq~ e^{-\epsilon s} \prod_{i=1}^{n} \frac{1}{1-x_i}. $$
For $s = \frac{1}{\epsilon} (t + \sum_{i=1}^{n} \ln \frac{1}{1-x_i})$, we obtain
$$ \sumstack{\cI \in \Prop \\ \sigma(\cI)=\lceil s \rceil} p_\cI
 ~\leq~ e^{-\epsilon s} \prod_{i=1}^{n} \frac{1}{1-x_i}
 ~\leq~ e^{-t}.
$$
Therefore, the probability of resampling more than $s$ events is at most $e^{-t}$.
\end{proof}

\subsection{Preliminaries on Shearer's criterion}
\SectionName{shearer}

In this section we discuss a strong version of the local lemma due to Shearer \cite{Shearer}.
Shearer's lemma is based on certain forms of the multivariate independence polynomial.
We recall that $p^I$ denotes $\prod_{i \in I} p_i$.

\begin{definition}
Given a graph $G$ and values $p_1,\ldots,p_n$, define for each $S \subseteq [n]$
\begin{equation}
\EquationName{qdef}
q_S ~=~ q_S(p) ~=~ \sumstack{I \in \Ind \\ S \subseteq I} (-1)^{|I \setminus S|} p^I .
\end{equation}
\end{definition}

Note that $q_S=0$ for $S \notin \Ind$.
An alternative form of these polynomials that is also useful is obtained by summing over subsets of $S$.

\newcommand{\qdown}{\breve{q}}

\begin{definition}
Given a graph $G$ and values $p_1,\ldots,p_n$, define
$$ \qdown_S ~=~ \qdown_S(p) ~=~ \sumstack{I \in \Ind \\ I \subseteq S} (-1)^{|I|} p^I.$$
\end{definition}

The following set plays a fundamental role.

\begin{definition}
\DefinitionName{Shearer-region}
Given a graph $G$, the \newterm{Shearer region} is the semialgebraic set
\begin{subequations}
\begin{align}
\EquationName{SR1}
\cS &~=~ \setst{ p \in (0,1)^n }{ \forall I \in \Ind ,\: q_I(p) > 0 }
    \\\EquationName{SR2}
    &~=~ \setst{ p \in (0,1)^n }{ \forall S \subseteq [n] ,\: \qdown_S(p) > 0 }
\end{align}
\end{subequations}
\end{definition}

The equivalence between \eqref{eq:SR1} and \eqref{eq:SR2} is proven below in \Claim{SRequiv}.

Shearer's Lemma can be stated as follows.

\begin{lemma}[Shearer \cite{Shearer}]
\LemmaName{Shearer-Lemma}
Let $G$ be a lopsidependency graph
for the events $E_1,\ldots,E_n$.
Let $p_i = \Pr_\mu[E_i] \in (0,1)$. 
If $p \in \cS$ then $\Pr_\mu[\bigcap_{i=1}^{n} \overline{E_i}] \geq q_\emptyset$.
\end{lemma}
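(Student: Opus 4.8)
The plan is to follow the standard Shearer-style induction, proving a statement substantially stronger than the conclusion. Write $A_S := \Pr_\mu[\bigcap_{j\in S}\overline{E_j}]$, with $A_\emptyset = 1$; the lemma is the case $S=[n]$, since $\qdown_{[n]}(p) = q_\emptyset(p)$. The combinatorial backbone is the deletion--contraction recursion for the independence polynomial: for $k \in S$,
$$\qdown_S ~=~ \qdown_{S\setminus\{k\}} ~-~ p_k\,\qdown_{S\setminus\Gamma^+(k)},$$
which follows by splitting the independent sets $I \subseteq S$ according to whether $k \in I$ (together with $\qdown_\emptyset = 1$). This mirrors the obvious probabilistic identity $A_S = A_{S\setminus\{k\}}\bigl(1 - \Pr_\mu[E_k \mid \bigcap_{j\in S\setminus\{k\}}\overline{E_j}]\bigr)$, so the whole argument reduces to controlling these one-step conditional probabilities from above.

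\textbf{The induction.} I would carry three coupled claims through an induction on $|S|$, for all $S$ with $|S|\le m$: (1) $A_S \ge \qdown_S$; (2) for every $i \notin S$, $\Pr_\mu[E_i \mid \bigcap_{j\in S}\overline{E_j}] \le p_i\,\qdown_{S\setminus\Gamma(i)}/\qdown_S$; and (3) for every $S' \subseteq S$, $A_S/A_{S'} \ge \qdown_S/\qdown_{S'}$. The base case $m=0$ is immediate. In the inductive step one establishes them in the order (1), (3), (2). Claim (1) for $S$ follows by picking $k\in S$, invoking (2) for $S\setminus\{k\}$ (inductive hypothesis) and the two recursions above, using that $\qdown_T(p)>0$ for all $T$ since $p\in\cS$. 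Claim (3) for $S$ follows by a sub-induction on $|S\setminus S'|$ whose base case $S=S'\cup\{k\}$ is exactly claim (2) for $S'$, and whose step is multiplicativity of the ratios. Claim (2) for $S$ is where the lopsidependency hypothesis \eqref{eq:LLLL} enters: splitting $S = S_1 \sqcup S_2$ with $S_1 = S\setminus\Gamma(i)$ (so $S_1 \subseteq [n]\setminus\Gamma^+(i)$, since $i\notin S$) and $S_2 = S\cap\Gamma(i)$,
$$\Pr_\mu\Big[E_i \;\Big|\; \bigcap_{j\in S}\overline{E_j}\Big] ~\le~ \frac{\Pr_\mu[E_i \cap \bigcap_{j\in S_1}\overline{E_j}]}{A_S} ~=~ \Pr_\mu\Big[E_i \;\Big|\; \bigcap_{j\in S_1}\overline{E_j}\Big]\cdot\frac{A_{S_1}}{A_S} ~\le~ p_i\,\frac{A_{S_1}}{A_S},$$
where the last step uses \eqref{eq:LLLL}; then (3) for $S$ with $S' = S_1$ gives $A_{S_1}/A_S \le \qdown_{S_1}/\qdown_S$, which is precisely (2). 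Finally, (1) for $S=[n]$ yields $A_{[n]} \ge \qdown_{[n]} = q_\emptyset$.

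\textbf{Main obstacle.} The subtle point — and what I expect to be the crux — is the need for the stronger invariant (3): the recursion for $\qdown_S$ carries a negative term, so a one-sided bound ``$A_S \ge \qdown_S$'' cannot be fed back into itself, and one is forced to track the monotonicity of the ratios $A_S/A_{S'}$ against $\qdown_S/\qdown_{S'}$. A secondary bookkeeping issue is well-definedness of all the conditional probabilities: $p\in\cS$ gives $\qdown_T(p)>0$ for every $T$, and claim (1) then propagates strict positivity of $A_T$, which is exactly why (1) must be proved before (2) and (3) within each step so that the divisions and the application of \eqref{eq:LLLL} (which needs $\Pr_\mu[\bigcap_{j\in S_1}\overline{E_j}]>0$) are legitimate. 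One should also verify the harmless monotonicity step $\Pr_\mu[E_i \cap \bigcap_{j\in S_2}\overline{E_j}\cap\bigcap_{j\in S_1}\overline{E_j}] \le \Pr_\mu[E_i \cap \bigcap_{j\in S_1}\overline{E_j}]$ used in the display above.
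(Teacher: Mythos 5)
Your proof is correct, but note that the paper does not actually prove this lemma at all: it imports Shearer's result as a black box, citing \cite{Shearer}, and only develops the surrounding machinery (the polynomials $q_S$, $\qdown_S$ and their identities in its Section 5.3.1). So there is no in-paper argument to compare against; what you have written is a self-contained proof in the classical style of Shearer and Scott--Sokal. The structure checks out: the fundamental identity $\qdown_S = \qdown_{S-a} - p_a\,\qdown_{S\setminus\Gamma^+(a)}$ is exactly the paper's \Claim{breve1}; your claim (1) for $S$ follows from the induction hypotheses (1) and (2) on $S\setminus\{k\}$ together with positivity of all $\qdown_T$ on $\cS$; your claim (3) telescopes single-step ratio bounds; and your claim (2) is where \eqref{eq:LLLL} enters, with $S_1 = S\setminus\Gamma(i) \subseteq [n]\setminus\Gamma^+(i)$ (legitimate since $i\notin S$), and with (3) converting $A_{S_1}/A_S$ into $\qdown_{S_1}/\qdown_S$. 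You also correctly identified the two bookkeeping points: the ordering (1), (3), (2) within each inductive step so that $A_S>0$ is available before any conditioning or division, and the positivity $A_{S_1}\geq \qdown_{S_1}>0$ needed to invoke \eqref{eq:LLLL} in light of the paper's footnote. It is worth observing that your key device --- controlling the ratios $\qdown_S/\qdown_{S-a}$ via the fundamental identity and matching them against conditional probabilities --- is the same device the paper itself uses for related statements (\Lemma{GLLimpliesShearer}, \Lemma{CLLimpliesShearer}, and \Claim{breve-submodular}), so your argument is stylistically consistent with, and would slot naturally into, the paper's toolkit; the conclusion $A_{[n]}\geq \qdown_{[n]} = q_\emptyset$ is then immediate.
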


It is known that Shearer's Lemma implies \Theorem{LLL}, as we will see in \Section{LLLimpliesShearer}, and in fact gives the tight criterion under which all events can be avoided for a given dependency graph $G$.
The polynomials $q_S(p)$ and $\qdown_S(p)$ have a natural interpretation in the Shearer region:
There is a ``tight instance" where $q_S(p)$ is the probability that the set of occurring events is exactly $S$,
and $\qdown_S(p)$ is the probability that none of the events in $S$ occur. In particular, $q_\emptyset(p) = \qdown_{[n]}(p)$
is exactly the probability that no event occurs. (See \cite{Shearer} for more details.)

\subsubsection{Properties of independence polynomials}

In this section we summarize some of the important properties of these polynomials, most of which may be found in earlier work.
Since some of the proofs are not easy to recover due to different notation and/or their analytic nature (in case of \cite{ScottSokal}), we provide short combinatorial proofs for completeness.

\begin{claim}[The ``fundamental identity''. {\protect Shearer \cite{Shearer}, Scott-Sokal \cite[Eq.~(3.5)]{ScottSokal}}]
\ClaimName{breve1}
For any $a \in S$, we have
$$\qdown_S ~=~ \qdown_{S \setminus \set{a}} \,-\, p_a \cdot \qdown_{S \setminus \Gamma^+(a)}.$$
\end{claim}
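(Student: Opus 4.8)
The plan is to prove the identity directly from the definition of $\qdown_S$, by partitioning the independent subsets of $S$ according to whether or not they contain the fixed element $a$.

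First I would split the defining sum:
$$
\qdown_S
 ~=~ \sumstack{I \in \Ind \\ I \subseteq S} (-1)^{|I|} p^I
 ~=~ \sumstack{I \in \Ind \\ I \subseteq S,\ a \notin I} (-1)^{|I|} p^I
 ~+~ \sumstack{I \in \Ind \\ I \subseteq S,\ a \in I} (-1)^{|I|} p^I .
$$
The first sum on the right is exactly $\qdown_{S \setminus \set{a}}$, since the independent subsets of $S$ avoiding $a$ are precisely the independent subsets of $S \setminus \set{a}$.

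Next I would handle the second sum via the substitution $I \mapsto I' := I \setminus \set{a}$. The key observation is the equivalence: for $I \ni a$, the set $I = \set{a} \cup I'$ is independent in $G$ if and only if $I' \in \Ind$ and $I' \cap \Gamma^+(a) = \emptyset$ (using $a \in \Gamma^+(a)$ and that no element of $I'$ may be a neighbor of $a$). Combining this with $I \subseteq S$, and using $a \in S$, one checks that $I \mapsto I'$ is a bijection between $\setst{ I \in \Ind }{ I \subseteq S,\ a \in I }$ and $\setst{ I' \in \Ind }{ I' \subseteq S \setminus \Gamma^+(a) }$: in one direction $I' \subseteq S \setminus \set{a}$ and $I' \cap \Gamma^+(a) = \emptyset$ give $I' \subseteq S \setminus \Gamma^+(a)$, and in the other direction $a \in S$ and $I' \subseteq S$ give $\set{a} \cup I' \subseteq S$. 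Under this bijection $|I| = |I'| + 1$ and $p^I = p_a \cdot p^{I'}$, so $(-1)^{|I|} p^I = -\, p_a \, (-1)^{|I'|} p^{I'}$, and therefore
$$
\sumstack{I \in \Ind \\ I \subseteq S,\ a \in I} (-1)^{|I|} p^I
 ~=~ -\, p_a \sumstack{I' \in \Ind \\ I' \subseteq S \setminus \Gamma^+(a)} (-1)^{|I'|} p^{I'}
 ~=~ -\, p_a \cdot \qdown_{S \setminus \Gamma^+(a)} .
$$
Adding the two pieces yields $\qdown_S = \qdown_{S \setminus \set{a}} - p_a \cdot \qdown_{S \setminus \Gamma^+(a)}$, as claimed.

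There is no real obstacle here; the only point requiring care is the verification of the bijection — namely the equivalence ``$\set{a} \cup I'$ independent $\iff$ $I' \in \Ind$ and $I' \cap \Gamma^+(a) = \emptyset$'', which follows immediately from the definition of the (undirected) neighborhood $\Gamma$, together with the bookkeeping that the containment constraints $I \subseteq S$ and $I' \subseteq S \setminus \Gamma^+(a)$ match up on the two sides.
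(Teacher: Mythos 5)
Your proposal is correct and follows exactly the paper's argument: the paper also partitions the independent subsets of $S$ according to whether they contain $a$, identifies the $a$-free part with $\qdown_{S \setminus \set{a}}$, and notes that for $a \in I$ independence of $I$ is equivalent to $I \setminus \set{a}$ being an independent subset of $S \setminus \Gamma^+(a)$, yielding the $-p_a \cdot \qdown_{S \setminus \Gamma^+(a)}$ term. Your write-up merely spells out the bijection and sign bookkeeping more explicitly; no further changes are needed.
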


\begin{proof}
Every independent set $I \subseteq S$ either contains $a$ or does not. In addition, if $a \in I$ then $I$ is independent iff $I \setminus \{a\}$ is an independent subset of $S \setminus \Gamma^+(a)$.
\end{proof}

\begin{claim}[{\protect Shearer \cite{Shearer}, Scott-Sokal \cite[Eq.~(2.52)]{ScottSokal}}]
\ClaimName{breve2}
For every $S \subseteq [n]$,
$$ \qdown_S ~=~ \sum_{Y \subseteq [n] \setminus S} q_Y.$$
\end{claim}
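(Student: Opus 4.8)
The plan is to expand the right-hand side using the definition of $q_Y$, interchange the order of summation, and collect the coefficient of each monomial $p^I$.

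Writing out the definition of $q_Y$, the right-hand side becomes
$$\sum_{Y \subseteq [n] \setminus S} q_Y ~=~ \sum_{Y \subseteq [n] \setminus S} \; \sumstack{I \in \Ind \\ Y \subseteq I} (-1)^{|I \setminus Y|} p^I.$$
I would switch the order of summation so that the outer sum ranges over independent sets $I$. For a fixed $I \in \Ind$, the pairs $(Y,I)$ occurring in the double sum are exactly those with $Y \subseteq [n] \setminus S$ and $Y \subseteq I$, i.e.\ $Y \subseteq I \setminus S$. Since $Y \subseteq I$ we have $|I \setminus Y| = |I| - |Y|$, so the coefficient of $p^I$ on the right-hand side is
$$\sum_{Y \subseteq I \setminus S} (-1)^{|I \setminus Y|} ~=~ (-1)^{|I|} \sum_{Y \subseteq I \setminus S} (-1)^{|Y|}.$$

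Now I would invoke the elementary identity $\sum_{Y \subseteq T} (-1)^{|Y|}$, which equals $1$ when $T = \emptyset$ and $0$ otherwise (it is $(1-1)^{|T|}$ expanded by the binomial theorem). Applying it with $T = I \setminus S$: the coefficient of $p^I$ equals $(-1)^{|I|}$ when $I \subseteq S$, and $0$ when $I \not\subseteq S$. Hence the right-hand side collapses to $\sum_{I \in \Ind,\, I \subseteq S} (-1)^{|I|} p^I$, which is exactly $\qdown_S$ by definition, completing the proof.

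I do not expect any real obstacle here; the only point requiring care is bookkeeping the two constraints on $Y$ — the constraint $Y \subseteq [n] \setminus S$ inherited from the outer sum and the constraint $Y \subseteq I$ inherited from the definition of $q_Y$ — so that after interchanging sums the inner index set is correctly identified as the family of subsets of $I \setminus S$. One could alternatively prove the claim by induction on $|S|$ using the fundamental identity \Claim{breve1}, but the direct double-counting argument above is cleaner and self-contained.
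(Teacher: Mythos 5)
Your proposal is correct and is essentially identical to the paper's own proof: expand $q_Y$ by definition, interchange the two sums so that the inner index is $Y \subseteq I \setminus S$, and observe that the alternating sum vanishes unless $I \subseteq S$, leaving exactly $\qdown_S$. No gaps; the bookkeeping of the constraints on $Y$ is handled correctly.
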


\begin{proof}
By definition of $q_Y$,
$$
\sum_{Y \subseteq [n] \setminus S} q_Y
 ~=~ \sum_{Y \subseteq [n] \setminus S} \sumstack{I \in \Ind \\ Y \subseteq I} (-1)^{|I \setminus Y|} p^I
 ~=~ \sum_{I \in \Ind} p^I \sum_{Y \subseteq I \setminus S} (-1)^{|I \setminus Y|}.
$$
If $I \setminus S \neq \emptyset$ then the last alternating sum is zero. Therefore, the sum
simplifies to $\sum_{I \in \Ind: I \subseteq S}  (-1)^{|I|} p^I = \qdown_S$ as required.
\end{proof}

\begin{claim}[{\protect Shearer \cite{Shearer}}]
\ClaimName{q-sum}
$$ \sum_{J \in \Ind} q_J ~=~ \sum_{S \subseteq [n]} q_S ~=~ 1.$$
\end{claim}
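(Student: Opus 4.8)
The plan is to prove the identity $\sum_{J \in \Ind} q_J = 1$ by a direct manipulation of the definition \eqref{eq:qdef}, and then to derive $\sum_{S \subseteq [n]} q_S = \sum_{J \in \Ind} q_J$ for free from the observation already noted after \eqref{eq:qdef}: $q_S = 0$ whenever $S \notin \Ind$. So the second equality is immediate, and the content is entirely in the first one.

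First I would expand the sum using \eqref{eq:qdef} and switch the order of summation:
$$
\sum_{J \in \Ind} q_J
 ~=~ \sum_{J \in \Ind} \sumstack{I \in \Ind \\ J \subseteq I} (-1)^{|I \setminus J|} p^I
 ~=~ \sum_{I \in \Ind} p^I \sum_{\substack{J \in \Ind \\ J \subseteq I}} (-1)^{|I \setminus J|}.
$$
The key step is to evaluate the inner alternating sum. Since $I$ is independent, \emph{every} subset $J \subseteq I$ is automatically independent, so the constraint ``$J \in \Ind$'' is vacuous and the inner sum ranges over all $J \subseteq I$. Then $\sum_{J \subseteq I} (-1)^{|I \setminus J|} = \sum_{k=0}^{|I|} \binom{|I|}{k} (-1)^{|I|-k} = (1-1)^{|I|}$, which is $0$ unless $I = \emptyset$, in which case it is $1$. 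Hence only the term $I = \emptyset$ survives, contributing $p^\emptyset = 1$, and we conclude $\sum_{J \in \Ind} q_J = 1$.

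I do not expect any real obstacle here; the only subtlety worth stating explicitly is that heredity of independence (subsets of an independent set are independent) is exactly what collapses the binomial sum, and that $q_S$ vanishes off $\Ind$ so that extending the outer sum from $\Ind$ to all of $2^{[n]}$ changes nothing. Alternatively, and even more briefly, one could invoke \Claim{breve2} with $S = \emptyset$: $\qdown_\emptyset = \sum_{Y \subseteq [n]} q_Y$, and $\qdown_\emptyset = (-1)^{|\emptyset|} p^\emptyset = 1$ directly from the definition of $\qdown$; combined with $q_S = 0$ for $S \notin \Ind$ this gives both equalities at once. I would present the self-contained binomial argument as the main proof since it does not rely on the ordering of the claims.
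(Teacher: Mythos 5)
Your proof is correct and essentially the same as the paper's: the paper simply sets $S=\emptyset$ in \Claim{breve2} and uses $\qdown_\emptyset=1$, and your binomial computation (swap sums, note subsets of an independent set are independent, collapse $\sum_{J\subseteq I}(-1)^{|I\setminus J|}=(1-1)^{|I|}$) is exactly the proof of \Claim{breve2} specialized to $S=\emptyset$, a route you also note yourself. The handling of the second equality via $q_S=0$ for $S\notin\Ind$ matches the paper as well.
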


\begin{proof}
Set $S = \emptyset$ in \Claim{breve2} and use the fact that $\qdown_\emptyset = 1$.
\end{proof}

\begin{claim}[{\protect Scott-Sokal \cite[Eq.~(2.48)]{ScottSokal}}]
\ClaimName{breve3}
For $I \in \Ind$, $$q_I ~=~ p^I \cdot \qdown_{[n] \setminus \Gamma^+(I)}.$$
\end{claim}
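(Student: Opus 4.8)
The plan is to prove \Claim{breve3} by a direct bijective argument comparing the defining sums. Recall that $q_I = \sum_{J \in \Ind,\, I \subseteq J} (-1)^{|J \setminus I|} p^J$ and that $\qdown_S = \sum_{K \in \Ind,\, K \subseteq S} (-1)^{|K|} p^K$. The goal is therefore to match each term of the sum defining $q_I$ with a term of $p^I \cdot \qdown_{[n] \setminus \Gamma^+(I)}$.

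First I would establish the key structural fact: for $I \in \Ind$, the assignment $K \mapsto I \cup K$ is a bijection from $\setst{ K \in \Ind }{ K \subseteq [n] \setminus \Gamma^+(I) }$ onto $\setst{ J \in \Ind }{ I \subseteq J }$, with inverse $J \mapsto J \setminus I$. For the forward direction, note that $I$ and $K$ are disjoint (a common vertex would lie in $I \subseteq \Gamma^+(I)$, contradicting $K \cap \Gamma^+(I) = \emptyset$), that $I \cup K$ has no edge inside $I$ (as $I \in \Ind$), none inside $K$ (as $K \in \Ind$), and none between $I$ and $K$ (an edge from $i \in I$ to $k \in K$ would put $k \in \Gamma(i) \subseteq \Gamma^+(I)$), so $I \cup K \in \Ind$ and clearly contains $I$. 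For the reverse direction, if $J \in \Ind$ with $I \subseteq J$, then $K := J \setminus I$ is a subset of the independent set $J$, hence independent, and $K \cap \Gamma^+(I) = \emptyset$: a vertex $k \in K \cap \Gamma^+(i)$ for some $i \in I$ would either equal $i$ (impossible, since $k \notin I$) or be adjacent to $i$, contradicting $J \in \Ind$.

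Then, under this bijection we have the disjoint union $J = I \cup K$ with $I \cap K = \emptyset$, so $|J \setminus I| = |K|$ and, by multiplicativity of $p^{(\cdot)}$ over disjoint unions, $p^J = p^I \cdot p^K$. Substituting into the sum defining $q_I$ and factoring out $p^I$ yields $q_I = \sum_{K \in \Ind,\, K \subseteq [n] \setminus \Gamma^+(I)} (-1)^{|K|} p^I p^K = p^I \sum_{K \in \Ind,\, K \subseteq [n] \setminus \Gamma^+(I)} (-1)^{|K|} p^K = p^I \cdot \qdown_{[n] \setminus \Gamma^+(I)}$, which is exactly the claim.

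There is no serious obstacle here; the only point demanding a moment of care is the verification that the correspondence $K \leftrightarrow I \cup K$ is well defined in both directions — equivalently, that independence of $I \cup K$ (with $I \in \Ind$ fixed) is equivalent to independence of $K$ together with $K \cap \Gamma^+(I) = \emptyset$. Once that equivalence is in hand, the rest is routine bookkeeping with signs and products.
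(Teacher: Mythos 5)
Your proof is correct and is essentially the paper's own argument: the paper likewise decomposes each independent $J \supseteq I$ uniquely as $J = I \cup K$ with $K$ independent and $K \cap \Gamma^+(I) = \emptyset$, then factors out $p^I$. You merely spell out the verification of this bijection in more detail than the paper does.
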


\begin{proof}
Given $I \in \Ind$, each independent set $J \supseteq I$ can be written uniquely as $J = I \union K$
where $K$ is independent and $K \intersect \Gamma^+(I) = \emptyset$. So,
$$
q_I
    ~=~ \sum_{J \in \Ind: I \subseteq J} (-1)^{\card{J \setminus I}} p^J
    ~=~ p^I \sumstack{K \in \Ind \\ K \subseteq [n] \setminus \Gamma^+(I)} (-1)^{\card{K}} p^K
    ~=~ p^I \cdot \qdown_{[n] \setminus \Gamma^+(I)}.
$$
\end{proof}

\begin{lemma}[{\protect Kolipaka-Szegedy \cite[Lemma~15]{Kolipaka}}]
\label{lem:q-expansion}
For any $I \in \Ind$
$$ q_I ~=~ p^I \cdot \sum_{S \subseteq \Gamma^+(I)} q_S.$$
\end{lemma}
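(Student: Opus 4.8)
The plan is to obtain this identity by simply composing two facts already established in the excerpt: \Claim{breve3}, which expresses $q_I$ for an independent set $I$ as a single $\qdown$-polynomial times $p^I$, and \Claim{breve2}, which expands any $\qdown_S$ as a signless sum of $q_Y$ polynomials. No new combinatorics is needed.

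Concretely, first I would invoke \Claim{breve3}: since $I \in \Ind$, we have $q_I = p^I \cdot \qdown_{[n] \setminus \Gamma^+(I)}$. Next I would apply \Claim{breve2} with $S = [n] \setminus \Gamma^+(I)$, which gives
$$\qdown_{[n] \setminus \Gamma^+(I)} ~=~ \sum_{Y \subseteq [n] \setminus ([n] \setminus \Gamma^+(I))} q_Y ~=~ \sum_{Y \subseteq \Gamma^+(I)} q_Y,$$
using that $\Gamma^+(I) \subseteq [n]$ so the double complement collapses to $\Gamma^+(I)$. Substituting this back into the expression for $q_I$ yields $q_I = p^I \sum_{S \subseteq \Gamma^+(I)} q_S$, which is the claim.

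There is essentially no obstacle here; the only points requiring a moment's care are the set-complement bookkeeping and the fact that \Claim{breve3} requires $I$ to be independent (which is exactly the hypothesis). If a self-contained argument were preferred over this two-claim composition, one could instead expand the right-hand side $p^I \sum_{S \subseteq \Gamma^+(I)} q_S$ directly from the definition of $q_S$, swap the order of summation over $S$ and over the independent sets $L \supseteq S$, and observe that the inner alternating sum $\sum_{S \subseteq L \cap \Gamma^+(I)} (-1)^{|L \setminus S|}$ vanishes unless $L \cap \Gamma^+(I) = \emptyset$; the surviving terms reassemble into $p^I \cdot \qdown_{[n] \setminus \Gamma^+(I)}$, which equals $q_I$ by \Claim{breve3} again. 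I would present the short composition of \Claim{breve3} and \Claim{breve2} as the actual proof.
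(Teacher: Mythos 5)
Your proof is correct and is essentially identical to the paper's: the paper also derives the identity by combining \Claim{breve3} ($q_I = p^I \cdot \qdown_{[n]\setminus\Gamma^+(I)}$ for $I \in \Ind$) with \Claim{breve2} applied to $S = [n]\setminus\Gamma^+(I)$. The set-complement bookkeeping you spell out is exactly what the paper leaves implicit.
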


\begin{proof}
By \Claim{breve3} and \Claim{breve2}, we have
$
q_I ~=~ p^I \cdot \qdown_{[n] \setminus \Gamma^+(I)}
    ~=~ p^I \sum_{S \subseteq \Gamma^+(I)} q_S,
$
as required.
\end{proof}

\begin{claim}[Simultaneous positivity of $q_S$ and $\qdown_S$]
\ClaimName{QAndQDown}
Assume that $p \in [0,1]^n$. Then
\begin{align}
\EquationName{UpImpliesDown}
q_I \geq 0 ~~\forall I \in \Ind
&\qquad\implies\qquad
\qdown_S \geq q_\emptyset ~~\forall S \subseteq [n]
\\
\EquationName{DownImpliesUp}
\qdown_S \geq 0 ~~\forall S \subseteq [n]
&\qquad\implies\qquad
q_I \geq p^{[n]} \cdot \qdown_{[n]} ~~\forall I \in \Ind.
\end{align}
\end{claim}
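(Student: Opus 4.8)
The plan is to prove both implications using the recursions for $\qdown_S$ from Claim~\ref{clm:breve1} and the identities relating $q$ and $\qdown$ from Claims~\ref{clm:breve2} and~\ref{clm:breve3}, together with simple inductions on $|S|$ (resp.\ the complement of $I$). For \eqref{eq:UpImpliesDown}, I would start from Claim~\ref{clm:breve2}, which gives $\qdown_S = \sum_{Y \subseteq [n]\setminus S} q_Y$. Under the hypothesis $q_I \geq 0$ for all independent $I$ (and noting $q_Y = 0$ for non-independent $Y$), every term in this sum is nonnegative, and the term $Y = \emptyset$ contributes exactly $q_\emptyset$. Hence $\qdown_S \geq q_\emptyset$ immediately. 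This direction is essentially a one-line consequence of Claim~\ref{clm:breve2}.

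For \eqref{eq:DownImpliesUp}, the hypothesis is $\qdown_S \geq 0$ for all $S \subseteq [n]$. I would use Claim~\ref{clm:breve3}: for $I \in \Ind$, $q_I = p^I \cdot \qdown_{[n]\setminus \Gamma^+(I)}$. So it suffices to show $\qdown_T \geq \qdown_{[n]}$ for every $T \subseteq [n]$ (then multiply by $p^I \geq p^{[n]}$, using $p \in [0,1]^n$ so that $p^I \geq p^{[n]}$, and apply to $T = [n]\setminus\Gamma^+(I) \supseteq$ nothing in particular — actually we want $\qdown_{[n]\setminus\Gamma^+(I)} \geq \qdown_{[n]}$, which is the $T = [n]\setminus\Gamma^+(I)$ instance). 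The inequality $\qdown_T \geq \qdown_{[n]}$ for $T \subseteq [n]$ follows by adding elements one at a time: by Claim~\ref{clm:breve1}, for $a \notin T$ we have $\qdown_{T \cup \{a\}} = \qdown_T - p_a \cdot \qdown_{(T\cup\{a\})\setminus\Gamma^+(a)}$, and since $\qdown_{(T\cup\{a\})\setminus\Gamma^+(a)} \geq 0$ by hypothesis and $p_a \geq 0$, we get $\qdown_{T\cup\{a\}} \leq \qdown_T$. Iterating from $T$ up to $[n]$ gives $\qdown_{[n]} \leq \qdown_T$, hence $q_I = p^I \qdown_{[n]\setminus\Gamma^+(I)} \geq p^I \qdown_{[n]} \geq p^{[n]} \qdown_{[n]}$, using $p^I \geq p^{[n]}$ when $\qdown_{[n]} \geq 0$ (which holds, being the $S=[n]$ case of the hypothesis) — or more carefully, if $\qdown_{[n]} < 0$ one must be a bit careful, but the hypothesis gives $\qdown_{[n]} \geq 0$, so $p^I \qdown_{[n]} \geq p^{[n]} \qdown_{[n]}$ is valid.

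The main obstacle, such as it is, is bookkeeping: making sure the monotonicity step $\qdown_T \geq \qdown_{T \cup \{a\}}$ is applied along a valid chain of subsets (each intermediate set is still a subset of $[n]$, so the hypothesis applies to all the $\qdown$ terms that appear), and handling the sign of $\qdown_{[n]}$ correctly when passing from $p^I \qdown_{[n]}$ to $p^{[n]} \qdown_{[n]}$. Both are routine once one observes that the hypothesis of \eqref{eq:DownImpliesUp} gives nonnegativity of \emph{every} $\qdown_S$, including $\qdown_{[n]}$, so no sign issues actually arise. I expect the whole argument to be short.
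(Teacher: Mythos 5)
Your proposal is correct and follows essentially the same route as the paper: \eqref{eq:UpImpliesDown} is exactly the paper's one-line use of \Claim{breve2} (together with $q_Y=0$ for $Y\notin\Ind$), and \eqref{eq:DownImpliesUp} rests, as in the paper, on \Claim{breve3} combined with $\qdown_{[n]\setminus\Gamma^+(I)} \geq \qdown_{[n]}$ and $p^I \geq p^{[n]}$, $\qdown_{[n]}\geq 0$. The only (harmless) difference is that you justify the intermediate inequality $\qdown_T \geq \qdown_{[n]}$ directly from the fundamental identity of \Claim{breve1} (monotonicity of $\qdown_S$ in $S$ under the nonnegativity hypothesis), whereas the paper obtains it from \Claim{breve2} after first noting $q_I \geq 0$ via \Claim{breve3}; both steps are equally short and valid.
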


\begin{proof}
\eqref{eq:UpImpliesDown} follows from \Claim{breve2} (since $q_Y=0$ for $Y \notin \Ind$).
To see \eqref{eq:DownImpliesUp},
first note that $q_I \geq 0$ for all $I \in \Ind$, by \Claim{breve3}.
Consequently, by \Claim{breve2}, $\qdown_{[n]} = \min_S \, \qdown_S$.
Clearly, $p^{[n]} = \min_I \, p^I$.
It follows from \Claim{breve3} again that
$q_I = p^I \cdot \qdown_{[n] \setminus \Gamma^+(I)} \geq p^{[n]} \cdot \qdown_{[n]}$.
\end{proof}

\begin{claim}
\ClaimName{SRequiv}
The two characterizations of the Shearer region,
\eqref{eq:SR1} and \eqref{eq:SR2}, are equivalent.
\end{claim}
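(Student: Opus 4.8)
The plan is to derive both set inclusions directly from \Claim{QAndQDown}, which already packages the relevant implications between positivity of the $q_I$ and positivity of the $\qdown_S$; the only additional work is to upgrade the non-strict inequalities there to strict ones, using that $p \in (0,1)^n$.

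First I would handle the containment of the set \eqref{eq:SR1} in the set \eqref{eq:SR2}. Assume $q_I(p) > 0$ for every $I \in \Ind$. In particular $q_I(p) \geq 0$ for all $I \in \Ind$, so \Equation{UpImpliesDown} of \Claim{QAndQDown} gives $\qdown_S \geq q_\emptyset$ for all $S \subseteq [n]$. Since $\emptyset \in \Ind$, the hypothesis also gives $q_\emptyset > 0$, hence $\qdown_S > 0$ for every $S$, i.e.\ $p$ lies in the set \eqref{eq:SR2}.

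Conversely, for the containment of \eqref{eq:SR2} in \eqref{eq:SR1}: assume $\qdown_S(p) > 0$ for every $S \subseteq [n]$. In particular $\qdown_S \geq 0$ for all $S$, so \Equation{DownImpliesUp} of \Claim{QAndQDown} gives $q_I \geq p^{[n]} \cdot \qdown_{[n]}$ for every $I \in \Ind$. Now $p^{[n]} = \prod_{i=1}^{n} p_i > 0$ because $p \in (0,1)^n$, and $\qdown_{[n]} > 0$ by hypothesis; therefore $q_I > 0$ for every $I \in \Ind$, i.e.\ $p$ lies in the set \eqref{eq:SR1}.

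I expect essentially no obstacle here, since all the combinatorial content is already carried by \Claim{breve2} and \Claim{breve3} (through \Claim{QAndQDown}); the one point to be careful about is passing from ``$\geq$'' to ``$>$'', which requires invoking the strict positivity of $p^{[n]}$ and of $q_\emptyset = \qdown_{[n]}$ (the latter equality being \Claim{breve2} with $S = [n]$). Should one wish to avoid citing \Claim{QAndQDown}, the same two claims can be applied inline: the forward direction is immediate from \Claim{breve2} since $q_Y = 0$ for $Y \notin \Ind$, and for the reverse direction one combines \Claim{breve3} (which gives $q_I = p^I \qdown_{[n]\setminus\Gamma^+(I)} \geq 0$) with \Claim{breve2} (which forces $\qdown_{[n]} = \min_S \qdown_S$) to conclude $q_I \geq p^{[n]} \qdown_{[n]} > 0$.
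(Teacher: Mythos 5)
Your proof is correct and follows essentially the same route as the paper: both directions are read off from \Claim{QAndQDown}, with the strict inequalities recovered from $q_\emptyset > 0$ in one direction and from $p^{[n]} > 0$ together with $\qdown_{[n]} > 0$ in the other. The extra care you take in spelling out the passage from ``$\geq$'' to ``$>$'' is fine but adds nothing beyond the paper's argument.
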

\begin{proof}
By \Claim{QAndQDown}, if $q_\emptyset > 0$ and $q_S \geq 0 \ \forall S \subseteq [n]$, then $\qdown_S > 0$ for all $S \subseteq [n]$. Conversely, if $\qdown_S > 0$ for all $S \subseteq [n]$, then $q_I \geq p^{[n]} \qdown_{[n]} > 0$ for all $I \in \Ind$. 
\end{proof}

\begin{claim}[Monotonicity of $\qdown$, {\protect Scott-Sokal \cite[Theorem~2.10]{ScottSokal}}]
\ClaimName{brevemonotone}
Let $p \in [0,1]^n$.
$$
\qdown_S(p) \geq 0 ~~\forall S \subseteq [n]
\qquad\implies\qquad
\qdown_S(p') \geq \qdown_S(p) \quad\forall 0 \leq p' \leq p ,\: \forall S \subseteq [n].
$$
\end{claim}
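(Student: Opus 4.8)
The plan is to reduce to the case where $p$ and $p'$ differ in a single coordinate and then apply the fundamental identity, \Claim{breve1}, which is a polynomial identity valid for all $p \in \RR^n$.

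First I would establish the one-coordinate step: fix $a \in [n]$ and suppose $\tilde p$ agrees with $p$ in every coordinate except possibly the $a$-th, where $\tilde p_a \le p_a$. I claim that if $\qdown_S(p) \ge 0$ for all $S \subseteq [n]$, then $\qdown_S(\tilde p) \ge \qdown_S(p)$ for all $S$. If $a \notin S$, then $\qdown_S$ depends only on the variables indexed by $S$, so $\qdown_S(\tilde p) = \qdown_S(p)$ and there is nothing to prove. If $a \in S$, then by \Claim{breve1}, $\qdown_S = \qdown_{S \setminus \{a\}} - p_a \cdot \qdown_{S \setminus \Gamma^+(a)}$; since $a \notin S \setminus \{a\}$ and $a \notin S \setminus \Gamma^+(a)$, both $\qdown_{S \setminus \{a\}}$ and $\qdown_{S \setminus \Gamma^+(a)}$ are unchanged when we replace $p$ by $\tilde p$. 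Hence $\qdown_S(\tilde p) = \qdown_{S \setminus \{a\}}(p) - \tilde p_a \, \qdown_{S \setminus \Gamma^+(a)}(p)$, and using $\tilde p_a \le p_a$ together with $\qdown_{S \setminus \Gamma^+(a)}(p) \ge 0$ gives $\qdown_S(\tilde p) \ge \qdown_{S \setminus \{a\}}(p) - p_a \, \qdown_{S \setminus \Gamma^+(a)}(p) = \qdown_S(p)$. In particular $\qdown_S(\tilde p) \ge 0$ as well, so the hypothesis ``$\qdown_S \ge 0$ for all $S$'' is preserved by this operation.

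Then I would iterate the one-coordinate step. Given $0 \le p' \le p$ with $\qdown_S(p) \ge 0$ for all $S$, lower the coordinates $1, 2, \ldots, n$ one at a time, obtaining points $p = p^{(0)} \ge p^{(1)} \ge \cdots \ge p^{(n)} = p'$ where $p^{(k)}$ agrees with $p'$ on the first $k$ coordinates and with $p$ on the rest. By the one-coordinate step applied with $a = k+1$, the invariant $\qdown_S(p^{(k)}) \ge 0$ (for all $S$) is maintained for each $k$ --- so the hypothesis of the step is available at every stage --- and $\qdown_S(p^{(k+1)}) \ge \qdown_S(p^{(k)})$. Telescoping this chain of inequalities yields $\qdown_S(p') \ge \qdown_S(p)$ for every $S \subseteq [n]$, which is the claim.

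I do not anticipate a genuine obstacle here. The one point that needs care is that when invoking ``$\qdown_{S \setminus \Gamma^+(a)} \ge 0$'' in the $k$-th stage we are using the inductively maintained invariant at $p^{(k)}$ rather than the original hypothesis at $p$; this is exactly why the step must carry along the preservation of nonnegativity, not merely the monotonicity of values.
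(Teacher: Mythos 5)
Your proposal is correct and follows essentially the same route as the paper: the paper also reduces to a single coordinate, uses \Claim{breve1} (phrased there via the partial derivative $\partial \qdown_S/\partial p_i = -\qdown_{S\setminus\Gamma^+(i)}$ and linearity in $p_i$) together with nonnegativity of $\qdown_{S\setminus\Gamma^+(i)}(p)$, and then handles multiple coordinates by induction. Your explicit remark that the one-coordinate step preserves the nonnegativity invariant is exactly the point the paper leaves implicit in its "handled by induction" sentence, so there is no gap.
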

\begin{proof}
First consider the case that $p$ and $p'$ differ only in coordinate $i$.
For any $S \subseteq [n]$, \Claim{breve1} implies that
$\frac{\partial}{\partial p_i} \qdown_S(p) = - \qdown_{S \setminus \Gamma^+(i)}(p)$
and $\frac{\partial^2}{\partial p_i^2} \qdown_S = 0$.
Thus,
$$
\qdown_S(p')
 ~=~ \qdown_S(p) + (p_i-p'_i) \cdot \qdown_{S \setminus \Gamma^+(i)}(p)
 ~\geq~ \qdown_S(p).
$$
The case that $p'$ and $p$ differ in multiple coordinates is handled by induction.
\end{proof}

\begin{claim}[Log-submodularity of $\qdown_S$, {\protect Scott-Sokal \cite[Corollary~2.27]{ScottSokal}}]
\ClaimName{breve-submodular}
For any $p \in \cS$ and $A,B \subseteq [n]$, we have
$\qdown_A \cdot \qdown_B \geq \qdown_{A \union B} \cdot \qdown_{A \intersect B}$.
\end{claim}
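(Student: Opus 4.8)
The plan is to derive the full log-submodularity of $S \mapsto \qdown_S$ from its ``local'' (diminishing-returns) form and to prove the latter by induction, using essentially only the fundamental identity \Claim{breve1} together with the positivity of all $\qdown_S$ on $\cS$. Since $\qdown_S(p) > 0$ for every $S \subseteq [n]$ when $p \in \cS$, the set function $f(S) = \log \qdown_S$ is well-defined and real-valued, and it is a classical fact that a real-valued set function is submodular --- here meaning $\qdown_A \qdown_B \geq \qdown_{A \cup B} \qdown_{A \cap B}$ for all $A, B$ --- if and only if it satisfies the local inequality
$$ \qdown_{C \cup \{a\}} \cdot \qdown_{C \cup \{b\}} ~\geq~ \qdown_{C \cup \{a,b\}} \cdot \qdown_C \qquad \text{for all } C \subseteq [n] \text{ and all distinct } a, b \notin C. $$
So it suffices to prove this local inequality, which I would do by induction on $|C|$.

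For the inductive step I apply \Claim{breve1} at the element $a$ to both $\qdown_{C \cup \{a\}}$ and $\qdown_{C \cup \{a,b\}}$. Writing $C_a = C \setminus \Gamma^+(a)$, this gives $\qdown_{C \cup \{a\}} = \qdown_C - p_a \qdown_{C_a}$, together with $\qdown_{C \cup \{a,b\}} = \qdown_{C \cup \{b\}} - p_a \qdown_{C_a}$ when $a \sim b$ and $\qdown_{C \cup \{a,b\}} = \qdown_{C \cup \{b\}} - p_a \qdown_{C_a \cup \{b\}}$ when $a \not\sim b$. Substituting into $\qdown_{C \cup \{a\}} \qdown_{C \cup \{b\}} - \qdown_{C \cup \{a,b\}} \qdown_C$ and cancelling the $\qdown_C \qdown_{C \cup \{b\}}$ terms, the whole difference collapses to $p_a$ times a single product-difference. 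If $a \sim b$ this is $p_a \qdown_{C_a}\big(\qdown_C - \qdown_{C \cup \{b\}}\big)$, and \Claim{breve1} applied at $b$ rewrites $\qdown_C - \qdown_{C \cup \{b\}} = p_b \qdown_{C \setminus \Gamma^+(b)}$, so the difference equals $p_a p_b \, \qdown_{C \setminus \Gamma^+(a)}\, \qdown_{C \setminus \Gamma^+(b)} \geq 0$ by positivity on $\cS$. If $a \not\sim b$ the difference is exactly $p_a\big(\qdown_{C_a \cup \{b\}} \qdown_C - \qdown_{C_a} \qdown_{C \cup \{b\}}\big)$ --- i.e.\ $p_a$ times a diminishing-returns inequality for adjoining $b$ over the nested pair $C_a \subseteq C$. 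That inequality follows by telescoping the local inequality of the induction hypothesis over the elements of $C \setminus C_a = C \cap \Gamma(a)$, each such application using a center set that is a proper subset of $C$; when $C \cap \Gamma(a) = \emptyset$ (in particular when $C = \emptyset$) this product-difference vanishes identically, so the induction bottoms out.

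The point to watch --- and the reason a direct induction on $|A \triangle B|$ fails --- is that expanding $\qdown_A$ and $\qdown_{A \cup B}$ at some $a \in A \setminus B$ yields a sign-definite residual only when $A$ and $B$ each exceed their intersection by a single element; for general $A, B$ the residual $p_a\big(\qdown_{(A \cup B) \setminus \Gamma^+(a)} \qdown_{A \cap B} - \qdown_{A \setminus \Gamma^+(a)} \qdown_B\big)$ is not itself a log-submodularity expression and is not readily closed off by the induction. Routing everything through the local/diminishing-returns form is precisely what makes the residual reduce to a strictly smaller instance of the same inequality (the ground set drops from $C \cup \{a,b\}$ to $C \cup \{b\}$ in the $a \not\sim b$ case), so the induction is well-founded; the remaining work is routine manipulation of \Claim{breve1}, in the same combinatorial spirit as the other proofs in this subsection.
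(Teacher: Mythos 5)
Your proof is correct, and it pivots on a different (though equivalent) intermediate inequality than the paper's. You reduce the claim to the local exchange inequality $\qdown_{C\cup\{a\}}\cdot\qdown_{C\cup\{b\}} \geq \qdown_{C\cup\{a,b\}}\cdot\qdown_C$ via the standard characterization of submodular set functions (legitimate here because $\qdown_S>0$ throughout $\cS$, so $\log\qdown_S$ is real-valued), and you prove that local form by induction on $|C|$ with a case split on adjacency: when $a\sim b$, two applications of \Claim{breve1} collapse the difference to $p_ap_b\,\qdown_{C\setminus\Gamma^+(a)}\,\qdown_{C\setminus\Gamma^+(b)}\geq 0$, and when $a\not\sim b$ the residual is $p_a$ times a diminishing-returns comparison across the nested pair $C\setminus\Gamma^+(a)\subseteq C$, which you close by telescoping the inductive hypothesis over $C\cap\Gamma(a)$ (all center sets being proper subsets of $C$, so the induction is well-founded). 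The paper instead proves the equivalent nested-ratio inequality \Equation{breve-ratios}, namely $\qdown_S/\qdown_{S\setminus\{a\}} \geq \qdown_T/\qdown_{T\setminus\{a\}}$ for $a\in S\subseteq T$, by induction on $|T|$: one application of \Claim{breve1} at $a$ in both $S$ and $T$, a telescoping of the hypothesis over the elements of $S\cap\Gamma(a)$, and one use of the monotonicity of $\qdown$ in the set; the claim then follows by telescoping over $A\setminus B$. The ingredients are the same --- \Claim{breve1}, positivity on $\cS$, and telescoping the inductive hypothesis over the neighborhood of the distinguished element --- so the arguments are close cousins; yours trades the paper's single monotonicity step and its direct two-set formulation for an adjacency case split plus the (black-box but classical) equivalence between lattice submodularity and its local two-element form, and both remain elementary and self-contained. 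Your closing remark about why a naive induction on $|A\triangle B|$ stalls is accurate and explains why both proofs route through a one-element form of the inequality.
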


\begin{proof}
We claim that for any $a \in S \subseteq T$, we have 
\begin{equation}
\EquationName{breve-ratios}
\frac{\qdown_S}{\qdown_{S \setminus \{a\}}} ~\geq~ \frac{\qdown_T}{\qdown_{T \setminus \{a\}}}.
\end{equation}
By induction, this implies that for any $R \subseteq S$, $\frac{\qdown_S}{\qdown_{S \setminus R}} \geq \frac{\qdown_T}{\qdown_{T \setminus R}}$.
We obtain the claim above by setting $S = A$, $T = A \cup B$, and $R = A \setminus B$.

We prove \Equation{breve-ratios} again by induction, on $|T|$. For $|T|=1$, the statement is trivial. Let $|T|>1$. By \Claim{breve1}, we have
$$ \qdown_S ~=~ \qdown_{S \setminus \{a\}} - p_a \qdown_{S \setminus \Gamma^+(a)} $$
and
$$ \qdown_T ~=~ \qdown_{T \setminus \{a\}} - p_a \qdown_{T \setminus \Gamma^+(a)}. $$
Let us denote $S \cap \Gamma^+(a) = \{ a, s_1,\ldots,s_k \}$. We apply \Equation{breve-ratios} to strict subsets of $S$ and $T$, to obtain
$$ \frac{\qdown_{S \setminus \Gamma^+(a)}}{\qdown_{S \setminus \{a\}}}
 ~=~ \prod_{i=1}^{k} \frac{\qdown_{S \setminus \{a,s_1,\ldots,s_{i-1},s_i\}}}{\qdown_{S \setminus \{a,s_1,\ldots,s_{i-1}\}}}
 ~\leq~ \prod_{i=1}^{k} \frac{\qdown_{T \setminus \{a,s_1,\ldots,s_{i-1},s_i\}}}{\qdown_{T \setminus \{a,s_1,\ldots,s_{i-1}\}}}
  ~=~ \frac{\qdown_{T \setminus (S \cap \Gamma^+(a))}}{\qdown_{T \setminus \{a\}} } 
 ~\leq~ \frac{\qdown_{T \setminus \Gamma^+(a)}}{\qdown_{T \setminus \{a\}} } $$
where in the last step we used the monotonicity of $\qdown_T$ in $T$ (again from \Claim{breve1}). This implies \Equation{breve-ratios}:
$$ \frac{\qdown_{S}}{\qdown_{S \setminus \{a\}}} ~=~ 1 - p_a \frac{\qdown_{S \setminus \Gamma^+(a)}}{\qdown_{S \setminus \{a\}}}
 ~\geq~ 1 - p_a \frac{\qdown_{T \setminus \Gamma^+(a)}}{\qdown_{T \setminus \{a\}}}
 ~=~ \frac{\qdown_{T}}{\qdown_{T \setminus \{a\}}}. $$
\end{proof}

\begin{claim}[Log-submodularity of $q_S$]
\ClaimName{submodular}
For any $p \in \cS$ and $A,B \subseteq [n]$, we have
$q_A \cdot q_B \geq q_{A \union B} \cdot q_{A \intersect B}$.
\end{claim}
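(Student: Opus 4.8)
The plan is to reduce the claim to the log-submodularity of $\qdown$ (\Claim{breve-submodular}) via the product formula $q_I = p^I \cdot \qdown_{[n] \setminus \Gamma^+(I)}$ of \Claim{breve3}, after first disposing of the cases in which one of the four sets $A, B, A \cup B, A \cap B$ fails to be independent.

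First I would handle those degenerate cases. Since $p \in \cS$, every $q_S$ is nonnegative ($q_I > 0$ for $I \in \Ind$ and $q_S = 0$ otherwise), so the right-hand side $q_{A \cup B} \cdot q_{A \cap B}$ is always $\geq 0$. If $A \notin \Ind$, then $A \cup B \supseteq A$ is also not independent, so $q_{A \cup B} = 0$ and the inequality reads $0 \geq 0$; the case $B \notin \Ind$ is symmetric. If $A, B \in \Ind$ but $A \cup B \notin \Ind$, then again $q_{A \cup B} = 0$ while $q_A q_B \geq 0$, so the inequality holds. Thus we may assume $A, B, A \cup B \in \Ind$, and then $A \cap B \subseteq A$ is independent too.

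In this remaining case all four quantities are given by \Claim{breve3}. Write $X = [n] \setminus \Gamma^+(A)$ and $Y = [n] \setminus \Gamma^+(B)$. Using that $\Gamma^+$ distributes over unions (so $\Gamma^+(A \cup B) = \Gamma^+(A) \cup \Gamma^+(B)$, hence $[n] \setminus \Gamma^+(A \cup B) = X \cap Y$), that $\Gamma^+(A \cap B) \subseteq \Gamma^+(A) \cap \Gamma^+(B)$ (hence $[n] \setminus \Gamma^+(A \cap B) \supseteq X \cup Y$), and the elementary identity $p^A \cdot p^B = p^{A \cup B} \cdot p^{A \cap B}$, the inequality $q_A q_B \geq q_{A \cup B} q_{A \cap B}$ becomes
\[
\qdown_X \cdot \qdown_Y ~\geq~ \qdown_{X \cap Y} \cdot \qdown_{[n] \setminus \Gamma^+(A \cap B)}.
\]
I would then finish by chaining two facts: \Claim{breve-submodular} gives $\qdown_X \cdot \qdown_Y \geq \qdown_{X \cup Y} \cdot \qdown_{X \cap Y}$, and the monotonicity of $\qdown$ under enlarging its index set — which follows from \Claim{breve1} since every $\qdown_S$ is positive for $p \in \cS$ — gives $\qdown_{X \cup Y} \geq \qdown_{[n] \setminus \Gamma^+(A \cap B)}$ because $X \cup Y \subseteq [n] \setminus \Gamma^+(A \cap B)$. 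As $\qdown_{X \cap Y} > 0$, multiplying through completes the argument.

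The only real subtlety — what passes for the ``main obstacle'' here — is that $\Gamma^+$ does not commute with intersection, so the term $q_{A \cap B} = p^{A\cap B}\qdown_{[n]\setminus\Gamma^+(A\cap B)}$ does not directly match the term $\qdown_{X \cup Y}$ produced by log-submodularity of $\qdown$; this gap is bridged precisely by the monotonicity of $\qdown$ in its subscript. Everything else is bookkeeping with the product formula and the degenerate cases.
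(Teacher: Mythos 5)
Your proposal is correct and follows essentially the same route as the paper's proof: dispose of the non-independent cases (the paper does this in one line by noting the right-hand side vanishes unless $A \cup B \in \Ind$), convert all four terms via \Claim{breve3} together with $p^A p^B = p^{A\cup B}p^{A\cap B}$, apply \Claim{breve-submodular}, and bridge the mismatch $\Gamma^+(A\cap B) \subseteq \Gamma^+(A)\cap\Gamma^+(B)$ by the monotonicity of $\qdown_S$ in $S$. No substantive differences.
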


\begin{proof}
We can assume $A \cup B \in \Ind$; otherwise the right-hand side is zero. By \Claim{breve3}, we have
$$ q_{A} \cdot  q_{B} ~=~ p^{A} \qdown_{[n] \setminus \Gamma^+(A)}  \cdot p^{B} \qdown_{[n] \setminus \Gamma^+(B)}. $$
By \Claim{breve-submodular},
$$\qdown_{[n] \setminus \Gamma^+(A)} \cdot \qdown_{[n] \setminus \Gamma^+(B)} 
 ~\geq~ \qdown_{[n] \setminus (\Gamma^+(A) \cup \Gamma^+(B))} \cdot \qdown_{[n] \setminus
 (\Gamma^+(A) \cap \Gamma^+(B))}.$$
Here we use the fact that $\Gamma^+(A) \cup \Gamma^+(B) = \Gamma^+(A \cup B)$, and $\Gamma^+(A) \cap \Gamma^+(B) \supseteq \Gamma^+(A \cap B)$. Therefore, by the monotonicity of $\qdown_S$,
$$\qdown_{[n] \setminus \Gamma^+(A)} \cdot \qdown_{[n] \setminus \Gamma^+(B)} 
 ~\geq~ \qdown_{[n] \setminus \Gamma^+(A \cup B)}
   \cdot \qdown_{[n] \setminus \Gamma^+(A \cap B)}.$$
Also, $p^A p^B = p^{A \cup B} p^{A \cap B}$.
Using \Claim{breve3} one more time, we obtain
$$ q_{A} \cdot  q_{B} 
 ~\geq~ p^{A \cup B} \qdown_{[n] \setminus \Gamma^+(A \cup B)}  \cdot p^{A \cap B} \qdown_{[n] \setminus \Gamma^+(A \cap B)}
 ~=~ q_{A \cup B} \cdot q_{A \cap B}.$$
\end{proof}

\begin{claim}
\ClaimName{sumqJ}
Suppose that $p \in \cS$.
For any set $S \subseteq [n]$,
$$
\sum_{J \subseteq S} \frac{q_J}{q_\emptyset}
    ~\leq~ \prod_{j \in S} \Big( 1 + \frac{q_{\set{j}}}{q_\emptyset} \Big).
$$
\end{claim}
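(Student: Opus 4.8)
The plan is to prove the claim by induction on $|S|$, peeling off one element at a time and using the log-submodularity of $q_S$ (Claim~\ref{clm:submodular}) at each step. The base case $S = \emptyset$ is trivial since both sides equal $1$. For the inductive step, fix $a \in S$ and split the sum over $J \subseteq S$ according to whether $a \in J$:
\begin{equation*}
\sum_{J \subseteq S} \frac{q_J}{q_\emptyset}
 ~=~ \sum_{J \subseteq S \setminus \{a\}} \frac{q_J}{q_\emptyset}
   ~+~ \sum_{J \subseteq S \setminus \{a\},\, J \cup \{a\} \in \Ind} \frac{q_{J \cup \{a\}}}{q_\emptyset}.
\end{equation*}
The first term is bounded by the inductive hypothesis applied to $S \setminus \{a\}$. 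For the second term, I would use log-submodularity: for each $J \subseteq S \setminus \{a\}$ with $J \cup \{a\} \in \Ind$, applying Claim~\ref{clm:submodular} with $A = J \cup \{a\}$ and $B = \{a\}$ gives $q_J \cdot q_{\{a\}} \geq q_{J \cup \{a\}} \cdot q_\emptyset$ — wait, that is the wrong pairing. The correct pairing is $A = J$, $B = \{a\}$ (so $A \cup B = J \cup \{a\}$ and $A \cap B = \emptyset$), giving $q_J \cdot q_{\{a\}} \geq q_{J \cup \{a\}} \cdot q_\emptyset$, i.e. $q_{J \cup \{a\}}/q_\emptyset \leq (q_{\{a\}}/q_\emptyset)(q_J/q_\emptyset)$. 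Summing over the valid $J$ and then extending the sum to all $J \subseteq S \setminus \{a\}$ (legitimate since the dropped terms are nonnegative, as $q_J \geq 0$ for $p \in \cS$), the second term is at most $\frac{q_{\{a\}}}{q_\emptyset} \sum_{J \subseteq S \setminus \{a\}} \frac{q_J}{q_\emptyset}$.

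Combining, we get
\begin{equation*}
\sum_{J \subseteq S} \frac{q_J}{q_\emptyset}
 ~\leq~ \Big(1 + \frac{q_{\{a\}}}{q_\emptyset}\Big) \sum_{J \subseteq S \setminus \{a\}} \frac{q_J}{q_\emptyset}
 ~\leq~ \Big(1 + \frac{q_{\{a\}}}{q_\emptyset}\Big) \prod_{j \in S \setminus \{a\}} \Big(1 + \frac{q_{\{j\}}}{q_\emptyset}\Big)
 ~=~ \prod_{j \in S} \Big(1 + \frac{q_{\{j\}}}{q_\emptyset}\Big),
\end{equation*}
where the second inequality is the inductive hypothesis. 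One should note that $q_\emptyset > 0$ since $p \in \cS$, so all the divisions are legitimate, and $q_J \geq 0$ for all $J \in \Ind$ (these follow from Claim~\ref{clm:breve3} together with \eqref{eq:UpImpliesDown}, or directly from the characterization of $\cS$).

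The main obstacle — really the only subtle point — is getting the log-submodularity applied with the right sets: one wants the pair $(J, \{a\})$ with disjoint union, not $(J \cup \{a\}, \{a\})$, and one must make sure $J \cup \{a\}$ is independent so that $q_{J \cup \{a\}}$ is the genuine polynomial value rather than zero (if it is zero the bound only gets easier). Everything else is routine bookkeeping with nonnegativity of the $q_J$ to justify enlarging the index set of the sum.
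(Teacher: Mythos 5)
Your proof is correct and follows essentially the same route as the paper: induction on $|S|$, peeling off one element $a \in S$ and applying \Claim{submodular} with the disjoint pair $(J,\{a\})$ to get $q_{J \cup \{a\}} \cdot q_\emptyset \leq q_{\{a\}} \cdot q_J$, then combining with the inductive hypothesis. Your side remarks (nonnegativity of the $q_J$, the convention $q_T = 0$ for $T \notin \Ind$, and $q_\emptyset > 0$ in $\cS$) are exactly the bookkeeping the paper's argument implicitly relies on.
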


\begin{proof}
The proof is by induction on $S$, the case $\card{S} \leq 1$ being trivial.
Fix any $s \in S$.
\Claim{submodular} implies that $q_{J+s} \cdot q_\emptyset \leq q_{\set{s}} \cdot q_J$
for any $J \subseteq S \setminus \{s\}$.
Summing over $J$ yields
$$
\sum_{J \subseteq S \setminus \{s\}} \frac{q_{J+s}}{q_\emptyset}
    ~\leq~ \frac{q_{\set{s}}}{q_\emptyset} \sum_{J \subseteq S \setminus \{s\}} \frac{q_J}{q_\emptyset}.
$$
Adding $\sum_{J \subseteq S \setminus \{s\}} \frac{q_{J}}{q_\emptyset}$ to both sides yields
$$
\sum_{J \subseteq S} \frac{q_{J}}{q_\emptyset}
    ~\leq~ \Big(1 + \frac{q_{\set{s}}}{q_\emptyset}\Big)
           \sum_{J \subseteq S \setminus \{s\}} \frac{q_J}{q_\emptyset}.
$$
The claim follows by induction.
\end{proof}

\begin{claim}
\ClaimName{qSingleton}
If $q_\emptyset > 0$ then 
$\frac{q_{\{i\}}}{q_\emptyset} = \frac{\qdown_{[n] \setminus \{i\}}}{\qdown_{[n]}} - 1$.
\end{claim}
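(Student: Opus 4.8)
The plan is to assemble this identity directly from the three structural facts about the polynomials $\qdown_S$ and $q_I$ already established earlier in this section: \Claim{breve1} (the fundamental identity), \Claim{breve2} (the relation $\qdown_S = \sum_{Y \subseteq [n]\setminus S} q_Y$), and \Claim{breve3} (the factorization $q_I = p^I \cdot \qdown_{[n]\setminus\Gamma^+(I)}$).

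First I would observe that $q_\emptyset = \qdown_{[n]}$: this is immediate from \Claim{breve2} with $S = [n]$, since then the index set $Y \subseteq [n] \setminus [n] = \emptyset$ forces $Y = \emptyset$. Next I would apply the fundamental identity \Claim{breve1} with $a = i$ and $S = [n]$, which gives
$$
\qdown_{[n]} ~=~ \qdown_{[n] \setminus \{i\}} \,-\, p_i \cdot \qdown_{[n] \setminus \Gamma^+(i)}.
$$
Then I would apply \Claim{breve3} with $I = \{i\}$ (a singleton, hence independent) to recognize that the last term is exactly $q_{\{i\}}$, i.e. $q_{\{i\}} = p_i \cdot \qdown_{[n]\setminus\Gamma^+(i)}$. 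Substituting yields $\qdown_{[n]} = \qdown_{[n]\setminus\{i\}} - q_{\{i\}}$, or equivalently $q_{\{i\}} = \qdown_{[n]\setminus\{i\}} - \qdown_{[n]}$. Dividing through by $q_\emptyset = \qdown_{[n]}$ (which is nonzero by hypothesis) gives the claimed formula.

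There is essentially no hard step here; the only thing to get right is invoking \Claim{breve1} at $S = [n]$ and matching the resulting term against \Claim{breve3}. If one preferred a slightly more self-contained route, one could instead expand $q_{\{i\}}$ and $\qdown_{[n]}, \qdown_{[n]\setminus\{i\}}$ directly from their definitions as alternating sums over independent sets and split the sum defining $\qdown_{[n]\setminus\{i\}}$ according to whether the independent set meets $\Gamma^+(i)$, but routing through the named claims is cleaner and keeps the argument to two lines.
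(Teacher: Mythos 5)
Your proof is correct, but it routes through different lemmas than the paper does. The paper's proof is a single application of \Claim{breve2}: taking $S = [n]\setminus\{i\}$, the only surviving terms in $\sum_{Y \subseteq [n]\setminus S} q_Y$ are $Y=\emptyset$ and $Y=\{i\}$, so $\qdown_{[n]\setminus\{i\}} = q_\emptyset + q_{\{i\}}$, and dividing by $q_\emptyset = \qdown_{[n]}$ finishes immediately. You instead combine \Claim{breve1} at $S=[n]$, $a=i$ with \Claim{breve3} at $I=\{i\}$ to get $q_{\{i\}} = p_i\,\qdown_{[n]\setminus\Gamma^+(i)} = \qdown_{[n]\setminus\{i\}} - \qdown_{[n]}$, which is the same identity derived from the other side: your route makes the factor $p_i$ and the neighborhood $\Gamma^+(i)$ appear explicitly and then cancel, whereas the paper's route never mentions them. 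Both arguments are two lines and fully rigorous; the paper's is marginally more economical (one named claim instead of two, plus the observation $q_\emptyset=\qdown_{[n]}$ which both proofs need), while yours doubles as a sanity check that \Claim{breve1} and \Claim{breve3} together reproduce the relevant special case of \Claim{breve2}. Your suggested fallback of expanding the alternating sums from the definitions would also work, but is unnecessary given the named claims.
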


\begin{proof}
By \Claim{breve2},
$$ 1+\frac{q_{\{i\}}}{q_\emptyset} = \frac{q_\emptyset + q_{\{i\}}}{q_\emptyset}
 = \frac{\qdown_{[n] \setminus \{i\}}}{\qdown_{[n]}}.$$
\end{proof}

\begin{claim}[{\protect Kolipaka-Szegedy \cite[Theorem 5]{Kolipaka}}]
\ClaimName{ShearerSlack}
If $(1+\epsilon) p \in \cS$ then
$\frac{q_{\{i\}}}{q_\emptyset} \leq \frac{1}{\epsilon}$ for each $i \in [n]$.
\end{claim}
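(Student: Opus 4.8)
The plan is to deduce the bound from the fundamental identity (\Claim{breve1}) together with monotonicity of $\qdown$ (\Claim{brevemonotone}), by perturbing $p$ in the single coordinate $i$.

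First I would record what $(1+\epsilon)p \in \cS$ gives. Since $p \leq (1+\epsilon)p$ coordinatewise and $\qdown_S\big((1+\epsilon)p\big) > 0$ (hence $\geq 0$) for every $S$, \Claim{brevemonotone} yields $\qdown_S(p) \geq \qdown_S\big((1+\epsilon)p\big) > 0$ for all $S$, so $p \in \cS$; in particular $\qdown_{[n]}(p) > 0$ and $\qdown_{[n]\setminus\Gamma^+(i)}(p) \geq 0$. The same argument applied to the vector $\hat p$ defined by $\hat p_i = (1+\epsilon)p_i$ and $\hat p_j = p_j$ for $j \neq i$ (which also satisfies $\hat p \leq (1+\epsilon)p$) shows $\hat p \in \cS$, hence $\qdown_{[n]}(\hat p) > 0$.

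Next I would write the target ratio out. By \Claim{breve3}, $q_{\{i\}}(p) = p_i\,\qdown_{[n]\setminus\Gamma^+(i)}(p)$ and $q_\emptyset(p) = \qdown_{[n]}(p)$, so it suffices to show $\epsilon\, p_i\,\qdown_{[n]\setminus\Gamma^+(i)}(p) < \qdown_{[n]}(p)$. Apply \Claim{breve1} with $S = [n]$ and $a = i$, once at $p$ and once at $\hat p$. The point of choosing $\hat p$ is that $\qdown_{[n]\setminus\{i\}}$ does not depend on coordinate $i$ and $\qdown_{[n]\setminus\Gamma^+(i)}$ does not depend on any coordinate of $\Gamma^+(i)$, so these two terms take the same value at $p$ and at $\hat p$; therefore
$$\qdown_{[n]}(\hat p) ~=~ \qdown_{[n]\setminus\{i\}}(p) - (1+\epsilon)\,p_i\,\qdown_{[n]\setminus\Gamma^+(i)}(p), \qquad \qdown_{[n]}(p) ~=~ \qdown_{[n]\setminus\{i\}}(p) - p_i\,\qdown_{[n]\setminus\Gamma^+(i)}(p).$$
Subtracting, $\qdown_{[n]}(p) - \qdown_{[n]}(\hat p) = \epsilon\, p_i\,\qdown_{[n]\setminus\Gamma^+(i)}(p)$, and since $\qdown_{[n]}(\hat p) > 0$ this is strictly less than $\qdown_{[n]}(p)$. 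Dividing by $\qdown_{[n]}(p) > 0$ gives $q_{\{i\}}(p)/q_\emptyset(p) < 1/\epsilon$.

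There is essentially no obstacle beyond spotting the right perturbation: scaling only the $i$-th coordinate is exactly what keeps the ``$\qdown_{[n]\setminus\{i\}}$'' and ``$\qdown_{[n]\setminus\Gamma^+(i)}$'' terms of the fundamental identity frozen, after which the claim is one subtraction. The only care needed is checking positivity/nonnegativity of the relevant $\qdown$ values, which is precisely what $p,\hat p \in \cS$ provides.
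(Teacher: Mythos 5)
Your proposal is correct and takes essentially the same route as the paper's proof: both arguments perturb only the $i$-th coordinate, exploit that $\breve{q}_{[n]}$ is affine in $p_i$ while $\breve{q}_{[n]\setminus\{i\}}$ and $\breve{q}_{[n]\setminus\Gamma^+(i)}$ are unaffected, and use positivity of $\breve{q}_{[n]}$ at the point with $p_i$ scaled by $(1+\epsilon)$ (justified by monotonicity). The only cosmetic difference is that you conclude via \Claim{breve3}, whereas the paper compares with the value at $p_i=0$ and invokes \Claim{qSingleton}.
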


\begin{proof}
Note that $\qdown_{[n] \setminus \{i\}}(p)$ does not depend on $p_i$, while $\qdown_{[n]}(p)$ is
linear in $p_i$. Also, both quantities are equal at $p_i=0$:
we have $\qdown_{[n]}(p_1,\ldots, 0 \cdot p_i, \ldots, p_n) = \qdown_{[n] \setminus \{i\}}(p)$.
Since $(1+\epsilon) p \in \cS$, we know that
$\qdown_{[n]}(p_1,\ldots, (1+\epsilon) p_i, \ldots, p_n) \geq 0$. By linearity, $\qdown_{[n]}(p)
\geq \frac{\epsilon}{1+\epsilon} \qdown_{[n] \setminus \{i\}}(p)$.
\Claim{qSingleton} then implies that $\frac{q_{\{i\}}}{q_\emptyset} \leq \frac{1}{\epsilon}$.
\end{proof}

\subsubsection{Connection to stable set sequences}

Kolipaka and Szegedy showed that stable set sequences relate to the independence polynomials $q_S$.
The following is the crucial upper-bound for stable set sequences when
Shearer's criterion holds.
In fact, this result is subsumed by \Lemma{Shearer-sum-equality} but we present
the upper bound first, with a shorter proof. 

\begin{lemma}[Kolipaka-Szegedy~{\protect \cite{Kolipaka}}]
\LemmaName{Shearer-sum-bound}
If $q_S \geq 0$ for all $S \subseteq [n]$ and $q_\emptyset > 0$, then
$$ \sum_{\cI \in \Stab_\ell(J)} \!\!\!\! p_\cI ~\leq~ \frac{q_{J}}{q_\emptyset}
\qquad\forall J \in \Ind, \forall \ell \geq 1.
$$
\end{lemma}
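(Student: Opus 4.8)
The plan is to prove the bound by induction on the length $\ell$. The two ingredients are the recursive structure of stable set sequences and the Kolipaka--Szegedy identity $q_I = p^I \sum_{S \subseteq \Gamma^+(I)} q_S$ from \Lemma{q-expansion}, which will do essentially all of the work.

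For the base case $\ell = 1$, I would note that $\Stab_1(J) = \{(J)\}$ and $p_{(J)} = p^J$, so the claim reduces to $p^J \le q_J / q_\emptyset$. By \Lemma{q-expansion} this is the inequality $p^J \le p^J \sum_{S \subseteq \Gamma^+(J)} q_S/q_\emptyset$, which holds because $\emptyset \subseteq \Gamma^+(J)$, so the term $S = \emptyset$ alone contributes $q_\emptyset$ to the sum, and all remaining terms $q_S$ are nonnegative by hypothesis; hence $\sum_{S \subseteq \Gamma^+(J)} q_S \ge q_\emptyset$.

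For the inductive step I would decompose each $\cI \in \Stab_{\ell+1}(J)$ as $J$ prepended to a stable set sequence $(I_2,\ldots,I_{\ell+1}) \in \Stab_\ell(J')$ with $J' \subseteq \Gamma^+(J)$ independent, using $p_\cI = p^J \cdot p_{(I_2,\ldots,I_{\ell+1})}$. This gives
$$
\sum_{\cI \in \Stab_{\ell+1}(J)} p_\cI
 ~=~ p^J \sum_{\substack{J' \subseteq \Gamma^+(J)\\ J' \in \Ind}} \;\sum_{\cI' \in \Stab_\ell(J')} p_{\cI'}
 ~\le~ \frac{p^J}{q_\emptyset} \sum_{\substack{J' \subseteq \Gamma^+(J)\\ J' \in \Ind}} q_{J'}
$$
by the induction hypothesis applied to each $J' \in \Ind$. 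Since $q_{J'} = 0$ for $J' \notin \Ind$, the restriction $J' \in \Ind$ may be dropped, so the right-hand side equals $\frac{p^J}{q_\emptyset}\sum_{S \subseteq \Gamma^+(J)} q_S = \frac{q_J}{q_\emptyset}$ by \Lemma{q-expansion} once more, completing the induction. (The corner case $J = \emptyset$ is handled trivially, since then $\Gamma^+(J) = \emptyset$, $\Stab_\ell(\emptyset)$ is the single all-empty sequence with $p_\cI = 1$, and $q_\emptyset/q_\emptyset = 1$.)

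I do not expect a real obstacle here: the heavy lifting is packaged in \Lemma{q-expansion}, and the argument is a short induction. The only points requiring care are that $q_S$ vanishes off $\Ind$, which lets me freely extend the range of summation to all subsets of $\Gamma^+(J)$, and that the hypothesis $q_S \ge 0$ is exactly what makes both the base case (via $\sum_{S \subseteq \Gamma^+(J)} q_S \ge q_\emptyset$) and the inductive inequality go through in the right direction. Note in particular that $p \in \cS$ is not needed: $q_S \ge 0$ for all $S$ together with $q_\emptyset > 0$ suffices, and \Lemma{q-expansion} is a formal identity valid for every $p$.
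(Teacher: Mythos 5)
Your proof is correct and follows essentially the same route as the paper's: induction on $\ell$, with the base case and the inductive step both powered by the identity $q_J = p^J \sum_{S \subseteq \Gamma^+(J)} q_S$ of \Lemma{q-expansion}, the nonnegativity of the $q_S$, and the fact that $q_{J'}=0$ off $\Ind$ to extend the summation range. The extra remarks (the $J=\emptyset$ corner case and that only $q_S\ge 0$, $q_\emptyset>0$ are needed rather than $p\in\cS$) are accurate but do not change the argument.
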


\begin{proof}
We proceed by induction: for $\ell=1$, there is only one such stable set sequence $\cI = (J)$.
By Lemma~\ref{lem:q-expansion}, we have
$q_J = p^J \sum_{S \subseteq \Gamma^+(J)} q_S \geq p^J q_\emptyset$.
(Recall that $q_S \geq 0$ for all $S \subseteq [n]$.)
Hence, $p_{(J)} = p^J \leq q_J / q_\emptyset$.

The inductive step: every stable set sequence starting with $J$ has the form $\cI = (J,J',\ldots)$ where $J' \subseteq \Gamma^+(J)$. Therefore,
\begin{equation}
\label{eq:stab-rec}
 \sum_{\cI \in \Stab_\ell(J)} p_\cI
~=~ p^J \sumstack{J' \in \Ind \\ J' \subseteq \Gamma^+(J)}
      ~ \sum_{\cI \in \Stab_{\ell-1}(J')} \!\!\!\! p_\cI.
\end{equation}
By the inductive hypothesis, $\sum_{\cI \in \Stab_{\ell-1}(J')} p_\cI \leq {q_{J'}}/{q_\emptyset}$. Also, recall that $q_{J'}=0$ if $J' \notin \Ind$. Therefore,
$$ \sum_{\cI \in \Stab_\ell(J)} p_\cI  \leq p^J \sum_{J' \subseteq \Gamma^+(J)} \frac{q_{J'}}{q_\emptyset}
 = \frac{q_J}{q_\emptyset} $$
using Lemma~\ref{lem:q-expansion} to obtain the last equality.
\end{proof}

The inequality in \Lemma{Shearer-sum-bound} actually becomes an equality as
$\ell \rightarrow \infty$, as shown in \Lemma{Shearer-sum-equality}.
This stronger result is used only tangentially in \Section{CLLSSS}, but we provide a detailed proof
in order to clarify the arguments of Kolipaka and Szegedy \cite{Kolipaka}.

\begin{lemma}[Kolipaka-Szegedy~{\protect \cite[Theorem 14]{Kolipaka}}]
\label{lem:Shearer-sum-equality}
For a dependency graph $G$ and $p_1,\ldots,p_n \in (0,1)$, 
the following statements are equivalent:
\begin{enumerate}
\item $q_\emptyset > 0$ and $q_S \geq 0$ for all $S \subseteq [n]$.
\item for all $J \in \Ind$, $q_J > 0$ and $ \sum_{\cI \in \Prop(J)} p_\cI ~=~ q_{J}/q_\emptyset$.
\item $\sum_{\cI \in \Prop(J)} p_\cI$ is finite for each $J \in \Ind$.
\end{enumerate}
\end{lemma}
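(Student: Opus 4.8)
The plan is to prove the cycle $(1)\Rightarrow(2)\Rightarrow(3)\Rightarrow(1)$, with $(2)\Rightarrow(3)$ immediate. The workhorse is the quantity $f_\ell(J):=\sum_{\cI\in\Stab_\ell(J)}p_\cI$, and I would first record three structural facts. First, splitting off the initial set gives the linear recursion $f_{\ell+1}(J)=p^J\sum_{J'\in\Ind,\,J'\subseteq\Gamma^+(J)}f_\ell(J')$; equivalently $f_\ell=A^\ell\b1$, where $A$ is the nonnegative $\Ind\times\Ind$ matrix with $A_{J,J'}=p^J\cdot[J'\subseteq\Gamma^+(J)]$ and $\b1$ is the indicator of the empty set. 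Second, each $f_\ell$ is a polynomial in $p$ with nonnegative coefficients and $f_{\ell+1}\ge f_\ell$ coordinatewise, so $f_\infty(J):=\lim_\ell f_\ell(J)\in[0,\infty]$ exists; and since the nonempty sets of a stable set sequence form a prefix, padding by empty sets is a $p_\cI$-preserving bijection between $\Stab_\ell(J)$ and the proper sequences in $\Prop(J)$ of length at most $\ell$ (in the degenerate case $J=\emptyset$, $\Prop(\emptyset)$ is the singleton consisting of the empty sequence), whence $f_\infty(J)=\sum_{\cI\in\Prop(J)}p_\cI$. Third, taking limits in the recursion, $f_\infty=Af_\infty$ with $f_\infty(\emptyset)=1$. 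Throughout I write $A'$ for the restriction of $A$ to $\Ind\setminus\{\emptyset\}$ and $v(J):=p^J>0$, noting that for any $h$ with $h=Ah$ and $h(\emptyset)=c$ the restriction satisfies $(I-A')\,h|_{\ne\emptyset}=c\,v$.

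For $(1)\Rightarrow(2)$: \Lemma{Shearer-sum-bound} gives $f_\ell(J)\le q_J/q_\emptyset$, hence $f_\infty(J)\le q_J/q_\emptyset<\infty$. Thus $(I-A')\,f_\infty|_{\ne\emptyset}=v$ has a finite nonnegative solution with $v>0$; a standard Perron--Frobenius argument then forces $\rho(A')<1$ (if $\rho(A')\ge1$ there is a nonzero nonnegative left eigenvector $w$ with $wA'=\rho(A')w$, and then $w^\top v=(1-\rho(A'))\,w^\top f_\infty|_{\ne\emptyset}\le 0$, contradicting $w^\top v>0$). Consequently $I-A'$ is invertible. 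Now $g(J):=q_J/q_\emptyset$ also satisfies $g=Ag$ with $g(\emptyset)=1$ --- this is exactly \Lemma{q-expansion} divided by $q_\emptyset$ --- and $g(J)>0$ for all $J\in\Ind$ by \Claim{breve3} together with \eqref{eq:UpImpliesDown}. Hence $d:=g-f_\infty\ge 0$ has $d(\emptyset)=0$ and $(I-A')\,d|_{\ne\emptyset}=0$, so $d\equiv 0$; combined with $g>0$ this is statement (2).

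Statement (2) trivially implies (3), since $q_J/q_\emptyset<\infty$. For $(3)\Rightarrow(1)$: finiteness of $f_\infty$ gives $\rho(A')<1$ by the same Perron--Frobenius computation, and it remains to deduce $p\in\cS$, which I would do by contraposition. Suppose $p\notin\cS$. Each $q_I$ is a polynomial, and $q_I(tp)>0$ for all $I$ when $t>0$ is small (using \Claim{breve3} and $\qdown_S(0)=1$), so along the ray $\{tp:t\in(0,1]\}$ there is a last point $p^\ast=t^\ast p\le p$ lying in $\overline{\cS}$ but not in $\cS$ (by openness of $\cS$); it lies in $(0,1)^n$ and, by continuity, $q_I(p^\ast)\ge0$ for all $I$. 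A short computation with \Claim{breve2}, \Claim{breve3} and \eqref{eq:DownImpliesUp} then shows $q_\emptyset(p^\ast)=\qdown_{[n]}(p^\ast)=0$ (some $q_{I}(p^\ast)=0$ since $p^\ast\notin\cS$, and every $q_I\ge p^{[n]}\qdown_{[n]}$). But $\sum_{J\in\Ind}q_J(p^\ast)=1$ by \Claim{q-sum}, so some $q_J(p^\ast)>0$ with $J\ne\emptyset$; thus $(q_J(p^\ast))_{J\in\Ind\setminus\{\emptyset\}}$ is a nonzero nonnegative eigenvector, for eigenvalue $1$, of the matrix $A'^\ast$ obtained by evaluating $A'$ at $p^\ast$ (again because the $\emptyset$-contribution vanishes). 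Hence $\rho(A'^\ast)\ge1$, and since $0\le A'^\ast\le A'$ entrywise and the spectral radius is monotone for nonnegative matrices, $\rho(A')\ge1$ --- contradicting $\rho(A')<1$. Therefore $p\in\cS$, which is statement (1) (using \Claim{SRequiv} to pass between the two forms of Shearer's criterion).

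The step I expect to be most delicate is the second structural fact: making precise that $f_\infty(J)$ is genuinely $\sum_{\cI\in\Prop(J)}p_\cI$, i.e.\ the padding bijection together with the degenerate $J=\emptyset$ case. The remaining effort is in keeping the Perron--Frobenius bookkeeping clean in the presence of the absorbing coordinate $\emptyset$ of $A$, which is why I separate out $A'$ from the start; given that, the rest is routine.
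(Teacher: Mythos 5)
Your argument is correct, and the ingredients you cite (\Lemma{Shearer-sum-bound}, \Lemma{q-expansion}, \Claim{breve2}, \Claim{breve3}, \Claim{q-sum}, \Claim{QAndQDown}, \Claim{SRequiv}) do exactly what you need; the padding identification of $\Stab_\ell(J)$ with bounded-length proper sequences, and the degenerate case $J=\emptyset$, are handled the same way in the paper. The skeleton also matches the paper's proof: the same cycle $1\Rightarrow 2\Rightarrow 3\Rightarrow 1$, the same transfer operator (the paper's $M$ is your $A$), and, for $3\Rightarrow 1$, the same walk along the ray $\{tp\}$ to a first boundary point $p^*$ where $q_\emptyset(p^*)=0$ while some $q_J(p^*)>0$. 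The difference is the linear-algebraic engine. For uniqueness in $1\Rightarrow 2$ the paper avoids spectral theory altogether: it chooses $\beta$ so that $0\le q/q_\emptyset-w^{(0)}\le \beta\,(w^{(1)}-w^{(0)})$ coordinatewise and applies $M^\ell$, using $\lim_{\ell\to\infty}M^\ell(w^{(1)}-w^{(0)})=0$; you instead deduce $\rho(A')<1$ from the existence of a finite nonnegative solution of $(I-A')x=v$ with $v>0$, and then invert $I-A'$. For $3\Rightarrow 1$ the paper re-uses the already-proven implication $1\Rightarrow 2$ at interior points $\lambda p$, $\lambda<\lambda^*$, and lets the ratio $q_J(\lambda p)/q_\emptyset(\lambda p)$ blow up, whereas you exhibit $(q_J(p^*))_{J\in\Ind\setminus\{\emptyset\}}$ as a nonzero nonnegative fixed point of $A'$ evaluated at $p^*$ and invoke entrywise monotonicity of the spectral radius to contradict $\rho(A')<1$. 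Both routes are sound; yours is conceptually cleaner (everything reduces to subcriticality of one nonnegative matrix, with no $\liminf$ manipulation of ratios), but it imports two standard Perron--Frobenius facts for possibly reducible nonnegative matrices (existence of a nonnegative left eigenvector for the spectral radius, and monotonicity of $\rho$ under entrywise domination) that you should state or cite explicitly, while the paper's argument stays elementary and self-contained.
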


\begin{proof}
First, note that $\Prop(J) = \bigcup_{t=1}^{\infty} \Prop_t(J)$, and $\bigcup_{t=1}^{\ell} \Prop_t(J)$ can be identified with $\Stab_\ell(J)$, since each proper sequence $\cI$ of length at most $\ell$ can be padded with empty sets to obtain a sequence in $\Stab_\ell(J)$ (and $p_\cI$ does not change). Therefore, $\sum_{\cI \in \Prop(J)} p_\cI = \lim_{\ell \rightarrow \infty} \sum_{\cI \in \Stab_\ell(J)} p_\cI$. 
This is a non-decreasing sequence; the limit exists but could be infinite.  Let us denote $w^{(\ell)}_J= \sum_{\cI \in \Stab_\ell(J)} p_\cI$ and $w^*_J =  \lim_{\ell \rightarrow \infty} w^{(\ell)}_J = \sum_{\cI \in \Prop(J)} p_\cI$.
Let us define $M$ to be the following linear operator on $\RR^\Ind$:
$$ (Mx)_I = p^I \sumstack{J \in \Ind \\ J \subseteq \Gamma^+(I)} x_J.$$
Using this notation, the identity \eqref{eq:stab-rec} can written compactly as $w^{(\ell)} = M w^{(\ell-1)}$.
Inductively, $w^{(\ell)} = M^{\ell-1} w^{(1)}$, and $w^* = \lim_{\ell \rightarrow \infty} M^{\ell} w^{(1)}$. 

{$\mathbf 1 \Rightarrow \mathbf 2$:}
Assume now that $q_S \geq 0$ for all $S \subseteq [n]$ and $q_\emptyset > 0$.
Lemma~\ref{lem:Shearer-sum-bound} proves that this implies $w^*_J = \sum_{\cI \in \Prop(J)} p_\cI =
\lim_{\ell \rightarrow \infty} \sum_{\cI \in \Stab_\ell(J)} p_\cI \leq q_J / q_\emptyset$.
Clearly $ \sum_{\cI \in \Prop(J)} p_\cI > 0$, so this also implies that $q_J > 0$ for all $J \in \Ind$.

Note that $w^{(1)}$ is the column of $M$ corresponding to $J=\emptyset$: $M_{I,\emptyset} = p^I$ for
each $I \in \Ind$. Therefore, we can write $w^{(1)} = M w^{(0)}$, where $w^{(0)} = e_\emptyset$ is
the canonical basis vector in $\RR^\Ind$ corresponding to $\emptyset$. We have $w^* = \lim_{\ell \rightarrow \infty} M^\ell w^{(1)} = \lim_{\ell \rightarrow \infty} M^\ell w^{(0)}$.
We may subtract these two limits since we have shown that every $w^*_J$ is finite,
obtaining $\lim_{\ell \rightarrow \infty} M^\ell (w^{(1)} - w^{(0)}) = 0$.
We note that $w^{(1)} - w^{(0)}$ has strictly positive coordinates for $I \neq \emptyset$, and $0$ for $I = \emptyset$. 

By Lemma~\ref{lem:q-expansion}, we have $Mq = q$ for the vector $q \in \RR^\Ind$ with coordinates $q_I$. 
Consider $\frac{1}{q_\emptyset} q - w^{(0)}$, a nonnegative vector with $0$ in the coordinate corresponding to $\emptyset$.  We can choose $\beta > 0$ large enough so that coordinate-wise, $0 \leq \frac{1}{q_\emptyset} q - w^{(0)} \leq \beta (w^{(1)} - w^{(0)})$. 
From this we derive that
$$ 0 \leq \frac{1}{q_\emptyset} q - w^* = \lim_{\ell \rightarrow \infty} M^\ell
\left(\frac{1}{q_\emptyset} q - w^{(0)} \right) \leq \beta \lim_{\ell \rightarrow \infty} M^\ell
(w^{(1)} - w^{(0)}) = 0,$$
so equality holds throughout.
Recalling the definition of $w^*_J$, we conclude that
$\sum_{\cI \in \Prop(J)} p_\cI = w^*_J = \frac{1}{q_\emptyset} q_J$.

{$\mathbf 2 \Rightarrow \mathbf 3$:} Trivial.

{$\mathbf 3 \Rightarrow \mathbf 1$:}
Let $p \in (0,1)^n$ be the vector $(p_1,\ldots,p_n)$.
We can assume that $\min_S \qdown_S(p) \leq 0$, otherwise we are done by \Claim{QAndQDown}.
Let us consider the values of $\qdown_S$ on the line $\setst{ \lambda p }{ \lambda \in [0,1] }$.
Define $\lambda^* = \inf \{ \lambda \in (0,1] : \min_S \qdown_S(\lambda p) \leq 0 \}$.
We observe that $\min_S \qdown_S(\lambda p) > 0$ for $0 < \lambda < 1/n$, which can be
verified directly by considering the alternating sum defining $\qdown_S$. (Intuitively,
Shearer's Lemma holds in this region just by the union bound.)
Therefore, we have $\lambda^* > 0$.
Furthermore continuity also implies $\min_S \qdown_S(\lambda^* p) = 0$,
so \Claim{QAndQDown} yields $q_\emptyset(\lambda^* p) = \qdown_{[n]}(\lambda^* p) = 0$.
For $\lambda \in [0,\lambda^*)$ we have $\min_S \qdown_S(\lambda p) > 0$,
so by \Claim{QAndQDown} we also have $\min_{I \in \Ind} q_I(\lambda p) > 0$.
This shows that the condition $\mathbf 1$ holds at the point $\lambda p$,
for $\lambda \in [0,\lambda^*)$,
so we may use the implication $\mathbf 1 \Rightarrow \mathbf 2$:
$ \sum_{\cI \in \Prop(J)} (\lambda p)_\cI = q_J(\lambda p) / q_\emptyset(\lambda p)$.
Let $J \in \Ind$ be such that $q_J(\lambda^* p) > 0$;
such a $J$ must exist by \Claim{q-sum}.
By the monotonicity of $p_\cI = \prod_{I \in \cI} p^I$ in the variables $p_1,\ldots,p_n$, we have 
$$
    \sum_{\cI \in \Prop(J)} p_\cI 
    ~\geq~ \sum_{\cI \in \Prop(J)} (\lambda^* p)_\cI 
    ~\geq~ \liminf_{\lambda \rightarrow \lambda^*-} \sum_{\cI \in \Prop(J)} (\lambda p)_\cI 
    ~=~ \liminf_{\lambda \rightarrow \lambda^*-} \frac{ q_J( \lambda p) }{ q_\emptyset( \lambda p ) }
    ~=~ \infty,
$$
as $q_J(\lambda^* p) > 0$ but $q_\emptyset(\lambda^* p) = 0$.
This contradicts the assumption {\bf 3} that $\sum_{\cI \in \Prop(J)} p_\cI$ is finite.
\end{proof}

From \Claim{q-sum}, we obtain immediately the following.

\begin{corollary}
\label{cor:Shearer-bound}
If $q_S \geq 0$ for all $S \subseteq [n]$ and $q_\emptyset > 0$,
$$\sum_{\cI \in \Prop} p_\cI = \frac{1}{q_\emptyset}.$$
\end{corollary}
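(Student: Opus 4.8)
The plan is to derive this directly from \Lemma{Shearer-sum-equality} together with \Claim{q-sum}, with essentially no extra work. First I would note that the hypotheses ``$q_S \geq 0$ for all $S \subseteq [n]$ and $q_\emptyset > 0$'' are exactly condition $\mathbf 1$ of \Lemma{Shearer-sum-equality}. Invoking the implication $\mathbf 1 \Rightarrow \mathbf 2$ of that lemma gives $\sum_{\cI \in \Prop(J)} p_\cI = q_J/q_\emptyset$ for every $J \in \Ind$.

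Next I would decompose the sum over all proper stable set sequences according to the first set of the sequence. Every $\cI \in \Prop$ is either the empty sequence, with $p_\cI = 1$, or begins with a nonempty independent set $J$, in which case $\cI \in \Prop(J)$; and the empty sequence is consistent with the $J = \emptyset$ term since $q_\emptyset / q_\emptyset = 1$. Thus (up to this convention) $\Prop$ is partitioned into the sets $\Prop(J)$ over $J \in \Ind$. Alternatively, and perhaps more transparently, one can write $\sum_{\cI \in \Prop} p_\cI = \lim_{\ell \rightarrow \infty} \sum_{\cI \in \Stab_\ell} p_\cI = \lim_{\ell \rightarrow \infty} \sum_{J \in \Ind} \sum_{\cI \in \Stab_\ell(J)} p_\cI$, using that for a fixed length $\ell$ the sequences in $\Stab_\ell$ split cleanly by their first set (which is allowed to be $\emptyset$), and then pass to the limit via \Lemma{Shearer-sum-equality}.

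Either way, summing over $J$ yields $\sum_{\cI \in \Prop} p_\cI = \sum_{J \in \Ind} q_J/q_\emptyset = \frac{1}{q_\emptyset}\sum_{J \in \Ind} q_J$, and by \Claim{q-sum} the last sum equals $1$, giving the claimed identity $\sum_{\cI \in \Prop} p_\cI = 1/q_\emptyset$.

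There is no genuine obstacle here --- the statement is, as the surrounding text says, immediate once \Lemma{Shearer-sum-equality} and \Claim{q-sum} are in hand. The one place that warrants a moment's care is the bookkeeping around the empty sequence: one should check that the decomposition of $\Prop$ by first set is both exhaustive and disjoint, and that the degenerate term $J = \emptyset$ is handled consistently (which it is, as $p_\cI = 1 = q_\emptyset/q_\emptyset$ for the empty sequence). Working instead with $\Stab_\ell$ and taking $\ell \rightarrow \infty$ sidesteps even this minor subtlety.
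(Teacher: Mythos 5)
Your proposal is correct and follows exactly the paper's route: the paper also obtains this corollary immediately by combining the identity $\sum_{\cI \in \Prop(J)} p_\cI = q_J/q_\emptyset$ from \Lemma{Shearer-sum-equality} with \Claim{q-sum}, summing over $J \in \Ind$. Your extra care about the empty sequence and the $J=\emptyset$ term is consistent with the paper's conventions (padding by empty sets) and fills in bookkeeping the paper leaves implicit.
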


\noindent
\textbf{Remark.} An equivalent statement using the language of ``traces'' appears in the recent manuscript of Knuth
\cite[Page 86, Theorem F]{Knuth}, together with a short proof using generating functions.
Furthermore, using \Claim{breve2}, we may derive
$$
\sum_{J \subseteq A} \: \sum_{\cI \in \Prop(J)} p_\cI
    ~=~ \sum_{J \subseteq A} \frac{q_{J}}{q_\emptyset}
    ~=~ \frac{\qdown_{[n] \setminus A}}{\qdown_{[n]}},
$$
for any $A \subseteq [n]$.
This statement, in the language of traces,
also appears in Knuth's draft \cite[Page 87, Equation (144)]{Knuth}.

\paragraph{Summary at this point.}
By \Lemma{iteration-bound} and \Corollary{Shearer-bound}, \MSR produces a state in $\bigcap_{i=1}^n \overline{E_i}$
after at most $1 / q_\emptyset$ iterations in expectation.
However, this should not be viewed as a statement of efficiency.
Shearer's Lemma proves that
$\Pr_\mu[ \bigcap_{i=1}^n \overline{E_i} ] \geq q_\emptyset$
so, in expectation, $1/q_\emptyset$ independent samples from $\mu$
would also suffice to find a state in $\bigcap_{i=1}^n \overline{E_i}$.

\Section{Shearerslack} improves this analysis by assuming that Shearer's criterion holds with some
slack, analogous to the result in \Section{LLLslack}.
\Section{shearer-automatic-slack} then removes the need for that assumption ---
it argues that Shearer's criterion always holds with some slack,
and provides quantitative bounds on that slack.

\subsection{Shearer's criterion with slack}
\SectionName{Shearerslack}
\SectionName{shearer-slack}

In this section we consider scenarios in which Shearer's criterion holds with a certain
amount of slack.
To make this formal, we will consider another vector $p'$ of probabilities with $p \leq p' \in \cS$.
For notational convenience, we will let $q'_S$ denote the value $q_S(p')$
and let $q_S$ denote $q_S(p)$ as before.
Let us assume that Shearer's criterion holds with some slack in the following natural sense.

\begin{definition}
\DefinitionName{ShearerSlack}
We say that $p \in (0,1)^n$ satisfies Shearer's criterion with coefficients $q'_S$
at a slack of $\epsilon$, if $p' = (1+\epsilon) p$ is still in the Shearer region $\cS$ and $q'_S = q_S(p')$.
\end{definition}

\begin{theorem}
\TheoremName{Shearer-slack}
Recall that $p_i = \Pr_\mu[E_i]$.
If the $p_i$ satisfy Shearer's criterion with coefficient $q'_\emptyset$ at a slack of $\epsilon \in (0,1)$,
then the probability that \MSR resamples more than
$\frac{2}{\epsilon} \big(\ln \frac{1}{q'_\emptyset} + t\big)$ events is at most $e^{-t}$.
\end{theorem}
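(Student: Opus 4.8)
The strategy is to bound the probability that \MSR resamples more than $s$ events by $\sum_{\cI \in \Prop : \sigma(\cI) = \lceil s \rceil} p_\cI$ via \Lemma{iteration-bound}, and then to control that sum using the slack assumption. The key observation is that if $\cI = (I_1,\ldots,I_t)$ has total size $\sigma(\cI) = \sum_s |I_s|$, then $p_\cI = \prod_{i \in \cI} p_i$ picks up a factor of $(1+\epsilon)^{-1}$ for each of the $\sigma(\cI)$ event-occurrences relative to $p'_\cI = \prod_{i \in \cI} p'_i$, since $p' = (1+\epsilon)p$ coordinatewise. Hence
\[
\sum_{\cI \in \Prop : \sigma(\cI) = \lceil s \rceil} p_\cI
~=~ (1+\epsilon)^{-\lceil s \rceil} \sum_{\cI \in \Prop : \sigma(\cI) = \lceil s \rceil} p'_\cI
~\leq~ (1+\epsilon)^{-s} \sum_{\cI \in \Prop} p'_\cI.
\]

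**Applying Corollary~\ref{cor:Shearer-bound}.** Since $p' \in \cS$, we have $q'_S \geq 0$ for all $S$ and $q'_\emptyset > 0$, so \Corollary{Shearer-bound} (applied with the probabilities $p'$) gives $\sum_{\cI \in \Prop} p'_\cI = 1/q'_\emptyset$. Therefore the probability of resampling more than $s$ events is at most $(1+\epsilon)^{-s} / q'_\emptyset$. Using $1+\epsilon \geq e^{\epsilon/2}$ for $\epsilon \in (0,1)$ (which follows from $\ln(1+\epsilon) \geq \epsilon/2$ on that interval, since $\ln(1+x)/x$ is decreasing and $\ln 2 > 1/2$), this is at most $e^{-\epsilon s / 2}/q'_\emptyset$. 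Setting $s = \frac{2}{\epsilon}\big(\ln\frac{1}{q'_\emptyset} + t\big)$ yields $e^{-\epsilon s/2}/q'_\emptyset = e^{-\ln(1/q'_\emptyset) - t} \cdot \frac{1}{q'_\emptyset} = e^{-t}$, as claimed.

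**Main obstacle.** There is no deep obstacle here — the argument is essentially the Shearer-region analogue of the proof of \Theorem{GLL-with-slack}, with \Corollary{Shearer-bound} playing the role that \Lemma{crude-bound} played in the \eqref{eq:GLL} case. The one point requiring a small amount of care is the elementary inequality $(1+\epsilon)^{-s} \leq e^{-\epsilon s/2}$ for $\epsilon \in (0,1)$, i.e.\ verifying $\ln(1+\epsilon) \geq \epsilon/2$ on $(0,1)$; this accounts for the factor of $2$ in the statement (in contrast with \Theorem{GLL-with-slack}, where $(1-\epsilon)^s \leq e^{-\epsilon s}$ loses no constant). The other minor point is to handle the ceiling $\lceil s \rceil \geq s$ correctly, which only helps the bound since $(1+\epsilon)^{-\lceil s \rceil} \leq (1+\epsilon)^{-s}$.
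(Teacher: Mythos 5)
Your proposal is correct and follows essentially the same route as the paper's own proof: bounding the failure probability via \Lemma{iteration-bound}, trading each $p_i$ for $p'_i=(1+\epsilon)p_i$ to pick up the factor $(1+\epsilon)^{-s}$, invoking \Corollary{Shearer-bound} at $p'$ to get $\sum_{\cI \in \Prop} p'_\cI \le 1/q'_\emptyset$, and finishing with $(1+\epsilon)^{-s} \le e^{-\epsilon s/2}$ for $\epsilon \in (0,1)$. No gaps.
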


\begin{proof}
By Lemma~\ref{lem:iteration-bound}, the probability that \MSR resamples more than $s$ events is at most $\sum_{\cI \in \Prop: \sigma(\cI)=\lceil s \rceil} p_\cI$. By the slack assumption, we have
$$ \Pr[\mbox{resample more than }s\mbox{ events}]
 ~\leq~ \sumstack{\cI \in \Prop \\ \sigma(\cI)=\lceil s \rceil} p_\cI
 ~\leq~ (1+\epsilon)^{-s} \sumstack{\cI \in \Prop \\ \sigma(\cI)= \lceil s \rceil} p'_\cI
$$
since we have $p'_i = (1+\epsilon) p_i$ for each event appearing in a sequence $\cI$. The
hypothesis is that the probabilities $p'_i$ satisfy Shearer's criterion with a bound of $q'_\emptyset$.
Consequently, \Corollary{Shearer-bound} implies that
$\sum_{\cI \in \Prop: \sigma(\cI)= \lceil s \rceil} p'_\cI \leq \sum_{\cI \in \Prop} p'_\cI \leq 1 / q'_\emptyset$.
Thus, for $s = \frac{2}{\epsilon} \big(\ln \frac{1}{q'_\emptyset} + t\big)$ we obtain
$$ \Pr[\mbox{resample more than }s\mbox{ events}] ~\leq~ (1+\epsilon)^{-s} \frac{1}{q'_\emptyset}
~\leq~ e^{-s \epsilon/2} \frac{1}{q'_\emptyset}
~\leq~ e^{-(\ln ({1}/{q'_\emptyset}) + t)} \frac{1}{q'_\emptyset}
~=~ e^{-t}.$$
\end{proof}

In other words, the probability that \MSR requires more than
$\frac{2}{\epsilon} \ln(1/q'_\emptyset)$ resamplings decays exponentially fast; in particular the
expected number of resampled events is $O\big(\frac{1}{\epsilon} \ln(1/q'_\emptyset)\big)$.
This appears significantly better than the trivial bound of $1/q_\emptyset$; still, it is not clear whether this bound can be considered ``polynomial". In the following, we show that this leads in fact to efficient bounds, comparable to the best known bounds in the variable model.

\begin{corollary}
\CorollaryName{Shearer-no-q0}
If the $p_i$ satisfy Shearer's criterion with coefficients $q'_S$
at a slack of $\epsilon \in (0,1)$, then the probability that \MSR resamples more than
$$
\frac{2}{\epsilon}\Bigg( \sum_{j=1}^n \ln \Big( 1 + \frac{q'_{\set{j}}}{q'_\emptyset} \Big) + t\Bigg)
$$
events is at most $e^{-t}$.
\end{corollary}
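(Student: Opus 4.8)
The plan is to derive the corollary directly from \Theorem{Shearer-slack} together with the bound on $\sum_{J \subseteq S} q_J/q_\emptyset$ established in \Claim{sumqJ}. First I would invoke \Theorem{Shearer-slack}: under the slack hypothesis, with $p' = (1+\epsilon)p \in \cS$ and $q'_S = q_S(p')$, the probability that \MSR resamples more than $\frac{2}{\epsilon}\big(\ln \frac{1}{q'_\emptyset} + t\big)$ events is at most $e^{-t}$. Since the event ``\MSR resamples more than $s$ events'' is monotone decreasing in $s$, it suffices to show that the threshold in the corollary dominates the threshold in \Theorem{Shearer-slack}, i.e.
\[
\ln \frac{1}{q'_\emptyset} ~\leq~ \sum_{j=1}^n \ln\Big(1 + \frac{q'_{\set{j}}}{q'_\emptyset}\Big),
\]
or equivalently $\frac{1}{q'_\emptyset} \leq \prod_{j=1}^n \big(1 + \frac{q'_{\set{j}}}{q'_\emptyset}\big)$.

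To establish this last inequality, I would combine two facts about the polynomials $q'_S = q_S(p')$. Because $p' \in \cS$, we have $q'_S \geq 0$ for all $S \subseteq [n]$ (with $q'_S = 0$ for $S \notin \Ind$) and $q'_\emptyset > 0$, so both \Claim{q-sum} and \Claim{sumqJ} apply with $p'$ in place of $p$. By \Claim{q-sum}, $\sum_{J \subseteq [n]} q'_J = 1$, hence $\frac{1}{q'_\emptyset} = \sum_{J \subseteq [n]} \frac{q'_J}{q'_\emptyset}$. Applying \Claim{sumqJ} with $S = [n]$ gives $\sum_{J \subseteq [n]} \frac{q'_J}{q'_\emptyset} \leq \prod_{j \in [n]} \big(1 + \frac{q'_{\set{j}}}{q'_\emptyset}\big)$, which is exactly what is needed. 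Taking logarithms and substituting the resulting bound into \Theorem{Shearer-slack} completes the argument.

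There is no serious obstacle here; the statement is a short consequence of results already proven. The only points requiring a moment of care are (i) verifying that the hypotheses of \Claim{q-sum} and \Claim{sumqJ} (nonnegativity of the independence polynomials and membership in $\cS$) are supplied by the slack assumption applied to $p'$ rather than to $p$ itself, and (ii) noting that enlarging the threshold of \Theorem{Shearer-slack} to the quantity appearing in the corollary can only decrease the probability in question, so the tail bound $e^{-t}$ is preserved.
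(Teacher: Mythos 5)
Your argument is correct and is essentially identical to the paper's proof: it combines \Theorem{Shearer-slack} with \Claim{q-sum} and \Claim{sumqJ} (applied at $p'=(1+\epsilon)p$) to bound $\ln \frac{1}{q'_\emptyset}$ by $\sum_{j=1}^n \ln\big(1+\frac{q'_{\set{j}}}{q'_\emptyset}\big)$. The extra care you note about monotonicity of the tail event and about applying the claims at $p'$ rather than $p$ is fine and implicit in the paper's version.
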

\begin{proof}
By \Claim{q-sum} and \Claim{sumqJ}, we have
\begin{align*}
\ln \frac{1}{q'_\emptyset}
 ~=~ \ln \sum_{J \subseteq [n]} \frac{q'_J}{q'_\emptyset} 
 ~\leq~ \sum_{j=1}^{n} \ln \Big( 1 + \frac{q'_{\set{j}}}{q'_\emptyset} \Big).
\end{align*}
The result follows from \Theorem{Shearer-slack}.
\end{proof}

Next, we provide a simplified bound that depends only on the amount of slack and the number of events. This is analogous to a bound of $O(n / \epsilon)$ given by Kolipaka-Szegedy \cite{Kolipaka} in the variable model.

\begin{theorem}
\TheoremName{Shearer-slack-simple}
If $p_1,\ldots,p_n$ satisfy Shearer's criterion at a slack of $\epsilon \in (0,1)$, then the expected number of events resampled by \MSR is $O(\frac{n}{\epsilon} \log \frac{1}{\epsilon})$.
\end{theorem}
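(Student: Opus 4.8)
The plan is to reduce to \Corollary{Shearer-no-q0}, which already supplies an exponentially decaying tail bound in terms of the singleton ratios $q'_{\set{j}}/q'_\emptyset$; the only missing ingredient is a bound on those ratios, and for that I would reserve half of the available slack. Write $p' = (1+\tfrac{\epsilon}{2})p$. Since $(1+\epsilon)p \in \cS$ by hypothesis and $p' \le (1+\epsilon)p$ coordinatewise, the monotonicity of $\qdown_S$ (\Claim{brevemonotone}) gives $\qdown_S(p') \ge \qdown_S((1+\epsilon)p) > 0$ for every $S \subseteq [n]$, so $p' \in \cS$. Hence, in the sense of \Definition{ShearerSlack}, the probabilities $p_i$ satisfy Shearer's criterion with coefficients $q'_S := q_S(p')$ at a slack of $\epsilon/2$.

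Next I would bound the ratios using the remaining slack. Observe that $(1+\epsilon)p = \bigl(1 + \tfrac{\epsilon/2}{1+\epsilon/2}\bigr)\,p'$, and this point lies in $\cS$; applying \Claim{ShearerSlack} with base point $p'$ and slack $\tfrac{\epsilon/2}{1+\epsilon/2}$ yields
\[
\frac{q'_{\set{j}}}{q'_\emptyset} ~\le~ \frac{1+\epsilon/2}{\epsilon/2} ~=~ \frac{2+\epsilon}{\epsilon} ~<~ \frac{3}{\epsilon}
\qquad\text{for every } j \in [n],
\]
where the last step uses $\epsilon < 1$. Consequently $\ln\bigl(1 + q'_{\set{j}}/q'_\emptyset\bigr) < \ln(1 + 3/\epsilon) \le \ln(4/\epsilon)$. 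Plugging these estimates and the slack $\epsilon/2$ into \Corollary{Shearer-no-q0}, the probability that \MSR resamples more than $\tfrac{4}{\epsilon}\bigl(n\ln(4/\epsilon) + t\bigr)$ events is at most $e^{-t}$, for every $t \ge 0$.

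Finally I would integrate the tail bound to obtain the bound on the expectation. Let $N$ be the (nonnegative) number of events resampled, and set $C = 4/\epsilon$ and $a = n\ln(4/\epsilon)$, so that $\Pr[N > C(a+t)] \le e^{-t}$ for $t \ge 0$; equivalently $\Pr[N > s] \le e^{a}e^{-s/C}$ for $s \ge Ca$. Then, using $\Pr[N>s]\le 1$ on $[0,Ca]$,
\[
\E[N] ~=~ \int_0^\infty \Pr[N > s]\, ds ~\le~ \int_0^{Ca} 1\, ds \;+\; \int_{Ca}^{\infty} e^{a}e^{-s/C}\, ds ~=~ Ca + C ~=~ \frac{4}{\epsilon}\bigl(n\ln(4/\epsilon) + 1\bigr),
\]
which is $O\bigl(\tfrac{n}{\epsilon}\log\tfrac1\epsilon\bigr)$, as claimed. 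The only real (and minor) obstacle is the bookkeeping of the nested slack: one cannot apply \Claim{ShearerSlack} directly at $p$, since there is no room left beyond $(1+\epsilon)p$, so the slack must be split and \Claim{brevemonotone} invoked to certify that the intermediate point $p'$ still lies in $\cS$; everything else is a routine computation.
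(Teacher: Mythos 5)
Your proof is correct and follows essentially the same route as the paper: split the slack by passing to $p'=(1+\epsilon/2)p$, bound the singleton ratios $q'_{\{j\}}/q'_\emptyset$ by $O(1/\epsilon)$ via \Claim{ShearerSlack}, and invoke \Corollary{Shearer-no-q0} with slack $\epsilon/2$ before converting the tail bound to an expectation. The only differences are cosmetic bookkeeping (the paper certifies $\epsilon/3$ slack at $p'$ where you use $\tfrac{\epsilon/2}{1+\epsilon/2}$, and you spell out the integration step the paper leaves implicit).
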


\begin{proof}
Let $p' = (1+\epsilon/2) p$. By assumption,
$(1+\epsilon/3) p' \leq (1+\epsilon) p \in \cS$. Therefore, $p'$ still has $\epsilon/3$ slack so by
\Claim{ShearerSlack}, the coefficients $q'_S = q_S(p')$ satisfy
$\frac{q'_{\{i\}}}{q'_\emptyset} \leq \frac{3}{\epsilon}$.
The point $p$ satisfies Shearer's criterion with coefficients $q'_S$ at a slack of $\epsilon/2$, so by \Corollary{Shearer-no-q0}, the probability that we resample more than $\frac{4}{\epsilon} (n \ln (1 + \frac{3}{\epsilon}) + t)$ events is at most $e^{-t}$. In expectation, we resample $O(\frac{n}{\epsilon} \log \frac{1}{\epsilon})$ events as claimed.
\end{proof}

\subsection{Quantification of slack in Shearer's criterion}
\SectionName{shearer-automatic-slack}

In the previous section, we proved a bound on the number of resamplings in the \MSR algorithm, provided that Shearer's criterion is satisfied with a certain slack. In fact, from \Definition{Shearer-region} one can observe that the Shearer region is an {\em open set} and therefore there is always a certain amount of slack.
However, how large a slack we can assume is not a priori clear. In particular, one can compare with
Kolipaka-Szegedy \cite{Kolipaka} where a bound is proved on the expected number of events one has to
resample in the variable model: If Shearer's criterion is satisfied with coefficients $q_S$, then
the expected number of resamplings is at most $\sum_{i=1}^{n} q_{\{i\}} / q_\emptyset$
\cite{Kolipaka}. In this section, we prove that anywhere in the Shearer region, there is an amount
of slack {\em inversely proportional to this quantity}, which leads to a bound similar to that of
Kolipaka and Szegedy \cite{Kolipaka}. 

\begin{lemma}
\LemmaName{Shearer-automatic-slack}
Let $(p_1,\ldots,p_n) \in (0,1)^n$ be a point in the Shearer region.
Let $\epsilon = q_\emptyset / (2\sum_{i=1}^{n} q_{\{i\}})$ and $p'_i = (1+\epsilon) p_i$.
Then $(p'_1,\ldots,p'_n)$ is also in the Shearer region, and 
$q_\emptyset(p') \geq \frac12 q_\emptyset(p)$.
\end{lemma}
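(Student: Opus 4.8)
The plan is to track $q_\emptyset$ along the ray $p(t)=(1+t)p$ and show it decreases slowly. Set $f(t)=q_\emptyset(p(t))=\qdown_{[n]}(p(t))$, a polynomial in $t$ with $f(0)=q_\emptyset(p)>0$. Since $\qdown_{[n]}$ is multilinear, \Claim{breve1} (with $S=[n]$, $a=i$) gives $\frac{\partial}{\partial p_i}\qdown_{[n]}=-\qdown_{[n]\setminus\Gamma^+(i)}$, so by the chain rule
$$ f'(t) ~=~ -\sum_{i=1}^n p_i\,\qdown_{[n]\setminus\Gamma^+(i)}(p(t)). $$
Recall from \Claim{breve3} that $q_{\{i\}}(p)=p_i\,\qdown_{[n]\setminus\Gamma^+(i)}(p)$.

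The first step is a rate bound valid on any interval on which $p(t)\in\cS$. On such an interval $\qdown_S(p(t))\ge 0$ for all $S$ and $p(t)\ge p$, so \Claim{brevemonotone} (applied at the larger point $p(t)$) gives $\qdown_{[n]\setminus\Gamma^+(i)}(p(t))\le\qdown_{[n]\setminus\Gamma^+(i)}(p)$; since $p_i>0$ this yields $p_i\,\qdown_{[n]\setminus\Gamma^+(i)}(p(t))\le q_{\{i\}}(p)$, hence $f'(t)\ge-\sum_i q_{\{i\}}(p)$. Integrating from $0$, $f(t)\ge q_\emptyset(p)-t\sum_i q_{\{i\}}(p)$ on that interval, which for $t\le\epsilon=q_\emptyset(p)/(2\sum_i q_{\{i\}}(p))$ is at least $\tfrac12 q_\emptyset(p)>0$. (We also record that $q_{\{i\}}(p)>0$, so $\epsilon$ is well defined: since $p\in\cS$, \eqref{eq:UpImpliesDown} gives $\qdown_S(p)\ge q_\emptyset(p)>0$ for all $S$, and then \Claim{breve3} forces $q_{\{i\}}(p)=p_i\,\qdown_{[n]\setminus\Gamma^+(i)}(p)>0$.)

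The main obstacle is that this rate bound presupposes $p(t)\in\cS$, whereas membership of $p'=p(\epsilon)$ in $\cS$ is part of the conclusion. I would break the circularity with a first-exit argument. Let $t^*=\sup\{\,t\ge 0:\ p(s)\in\cS\ \text{for all }s\in[0,t]\,\}$; since $\cS$ is open and $p\in\cS$, $t^*>0$, and if $t^*<\infty$ then $p(s)\in\cS$ for $s<t^*$ but $p(t^*)\notin\cS$. Suppose for contradiction $t^*\le\epsilon$. The rate bound applies on $[0,t^*)$, so $f(s)\ge\tfrac12 q_\emptyset(p)>0$ there, and by continuity $\qdown_{[n]}(p(t^*))=f(t^*)>0$ while $\qdown_S(p(t^*))\ge 0$ for every $S$ and $p(t^*)\in[0,1]^n$. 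Then \eqref{eq:DownImpliesUp} gives $q_I(p(t^*))\ge\big(\prod_j p(t^*)_j\big)\,\qdown_{[n]}(p(t^*))>0$ for all $I\in\Ind$, and \eqref{eq:UpImpliesDown} then gives $\qdown_S(p(t^*))\ge q_\emptyset(p(t^*))>0$ for all $S$; in particular $\qdown_{\{i\}}(p(t^*))=1-p(t^*)_i>0$, so $p(t^*)\in(0,1)^n$ and hence $p(t^*)\in\cS$, contradicting openness of $\cS$ and maximality of $t^*$. Therefore $t^*>\epsilon$, so $p'\in\cS$, and applying the rate bound at $t=\epsilon$ gives $q_\emptyset(p')=f(\epsilon)\ge q_\emptyset(p)-\epsilon\sum_i q_{\{i\}}(p)=\tfrac12 q_\emptyset(p)$, as desired. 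The only routine pieces are the derivative computation and the positivity bookkeeping; everything structural is supplied by \Claim{breve1}, \Claim{breve3}, \Claim{brevemonotone} and \Claim{QAndQDown}.
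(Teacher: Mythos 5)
Your proof is correct, and it follows the paper's overall strategy — parametrize along the ray $(1+t)p$, compute $\frac{d}{dt}\,q_\emptyset((1+t)p) = -\sum_i p_i\,\qdown_{[n]\setminus\Gamma^+(i)}((1+t)p)$, and break the circularity about membership in $\cS$ by a first-exit/continuity argument that invokes \Claim{QAndQDown} at the exit point — but the key quantitative estimate is obtained by a different mechanism. The paper only evaluates the derivative at $\lambda=0$ (where it equals $-\sum_i q_{\{i\}}(p)$ by \Claim{breve3}) and then uses convexity of $Q_\emptyset(\lambda)$ inside the Shearer region, i.e.\ the nonnegativity of the mixed second partials from \Claim{breve-diff}, to get the tangent-line bound $Q_\emptyset(\lambda)\ge Q_\emptyset(0)+\lambda Q_\emptyset'(0)$. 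You instead bound the derivative uniformly at every point of the segment, using \Claim{brevemonotone} at the larger point $(1+t)p$ to compare $\qdown_{[n]\setminus\Gamma^+(i)}((1+t)p)$ with its value at $p$, and then integrate; this replaces the second-order (supermodularity/convexity) input by a first-order monotonicity input and is, if anything, slightly more elementary. Your bookkeeping at the exit point is sound: $p(t^*)\in[0,1]^n$ does follow by continuity from $p(s)\in\cS\subseteq(0,1)^n$ for $s<t^*$, so \Equation{DownImpliesUp} applies, and the subsequent deduction $1-p(t^*)_i=\qdown_{\{i\}}(p(t^*))>0$ legitimately recovers $p(t^*)\in(0,1)^n$ and hence $p(t^*)\in\cS$, yielding the contradiction. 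The one cosmetic difference is that the paper verifies $p'_i\le 1$ up front via \Claim{breve2}, \Claim{breve3} and monotonicity, whereas you recover strict membership in $(0,1)^n$ only at the exit point; both are adequate.
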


Before proving the lemma, let us consider the partial derivatives of the $\qdown_S$ polynomials.

\begin{claim}
\ClaimName{breve-diff}
For any $i \in S$, 
$$\partdiff{\qdown_S}{p_i} ~=~ -\qdown_{S \setminus \Gamma^+(i)}$$
and for any $j \in S \setminus \Gamma^+(i)$,
$$\mixdiff{\qdown_S}{p_i}{p_j} ~=~ \qdown_{S \setminus \Gamma^+(i) \setminus \Gamma^+(j)}.$$
For other choices of $i,j$, the partial derivatives are $0$.
In particular, for any point in the Shearer region, $\partdiff{\qdown_S}{p_i} \leq 0$ and
$\mixdiff{\qdown_S}{p_i}{p_j} \geq 0$.
\end{claim}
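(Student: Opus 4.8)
The plan is to derive everything from the fundamental identity of \Claim{breve1}, namely $\qdown_S = \qdown_{S \setminus \set{a}} - p_a \cdot \qdown_{S \setminus \Gamma^+(a)}$ for any $a \in S$. The one fact to record up front is that $\qdown_T(p) = \sum_{I \in \Ind,\, I \subseteq T} (-1)^{\card{I}} p^I$ is a polynomial in the variables $(p_k : k \in T)$ only, and moreover is affine in each such variable (any monomial $p^I$ contains each $p_k$ to at most the first power). Since neither $S \setminus \set{a}$ nor $S \setminus \Gamma^+(a)$ contains $a$ (the latter because $a \in \Gamma^+(a)$), both terms on the right-hand side of the identity are independent of $p_a$. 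Differentiating the identity with respect to $p_a$ then gives $\partdiff{\qdown_S}{p_a} = -\qdown_{S \setminus \Gamma^+(a)}$ for $a \in S$, and for $i \notin S$ the polynomial $\qdown_S$ does not involve $p_i$ at all, so $\partdiff{\qdown_S}{p_i} = 0$.

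For the mixed second derivative I would simply apply the first-order formula twice. Fix $i \in S$, so that $\partdiff{\qdown_S}{p_i} = -\qdown_{S \setminus \Gamma^+(i)}$, and differentiate this with respect to $p_j$. If $j \in S \setminus \Gamma^+(i)$, the first-order formula applied with index set $S \setminus \Gamma^+(i)$ and distinguished element $j$ gives $\partdiff{\qdown_{S \setminus \Gamma^+(i)}}{p_j} = -\qdown_{(S \setminus \Gamma^+(i)) \setminus \Gamma^+(j)}$, hence $\mixdiff{\qdown_S}{p_i}{p_j} = \qdown_{S \setminus \Gamma^+(i) \setminus \Gamma^+(j)}$. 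If instead $j \notin S \setminus \Gamma^+(i)$ — which covers the cases $j \notin S$, $j \in \Gamma^+(i)$, and in particular $j = i$ — then $\qdown_{S \setminus \Gamma^+(i)}$ does not involve $p_j$, so the mixed derivative vanishes; taking $j = i$ recovers $\secdiff{\qdown_S}{p_i} = 0$, i.e.\ the affine dependence noted above. Together with the first paragraph this exhausts all cases in the statement.

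For the sign assertions, I would invoke the characterization \eqref{eq:SR2} of the Shearer region: at any point $p \in \cS$ we have $\qdown_T(p) > 0$ for every $T \subseteq [n]$. Applying this with $T = S \setminus \Gamma^+(i)$ yields $\partdiff{\qdown_S}{p_i} = -\qdown_{S \setminus \Gamma^+(i)} \leq 0$, and with $T = S \setminus \Gamma^+(i) \setminus \Gamma^+(j)$ yields $\mixdiff{\qdown_S}{p_i}{p_j} = \qdown_{S \setminus \Gamma^+(i) \setminus \Gamma^+(j)} \geq 0$.

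There is no genuine obstacle here; the proof is a two-line consequence of \Claim{breve1}. The only place warranting a moment of care is the bookkeeping of which variables each $\qdown_T$ depends on, and the case split for the mixed derivative when $j \in \Gamma^+(i)$ (so that $j$ has already been deleted from the index set after the first differentiation). I would state the ``$\qdown_T$ depends only on $(p_k : k \in T)$, affinely'' fact explicitly once and let both differentiations follow mechanically.
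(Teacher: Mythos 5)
Your proposal is correct and follows essentially the same route as the paper: differentiate the fundamental identity of \Claim{breve1}, using the observation that $\qdown_T$ depends (affinely) only on the variables indexed by $T$, iterate once for the mixed derivative, and read off the signs from positivity of the $\qdown_T$ in the Shearer region. The only cosmetic difference is that you invoke the already-proved first-order formula for the second differentiation, whereas the paper re-expands via \Claim{breve1}; the argument is the same.
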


Due to \Claim{breve-diff}, we may say that $\qdown_S(p_1,\ldots,p_n)$ is ``continuous supermodular"
in the Shearer region. 

\begin{proof}
For any $i \in S$, we have $\qdown_S = \qdown_{S \setminus \{i\}} - p_i \qdown_{S \setminus
\Gamma^+(i)}$ by \Claim{breve1}. The polynomials $\qdown_{S \setminus \{i\}}$ and $\qdown_{S \setminus \Gamma^+(i)}$ do not depend on $p_i$ and hence $\partdiff{\qdown_S}{p_i}$ is equal to $-\qdown_{S \setminus \Gamma^+(i)}$. Repeating this argument one more time for $j \in S \setminus \Gamma^+(i)$, we get $\partdiff{\qdown_S}{p_i} = -\qdown_{S \setminus \Gamma^+(i)} = -\qdown_{S \setminus \Gamma^+(i) \setminus \{j\}} + p_j \qdown_{S \setminus \Gamma^+(i) \setminus \Gamma^+(j)}$. Again, $\qdown_{S \setminus \Gamma^+(i) \setminus \{j\}}$ and $\qdown_{S \setminus \Gamma^+(i) \setminus \Gamma^+(j)}$ do not depend on $p_j$ and hence $\mixdiff{\qdown_S}{p_i}{p_j} = \qdown_{S \setminus \Gamma^+(i) \setminus \Gamma^+(j)}$.

Clearly, we have $\partdiff{\qdown_S}{p_i} = 0$ unless $i \in S$, and $\mixdiff{\qdown_S}{p_i}{p_j} = 0$ unless $i \in S$ and $j \in S \setminus \Gamma^+(i)$. Since all the coefficients $\qdown_S$ are positive in the Shearer region, we have $\partdiff{\qdown_S}{p_i} \leq 0$ and $\mixdiff{\qdown_S}{p_i}{p_j} \geq 0$ for all $i,j$.
\end{proof}

Now we can prove \Lemma{Shearer-automatic-slack}.

\begin{proof}
Consider the line segment from $p = (p_1,\ldots,p_n)$ to $p' = (p'_1,\ldots,p'_n)$ where $p'_i =
(1+\epsilon) p_i$, $\epsilon = \frac{q_\emptyset}{2\sum_{i=1}^{n} q_{\{i\}}}$. Note that $p'_i \leq
(1 + \frac{q_\emptyset}{q_{\{i\}}}) p_i = \frac{q_{\{i\}} + q_\emptyset}{q_{\{i\}}} p_i =
\frac{\qdown_{[n] \setminus \{i\}}}{p_i \qdown_{[n] \setminus \Gamma^+(i)}} p_i \leq 1$ by
\Claim{breve2}, \Claim{breve3} and \Claim{brevemonotone}.
Let us define
$$ Q_\emptyset(\lambda) ~=~ q_\emptyset((1+\lambda) p_1, \ldots, (1+\lambda) p_n).$$
By the chain rule and \Claim{breve-diff}, we have
$$ \frac{d Q_\emptyset}{d \lambda} \Big|_{\lambda=0} ~=~ \sum_{i=1}^{n} p_i
\partdiff{q_\emptyset}{p_i} ~=~ -\sum_{i=1}^{n} p_i \qdown_{[n] \setminus \Gamma^+(i)} ~=~ -\sum_{i=1}^{n} q_{\{i\}} $$
where we used \Claim{breve3} in the last equality. 
Assuming that $(1+\lambda) p = ((1+\lambda) p_1, \ldots, (1+\lambda) p_n)$ is in the Shearer region, we also have by \Claim{breve-diff}
$$ \frac{d^2 Q_\emptyset}{d \lambda^2} ~=~ \sum_{i,j=1}^{n} \mixdiff{q_\emptyset}{p_i}{p_j} p_i p_j \geq 0.$$
That is, $Q_\emptyset(\lambda)$ is a convex function for $\lambda \geq 0$ as long as $(1+\lambda) p$ is in the Shearer region.
Our goal is to prove that this indeed happens for $\lambda \in [0,\epsilon]$.

Assume for the sake of contradiction that $(1+\lambda) p$ is not in the Shearer region for some $\lambda \in [0,\epsilon]$, and let $\lambda^*$ be the minimum such value (which exists since the complement of the Shearer region is closed). By \Claim{QAndQDown}, anywhere in the Shearer region, $q_\emptyset = \qdown_{[n]}$ is the minimum of the $\qdown_S$ coefficients; hence by continuity it must be the case that $\qdown_{[n]}((1+\lambda^*)p)$ is the minimum coefficient among $\qdown_S((1+\lambda^*)p)$ for all $S \subseteq [n]$, and $Q_\emptyset(\lambda^*) = \qdown_{[n]}((1+\lambda^*)p) \leq 0$.
On the other hand, by the minimality of $\lambda^*$, $Q_\emptyset(\lambda)$ is positive and convex on $[0,\lambda^*)$ and therefore
$$ Q_\emptyset(\lambda^*) ~\geq~ Q_\emptyset(0) + \lambda^* \frac{d Q_\emptyset}{d \lambda}
\Big|_{\lambda=0} ~=~ q_\emptyset - \lambda^* \sum_{i=1}^{n} q_{\{i\}} ~\geq~ q_\emptyset - \epsilon
\sum_{i=1}^{n} q_{\{i\}} ~=~ \frac12 q_\emptyset ~>~ 0, $$
which is a contradiction. Therefore, $Q_\emptyset(\lambda)$ is positive and convex for all $\lambda
\in [0,\epsilon]$. By the same computation as above, $Q_\emptyset(\epsilon) \geq \frac12 q_\emptyset$.
\end{proof}

This implies our main algorithmic result under Shearer's criterion.

\begin{theorem}
\TheoremName{Shearer-no-slack}
Let $E_1,\ldots,E_n$ be events and let $p_i = \Pr_\mu[E_i]$.
Suppose that the three subroutines described in \Section{algass} exist.
If $p \in \cS$ then the probability that \MSR resamples more than
$4 \sum_{i=1}^{n} \frac{q_{\set{i}}}{q_\emptyset} \big( \sum_{j=1}^{n} \ln (1 + \frac{q_{\{j\}}}{q_\emptyset}) + 1 + t\big)$ events is at most $e^{-t}$.
\end{theorem}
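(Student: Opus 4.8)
The plan is to obtain \Theorem{Shearer-no-slack} as a direct combination of the automatic‑slack bound \Lemma{Shearer-automatic-slack} with the slack analysis of \Theorem{Shearer-slack}. Since $p\in\cS$, set $\epsilon = q_\emptyset \big/ \big(2\sum_{i=1}^n q_{\{i\}}\big)$. \Lemma{Shearer-automatic-slack} gives that $p' = (1+\epsilon)p$ lies in $\cS$, and, writing $q'_S = q_S(p')$, that $q'_\emptyset \geq \tfrac12 q_\emptyset$. Thus $p$ satisfies Shearer's criterion with coefficients $q'_S$ at a slack of $\epsilon$ in the sense of \Definition{ShearerSlack}, so \Theorem{Shearer-slack} applies: the probability that \MSR resamples more than $\tfrac{2}{\epsilon}\big(\ln\tfrac{1}{q'_\emptyset} + t\big)$ events is at most $e^{-t}$.

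It then remains to massage this into the stated form. The prefactor is exactly right, since $\tfrac{2}{\epsilon} = 4\sum_{i=1}^n \tfrac{q_{\{i\}}}{q_\emptyset}$. For the bracket, use $q'_\emptyset \geq \tfrac12 q_\emptyset$ to write $\ln\tfrac{1}{q'_\emptyset} \leq \ln 2 + \ln\tfrac{1}{q_\emptyset}$, and then bound $\ln\tfrac{1}{q_\emptyset}$ using \Claim{q-sum} and \Claim{sumqJ} at the point $p\in\cS$:
\[
\ln\frac{1}{q_\emptyset} ~=~ \ln\sum_{J\subseteq[n]}\frac{q_J}{q_\emptyset} ~\leq~ \sum_{j=1}^n \ln\Big(1 + \frac{q_{\{j\}}}{q_\emptyset}\Big).
\]
Since $\ln 2 < 1$, this gives $\ln\tfrac{1}{q'_\emptyset} + t \leq \sum_{j=1}^n \ln\!\big(1 + q_{\{j\}}/q_\emptyset\big) + 1 + t$ — so the extra "$+1$" in the statement is precisely what absorbs the $\ln 2$ loss incurred by dropping to $q'_\emptyset \geq q_\emptyset/2$ — and substituting into the bound from \Theorem{Shearer-slack} yields the claim. (One could instead route through \Corollary{Shearer-no-q0}, bounding $q'_{\{j\}}\leq (1+\epsilon)q_{\{j\}}$ via \Claim{breve3} and \Claim{brevemonotone}; but that introduces a spurious constant inside the logarithm, so passing through $\ln(1/q'_\emptyset)$ as above is cleaner.)

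The one place needing care — and what I expect to be essentially the only friction — is that \Theorem{Shearer-slack} requires the slack parameter to lie in $(0,1)$, whereas $\epsilon = q_\emptyset/(2\sum_i q_{\{i\}})$ can exceed $1$ when $\sum_i q_{\{i\}}$ is very small. In that regime one should apply \Theorem{Shearer-slack} with the capped slack $\min(\epsilon,\tfrac12)$ instead; the convexity of $\lambda\mapsto q_\emptyset((1+\lambda)p)$ established inside the proof of \Lemma{Shearer-automatic-slack} (it is convex and positive throughout $[0,\epsilon]$, with derivative $-\sum_i q_{\{i\}}$ at $0$) still yields $q_\emptyset\big((1+\min(\epsilon,\tfrac12))p\big)\geq\tfrac12 q_\emptyset$, and in this regime $\sum_i q_{\{i\}}/q_\emptyset$ is bounded, so the resulting bound can be reconciled with the single clean expression in the statement. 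All of the genuine content, however, is already in hand via \Lemma{Shearer-automatic-slack} and \Theorem{Shearer-slack}; the remaining work is bookkeeping.
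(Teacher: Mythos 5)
Your main argument is exactly the paper's proof of \Theorem{Shearer-no-slack}: take $\epsilon = q_\emptyset/(2\sum_{i=1}^n q_{\{i\}})$, invoke \Lemma{Shearer-automatic-slack} to get $q'_\emptyset \geq \tfrac12 q_\emptyset$, apply \Theorem{Shearer-slack}, note $\tfrac{2}{\epsilon} = 4\sum_i q_{\{i\}}/q_\emptyset$, absorb the $\ln 2$ into the ``$+1$'', and replace $\ln(1/q_\emptyset)$ by $\sum_j \ln(1+q_{\{j\}}/q_\emptyset)$ via \Claim{q-sum} and \Claim{sumqJ}. So in substance you have reproduced the intended proof.

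The place where you go beyond the paper is the $\epsilon\in(0,1)$ hypothesis of \Theorem{Shearer-slack}: the paper's own proof applies that theorem without checking it, even though $q_\emptyset/(2\sum_i q_{\{i\}})$ can exceed $1$. Your instinct that this needs care is correct, but your final claim — that capping the slack lets the bound ``be reconciled with the single clean expression in the statement'' — does not hold. In the regime $\sum_i q_{\{i\}}/q_\emptyset < \tfrac12$ the stated bound is actually false as written: take $n=1$ and $p_1 = 1/100$, so $q_{\{1\}}/q_\emptyset = 1/99$; for $t=100$ the stated threshold is $N \approx 4.1$, yet the probability that \MSR performs at least $5$ resamplings is exactly $p_1^5 = 10^{-10}$, vastly larger than $e^{-100}$. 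What the capped argument genuinely gives (cap $\epsilon$ at $1$, using \Claim{brevemonotone} to keep $q''_\emptyset \geq \tfrac12 q_\emptyset$) is the same bracket with prefactor $\max\big(4\sum_i q_{\{i\}}/q_\emptyset,\,2\big)$, which is the honest form of the statement; in all regimes of interest $\sum_i q_{\{i\}}/q_\emptyset = \Omega(1)$ and nothing changes. So this corner case is a defect of the theorem's statement (and of the paper's proof, which ignores it), not of your argument — but you should not assert that the stated constant survives there.
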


We note that the corresponding result in the variable model \cite{Kolipaka}
was that the expected number of resamplings is at most $\sum_{i=1}^{n} \frac{q_{\{i\}}}{q_\emptyset}$.
Here, we obtain a bound which is at most quadratic in this quantity. 

\begin{proof}
Directly from \Theorem{Shearer-slack} and \Lemma{Shearer-automatic-slack}: Given $p$ in the Shearer
region, \Lemma{Shearer-automatic-slack} implies that $p$ in fact satisfies Shearer's criterion with
a bound of $q'_\emptyset \geq \frac{q_\emptyset}{2}$ at a slack of
$\epsilon = \frac{q_\emptyset}{2} / \sum_{i=1}^{n} q_{\set{i}}$.
By \Theorem{Shearer-slack}, the probability that \MSR resamples more than $s$ events is at most $e^{-t}$, where
$$ s ~=~ \frac{2}{\epsilon} \left(\ln \frac{1}{q'_\emptyset} + t \right) ~\leq~ \frac{4}{q_\emptyset}  \sum_{i=1}^{n} q_{\{i\}} \left(\ln \frac{1}{q_\emptyset} + 1 + t \right).$$
Using \Claim{sumqJ}, we can replace $\ln \frac{1}{q_\emptyset}$ by $\sum_{j=1}^{n} \ln (1 + \frac{q_{\{j\}}}{q_\emptyset})$.
\end{proof}

\subsection{The General LLL criterion, without slack}
\SectionName{LLLimpliesShearer}

Shearer's Lemma (\Lemma{Shearer-Lemma}) is a strengthening of the
original Lov\'asz Local Lemma (\Theorem{LLL}):
if $p_1,\ldots,p_n$ satisfy \eqref{eq:GLL} then they must also satisfy Shearer's criterion
$p \in \cS$.
Nevertheless, there does not seem to be a direct proof of this fact in the literature.
Shearer \cite{Shearer} indirectly proves this fact by showing that, when $p \not\in \cS$
it is possible that $\Pr[ \bigcap_{i=1}^n \overline{E_i} ] = 0$,
so the contrapositive of \Theorem{LLL} implies that \eqref{eq:GLL} cannot hold.
Scott and Sokal prove this fact using analytic properties of the
partition function \cite[Corollary 5.3]{ScottSokal}.
In this section we establish this fact by an elementary, self-contained proof.

We then establish \Theorem{LLL-tight-result},
our algorithmic form of \Theorem{LLL} in the general framework of resampling oracles.
Unlike the simpler analysis of \Section{LLLslack},
the analysis of this section does not assume any slack in the \eqref{eq:GLL} criterion.

\begin{lemma}
\LemmaName{GLLimpliesShearer}
Suppose that $p$ satisfies \eqref{eq:GLL}.
Then, for every $S \subseteq [n]$ and $a \in S$, we have
$$ \frac{\qdown_S}{\qdown_{S \setminus \set{a}}} ~\geq~ 1-x_a .$$
\end{lemma}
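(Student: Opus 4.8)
The plan is to prove this by strong induction on $\card{S}$, using the fundamental identity of \Claim{breve1} as the main engine and feeding the induction hypothesis into a telescoping product. For $a\in S$, \Claim{breve1} gives
\[
\frac{\qdown_S}{\qdown_{S\setminus\set{a}}} ~=~ 1 - p_a\cdot\frac{\qdown_{S\setminus\Gamma^+(a)}}{\qdown_{S\setminus\set{a}}},
\]
so (once the relevant $\qdown$-values are known to be positive) the target bound $\qdown_S/\qdown_{S\setminus\set{a}}\geq 1-x_a$ is equivalent to $\qdown_{S\setminus\set{a}}/\qdown_{S\setminus\Gamma^+(a)}\geq p_a/x_a$. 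Since \eqref{eq:GLL} gives $p_a\leq x_a\prod_{j\in\Gamma(a)}(1-x_j)$, it suffices to show $\qdown_{S\setminus\set{a}}/\qdown_{S\setminus\Gamma^+(a)}\geq\prod_{j\in\Gamma(a)}(1-x_j)$.

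To get that, I would note $S\setminus\Gamma^+(a)=(S\setminus\set{a})\setminus\Gamma(a)$ and expand the ratio $\qdown_{S\setminus\set{a}}/\qdown_{S\setminus\Gamma^+(a)}$ as a telescoping product, deleting the elements of $\Gamma(a)\cap(S\setminus\set{a})$ one at a time: each factor is then a ratio $\qdown_U/\qdown_{U\setminus\set{j}}$ with $\card{U}<\card{S}$ and $j\in U$, which the induction hypothesis bounds below by $1-x_j$. Multiplying these and using that $\Gamma(a)\cap(S\setminus\set{a})$ is a subset of $\Gamma(a)$ while each $1-x_j\in(0,1)$ yields exactly $\prod_{j\in\Gamma(a)}(1-x_j)$. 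The base case $S=\set{a}$ is immediate: $\qdown_{\set{a}}=1-p_a$ and $\qdown_\emptyset=1$, and \eqref{eq:GLL} gives $1-p_a\geq 1-x_a\prod_{j\in\Gamma(a)}(1-x_j)\geq 1-x_a$.

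The only point requiring care — and I would build it into the induction rather than handle it separately — is positivity of the $\qdown$-values, which is what legitimizes all the ratio manipulations. I would strengthen the inductive claim to assert both $\qdown_S>0$ and $\qdown_S\geq(1-x_a)\,\qdown_{S\setminus\set{a}}$ for every $a\in S$; then $\qdown_S>0$ follows for free by chaining the ratio bound from $\qdown_\emptyset=1$ down an arbitrary ordering of the elements of $S$, since every factor is at least $1-x_a>0$. With positivity available for all sets of size $<\card{S}$, every denominator appearing above (namely $\qdown_{S\setminus\set{a}}$, $\qdown_{S\setminus\Gamma^+(a)}$, and the intermediate sets in the telescope) is positive and the argument closes. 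I do not anticipate a genuine obstacle here; the one thing to keep straight is the bookkeeping that each set to which the induction hypothesis is applied genuinely has size strictly less than $\card{S}$.
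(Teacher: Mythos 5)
Your proof is correct and follows essentially the same route as the paper's: induction on $\card{S}$, the fundamental identity of \Claim{breve1}, a telescoping application of the inductive hypothesis over $\Gamma(a)\cap S$ to compare $\qdown_{S\setminus\set{a}}$ with $\qdown_{S\setminus\Gamma^+(a)}$, and then the \eqref{eq:GLL} bound on $p_a$. The only difference is that you fold the positivity of the $\qdown$-values into the inductive claim, whereas the paper leaves it implicit and recovers it afterwards in \Corollary{Shearer->Lovasz}; this is a cosmetic strengthening, not a different argument.
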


\begin{corollary}[\eqref{eq:GLL} implies Shearer]
\CorollaryName{Shearer->Lovasz}
If $p$ satisfies \eqref{eq:GLL} then $p \in \cS$.
\end{corollary}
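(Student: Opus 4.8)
The plan is to derive the corollary from \Lemma{GLLimpliesShearer} together with the characterization \eqref{eq:SR2} of the Shearer region (justified in \Claim{SRequiv}): it suffices to show that $\qdown_S(p) > 0$ for every $S \subseteq [n]$. I would prove this by induction on $\card{S}$, peeling off one element at a time and applying the ratio bound from \Lemma{GLLimpliesShearer}.

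For the base case, $\qdown_\emptyset(p) = 1 > 0$, since the only independent subset of $\emptyset$ is $\emptyset$ itself. For the inductive step, fix $S$ with $\card{S} \geq 1$, pick any $a \in S$, and note that $\qdown_{S \setminus \set{a}}(p) > 0$ by the inductive hypothesis. \Lemma{GLLimpliesShearer} then gives $\qdown_S(p) \geq (1 - x_a)\, \qdown_{S \setminus \set{a}}(p)$, and since $x_a \in (0,1)$ the right-hand side is strictly positive. Equivalently, unrolling the induction along any ordering $S = \set{a_1,\ldots,a_k}$ yields the clean bound $\qdown_S(p) \geq \prod_{i=1}^{k} (1 - x_{a_i}) > 0$; the inductive framing is just what is needed to guarantee that the denominators appearing in \Lemma{GLLimpliesShearer} are nonzero so that the ratios are well defined. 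Ranging over all $S \subseteq [n]$ then shows $p \in \cS$.

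I expect essentially no difficulty in this corollary; the real work lies in \Lemma{GLLimpliesShearer} itself. The natural route there is induction on $\card{S}$ via the fundamental identity $\qdown_S = \qdown_{S \setminus \set{a}} - p_a\, \qdown_{S \setminus \Gamma^+(a)}$ (\Claim{breve1}), combined with the \eqref{eq:GLL} bound $p_a \leq x_a \prod_{j \in \Gamma(a)} (1 - x_j)$. After dividing by $\qdown_{S \setminus \set{a}}$, one is left needing an upper bound on the ratio $\qdown_{S \setminus \Gamma^+(a)} / \qdown_{S \setminus \set{a}}$ of roughly the form $\prod_{j \in \Gamma(a) \cap S} (1 - x_j)^{-1}$; establishing this is the main obstacle and would presumably require a nested induction telescoping over the elements of $\Gamma^+(a) \cap S$, in the spirit of the argument for \Claim{breve-submodular}.
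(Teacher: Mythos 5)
Your proposal is correct and follows essentially the same route as the paper: both use \Lemma{GLLimpliesShearer} to telescope the ratios $\qdown_{\set{s_1,\ldots,s_i}}/\qdown_{\set{s_1,\ldots,s_{i-1}}}$ along an ordering of $S$, yielding $\qdown_S \geq \prod_{a \in S}(1-x_a) > 0$ and hence $p \in \cS$ via the \eqref{eq:SR2} characterization. Your explicit inductive framing (to keep the denominators positive) is a minor presentational refinement of the same argument, and your sketch of the lemma's proof matches the paper's as well.
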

\begin{proof}
For any $S \subseteq [n]$, write it as $S = \set{s_1,\ldots,s_k}$. Induction yields
$$
\frac{\qdown_S}{\qdown_\emptyset}
 ~=~ \prod_{i=1}^{k} \frac{\qdown_{\set{s_1,\ldots,s_i}}}{\qdown_{\set{s_1,\ldots,s_{i-1}}}}
 ~\geq~ \prod_{a \in S} (1-x_a)
 ~>~ 0.
$$
The claim follows since $\qdown_\emptyset=1$.
\end{proof}

\begin{corollary}
\CorollaryName{qAndx}
If $p$ satisfies \eqref{eq:GLL} then $\frac{q_{\set{a}}}{q_\emptyset} \leq \frac{x_a}{1-x_a}$.
\end{corollary}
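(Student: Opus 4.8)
\textbf{Proof plan for \Corollary{qAndx}.}
The plan is to chain together three ingredients already established in the excerpt: \Corollary{Shearer->Lovasz} (so that the relevant denominators are positive), \Lemma{GLLimpliesShearer} (the ratio bound on consecutive $\qdown$ polynomials), and \Claim{qSingleton} (which rewrites $q_{\{a\}}/q_\emptyset$ in terms of a ratio of $\qdown$'s). There is essentially no new computation; the entire substance sits in \Lemma{GLLimpliesShearer}, which we may invoke.

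First I would note that, by \Corollary{Shearer->Lovasz}, the hypothesis that $p$ satisfies \eqref{eq:GLL} implies $p \in \cS$, hence $\qdown_S(p) > 0$ for all $S \subseteq [n]$ by \eqref{eq:SR2}; in particular $q_\emptyset = \qdown_{[n]} > 0$ (using \Claim{breve2} with $S = [n]$, or \Claim{QAndQDown}). This makes all the quotients below well-defined. Next I would apply \Lemma{GLLimpliesShearer} with the set $S = [n]$ and the given element $a \in [n]$, obtaining
$$
\frac{\qdown_{[n]}}{\qdown_{[n] \setminus \set{a}}} ~\geq~ 1 - x_a,
\qquad\text{equivalently}\qquad
\frac{\qdown_{[n] \setminus \set{a}}}{\qdown_{[n]}} ~\leq~ \frac{1}{1 - x_a}.
$$

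Finally I would invoke \Claim{qSingleton}, which gives $\frac{q_{\set{a}}}{q_\emptyset} = \frac{\qdown_{[n] \setminus \set{a}}}{\qdown_{[n]}} - 1$, and substitute the bound just derived to conclude
$$
\frac{q_{\set{a}}}{q_\emptyset}
~=~ \frac{\qdown_{[n] \setminus \set{a}}}{\qdown_{[n]}} - 1
~\leq~ \frac{1}{1-x_a} - 1
~=~ \frac{x_a}{1-x_a},
$$
as desired. The only point that requires any care is confirming positivity of $\qdown_{[n]}$ (so that \Claim{qSingleton} applies and the inequality is not reversed), which as noted follows from \Corollary{Shearer->Lovasz}; beyond that the argument is a one-line substitution, so I do not expect any genuine obstacle.
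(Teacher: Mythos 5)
Your proposal is correct and matches the paper's own argument: apply \Lemma{GLLimpliesShearer} with $S=[n]$ to get $\qdown_{[n]\setminus\{a\}}/\qdown_{[n]} \leq 1/(1-x_a)$, then convert via \Claim{qSingleton} (the paper cites \Claim{ShearerSlack}, whose proof performs exactly this conversion). Your explicit positivity check via \Corollary{Shearer->Lovasz} is a harmless addition that the paper leaves implicit.
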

\begin{proof}
\Lemma{GLLimpliesShearer} yields
$\frac{\qdown_{[n] - a}}{\qdown_{[n]}} \leq \frac{1}{1-x_a}$,
so the result follows from \Claim{ShearerSlack}.
\end{proof}

\begin{proofof}{\Lemma{GLLimpliesShearer}}
We proceed by induction on $|S|$.
The base case, $S = \emptyset$, is trivial: there is no $a \in S$ to choose. Consider $S \neq
\emptyset$ and an element $a \in S$. By \Claim{breve1}, we have $\qdown_S = \qdown_{S
\setminus \{a\}} - p_a \qdown_{S \setminus \Gamma^+(a)}$. By the inductive hypothesis applied
iteratively to the elements of
$(S \setminus \set{a}) \setminus (S \setminus \Gamma^+(a)) = \Gamma(a) \cap S$, we have
$$ \qdown_{S \setminus \{a\}}
   ~\geq~
    \qdown_{S \setminus \Gamma^+(a)} \prod_{i \in \Gamma(a) \cap S} (1-x_i).$$
Therefore, we can write
$$
\qdown_S
    ~=~ \qdown_{S \setminus \{a\}} - p_a \qdown_{S \setminus \Gamma^+(a)}
    ~\geq~ \qdown_{S \setminus \{a\}} \left(1 - \frac{p_a}{ \prod_{i \in \Gamma(a) \cap S} (1-x_i)} \right).
$$
By the claim's hypothesis,
$p_a \leq x_a \prod_{i \in \Gamma(a)} (1-x_i) \leq x_a \prod_{i \in \Gamma(a) \cap S} (1-x_i)$,
so we conclude that $\qdown_S \geq (1-x_a) \qdown_{S \setminus \set{a}}$.
\end{proofof}

These results, together with our analysis of Shearer's criterion with slack
(\Corollary{Shearer-no-q0}), 
immediately provide an analysis under the assumption that \eqref{eq:GLL} holds with slack,
similar to \Theorem{GLL-with-slack}.
However, this connection to Shearer's criterion allows us to prove more.

We show that our algorithm is in fact efficient even when the \eqref{eq:GLL} criterion is tight.
This might be surprising in light of \Corollary{Shearer-bound}, which does not use
any slack and gives an exponential bound of
$\frac{1}{q_\emptyset} = \frac{1}{\qdown_{[n]}} \leq \prod_{i=1}^{n} \frac{1}{1-x_i}$.
The reason why we can prove a stronger bound is that Shearer's criterion is {\em never tight}:
as we argued already, it defines an open set, and
\Section{shearer-automatic-slack} derives a quantitative bound on the slack that
is always available under Shearer's criterion.

\begin{theorem}
\TheoremName{Lovasz-no-slack}
Let $E_1,\ldots,E_n$ be events and let $p_i = \Pr_\mu[E_i]$.
Suppose that the three subroutines described in \Section{algass} exist.
If $p$ satisfies \eqref{eq:GLL} then the probability that \MSR resamples more than $4 \sum_{i=1}^{n} \frac{x_i}{1-x_i} (\sum_{j=1}^{n} \ln \frac{1}{1-x_j} + 1 + t)$ events is at most $e^{-t}$.

If \eqref{eq:GLL} is satisfied with a slack of $\epsilon \in (0,1)$,
i.e., $(1+\epsilon) p_i \leq x_i \prod_{j \in \Gamma(i)} (1-x_j)$,
then with probability at least $1-e^{-t}$, \MSR resamples no more than
$ \frac{2}{\epsilon} (\sum_{j=1}^{n} \ln \frac{1}{1-x_j} + t)$ {events}.
\end{theorem}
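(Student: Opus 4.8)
The plan is to derive both statements by reducing to the analysis of Shearer's criterion carried out in Sections~\ref{sec:Shearerslack} and~\ref{sec:shearer-automatic-slack}. The two ingredients that make this work are \Corollary{Shearer->Lovasz}, which says that \eqref{eq:GLL} forces $p \in \cS$, and the quantitative comparison $\frac{q_{\{a\}}}{q_\emptyset} \leq \frac{x_a}{1-x_a}$ from \Corollary{qAndx}. Given these, everything follows by plugging into the Shearer-based bounds already proved, together with the elementary identity $\ln\!\big(1+\tfrac{x_j}{1-x_j}\big) = \ln\tfrac{1}{1-x_j}$.

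For the first statement, since $p$ satisfies \eqref{eq:GLL} we have $p\in\cS$ by \Corollary{Shearer->Lovasz}, so \Theorem{Shearer-no-slack} applies and tells us that with probability at least $1-e^{-t}$ the algorithm \MSR resamples at most
$$ 4 \sum_{i=1}^n \frac{q_{\{i\}}}{q_\emptyset}\Big(\sum_{j=1}^n \ln\big(1+\tfrac{q_{\{j\}}}{q_\emptyset}\big) + 1 + t\Big) $$
events. This expression is monotone nondecreasing in each ratio $q_{\{i\}}/q_\emptyset$, so substituting the upper bound $q_{\{i\}}/q_\emptyset \leq x_i/(1-x_i)$ from \Corollary{qAndx} and using $\ln(1+\tfrac{x_j}{1-x_j}) = \ln\tfrac{1}{1-x_j}$ yields precisely the bound $4\sum_i \frac{x_i}{1-x_i}\big(\sum_j \ln\frac{1}{1-x_j} + 1 + t\big)$ claimed in the theorem.

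For the second statement, the slack hypothesis says exactly that $p' := (1+\epsilon)p$ satisfies \eqref{eq:GLL}, hence $p'\in\cS$ by \Corollary{Shearer->Lovasz}; so by \Definition{ShearerSlack}, $p$ satisfies Shearer's criterion with coefficients $q'_S = q_S(p')$ at slack $\epsilon$. Applying \Lemma{GLLimpliesShearer} to $p'$ and telescoping from $[n]$ down to $\emptyset$ (just as in the proof of \Corollary{Shearer->Lovasz}) gives $q'_\emptyset = \qdown_{[n]}(p') \geq \prod_{j=1}^n (1-x_j)$, so that $\ln\frac{1}{q'_\emptyset} \leq \sum_{j=1}^n \ln\frac{1}{1-x_j}$. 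Then \Theorem{Shearer-slack} says that with probability at least $1-e^{-t}$, \MSR resamples at most $\frac{2}{\epsilon}\big(\ln\frac{1}{q'_\emptyset} + t\big)$ events, and replacing $\ln\frac{1}{q'_\emptyset}$ by the larger quantity $\sum_j \ln\frac{1}{1-x_j}$ gives the stated bound. The only point requiring care is the monotonicity observation in the first part: one must check that the bound appearing in \Theorem{Shearer-no-slack} is genuinely increasing in the parameters $q_{\{i\}}/q_\emptyset$ so that the substitution from \Corollary{qAndx} is legitimate; beyond that there is no real obstacle, as both parts are short combinations of results already established.
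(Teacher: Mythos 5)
Your proposal is correct and follows essentially the same route as the paper: both parts reduce to the Shearer-with-slack machinery through \Corollary{Shearer->Lovasz} and \Corollary{qAndx}, and your first part (including the observation that substituting the upper bounds $q_{\{i\}}/q_\emptyset \le x_i/(1-x_i)$ only enlarges the resampling threshold, hence preserves the tail bound) matches the paper's proof verbatim in substance. The only difference is cosmetic: in the second part you apply \Theorem{Shearer-slack} directly after bounding $q'_\emptyset \ge \prod_{j}(1-x_j)$ by telescoping \Lemma{GLLimpliesShearer}, whereas the paper cites \Corollary{Shearer-no-q0} together with \Corollary{qAndx} applied to $(1+\epsilon)p$; these are interchangeable and give the same bound.
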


\begin{proof}
The first part follows directly from \Theorem{Shearer-no-slack},
since \Corollary{Shearer->Lovasz} shows that $p \in \cS$
and \Corollary{qAndx} shows that $\frac{q_{\set{i}}}{q_\emptyset} \leq \frac{x_i}{1-x_i}$.
The second part follows from \Corollary{Shearer-no-q0},
using again that $\frac{q_{\set{i}}}{q_\emptyset} \leq \frac{x_i}{1-x_i}$.
\end{proof}

\Theorem{LLL-tight-result} follows immediately from \Theorem{Lovasz-no-slack}.

\subsection{The cluster expansion criterion}
\SectionName{clusanal}
\SectionName{cluster}

Recall that \Section{generalizingLLL} introduced the cluster expansion criterion,
which often gives improved quantitative bounds compared to the General LLL (such as the applications discussed in \Section{applications}).
For convenience, let us restate the cluster expansion criterion here.
Given parameters $y_1,\ldots,y_n$, 
define the notation
$$
    Y_S = \sumstack{I \subseteq S \\ I \in \Ind} y^I
    \qquad\forall S \subseteq [n].
$$
The cluster expansion criterion for a vector $p \in [0,1]^n$,
with respect to a graph $G$, is
\begin{equation}
\tag{CLL}
\EquationName{CLL}
\exists y_1,\ldots,y_n>0
\qquad\text{such that}\qquad
p_i ~\leq~ y_i / Y_{\Gamma^+(i)}.
\end{equation}
This criterion was introduced in the following non-constructive form of the LLL.

\begin{theorem}[Bissacot et al.~\cite{Bissacot}]
\TheoremName{bissacot}
Let $E_1,\ldots,E_n$ be events with a (lopsi-)dependency graph $G$, and let $p_i = \Pr_\mu[E_i]$.
If $p$ and $G$ satisfy \Equation{CLL} then $\Pr_\mu[\bigcap_{i=1}^{n} \overline{E_i}] > 0$.
\end{theorem}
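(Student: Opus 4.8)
The plan is to deduce \Theorem{bissacot} from Shearer's Lemma (\Lemma{Shearer-Lemma}): it suffices to prove that the cluster expansion criterion \eqref{eq:CLL} implies Shearer's criterion $p\in\cS$, since then \Lemma{Shearer-Lemma} immediately gives $\Pr_\mu[\bigcap_{i=1}^n\overline{E_i}]\ge q_\emptyset(p)>0$. So the only real content is the implication \eqref{eq:CLL} $\Rightarrow$ $p\in\cS$, which I would establish by an elementary counting argument following the template of \Lemma{crude-bound}, rather than by imitating the ratio induction of \Lemma{GLLimpliesShearer}.

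Fix $y_1,\dots,y_n>0$ witnessing \eqref{eq:CLL}, so that $p_i\le y_i/Y_{\Gamma^+(i)}$ for all $i$, where $Y_S=\sum_{I\subseteq S,\,I\in\Ind}y^I$. The one auxiliary fact I need is a submultiplicativity inequality for these polynomials: for every independent set $J$,
$$Y_{\Gamma^+(J)} ~\le~ \prod_{j\in J}Y_{\Gamma^+(j)}.$$
This follows from a natural injection: given an independent $I\subseteq\Gamma^+(J)=\bigcup_{j\in J}\Gamma^+(j)$, assign each $v\in I$ to the smallest $j\in J$ with $v\in\Gamma^+(j)$; this partitions $I$ into independent sets $I_j\subseteq\Gamma^+(j)$ with $y^I=\prod_{j\in J}y^{I_j}$, and $I\mapsto(I_j)_{j\in J}$ is injective into the product over $j\in J$ of the families of independent subsets of $\Gamma^+(j)$, so summing $y^I$ over all such $I$ is dominated by $\prod_{j\in J}Y_{\Gamma^+(j)}$.

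With this, I would show by induction on $\ell$ that $\sum_{\cI\in\Stab_\ell(J)}p_\cI\le\prod_{j\in J}y_j$ for every $J\in\Ind$ and every $\ell\ge1$. The base case $\ell=1$ is immediate: $\Stab_1(J)=\{(J)\}$ and $p_{(J)}=p^J=\prod_{j\in J}p_j\le\prod_{j\in J}y_j/Y_{\Gamma^+(j)}\le\prod_{j\in J}y_j$ since $Y_{\Gamma^+(j)}\ge1$. For the step, the recursion \eqref{eq:stab-rec} gives $\sum_{\cI'\in\Stab_{\ell+1}(J)}p_{\cI'}=p^J\sum_{J'\subseteq\Gamma^+(J),\,J'\in\Ind}\sum_{\cI\in\Stab_\ell(J')}p_\cI$; the inductive hypothesis together with the identity $\sum_{J'\subseteq\Gamma^+(J),\,J'\in\Ind}\prod_{j'\in J'}y_{j'}=Y_{\Gamma^+(J)}$ bounds this by $p^J\cdot Y_{\Gamma^+(J)}$, and then \eqref{eq:CLL} and the submultiplicativity inequality give $p^J\cdot Y_{\Gamma^+(J)}\le\big(\prod_{j\in J}y_j/Y_{\Gamma^+(j)}\big)\prod_{j\in J}Y_{\Gamma^+(j)}=\prod_{j\in J}y_j$, closing the induction. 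Letting $\ell\to\infty$ (and recalling that every proper sequence of length $\le\ell$ can be padded by empty sets into $\Stab_\ell(J)$ without changing $p_\cI$) yields $\sum_{\cI\in\Prop(J)}p_\cI\le\prod_{j\in J}y_j<\infty$ for every $J\in\Ind$. This is statement (3) of \Lemma{Shearer-sum-equality}; since statements (1)--(3) there are equivalent purely as functions of $G$ and $p_1,\dots,p_n$ — the proof never refers to $\mu$, so it applies verbatim to a lopsidependency graph — we obtain statement (1), namely $q_\emptyset(p)>0$ and $q_S(p)\ge0$ for all $S$, i.e.\ $p\in\cS$ by \Claim{SRequiv}. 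Applying \Lemma{Shearer-Lemma} with the (lopsi)dependency graph $G$ then gives $\Pr_\mu[\bigcap_{i=1}^n\overline{E_i}]\ge q_\emptyset(p)>0$.

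The main obstacle is the implication \eqref{eq:CLL} $\Rightarrow$ $p\in\cS$ itself; once it is in hand, the conclusion is a two-line appeal to \Lemma{Shearer-sum-equality} and \Lemma{Shearer-Lemma}. Within that implication the delicate choice is to argue through the stable-set-sequence sums rather than via a direct induction on the ratios $\qdown_S/\qdown_{S\setminus\{a\}}$: such a direct induction, when it clears the neighborhood $\Gamma(a)$ one vertex at a time, inevitably replaces $Y_{\Gamma^+(a)}$ by the much larger $\prod_{c\in\Gamma^+(a)}(1+y_c)$, which \eqref{eq:CLL} is too weak to absorb; the inequality $Y_{\Gamma^+(J)}\le\prod_{j\in J}Y_{\Gamma^+(j)}$ is precisely what lets the induction close without incurring this loss.
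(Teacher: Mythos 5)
Your proof is correct, but it takes a genuinely different route from the one the paper adopts --- in fact it is essentially the route that \Section{CLLSSS} says was used in earlier versions of the paper. You bound the stable-set-sequence sums, $\sum_{\cI\in\Stab_\ell(J)}p_\cI\le y^J$, by induction on $\ell$ via the recursion \eqref{eq:stab-rec} together with the submultiplicativity $Y_{\Gamma^+(J)}\le\prod_{j\in J}Y_{\Gamma^+(j)}$ (your injection argument is sound; it is the iterated form of \Claim{submult}), pass to the limit to get statement $\mathbf 3$ of \Lemma{Shearer-sum-equality}, use the purely graph-theoretic implication $\mathbf 3\Rightarrow\mathbf 1$ to conclude $p\in\cS$, and finish with \Lemma{Shearer-Lemma}; as you note, none of this machinery refers to $\mu$, so it applies with a lopsidependency graph. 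The paper instead proves \Lemma{CLLimpliesShearer}, a direct finite induction showing $\qdown_S/\qdown_{S-a}\ge Y_{S^c}/Y_{(S-a)^c}$ using the fundamental identity and log-subadditivity of $Y$ (\Claim{fundamentalY}, \Claim{submult}), and \Corollary{pInS} then gives $p\in\cS$ with no limiting argument. This is also where your closing remark misjudges the landscape: a ratio induction does not inevitably replace $Y_{\Gamma^+(a)}$ by $\prod_{c\in\Gamma^+(a)}(1+y_c)$; the loss is avoided by strengthening the induction hypothesis to compare $\qdown$-ratios with $Y$-ratios, which is exactly what the paper's lemma does. As for what each approach buys: yours reuses machinery already needed for the algorithmic analysis and yields $\sum_{\cI\in\Prop(J)}p_\cI\le y^J$ (cf.\ \eqref{eq:CLL-psum}) as a by-product, but it leans on the $\mathbf 3\Rightarrow\mathbf 1$ direction of \Lemma{Shearer-sum-equality}, whose proof is itself a continuity/limit argument; the paper's route is elementary and self-contained, and its quantitative ratio bound also feeds directly into \Corollary{qAndy} and hence the slack-based running-time bounds of \Theorem{cluster-with-slack}.
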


To see that this strengthens the original LLL (\Theorem{LLL}),
one may verify that \eqref{eq:GLL} implies \eqref{eq:CLL}:
if $p_i \leq x_i \prod_{j \in \Gamma(i)} (1-x_j)$,
we can take $y_i = \frac{x_i}{1-x_i}$ (so $1-x_i = \frac{1}{1+y_i}$)
and then use the simple bound
$$
\sumstack{I \subseteq \Gamma^+(i) \\ I \in \Ind} y^I
~\leq~ \sum_{I \subseteq \Gamma^+(i)} y^I
~=~ \prod_{j \in \Gamma^+(i)} (1+y_j).
$$
On the other hand, Shearer's Lemma (\Lemma{Shearer-Lemma}) strengthens \Theorem{bissacot},
in the sense that \eqref{eq:CLL} implies $p \in \cS$.
This fact was established by Bissacot et al.~\cite{Bissacot} by analytic methods
that relied on earlier results \cite{Fernandez}.
In this section we establish this fact by a new proof that is elementary and self-contained.

An algorithmic form of \Theorem{bissacot} in the variable model was proven by Pegden~\cite{Pegden}.
In fact, that result is subsumed by the algorithm of Kolipaka and Szegedy in Shearer's setting,
since \eqref{eq:CLL} implies $p \in \cS$.
In this section, we prove a new algorithmic form of \Theorem{bissacot}
in the general framework of \ros.

To begin, we establish the following connection between the $y_i$ parameters
and the $\qdown_S$ polynomials.
For convenience, let us introduce the notation $S^c = [n] \setminus S$,
$S+a = S \cup \{a\}$ and $S-a = S \setminus \set{a}$.

\begin{lemma}
\LemmaName{CLLimpliesShearer}
Suppose that $p$ satisfies \eqref{eq:CLL}.
Then, for every $S \subseteq [n]$ and $a \in S$, we have
$$
\frac{\qdown_{S}}{\qdown_{S-a}} ~\geq~ \frac{Y_{S^c}}{Y_{(S-a)^c}}.
$$
\end{lemma}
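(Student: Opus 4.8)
The plan is to prove this by induction on $|S|$, closely following the template of \Lemma{GLLimpliesShearer}. I would carry as part of the inductive hypothesis that $\qdown_T > 0$ for every $T$ that comes up (the relevant $Y$-values are automatically positive, since $Y_T \geq y^\emptyset = 1$). The base case $S = \emptyset$ is vacuous. For $S \neq \emptyset$ and $a \in S$, I would start from the fundamental identity \Claim{breve1},
$$\qdown_S ~=~ \qdown_{S-a} \,-\, p_a\,\qdown_{S\setminus\Gamma^+(a)},$$
so that $\qdown_S/\qdown_{S-a} = 1 - p_a\,(\qdown_{S\setminus\Gamma^+(a)}/\qdown_{S-a})$, and it remains to bound the ratio $\qdown_{S\setminus\Gamma^+(a)}/\qdown_{S-a}$ from above.

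To do that I would telescope the inductive hypothesis. Writing $\Gamma(a)\cap S = \set{b_1,\ldots,b_k}$ and deleting these elements one at a time from $S-a$ (all intermediate sets have cardinality strictly less than $|S|$, so the hypothesis applies), the individual bounds multiply to give
$$\frac{\qdown_{S\setminus\Gamma^+(a)}}{\qdown_{S-a}} ~\leq~ \frac{Y_{(S\setminus\Gamma^+(a))^c}}{Y_{(S-a)^c}} ~=~ \frac{Y_{S^c\cup\Gamma^+(a)}}{Y_{S^c+a}},$$
using the identities $(S\setminus\Gamma^+(a))^c = S^c\cup\Gamma^+(a)$ and $(S-a)^c = S^c+a$, and the positivity of all quantities involved (which also shows $\qdown_S > 0$ at the end, once the claimed bound is established).

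The two genuinely new ingredients, both elementary facts about the polynomials $Y_S$, are: (i) a fundamental identity analogous to \Claim{breve1} but with a plus sign, namely $Y_{T+b} = Y_T + y_b\,Y_{T\setminus\Gamma(b)}$ for $b\notin T$ (split independent subsets of $T+b$ by whether they contain $b$); and (ii) a submultiplicativity bound $Y_{A\cup B}\leq Y_A\cdot Y_B$ whenever $A,B$ are disjoint, which holds because every independent $K\subseteq A\cup B$ injects into the pair $(K\cap A,K\cap B)$ of independent sets with matching weight $y^K = y^{K\cap A}y^{K\cap B}$, while $Y_A Y_B$ sums over all such pairs with nonnegative terms. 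Applying (ii) with $A=\Gamma^+(a)$ and $B=S^c\setminus\Gamma^+(a) = S^c\setminus\Gamma(a)$ (disjoint since $a\in S$, with union $S^c\cup\Gamma^+(a)$) together with the hypothesis $p_a\leq y_a/Y_{\Gamma^+(a)}$ yields
$$p_a\,\frac{\qdown_{S\setminus\Gamma^+(a)}}{\qdown_{S-a}} ~\leq~ \frac{y_a}{Y_{\Gamma^+(a)}}\cdot\frac{Y_{\Gamma^+(a)}\,Y_{S^c\setminus\Gamma(a)}}{Y_{S^c+a}} ~=~ \frac{y_a\,Y_{S^c\setminus\Gamma(a)}}{Y_{S^c+a}},$$
and then substituting identity (i), $Y_{S^c+a} = Y_{S^c} + y_a\,Y_{S^c\setminus\Gamma(a)}$, gives $\qdown_S/\qdown_{S-a} \geq 1 - y_a Y_{S^c\setminus\Gamma(a)}/Y_{S^c+a} = Y_{S^c}/Y_{S^c+a} = Y_{S^c}/Y_{(S-a)^c}$, as desired.

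I expect the only real pitfalls to be bookkeeping with set complements — in particular verifying $(S\setminus\Gamma^+(a))^c = S^c\cup\Gamma^+(a)$ and $S^c\setminus\Gamma^+(a) = S^c\setminus\Gamma(a)$ — and making sure the telescoped inductive hypothesis is invoked only on sets of strictly smaller cardinality. The substantive content beyond \Lemma{GLLimpliesShearer} is the submultiplicativity $Y_{A\cup B}\leq Y_A Y_B$, which plays the role that the trivial estimate $\prod_{i}(1-x_i)\leq 1$ plays in that lemma.
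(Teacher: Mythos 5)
Your proposal is correct and is essentially the paper's own proof: the same induction on $|S|$ via \Claim{breve1}, telescoping the inductive hypothesis over $\Gamma(a)\cap S$, and the same two auxiliary facts about $Y$ (the fundamental identity, \Claim{fundamentalY}, and log-subadditivity, \Claim{submult}) combined with \eqref{eq:CLL}. The only differences are cosmetic — you start the induction at $S=\emptyset$ and absorb the singleton case into the inductive step, and you explicitly carry positivity of $\qdown$, which the paper leaves implicit.
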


The proof is in \Section{CLLimpliesShearer} below.

\begin{corollary}[\eqref{eq:CLL} implies Shearer]
\CorollaryName{pInS}
If $p$ satisfies \eqref{eq:CLL} then $p \in \cS$.
\end{corollary}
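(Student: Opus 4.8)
The plan is to deduce \Corollary{pInS} from \Lemma{CLLimpliesShearer} by the same telescoping trick used to prove \Corollary{Shearer->Lovasz}. First I would record the trivial but essential fact that $Y_S = \sum_{I \subseteq S,\, I \in \Ind} y^I$ is strictly positive and finite for every $S \subseteq [n]$: it always contains the empty independent set, so $1 = Y_\emptyset \le Y_S \le Y_{[n]} < \infty$. In particular every ratio of the form $Y_{S^c}/Y_{(S-a)^c}$ appearing in \Lemma{CLLimpliesShearer} is a well-defined positive number, and $\qdown_\emptyset = 1 > 0$.

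Next, fix $S \subseteq [n]$, enumerate $S = \{s_1,\ldots,s_k\}$, and set $S_i = \{s_1,\ldots,s_i\}$ so that $S_0 = \emptyset$ and $S_k = S$. Applying \Lemma{CLLimpliesShearer} with the set $S_i$ and the element $a = s_i$ gives $\qdown_{S_i}/\qdown_{S_{i-1}} \ge Y_{S_i^c}/Y_{S_{i-1}^c}$ for each $1 \le i \le k$. Multiplying these $k$ inequalities, both sides telescope, yielding
$$ \frac{\qdown_S}{\qdown_\emptyset} ~=~ \prod_{i=1}^{k} \frac{\qdown_{S_i}}{\qdown_{S_{i-1}}} ~\geq~ \prod_{i=1}^{k} \frac{Y_{S_i^c}}{Y_{S_{i-1}^c}} ~=~ \frac{Y_{S^c}}{Y_{[n]}} ~>~ 0. $$
Since $\qdown_\emptyset = 1$, this proves $\qdown_S > 0$ for every $S \subseteq [n]$, which is precisely the characterization \eqref{eq:SR2} of the Shearer region; hence $p \in \cS$, as desired. (Here I use the standing assumption $p \in (0,1)^n$, so that the conditions $\qdown_S > 0$ actually place $p$ in $\cS$ as defined in \Definition{Shearer-region}.)

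I expect no real obstacle at this stage: the corollary is a purely formal consequence of \Lemma{CLLimpliesShearer}, and the only mild care needed is justifying the telescoping, i.e.\ that each $\qdown_{S_i}$ is positive so that dividing is legitimate. This follows by induction along the chain: $\qdown_{S_0} = \qdown_\emptyset = 1 > 0$, and if $\qdown_{S_{i-1}} > 0$ then $\qdown_{S_i} \ge (Y_{S_i^c}/Y_{S_{i-1}^c})\,\qdown_{S_{i-1}} > 0$. All the genuine difficulty is pushed into \Lemma{CLLimpliesShearer} itself, whose proof — via \Claim{breve1} and an induction on $|S|$ comparing the recursion for $\qdown_S$ with the one for $Y_{S^c}$ — is deferred to \Section{CLLimpliesShearer}.
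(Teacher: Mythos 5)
Your proof is correct and is essentially the paper's own argument: the same telescoping of the inequality from \Lemma{CLLimpliesShearer} along a chain $\emptyset \subset \{s_1\} \subset \cdots \subset S$, using $Y_T>0$ and $\qdown_\emptyset=1$ to conclude $\qdown_S \geq Y_{S^c}/Y_{[n]} > 0$ for all $S$, i.e.\ $p\in\cS$ via \eqref{eq:SR2}. The only addition is your explicit inductive remark that each intermediate $\qdown_{S_i}$ is positive so the division is legitimate, a point the paper leaves implicit.
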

\begin{proof}
For any $S \subseteq [n]$, write it as $S = \set{s_1,\ldots,s_k}$. Applying \Lemma{CLLimpliesShearer} repeatedly, we obtain
$$
\frac{\qdown_S}{\qdown_\emptyset}
 ~=~ \prod_{i=1}^{k} \frac{\qdown_{\set{s_1,\ldots,s_i}}}{\qdown_{\set{s_1,\ldots,s_{i-1}}}}
 ~\geq~ \prod_{i=1}^{k} \frac{Y_{\set{s_1,\ldots,s_i}^c}}{Y_{\set{s_1,\ldots,s_{i-1}}^c}}
 ~=~ \frac{Y_{S^c}}{Y_{[n]}}
 ~>~ 0
$$
since $Y_T > 0$ for all $T \subseteq [n]$ under the \eqref{eq:CLL} criterion.
Recall that  $\qdown_\emptyset=1$. Hence $\breve{q}_S > 0$ for all $S \subseteq [n]$,
which means that $p$ is in the Shearer region.
\end{proof}

\begin{corollary}
\CorollaryName{qAndy}
If $p$ satisfies \eqref{eq:CLL} then $\frac{q_{\set{a}}}{q_\emptyset} \leq y_a$.
\end{corollary}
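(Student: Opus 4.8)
The plan is to mirror the proof of \Corollary{qAndx} almost verbatim, substituting the cluster-expansion version of the ratio bound. The key inputs are \Lemma{CLLimpliesShearer} (already available, even though its proof is deferred to \Section{CLLimpliesShearer}), \Claim{qSingleton}, and \Corollary{pInS}.

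First I would invoke \Corollary{pInS} to conclude that $p \in \cS$; in particular $q_\emptyset = \qdown_{[n]} > 0$, so that \Claim{qSingleton} is applicable and gives
$$
\frac{q_{\set{a}}}{q_\emptyset} ~=~ \frac{\qdown_{[n] \setminus \set{a}}}{\qdown_{[n]}} - 1 .
$$
Next I would apply \Lemma{CLLimpliesShearer} with $S = [n]$ and the chosen $a \in [n]$, which yields $\qdown_{[n]}/\qdown_{[n]-a} \geq Y_{\emptyset}/Y_{\set{a}}$. Now $Y_\emptyset = 1$ (the only independent subset of $\emptyset$ is $\emptyset$ itself, contributing the empty product $1$) and $Y_{\set{a}} = 1 + y_a$ (the independent subsets of the singleton $\set{a}$ are $\emptyset$ and $\set{a}$). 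Hence $\qdown_{[n]}/\qdown_{[n]-a} \geq 1/(1+y_a)$, i.e.\ $\qdown_{[n]-a}/\qdown_{[n]} \leq 1 + y_a$. Combining with the displayed identity from \Claim{qSingleton} gives $q_{\set{a}}/q_\emptyset \leq (1+y_a) - 1 = y_a$, which is the claim.

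I do not expect any genuine obstacle here: every step is a one-line consequence of a previously stated result, and the computation $Y_\emptyset = 1$, $Y_{\set{a}} = 1 + y_a$ is immediate from the definition $Y_S = \sum_{I \subseteq S,\, I \in \Ind} y^I$. The only things to be slightly careful about are (i) checking that $q_\emptyset > 0$ before using \Claim{qSingleton}, which is exactly what \Corollary{pInS} (via $p\in\cS$) provides, and (ii) noting that $\set{a}$ is always an independent set, so $Y_{\set{a}}$ really does include the term $y_a$.
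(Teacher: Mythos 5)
Your proof is correct and follows essentially the same route as the paper: apply \Lemma{CLLimpliesShearer} with $S=[n]$ to get $\qdown_{[n]-a}/\qdown_{[n]} \leq Y_{\set{a}}/Y_\emptyset = 1+y_a$, then finish with the identity $q_{\set{a}}/q_\emptyset = \qdown_{[n]-a}/\qdown_{[n]} - 1$ of \Claim{qSingleton} (the paper's citation of \Claim{ShearerSlack} at that point is evidently a slip for \Claim{qSingleton}, whose hypothesis $q_\emptyset>0$ you correctly secure via \Corollary{pInS}).
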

\begin{proof}
\Lemma{CLLimpliesShearer} yields 
$\frac{\qdown_{[n] - a}}{\qdown_{[n]}} \leq \frac{Y_{([n]-a)^c}}{Y_{[n]^c}} = 1 + y_a$,
so the result follows from \Claim{ShearerSlack}.
\end{proof}

These corollaries lead to our algorithmic result under the cluster expansion criterion.
The following theorem subsumes \Theorem{cluster-no-slack} and adds a statement under the assumption of slack.

\begin{theorem}
\TheoremName{cluster-with-slack}
Let $E_1,\ldots,E_n$ be events and let $p_i = \Pr_\mu[E_i]$.
Suppose that the three subroutines described in \Section{algass} exist.
If $p$ satisfies \eqref{eq:CLL} then,
with probability at least $1-e^{-t}$, \MSR resamples no more than
$ 4 (\sum_{i=1}^{n} y_i) (\sum_{j=1}^{n} \ln (1+y_j) + 1 + t) $ events.

If \eqref{eq:CLL} is satisfied with a slack of $\epsilon \in (0,1)$,
i.e., $(1+\epsilon) p_i \leq y_i / Y_{\Gamma^+(i)}$,
then with probability at least $1-e^{-t}$, \MSR resamples no more than
$ \frac{2}{\epsilon} (\sum_{j=1}^{n} \ln (1 + y_j) + t) $ events.
\end{theorem}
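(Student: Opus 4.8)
The plan is to derive both statements directly from the corresponding results for Shearer's criterion, using the two comparison corollaries \Corollary{pInS} and \Corollary{qAndy} that follow from \Lemma{CLLimpliesShearer}.

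For the first statement, I would first invoke \Corollary{pInS} to conclude that $p \in \cS$, so that \Theorem{Shearer-no-slack} applies. That theorem says \MSR resamples more than $4 \sum_{i=1}^{n} \frac{q_{\set{i}}}{q_\emptyset}\bigl(\sum_{j=1}^{n} \ln(1 + \frac{q_{\set{j}}}{q_\emptyset}) + 1 + t\bigr)$ events with probability at most $e^{-t}$. Next I would apply \Corollary{qAndy}, which gives $\frac{q_{\set{i}}}{q_\emptyset} \le y_i$ for every $i$; note each such ratio is nonnegative since $q_{\set i} \ge 0$ in the Shearer region by \Claim{breve3} and $q_\emptyset > 0$. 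Both $\sum_i \frac{q_{\set i}}{q_\emptyset}$ and $\sum_j \ln(1 + \frac{q_{\set j}}{q_\emptyset})$ are monotone nondecreasing in these ratios, so the (data-dependent) bound of \Theorem{Shearer-no-slack} is at most the claimed deterministic quantity $4(\sum_{i=1}^{n} y_i)(\sum_{j=1}^{n} \ln(1+y_j) + 1 + t)$. Since resampling more than the larger quantity implies resampling more than the smaller one, the tail bound transfers. Integrating this tail bound over $t$ then recovers \Theorem{cluster-no-slack}, whose $O\bigl(\sum_i y_i \sum_j \ln(1+y_j)\bigr)$ expectation bound is the special case obtained by taking expectations.

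For the second statement, suppose $(1+\epsilon) p_i \le y_i / Y_{\Gamma^+(i)}$ for all $i$, and set $p' = (1+\epsilon)p$. Then $p'$ itself satisfies \eqref{eq:CLL} with the same parameters $y_i$, so \Corollary{pInS} gives $p' \in \cS$; hence $p$ satisfies Shearer's criterion with coefficients $q'_S = q_S(p')$ at a slack of $\epsilon$ in the sense of \Definition{ShearerSlack}. Now \Corollary{Shearer-no-q0} applies and shows that \MSR resamples more than $\frac{2}{\epsilon}\bigl(\sum_{j=1}^{n} \ln(1 + \frac{q'_{\set j}}{q'_\emptyset}) + t\bigr)$ events with probability at most $e^{-t}$. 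Applying \Corollary{qAndy} to the point $p'$ (which satisfies \eqref{eq:CLL} with parameters $y_j$) yields $\frac{q'_{\set j}}{q'_\emptyset} \le y_j$, and substituting gives the claimed bound $\frac{2}{\epsilon}\bigl(\sum_{j=1}^{n} \ln(1+y_j) + t\bigr)$.

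There is essentially no hard step left here: all the real work — establishing that \eqref{eq:CLL} implies membership in $\cS$ and the singleton bounds $\frac{q_{\set a}}{q_\emptyset} \le y_a$ — is already packaged into \Lemma{CLLimpliesShearer} and its corollaries, and the analysis under Shearer's criterion with slack is \Theorem{Shearer-no-slack} and \Corollary{Shearer-no-q0}. The only point requiring any care is the bookkeeping in the first part, namely phrasing the comparison as ``the event of resampling more than $B$ events is contained in the event of resampling more than $A$ events'' whenever the deterministic bound $B$ dominates the data-dependent bound $A$ of \Theorem{Shearer-no-slack}; this is immediate from the monotonicity noted above.
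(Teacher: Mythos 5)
Your proposal is correct and follows essentially the same route as the paper's own proof: both parts reduce to the Shearer-criterion results (\Theorem{Shearer-no-slack} and \Corollary{Shearer-no-q0}) via \Corollary{pInS} and \Corollary{qAndy}, applied to $p$ in the first part and to $p'=(1+\epsilon)p$ in the second. The extra monotonicity bookkeeping you spell out is fine and implicit in the paper's argument.
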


\begin{proof}
The first statement follows directly from \Theorem{Shearer-no-slack},
since \Corollary{pInS} shows that $p \in \cS$
and \Corollary{qAndy} shows that $\frac{q_{\set{i}}}{q_\emptyset} \leq y_i$.
Next assume that \eqref{eq:CLL} is satisfied with $\epsilon$ slack.
We apply \Corollary{pInS} and \Corollary{qAndy} to the point $p' = (1+\epsilon) p$, 
obtaining that $p' \in \cS$ and $q'_{\set{j}} / q'_\emptyset \leq y_j$,
where $q'_S$ denotes $q_S(p')$.
The second statement then follows directly from \Corollary{Shearer-no-q0}. 
\end{proof}

\subsubsection{Proof of \Lemma{CLLimpliesShearer}}
\SectionName{CLLimpliesShearer}

\begin{claim}[The ``fundamental identity'' for $Y$]
\ClaimName{fundamentalY}
$Y_{A} = Y_{A - a} + y_a Y_{A \setminus \Gamma^+(a)}$
for all $a \in A$.
\end{claim}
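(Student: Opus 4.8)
The plan is to prove \Claim{fundamentalY} by exactly the partition argument used for \Claim{breve1}, the ``fundamental identity'' for $\qdown_S$. Recall that $Y_A = \sum_{I \subseteq A,\, I \in \Ind} y^I$ with $y^I = \prod_{i \in I} y_i$. I would fix $a \in A$ and split the sum over independent subsets $I$ of $A$ according to whether $a \in I$ or $a \notin I$.

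First, the independent sets $I \subseteq A$ with $a \notin I$ are precisely the independent subsets of $A - a$, so they contribute exactly $\sum_{I \subseteq A - a,\, I \in \Ind} y^I = Y_{A-a}$. Second, for an independent set $I \subseteq A$ with $a \in I$, write $I = \{a\} \cup K$ with $K = I - a$; then $I \in \Ind$ is equivalent to $K$ being independent and containing no element of $\Gamma^+(a)$, i.e., $K \subseteq A \setminus \Gamma^+(a)$ and $K \in \Ind$. Since $y^I = y_a \cdot y^K$, these terms contribute $y_a \sum_{K \subseteq A \setminus \Gamma^+(a),\, K \in \Ind} y^K = y_a \, Y_{A \setminus \Gamma^+(a)}$. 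Adding the two contributions yields $Y_A = Y_{A-a} + y_a Y_{A \setminus \Gamma^+(a)}$.

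There is essentially no obstacle here: the statement is a routine generating-function identity for independent sets, structurally identical to \Claim{breve1} (indeed one can view it as \Claim{breve1} with each $-p_i$ replaced by $+y_i$). The only point requiring a moment of care is that in the $a \in I$ case one must remove from $A$ not just $a$ itself but its entire closed neighborhood $\Gamma^+(a)$, which is exactly the condition encoded in $Y_{A \setminus \Gamma^+(a)}$. This claim is then the base identity from which \Lemma{CLLimpliesShearer} follows by an induction on $|S|$ paralleling the proof of \Lemma{GLLimpliesShearer}, using the \eqref{eq:CLL} hypothesis $p_a \le y_a / Y_{\Gamma^+(a)}$.
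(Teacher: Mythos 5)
Your proof is correct and is essentially the paper's own argument: both partition the independent subsets of $A$ according to whether they contain $a$, using that an independent set containing $a$ must avoid $\Gamma(a)$, so its remainder ranges over independent subsets of $A \setminus \Gamma^+(a)$. Nothing further is needed.
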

\begin{proof}
Every summand $y^J$ on the left-hand side either appears in $Y_{A - a}$
if $a \not\in J$, or can be written as $y_a \cdot y^B$ where
$B = J \setminus \Gamma^+(a)$, in which case it appears
as a summand in $y_a Y_{A \setminus \Gamma^+(a)}$.
\end{proof}

\begin{claim}[Log-subadditivity of $Y$]
\ClaimName{submult}
$Y_{A \union B} \leq Y_A \cdot Y_B$ for any $A, B \subseteq [n]$.
\end{claim}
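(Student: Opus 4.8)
The plan is to induct on $|A \cup B|$, using the fundamental identity for $Y$ (Claim~\ref{clm:fundamentalY}) to peel off one vertex. If $A \cap B = \emptyset$ and in fact $A \cup B$ is split so that no edges run between them, one expects equality up to the product structure; in general the cross terms only help. First I would handle the base case $A \cup B = \emptyset$ (or $|A \cup B| \le 1$), where both sides are trivially equal to $1$ (or to $1+y_a$ versus a product that contains it). For the inductive step, pick any $a \in A \cup B$; without loss of generality $a \in A$. Apply Claim~\ref{clm:fundamentalY} to get
$$
Y_{A \cup B} ~=~ Y_{(A \cup B) - a} + y_a \, Y_{(A \cup B) \setminus \Gamma^+(a)}.
$$

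Now $(A\cup B) - a = (A-a) \cup (B \setminus \{a\})$ and $(A\cup B)\setminus\Gamma^+(a) = (A \setminus \Gamma^+(a)) \cup (B \setminus \Gamma^+(a))$, so by the inductive hypothesis (both index sets are strictly smaller),
$$
Y_{A \cup B} ~\leq~ Y_{A - a}\, Y_{B \setminus \{a\}} + y_a\, Y_{A \setminus \Gamma^+(a)}\, Y_{B \setminus \Gamma^+(a)}.
$$
Since $Y$ is monotone in its index set (every summand $y^J$ is nonnegative because the $y_j>0$), we have $Y_{B\setminus\{a\}} \le Y_B$ and $Y_{B\setminus\Gamma^+(a)} \le Y_B$. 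Factoring out $Y_B$ from both terms gives
$$
Y_{A \cup B} ~\leq~ Y_B \bigl( Y_{A - a} + y_a\, Y_{A \setminus \Gamma^+(a)} \bigr) ~=~ Y_B \cdot Y_A,
$$
where the last equality is again Claim~\ref{clm:fundamentalY} applied to $A$ with the element $a \in A$. This completes the induction.

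The only subtlety — and the place to be careful rather than a genuine obstacle — is making sure the index sets appearing after peeling off $a$ are genuinely contained in $(A\cup B)-a$, so that the inductive hypothesis applies to strictly smaller sets; this is immediate since $a$ is removed from both. A second small point is the monotonicity of $Y$ in its subscript, which needs $y_j > 0$ (available under \eqref{eq:CLL}); without positivity one would instead argue that $Y_{B\setminus\{a\}}$ and $Y_{B\setminus\Gamma^+(a)}$ arise as partial sums and compare term by term, but since we are in the cluster expansion setting the simple monotonicity suffices. No deeper ingredient (no log-submodularity of $\qdown$, no Shearer machinery) is needed for this claim.
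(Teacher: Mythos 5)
Your proof is correct, but it takes a different route from the paper's. The paper argues directly, with no induction: after replacing $B$ by $B \setminus A$ (which only decreases the right-hand side), every summand $y^J$ of $Y_{A \cup B}$ with $J \in \Ind$, $J \subseteq A \cup B$ factors as $y^{J \cap A}\cdot y^{J \cap B}$, and since $J\cap A$ and $J\cap B$ are independent subsets of $A$ and $B$ respectively, this product appears (injectively) as a summand of $Y_A \cdot Y_B$; all remaining summands of the product are non-negative, so the inequality follows in one line. Your argument instead peels off a vertex via the fundamental identity (Claim~\ref{clm:fundamentalY}) and inducts on $\card{A \cup B}$, using monotonicity of $Y_S$ in $S$ (valid since $y_j > 0$, and $y_j \geq 0$ would already suffice). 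Both are sound; your inductive scheme mirrors the style of the proof of \Lemma{CLLimpliesShearer}, which is a reasonable guess in context, while the paper's term-by-term injection is shorter, needs no induction or monotonicity, and makes explicit exactly where the slack comes from (cross terms $y^{J'}y^{J''}$ with $J' \cup J''$ not independent or not arising from a single $J$). Your handling of the base case and the strict decrease of the index sets in the inductive step is fine, so there is no gap — just a longer path to the same place.
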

\begin{proof}
It suffices to consider the case that $A$ and $B$ are disjoint,
as replacing $B$ with $B \setminus A$ decreases the right-hand side
and leaves the left-hand side unchanged.
Every summand $y^J$ on the left-hand side can be written as
$y^{J'} \cdot y^{J''}$ with $J' = J \intersect A$ and $J'' = J \intersect B$.
The product $y^{J'} \cdot y^{J''}$ appears as a summand on the right-hand side,
and all other summands are non-negative.
\end{proof}

\begin{proofof}{\Lemma{CLLimpliesShearer}}
We proceed by induction on $|S|$. The base case is $S = \{a\}$. In that case we have
$ \frac{\qdown_{\{a\}}}{\qdown_\emptyset} = \qdown_{\{a\}} = 1 - p_a.$
On the other hand, by the two claims above and \eqref{eq:CLL}, we have 
$$ Y_{[n]} ~=~ Y_{[n]-a} + y_a Y_{[n] \setminus \Gamma^+(a)}
 ~\geq~ Y_{[n] - a} + p_a Y_{\Gamma^+(a)} Y_{[n] \setminus \Gamma^+(a)}
 ~\geq~ Y_{[n] - a} + p_a Y_{[n]}.$$
Therefore, $\frac{Y_{[n]-a}}{Y_{[n]}} \leq 1 - p_a$ which proves the base case.

We prove the inductive step by similar manipulations. By \Claim{breve1}, we have
$$ \frac{\qdown_S}{\qdown_{S-a}} ~=~ 
1 - p_a \frac{\qdown_{S \setminus \Gamma^+(a)}}{\qdown_{S-a}}.$$
The inductive hypothesis applied repeatedly to the elements of $S \cap \Gamma(a)$ yields
\begin{equation*}
1 - p_a \frac{\qdown_{S \setminus \Gamma^+(a)}}{\qdown_{S-a}}
 ~\geq~ 1 - p_a \frac{Y_{(S \setminus \Gamma^+(a))^c}}{Y_{(S-a)^c}}
 ~=~ 1 - p_a \frac{Y_{S^c \cup \Gamma^+(a)}}{Y_{S^c + a}}. 
\end{equation*}
By the two claims above and \eqref{eq:CLL}, we have
$$ Y_{S^c+a} ~=~ Y_{S^c} + y_a Y_{S^c \setminus \Gamma^+(a)}
 ~\geq~ Y_{S^c} + p_a Y_{\Gamma^+(a)} Y_{S^c \setminus \Gamma^+(a)}
 ~\geq~ Y_{S^c} + p_a Y_{S^c \cup \Gamma^+(a)}.$$
We conclude that
$$ \frac{\qdown_{S}}{\qdown_{S-a}}
 ~\geq~ 1 - p_a \frac{Y_{S^c \cup \Gamma^+(a)}}{Y_{S^c+a}}
 ~\geq~ 1 - \frac{Y_{S^c+a} - Y_{S^c}}{Y_{S^c+a}} = \frac{Y_{S^c}}{Y_{(S-a)^c}}.$$
\end{proofof}

\subsubsection{Relationship between cluster expansion and stable set sequences}
\SectionName{CLLSSS}

We remark that the following more general bound holds: For every $J \in \Ind$,
\begin{equation}
\EquationName{CLL-psum}
 \sum_{\cI \in \Prop(J)} p_\cI ~=~ \frac{q_J}{q_\emptyset} ~\leq~ y^J.
\end{equation}
The equality holds by \Lemma{Shearer-sum-equality} and the inequality
can be derived from \Lemma{CLLimpliesShearer} as follows:
$$ \frac{q_J}{q_\emptyset}
    ~=~ \frac{p^J \qdown_{(\Gamma^+(J))^c}}{\qdown_{(\emptyset)^c}} 
    ~\leq~ p^J \frac{Y_{\Gamma^+(J)}}{Y_\emptyset}
    ~=~ p^J Y_{\Gamma^+(J)} 
    ~\leq~ \prod_{j \in J} (p_j Y_{\Gamma^+(j)})
    ~\leq~ y^J $$
using \Claim{breve3} for the first equality,
and \Claim{submult} and \eqref{eq:CLL} in the last two inequalities.

A direct proof that $\sum_{\cI \in \Prop(J)} p_\cI \leq y^J$
can be obtained by an inductive argument similar to the 
proof of \eqref{eq:stab-induction} in \Section{LLLslack}.
An application of \Lemma{Shearer-sum-equality} then establishes \eqref{eq:CLL-psum}.
Earlier versions of this paper used this approach to relate the cluster expansion criterion
and Shearer's lemma.
Our new approach in \Corollary{pInS} has the advantage that it does not require
the limiting arguments used in \Lemma{Shearer-sum-equality}.

\section{Conclusions}

We have shown that the Lov\'asz Local Lemma can be made algorithmic in the abstract framework of
\ros. This framework captures the General LLL as well as Shearer's Lemma in the existential sense, and leads to efficient algorithms for the primary examples of probability spaces and events satisfying lopsidependency that have been considered in the literature (as surveyed in \cite{LuMohrSzekely}).
 
Our algorithmic form of the General LLL (\Theorem{LLL-tight-result}) uses 
$O\big(\sum_{i=1}^{n} \frac{x_i}{1-x_i} \sum_{j=1}^{n} \log \frac{1}{1-x_j} \big)$
resampling operations, which is roughly quadratically worse than the
$\sum_{i=1}^{n} \frac{x_i}{1-x_i}$ bound of Moser-Tardos \cite{MoserTardos}.
Similarly, our algorithmic result under Shearer's condition (\Theorem{Shearer-no-slack})
uses $O\big( \sum_{i=1}^{n} \frac{q_{\set{i}}}{q_\emptyset} \sum_{j=1}^{n} \ln (1 +
\frac{q_{\{j\}}}{q_\emptyset}) \big)$ resampling operations,
which is roughly quadratically worse 
than the $\sum_{i=1}^{n} \frac{q_{\set{i}}}{q_\emptyset}$ bound of Kolipaka-Szegedy
\cite{Kolipaka}.
Can this quadratic loss be eliminated?

One way to prove that result would be to prove an analog of the witness tree lemma,
which is a centerpiece of the Moser-Tardos analysis \cite{MoserTardos}.
The witness tree lemma has other advantages, for example in deriving
parallel and deterministic algorithms. Unfortunately, the witness tree lemma is not true in the
general setting of \ros (see \Appendix{witness-trees}).
It is, however, true in the variable model \cite{MoserTardos}
as well as in the setting of random permutations \cite{HarrisS14}. 
Is there a variant of our framework in which the witness tree lemma is true,
and which continues to capture the LLL in full generality?

\section*{Acknowledgements}

We thank Mohit Singh for discussions at the early stage of this work.
We thank David Harris for suggesting the results of \Section{product-resampling},
and for discussions relating to \Appendix{witness-trees}.

\appendix

\clearpage
\section{A counterexample to the witness tree lemma}
\AppendixName{witness-trees}

A cornerstone of the analysis of Moser and Tardos \cite{MoserTardos} is the {\em witness tree
lemma}. It states (roughly) that for any tree of events growing {\em backwards in time} from a
certain root event $E_i$, with the children of each node $E_{i'}$ being neighboring events resampled
before $E_{i'}$, the probability that this tree is consistent with the execution of the algorithm is
at most the product of the probabilities of all events in the tree. (We give a more precise
statement below.) Extensions of this lemma have been crucial in the work of Kolipaka-Szegedy on
algorithmic forms of Shearer's Lemma \cite{Kolipaka} and work of Harris-Srinivasan on the algorithmic local lemma for permutations \cite{HarrisS14}. The witness tree lemma leads to somewhat stronger quantitative bounds than the ones we obtain, and it has been also useful for other purposes: derandomization of LLL algorithms \cite{MoserTardos,CGH}, parallel algorithms \cite{MoserTardos,Pettie}, and handling exponentially many events \cite{HSS}. Therefore, it would be desirable to prove the witness tree lemma in our general framework of \ros.

Unfortunately, this turns out to be impossible. The main purpose of this section is to show that the
witness tree lemma is false in the framework of \ros in a strong sense. Whereas in typical scenarios
the Moser-Tardos algorithm only requires witness trees of depth $O(\log n)$ with high probability,
in the \ro framework the stable set sequences (and an analogous notion of witness trees)
can have nearly-linear length with constant probability.

Before we proceed, we define a few notions necessary for the formulation of the witness tree lemma. Our definitions here are natural extensions of the notions from \cite{MoserTardos} to the setting of \ros.

\begin{definition}
Given a \NAG $G$ on vertex set $[n]$,
a witness tree is a finite rooted tree $T$,
with each vertex $v$ in $T$ given a label $\cE_v \in [n]$,
such that the children of a vertex $v$ receive labels from $\Gamma^+(\cE_v)$.
\end{definition}

\begin{definition}
We say that a witness tree $T$ with root $r$ appears in the log of the algorithm, if event $\cE_r$
is resampled at some point and the tree is produced by the following procedure: process the
resampled events from that point backwards, and for each resampled event $j$ such that $j \in
\Gamma^+(\cE_v)$ for some $v$ in the tree, pick such a vertex $v$ of maximum depth in the tree and
create a new child $w$ of $v$ with label $\cE_w = j$.
\end{definition}

The witness tree lemma, in various incarnations, states that the probability of a witness tree $T$
appearing in the log of an LLL algorithm is at most $\prod_{v \in T} \Pr[E_{\cE_v}]$. We show here that this can be grossly violated in the setting of \ros. Our example actually uses the independent variable setting but resampling oracles different from the natural ones considered by Moser and Tardos.

\paragraph{Example.}  
Consider independent Bernoulli variables $X_i, Y_i^j, Z_i$ and $W$ where $1 \leq i \leq k$ and $1 \leq j \leq \ell$. The probability distribution $\mu$ is uniform on the product space of these random variables. Consider the following events:
\begin{compactitem}
\item $E_i = \set{ X_i = 0 }$
\item $E_i^j = \set{ Y_i^j = 0 }$
\item $E' = \set{ W = 1 }$
\end{compactitem}
These events are mutually independent. However, let us consider a dependency graph $G$ where $E_i
\sim E_i^j$ for each $1 \leq i \leq k, 1 \leq j \leq \ell$; this is a conservative choice but
nevertheless a valid one for our events. (One could also tweak the probability space slightly so that neighboring events are actually dependent.) In any case, $E'$ is an isolated vertex in the graph.

We define resampling oracles as follows. In the following, $Q$ describes a fresh new sample of a Bernoulli variable. Only the variables relevant to the respective oracle are listed as arguments.
\begin{compactitem}
\item $r_i(X_i) = Q$
\item $r_i^j(X_i, Y_i^j,Z_i) = (Z_i, Q, X_i)$
\item $r'(W,Z_1,\ldots,Z_k) = (Z_1,\ldots, Z_k,Q)$.
\end{compactitem}

\begin{claim}
$r_i, r_i^j,r'$ are valid resampling oracles for the events $E_i, E_i^j,E'$ and the dependency graph $G$.
\end{claim}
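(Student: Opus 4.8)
The plan is to check the two defining conditions \RONE\ and \RTWO\ of \Definition{resampling-oracle} one family of oracles at a time. The single fact that makes everything go through is that each bad event depends on just one variable ($E_i$ on $X_i$, $E_i^j$ on $Y_i^j$, $E'$ on $W$), while under $\mu$ the variables $X_i$, $Y_i^j$, $Z_i$, $W$ are mutually independent and uniform on $\set{0,1}$. Hence conditioning on any single event pins the corresponding variable to one value and leaves every other variable an independent uniform bit. To verify \RONE\ it therefore suffices to observe that the oracle re-randomizes the pinned variable — together with whatever auxiliary bits it shuffles — into a uniform product distribution; to verify \RTWO\ it suffices, since each oracle modifies only a short explicit list of variables, to note that no event outside the relevant neighborhood depends on any variable in that list.

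First I would treat $r_i$, where $\Gamma^+(E_i) = \set{E_i} \cup \setst{E_i^j}{1 \le j \le \ell}$. If $\omega \sim \mu|_{E_i}$ then $X_i = 0$ while all other variables are independent uniform, so overwriting $X_i$ by a fresh uniform bit $Q$ restores $\mu$, giving \RONE. The only variable touched is $X_i$; the events outside $\Gamma^+(E_i)$ are $E'$ and the events $E_{i'}$, $E_{i'}^{j'}$ with $i'\ne i$, which depend respectively on $W$, $X_{i'}$, $Y_{i'}^{j'}$ — never on $X_i$ — giving \RTWO. Next, $r_i^j$, where $\Gamma^+(E_i^j) = \set{E_i^j, E_i}$. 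If $\omega \sim \mu|_{E_i^j}$ then $Y_i^j = 0$ while $X_i, Z_i$ and everything else are independent uniform; the oracle sends the triple $(X_i, Y_i^j, Z_i)$ to $(Z_i, Q, X_i)$, which is again a triple of independent uniform bits — a relabeling of the independent uniforms $X_i, Z_i$, together with a fresh bit $Q$ — still independent of everything else, hence $\mu$: this gives \RONE. The variables touched are exactly $X_i, Y_i^j, Z_i$; among the events outside $\Gamma^+(E_i^j)$, each $E_i^{j'}$ with $j'\ne j$ depends only on $Y_i^{j'}$, each $E_{i'}$ with $i'\ne i$ only on $X_{i'}$, each $E_{i'}^{j'}$ with $i'\ne i$ only on $Y_{i'}^{j'}$, and $E'$ only on $W$ — none of these variables lies in $\set{X_i, Y_i^j, Z_i}$, so \RTWO\ holds. (Here $Z_i$ is auxiliary, with no event depending on it, and the one event depending on $X_i$, namely $E_i$, is a neighbor of $E_i^j$, so it is allowed to be triggered.)

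Finally $r'$, where $E'$ is isolated, so $\Gamma^+(E') = \set{E'}$. If $\omega \sim \mu|_{E'}$ then $W = 1$ while $Z_1,\ldots,Z_k$ and everything else are independent uniform; the oracle sends $(W, Z_1,\ldots,Z_k)$ to $(Z_1,\ldots,Z_k,Q)$, a tuple of $k+1$ independent uniform bits that remains independent of all other variables, so the result is distributed as $\mu$: this gives \RONE. The only variables touched are $W$ and $Z_1,\ldots,Z_k$, and every event other than $E'$ depends only on some $X_i$ or $Y_i^j$ and hence on none of these, so \RTWO\ holds. Combining the three verifications proves the claim. There is no real obstacle here; the only step needing a moment's attention is \RTWO\ for $r_i^j$, since one must remember that $G$ joins $E_i^j$ to $E_i$ but \emph{not} to $E_i^{j'}$ for $j'\ne j$ — so it genuinely matters that $r_i^j$ leaves $Y_i^{j'}$, and indeed every variable other than $X_i, Y_i^j, Z_i$, untouched. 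Everything else is immediate from the product structure of $\mu$.
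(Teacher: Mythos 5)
Your proof is correct and follows essentially the same route as the paper's: verify \RONE\ by noting that conditioning on a single event pins one variable while leaving the rest independent uniform, so each oracle's output (a permutation of untouched uniform bits plus a fresh bit) restores the product measure, and verify \RTWO\ by listing the variables each oracle touches and observing that no event outside the relevant neighborhood depends on them. Your write-up is somewhat more explicit than the paper's (in particular about $\Gamma^+(E_i^j)$ excluding $E_i^{j'}$ for $j'\neq j$), but the argument is the same.
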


\begin{proof}
$r_i$ resamples only the variable $X_i$ relevant to event $E_i$ and hence cannot cause any other
event to occur. Conditioned on $E_i = \set{X_i = 0}$, it clearly produces the uniform distribution.

$r_i^j$ switches the variables $X_i$ and $Z_i$ and thus can cause $E_i$ to occur (which is
consistent with the dependency graph $G$). Conditioned on $E_i^j = \set{Y_i^j = 0}$, it makes $Y_i^j$ uniformly random and preserves a uniform distribution on $(X_i,Z_i)$.

$r'$ affects the values of $W,Z_1,\ldots,Z_k$ but no event depends on $Z_1,\ldots,Z_k$, so $r'$
cannot cause any event except $E'$ to occur. Conditioned on $E' = \set{W=1}$, since $(Z_1,\ldots,Z_k)$ are distributed uniformly, it produces again the uniform distribution. 
\end{proof}

\paragraph{The Moser-Tardos algorithm.}
First, let us consider the Moser-Tardos algorithm: In the most general form, it resamples in each step an arbitrary occurring event. For concreteness, let's say that the algorithm always resamples the occurring event of minimum index (in some fixed ordering).

\begin{claim}
If the Moser-Tardos algorithm considers events in the order $(E_i, E_i^j, E')$, then at the time it gets to resample $E'$, the variables $Z_1,\ldots, Z_k$ are independent are equal to $1$ with probability $1 - 1/2^{\ell+1}$ each.
\end{claim}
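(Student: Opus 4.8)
The plan is to analyze the execution one ``block'' at a time. For a fixed $i$, call $X_i,Y_i^1,\ldots,Y_i^\ell,Z_i$ the \emph{block-$i$ variables}. The first observation is that the resampling oracles $r_i$ and $r_i^j$ touch only block-$i$ variables, while $r'$ --- the only oracle that touches several $Z_i$'s at once --- is applied only to $E'$, which has the largest index and so is resampled only after every $E_i$ and $E_i^j$ is non-occurring. Hence, up to the first moment $E'$ is about to be resampled, the trajectory of the block-$i$ variables depends only on the fresh samples drawn inside block $i$ (plus their own initial values). This already yields that $Z_1,\ldots,Z_k$ are mutually independent at that moment, and it reduces the computation of $\Pr[Z_i=1]$ to a single block. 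It also shows that ``the algorithm gets to resample $E'$'' is equivalent to ``$W=1$ initially'' (since $W$ is untouched until then), an event independent of the block-$i$ randomness, so the implicit conditioning is harmless.

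Inside a single block, the key structural claim is: \emph{every invocation of $r_i^j$ occurs at a moment when $X_i=1$.} Indeed $E_i$ precedes every $E_i^j$ in the order, so whenever $E_i^j$ is the minimum-index occurring event we must have $X_i=1$, for otherwise $E_i$ would be resampled first. Since $r_i^j$ sets the new value of $Z_i$ equal to the \emph{old} value of $X_i$, each invocation of any $r_i^j$ sets $Z_i:=1$; and $r_i$ never touches $Z_i$. I would then trace the order explicitly: first $E_i$ is made non-occurring (setting $X_i=1$, $Z_i$ untouched), then $E_i^1,\ldots,E_i^\ell$ are examined in turn, with the algorithm possibly ``bouncing'' back to $E_i$ whenever a call to $r_i^j$ leaves $X_i=0$. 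An oracle $r_i^j$ is invoked at all iff, at the moment $E_i^j$ is first examined, $Y_i^j=0$; since $Y_i^j$ is untouched until then, this is exactly the event that $Y_i^j$ has initial value $0$. Thus if at least one of $Y_i^1,\ldots,Y_i^\ell$ is initially $0$, some $r_i^j$ is invoked, $Z_i$ becomes $1$, and it stays $1$ thereafter (resampling $E_i$ leaves $Z_i$ fixed, and every later call to some $r_i^{j'}$ again copies $X_i=1$ into $Z_i$). If instead all $Y_i^j$ are initially $1$, no $r_i^j$ is ever invoked and $Z_i$ retains its initial uniform value.

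Combining the two cases: the event ``some $r_i^j$ invoked'' has probability $1-2^{-\ell}$ and forces $Z_i=1$; in the complementary event (probability $2^{-\ell}$), $Z_i$ is uniform, equal to $1$ with probability $\tfrac12$. Hence $\Pr[Z_i=1] = (1-2^{-\ell}) + 2^{-\ell}\cdot\tfrac12 = 1 - 2^{-\ell-1}$, and together with block independence this gives the claim.

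I expect the main obstacle to be the careful bookkeeping in the second paragraph: one must argue cleanly about the order in which Moser--Tardos examines events, about the ``bouncing'' back to $E_i$ after a call to $r_i^j$ possibly re-triggers it, and about the fact that none of this ever perturbs $Z_i$ once it has been set to $1$. The probability computation and the independence argument are then routine.
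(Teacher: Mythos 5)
Your proposal is correct and follows essentially the same argument as the paper: since $E_i$ has priority over every $E_i^j$, each invocation of $r_i^j$ copies $X_i=1$ into $Z_i$, so $Z_i$ can end up $0$ only if it starts at $0$ and all $Y_i^j$ start at $1$, giving $\Pr[Z_i=0]=2^{-(\ell+1)}$, with independence across blocks. Your extra bookkeeping (block independence and the observation that reaching a resampling of $E'$ just means $W=1$ initially) only makes explicit what the paper's terser proof leaves implicit.
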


\begin{proof}
Let us fix $i$. Whenever some variable $Y_i^j$ is initially equal to $0$, we have to resample $E_i^j$ at some point. However, we only resample $E_i^j$ if $E_i$ does not occur, which means that $X_i$ must be $1$ at that time. So the resampling oracle $E_i^j$ forces $Z_i$ to be equal to $1$. The only way $Z_i$ could remain equal to $0$ is that it is initially equal to $0$ and none of the events $E_i^j$ need to be resampled, which happens with probability $1/2^\ell$. Therefore, when we're done with $E_i$ and $E_i^j$ for $1 \leq j \leq \ell$, $Z_i$ is equal to $0$ with probability $1 / 2^{\ell+1}$. This happens independently for each $i$.
\end{proof}

\begin{lemma}
The probability that the Moser-Tardos algorithm resamples $E'$ at least $k$ times in a row is at least $\frac12 (1 - \frac{1}{2^{\ell+1}})^{k-1}$.
\end{lemma}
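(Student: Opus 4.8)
The plan is to combine the previous two claims. By the first claim, once the Moser--Tardos algorithm (processing events in the order $(E_i, E_i^j, E')$) has dealt with all the $E_i$ and $E_i^j$ events, the variables $Z_1,\ldots,Z_k$ are independent with $\Pr[Z_i = 1] = 1 - 1/2^{\ell+1}$ for each $i$. At that point the only remaining event that could occur is $E' = \set{W = 1}$. After the first resampling of $E'$, the oracle $r'$ shifts the $Z_i$'s: the new value of $W$ is a fresh Bernoulli $Q$, and the new values of the first $k-1$ of the ``$Z$-slots'' become $Z_1,\ldots,Z_{k-1}$ --- wait, more precisely, $r'(W,Z_1,\ldots,Z_k) = (Z_1,\ldots,Z_k,Q)$, so after the call the tuple $(Z_1,\ldots,Z_k,W)$ becomes $(Z_2,\ldots,Z_k,Q,\,\text{?})$; I need to track carefully how $W$ is recomputed from the shifted register. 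The key point is that each successive application of $r'$ sets $W$ to one of the (independent) values previously stored in $Z_1,\ldots,Z_k$, so $E'$ recurs with probability $\Pr[Z_i = 1] = 1 - 1/2^{\ell+1}$ at each of the first $k-1$ subsequent steps, independently.

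Concretely, first I would argue that immediately after the $E_i, E_i^j$ phase, $E'$ occurs iff the initial value of $W$ is $1$, which has probability $1/2$; condition on this. Then I would show by induction on $m = 1,\ldots,k-1$ that, conditioned on $E'$ having been resampled $m$ times in a row so far, the event $E'$ occurs again with probability exactly $1 - 1/2^{\ell+1}$, because at that point $W$ equals $Z_{m+1}$ (one of the untouched independent variables from the first claim), which is $1$ with that probability. Multiplying these conditional probabilities gives that $E'$ is resampled at least $k$ times in a row with probability at least $\tfrac12 (1 - 1/2^{\ell+1})^{k-1}$, as claimed. Since $E'$ is an isolated vertex in $G$, every such chain of $k$ resamplings of $E'$ would, in the language of witness trees, produce a path of length $k$ all labeled $E'$; the witness tree lemma would bound its probability by $\Pr[E']^k = 2^{-k}$, which is exponentially smaller than $\tfrac12(1-1/2^{\ell+1})^{k-1}$ when $\ell$ is large, giving the promised violation.

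The main obstacle is purely bookkeeping: getting the index shift in $r'$ exactly right so that the $m$-th re-occurrence of $E'$ is governed by a genuinely fresh, independent copy of a ``$Z$-variable'' rather than by something already disturbed by an earlier call to $r'$. One must verify that $r'$ only ever reads the $Z$-registers that were set up during the $E_i, E_i^j$ phase and never recycles a value produced by $Q$ into the $W$-slot until after the first $k-1$ steps; this is where the length-$k$ register $(Z_1,\ldots,Z_k)$ is used, and it is why the bound is stated for ``at least $k$ times in a row.'' Once that indexing is pinned down, the independence established in the first claim makes the product of conditional probabilities immediate, and the factor $\tfrac12$ is just the probability that $E'$ occurs in the first place.
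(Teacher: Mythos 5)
Your proposal is correct and follows essentially the same argument as the paper: the first claim gives that $W$ (untouched, probability $\tfrac12$ of being $1$) and the independent $Z_i$'s feed into the shift-register oracle $r'$, so $W=1$ together with $Z_1=\cdots=Z_{k-1}=1$ forces at least $k$ consecutive resamplings of $E'$, yielding the bound $\tfrac12(1-1/2^{\ell+1})^{k-1}$. The only blemish is the off-by-one in your induction (after $m$ resamplings $W$ equals the old $Z_m$, not $Z_{m+1}$), which you flagged yourself and which does not affect the count of $k-1$ factors.
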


\begin{proof}
By the ordering of events, $E'$ is resampled only when all other events have been fixed. Also, resampling $E'$ cannot cause any other event, so the algorithm will terminate afterwards. However, as we argued above, when we get to resampling $E'$, each variable $Z_i$ is equal to $1$ independently with probability $1 - 1/2^{\ell+1}$. Considering the resampling oracle $r'(W,Z_1,\ldots,Z_k) = (Z_1,\ldots,Z_k,Q)$, if $W$ as well as all the variables $Z_i$ are equal to $1$, it will take at least $k$ resamplings to clear the queue and get a chance to avoid event $E'$. This happens with probability $\frac12 (1 - \frac{1}{2^{\ell+1}})^{k-1}$.
\end{proof}

Let $T$ consist of  a path of $k$ vertices labeled $E'$.
For $k = 2^\ell$, we conclude that the witness tree $T$ appears with constant probability in the log of the Moser-Tardos algorithm, as opposed to $1/2^{k}$ which would follow from the witness tree lemma.

\paragraph{The \MSR algorithm.}
A slightly more involved analysis is necessary in the case of \MSR. By nature of this algorithm, we would resample $E'$ ``in parallel" with the other events and so the variables evolve somewhat differently. 

\begin{claim}
For each $i$ independently, after 2 iterations of the \MSR algorithm, $Z_i = 1$ with probability
$1-1/2^{\ell+1}$. Any further updates of $Z_i$ other than those caused by resampling $E'$ can only change the variable from $0$ to $1$.
\end{claim}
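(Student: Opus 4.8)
The plan is to reduce the statement to the analysis of a single ``family'' of variables and then run a finite case analysis over the first two iterations of \MSR. Observe first that the families $\{X_i,Y_i^1,\dots,Y_i^\ell,Z_i\}$ interact with one another, and with $E'$, only through $Z_1,\dots,Z_k$, and that no event depends on any $Z$-variable; moreover $r'$ touches the $Z$-coordinates only as a cyclic left shift that injects a fresh bit in the last slot, which is exactly why $r'$ is singled out in the statement. If we ignore $r'$, the only oracle that writes to $Z_i$ is $r_i^j$, which transposes $(X_i,Z_i)\mapsto(Z_i,X_i)$ while refreshing $Y_i^j$. Hence the trajectory of one family under \MSR depends only on that family's own variables, which both isolates the analysis to a single $i$ and yields the claimed independence over $i$.

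Next I would record the per-iteration behaviour of one family. In any iteration the only family-$i$ events that can be occurring are $E_i$ (when $X_i=0$) and the $E_i^j$ with $Y_i^j=0$; since $E_i$ has the smallest index in the family and is adjacent to every $E_i^j$, the moment $E_i$ enters the current independent set all $E_i^j$ become blocked for that iteration. Thus each iteration does one of two things to the family: (i) if $E_i$ occurs at the start, it is resampled once via $r_i$ (fresh $X_i$, $Z_i$ untouched) and nothing else in the family happens; or (ii) if $E_i$ does not occur at the start (so $X_i=1$), the $E_i^j$ with $Y_i^j=0$ are resampled in index order, each transposing $X_i$ and $Z_i$ and refreshing its own $Y_i^j$. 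The decisive observations are: the \emph{first} such $E_i^j$ in case (ii) is resampled while $X_i=1$, hence sets $Z_i\leftarrow1$; and if $Z_i=1$ at the start of an iteration then $Z_i=1$ at its end (in case (i) $Z_i$ is untouched, and in case (ii) $X_i=1$ so every transposition keeps $(X_i,Z_i)=(1,1)$).

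Given these facts, the heart of the proof is a case analysis over the initial values of $(X_i,Z_i)$, over $m_i=|\{j:Y_i^j=0\}|$, and over the fresh bits drawn in iterations $1$ and $2$, showing that after two iterations $Z_i=1$ except in the configuration ``$Z_i$ initially $0$ and every $Y_i^j$ initially $1$''; in that configuration no $E_i^j$ ever occurs, $r_i^j$ is never invoked, and $Z_i$ stays $0$. Since that configuration has probability $\tfrac{1}{2}\cdot 2^{-\ell}=2^{-(\ell+1)}$, the probability bound follows. The monotonicity assertion is then read off from the second decisive observation: once $Z_i=1$ it is $1$ at the start of every later iteration and hence never overwritten by a $0$; $r_i$ and the oracles of other families leave $Z_i$ alone; and $r'$ is excluded by hypothesis. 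In the exceptional configuration $Z_i$ is never written at all.

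The main obstacle I anticipate is the two-iteration case analysis, specifically handling the oscillation $Z_i:0\mapsto1\mapsto0\mapsto\cdots$ that a run of transpositions produces once $E_i$ switches on mid-iteration but is blocked rather than resampled; one must check that by the end of the second iteration this transient has resolved into one of the two stable regimes ($Z_i=1$ from then on, or $Z_i=0$ with no $E_i^j$ surviving), while carefully tracking the fresh bits put into $X_i$ and the $Y_i^j$ during the first iteration. A secondary, purely bookkeeping point is to phrase the cyclic-shift action of $r'$ precisely enough that the per-coordinate claim about $Z_i$ is well posed.
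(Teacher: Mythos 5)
Your reduction to a single family and your description of the per-iteration dynamics are correct, and in fact more careful than the paper's own one-line proof: you rightly observe that in an iteration where $E_i$ does not occur at the start, \emph{every} $E_i^j$ with $Y_i^j=0$ is resampled (they are pairwise non-adjacent and only $E_i$ can block them), each call transposing $(X_i,Z_i)$. The gap is in your concluding step. The case analysis you announce --- that after two iterations $Z_i=1$ except in the single configuration ``$Z_i=0$ and all $Y_i^j=1$ initially'' --- is not a true statement about the dynamics you have just set up, so it cannot be verified. Two concrete failure modes: (a) if $X_i=0$ initially and the fresh bit written into $X_i$ in iteration 1 is again $0$, then both of the first two iterations resample only $E_i$ (the $E_i^j$ are blocked in both), so $Z_i$ keeps its initial value $0$; this event has probability $1/8$, of which only a $2^{-\ell}$ fraction lies inside your exceptional configuration. (b) if $X_i=1$, $Z_i=0$ initially and an even number ($\geq 2$) of the $Y_i^j$ are initially $0$, then iteration 1 performs an even number of swaps and returns $(X_i,Z_i)$ to $(1,0)$, and with probability $1/2$ iteration 2 performs an even (possibly zero) number of swaps, again leaving $Z_i=0$. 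Carrying out the two-iteration bookkeeping exactly (ignoring the $r'$ shifts, as both you and the paper implicitly do) gives $\Pr[Z_i=1 \mbox{ after two iterations}]=\tfrac34-\tfrac18\cdot 2^{-\ell}$, not $1-2^{-(\ell+1)}$. Your anticipated resolution of the ``oscillation'' obstacle (the transient resolves into ``$Z_i=1$'' or ``$Z_i=0$ with no surviving $E_i^j$'') does not rescue the bound, because the second regime is reached with probability far exceeding $2^{-(\ell+1)}$.

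For comparison, the paper's proof asserts that whenever some $E_i^j$ is resampled, $Z_i$ is set to $1$, i.e.\ it implicitly assumes $X_i=1$ at every such resampling. That assumption is exactly what makes the analogous claim for the Moser--Tardos algorithm work (its strict minimum-index rule guarantees $E_i$ does not occur whenever an $E_i^j$ is chosen), but it fails for \MSR: as your own case (ii) shows, a second $E_i^j$ within the same iteration is resampled while $X_i=0$ and writes a $0$ back into $Z_i$. The same phenomenon defeats the unconditional reading of the claim's second sentence (a later iteration that starts with $Z_i=0$ and performs two swaps contains a $1\to 0$ update not caused by $E'$); your monotonicity argument only establishes the boundary version, namely that once $Z_i=1$ at the start of an iteration it is still $1$ at the end, apart from $r'$. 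What survives from your approach is precisely that boundary monotonicity plus a constant lower bound on $\Pr[Z_i=1]$ after two iterations, which is what the subsequent lemma actually needs (with its constant adjusted); the claim as literally stated cannot be proved, by your route or by the paper's.
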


\begin{proof}
The claim is that unless $Z_i = 0$ and $Y_i^1 = \ldots, = Y_i^\ell = 1$ initially, in the first two
iterations we will possibly resample $E_i$ and then one of the events $E_i^j$, which makes $Z_i$
equal to $1$. Any further update to $Z_i$ occurs only when $E'$ is resampled (which shifts the sequence $(Z_1,\ldots,Z_k)$) or when $E_i^j$ is resampled, which makes $Z_i$ equal to $1$.
\end{proof}

\begin{lemma}
The probability that \MSR resamples $E'$ at least $k$ times in a row is at least $\frac14 (1 - \frac{1}{2^{\ell+1}})^{k-2}$.
\end{lemma}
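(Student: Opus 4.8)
The plan is to follow the structure of the Moser--Tardos argument above, but to track the value of the variable $W$ across the iterations of \MSR.

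The key structural observation is that $W$ can be modified only by the oracle $r'$, and $r'$ is invoked only when $W=1$, whereupon it replaces $W$ by the current value of $Z_1$ (the cyclic shift $(W,Z_1,\dots,Z_k)\mapsto(Z_1,\dots,Z_k,Q)$). Hence, read off at the successive iterations, the values of $W$ form a non-increasing binary sequence $1,1,\dots,1,0,0,\dots$. Since $E'$ is an isolated vertex of the dependency graph, $E'$ is selected and resampled in an iteration exactly when $W=1$ at that iteration's start (and exactly once in such an iteration), and the algorithm cannot terminate in an iteration in which $E'$ occurs. Therefore \MSR resamples $E'$ in $k$ consecutive iterations --- ``$k$ times in a row'' --- if and only if $W=1$ at the start of iteration $k$, i.e.\ $b_1=\dots=b_k=1$, where $b_t$ denotes the value of $W$ at the start of iteration $t$.

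I would then express $b_t$ through the original variables. In the regime $b_1=\dots=b_{t-1}=1$ the queue has been cyclically shifted $t-1$ times before iteration $t$, so $b_t$ equals the value of the original variable $Z_{t-1}$ (with the convention $Z_0:=W$), possibly already raised by earlier resamplings (which can only increase it). Thus $b_1=W$ initially, and for $t\ge 2$ one has $b_t=1$ whenever $Z_{t-1}=1$ initially or $Z_{t-1}$ is raised to $1$ during the first two iterations; by the preceding Claim the latter fails only when $Z_{t-1}=0$ and $Y^1_{t-1}=\dots=Y^\ell_{t-1}=1$ initially, an event of probability $\tfrac1{2^{\ell+1}}$, and these are independent over $t$. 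Bounding $\Pr[b_1=1]=\tfrac12$, $\Pr[b_2=1]\ge\Pr[Z_1=1\text{ initially}]=\tfrac12$ (only the crude bound is available, since $b_2$ is already needed in iteration $2$, before stabilization), and $\Pr[b_t=1]\ge 1-\tfrac1{2^{\ell+1}}$ for $3\le t\le k$ (that is $k-2$ factors), and using that after conditioning on $\{W=1,\ Z_1=1\text{ initially}\}$ --- which forces two shifts to occur in the first two iterations --- the remaining variables $Z_{t-1},Y^j_{t-1}$ are independent of the conditioning, these bounds multiply:
$$\Pr\big[\text{\MSR resamples }E'\text{ at least }k\text{ times in a row}\big]~=~\Pr[b_k=1]~\ge~\tfrac12\cdot\tfrac12\cdot\Big(1-\tfrac1{2^{\ell+1}}\Big)^{k-2}~=~\tfrac14\Big(1-\tfrac1{2^{\ell+1}}\Big)^{k-2}.$$

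The step I expect to be the main obstacle is the shift bookkeeping in the last paragraph: one must verify, for each $t$, that the variable $Z_{t-1}$ does receive, by the start of iteration $t$, the ``raise to $1$'' guaranteed by the Claim --- before it is cyclically shifted out of the positions reachable by the relevant oracles $r^j_{t-1}$ --- and that the events in question are genuinely independent of the conditioning. This is routine but requires some care about the small-$t$ boundary cases, mirroring the proof of the preceding Claim.
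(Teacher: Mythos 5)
Your argument is correct and essentially reproduces the paper's own proof: the same conditioning on $W=1$ and $Z_1=1$ initially yields the factor $\tfrac14$ together with the first two resamplings of $E'$, and the preceding Claim applied to the $k-2$ indices $2,\ldots,k-1$, independently of that conditioning, yields the factor $\bigl(1-\tfrac{1}{2^{\ell+1}}\bigr)^{k-2}$, with your per-iteration tracking of $W$ playing the role of the paper's observation that a zero needs $k-2$ further executions of $r'$ before it can reach $W$. The shift bookkeeping you flag at the end is precisely the point the paper handles by excluding $Z_1$ and $Z_k$ from the product and noting that all updates other than those caused by resampling $E'$ can only raise the $Z$-values, so no idea beyond what you outline is needed.
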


\begin{proof}
In the first two iterations, the probability that $E'$ is resampled twice is at least $1/4$ (the values of $W$ and $Z_1$ are initially uniform, and if $Z_1$ is updated, it can only increase the probability that we resample $E'$). Independently, the probability that $Z_2 = \ldots = Z_{k-1} = 1$ after the first two iterations is $(1 - 1/2^{\ell+1})^{k-2}$, by the preceding claim. (We are not using $Z_1$ which is possibly correlated with the probability of resampling $E'$ in the second iteration, and $Z_k$ which would be refreshed by this resampling in the second iteration.) If this happens, we will continue to resample $E'$ at least $k-2$ additional times, because it will take $k-2$ executions of $r'$ before a zero can reach the variable $W$.
\end{proof}

Again, consider setting $k=2^\ell$.
The total number of events is $n = O(k \ell)$,
so $\ell = \Theta(\log n)$ and $k = \Theta(n / \log n)$.
With constant probability, the witness tree $T$ consisting of a path of $k$ vertices labeled $E'$
will appear in the log of \MSR algorithm.
Thus, with constant probability, the algorithm will require a stable set sequence of length at least
$k$.


\begin{thebibliography}{10}

\bibitem{AG}
Dimitris Achlioptas and Themis Gouleakis.
\newblock Algorithmic improvements of the {L}ov{\'{a}}sz local lemma via
  cluster expansion.
\newblock In {\em Proceedings of FSTTCS}, 2012.

\bibitem{Achlioptas}
Dimitris Achlioptas and Fotis Iliopoulos.
\newblock Random walks that find perfect objects and the {L}ov{\'{a}}sz local
  lemma.
\newblock In {\em 55th {IEEE} Annual Symposium on Foundations of Computer
  Science, {FOCS} 2014, Philadelphia, PA, USA, October 18-21, 2014}, pages
  494--503, 2014.

\bibitem{AchlioptasI15}
Dimitris Achlioptas and Fotis Iliopoulos.
\newblock Random walks that find perfect objects and the {L}ov{\'{a}}sz local
  lemma.
\newblock {\em CoRR}, abs/1406.0242v3, 2015.

\bibitem{AchlioptasI16}
Dimitris Achlioptas and Fotis Iliopoulos.
\newblock Focused stochastic local search and the {L}ov\'asz local lemma.
\newblock In {\em Proc. of $27^{th}$ ACM-SIAM SODA}, 2016, to appear.

\bibitem{Akbari}
Saieed Akbari and Alireza Alipour.
\newblock Multicolored trees in complete graphs.
\newblock {\em J. Graph Theory}, 54:3:221--232, 2007.

\bibitem{AlonSpencer}
N.~Alon and J.~Spencer.
\newblock {\em The Probabilistic Method}.
\newblock Wiley, 2000.

\bibitem{AlonSpencerTetali}
Noga Alon, Joel Spencer, and Prasad Tetali.
\newblock Covering with latin transversals.
\newblock {\em Discrete Applied Mathematics}, 57:1:1--10, 1995.

\bibitem{Bissacot}
R.~Bissacot, R.~Fern\'{a}ndez, A.~Procacci, and B.~Scoppola.
\newblock An improvement of the {L}ov\'{a}sz local lemma via cluster expansion.
\newblock {\em Combin. Probab. Comput.}, 20:709--719, 2011.

\bibitem{Bottcher}
Julia B{\"{o}}ttcher, Yoshiharu Kohayakawa, and Aldo Procacci.
\newblock Properly coloured copies and rainbow copies of large graphs with
  small maximum degree.
\newblock {\em Random Structures and Algorithms}, 40(4), 2012.

\bibitem{Broder}
Andrei Broder.
\newblock Generating random spanning trees.
\newblock In {\em Proceedings of SFCS}, pages 442--447, 1989.

\bibitem{Brualdi}
Richard~A. Brualdi and Susan Hollingsworth.
\newblock Multicolored trees in complete graphs.
\newblock {\em J. Combin. Theory Ser. B}, 68, 1996.

\bibitem{Carraher}
James~M. Carraher, Stephen~G. Hartke, and Paul Horn.
\newblock Edge-disjoint rainbow spanning trees in complete graphs, 2013.

\bibitem{CGH}
Karthekeyan Chandrasekaran, Navin Goyal, and Bernhard Haeupler.
\newblock Deterministic algorithms for the {L}ov\'asz local lemma.
\newblock {\em SIAM Journal on Computing}, 42(6), 2013.

\bibitem{Pettie}
Kai-Min Chung, Seth Pettie, and Hsin-Hao Su.
\newblock Distributed algorithms for the {L}ov\'asz local lemma and graph
  coloring.
\newblock In {\em Proceedings of PODC}, 2014.

\bibitem{ErdosLovasz}
Paul Erd{\"{o}}s and L{\'{a}}szl{\'{o}} Lov{\'{a}}sz.
\newblock Problems and results on 3-chromatic hypergraphs and some related
  questions.
\newblock In A.~Hajnal et~al., editor, {\em Infinite and finite sets},
  volume~10 of {\em Colloquia Mathematica Societatis J{\'{a}}nos Bolyai}, pages
  609--628. North-Holland, Amsterdam, 1975.

\bibitem{ErdosSpencer}
Paul Erd{\"{o}}s and Joel Spencer.
\newblock The {L}opsided {L}ov{\'{a}}sz {L}ocal {L}emma and {L}atin
  transversals.
\newblock {\em Discrete Applied Mathematics}, 30:151--154, 1991.

\bibitem{Fernandez}
R.~Fern{\'{a}}ndez and A.~Procacci.
\newblock Cluster expansion for abstract polymer models: New bounds from an old
  approach.
\newblock {\em Comm. Math. Phys}, 274:123--140, 2007.

\bibitem{Gebauer}
Heidi Gebauer, Tibor Szab{\'{o}}, and G\'abor Tardos.
\newblock The local lemma is tight for {SAT}.
\newblock In {\em Proceedings of SODA}, 2011.

\bibitem{GiotisKPT}
Ioannis Giotis, Lefteris Kirousis, Kostas~I. Psaromiligkos, and Dimitrios~M.
  Thilikos.
\newblock On the algorithmic {L}ov{\'{a}}sz local lemma and acyclic edge
  coloring.
\newblock In {\em Proceedings of ANALCO}, 2015.

\bibitem{HSS}
Bernhard Haeupler, Barna Saha, and Aravind Srinivasan.
\newblock New constructive aspects of the {L}ov\'asz local lemma.
\newblock {\em Journal of the ACM}, 58(6), 2011.

\bibitem{HarrisS14}
David~G. Harris and Aravind Srinivasan.
\newblock A constructive algorithm for the {L}ov{\'{a}}sz {L}ocal {L}emma on
  permutations.
\newblock In {\em Proceedings of the Twenty-Fifth Annual {ACM-SIAM} Symposium
  on Discrete Algorithms, {SODA} 2014, Portland, Oregon, USA, January 5-7,
  2014}, pages 907--925, 2014.

\bibitem{Holley}
Richard Holley.
\newblock Remarks on the {FKG} inequalities.
\newblock {\em Communications in Mathematical Physics}, 36:227--231, 1974.

\bibitem{Knuth}
Donald~E. Knuth.
\newblock {The art of computer programming, Volume 4B (draft, pre-fascicle
  6a)}, 2015.
\newblock \url{http://www-cs-faculty.stanford.edu/~uno/fasc6a.ps.gz}.

\bibitem{Kolipaka}
Kashyap Kolipaka and Mario Szegedy.
\newblock {M}oser and {T}ardos meet {L}ov\'{a}sz.
\newblock In {\em Proceedings of STOC}, 2011.

\bibitem{KSX}
Kashyap Kolipaka, Mario Szegedy, and Yixin Xu.
\newblock A sharper local lemma with improved applications.
\newblock In {\em Proceedings of APPROX/RANDOM}, 2012.

\bibitem{LuMohrSzekely}
Lincoln Lu, Austin Mohr, and L\'aszl\'o Sz\'ekely.
\newblock Quest for negative dependency graphs.
\newblock {\em Recent Advances in Harmonic Analysis and Applications},
  25:243--258, 2013.

\bibitem{Mohr-thesis}
Austin Mohr.
\newblock {\em Applications of the lopsided {L}ov\'asz local lemma regarding
  hypergraphs}.
\newblock PhD thesis, University of South Carolina, 2013.

\bibitem{Moser-thesis}
Robin Moser.
\newblock {\em Exact Algorithms for Constraint Satisfaction Problems}.
\newblock PhD thesis, {ETH} {Z}{\"{u}}rich, 2012.

\bibitem{Moser}
Robin~A. Moser.
\newblock A constructive proof of the {L}ov\'{a}sz local lemma.
\newblock In {\em Proceedings of STOC}, 2009.

\bibitem{MoserTardos}
Robin~A. Moser and G{\'{a}}bor Tardos.
\newblock A constructive proof of the general {L}ov{\'a}sz {L}ocal {L}emma.
\newblock {\em Journal of the ACM}, 57(2), 2010.

\bibitem{Ndreca}
Sokol Ndreca, Aldo Procacci, and Benedetto Scoppola.
\newblock Improved bounds on coloring of graphs.
\newblock {\em European Journal of Combinatorics}, 33(4), 2012.

\bibitem{Papadimitriou}
Christos~H. Papadimitriou.
\newblock On the complexity of the parity argument and other inefficient proofs
  of existence.
\newblock {\em Journal of Computer and System Sciences}, 48:498--532, 1994.

\bibitem{Pegden}
Wesley Pegden.
\newblock {An extension of the {M}oser-{T}ardos algorithmic local lemma}.
\newblock {\em SIAM J. Discrete Math}, 28:911--917, 2014.

\bibitem{Schrijver}
Alexander Schrijver.
\newblock {\em Combinatorial Optimization: Polyhedra and Efficiency}.
\newblock Springer, 2004.

\bibitem{ScottSokal}
Alexander~D. Scott and Alan~D. Sokal.
\newblock {The Repulsive Lattice Gas, the Independent-Set Polynomial, and the
  {L}ov\'{a}sz Local Lemma}.
\newblock {\em Journal of Statistical Physics}, 118(5):1151--1261, 2005.

\bibitem{Shearer}
James~B. Shearer.
\newblock On a problem of {S}pencer.
\newblock {\em Combinatorica}, 5(3), 1985.

\bibitem{Spencer77}
Joel Spencer.
\newblock Asymptotic lower bounds for {R}amsey functions.
\newblock {\em Discrete Mathematics}, 20:69--76, 1977.

\bibitem{Woolbright}
David~E. Woolbright and Hung-Lin Fu.
\newblock On the existence of rainbows in 1-factorizations of {$K_{2n}$}.
\newblock {\em Journal of Combinatorial Designs}, 6:1:1--20, 1998.

\end{thebibliography}
\end{document}